\title{\Huge A First Course in \\ Monte Carlo Methods}
\author{D. Sanz-Alonso and O. Al-Ghattas}
\date{University of Chicago}
\renewcommand{\phi}{\varphi}
\newcommand{\eps}{\varepsilon}
\newcommand{\R}{\mathbb{R}}
\newcommand{\N}{\mathbb{N}}
\newcommand{\red}{\color{red}}
\definecolor{mygreen}{rgb}{0.1,0.75,0.2}
\newcommand{\nc}{\normalcolor}
\newcommand{\qedhere}{$\square$}
\newtheorem{remark}[theorem]{Remark}
\newtheorem{assumption}[theorem]{Assumption}
\newcommand{\X}{X}
\newcommand{\Z}{Z}
\newcommand{\XX}{\mathbb{X}}
\newcommand{\Prob}{\operatorname{\mathbb{P}}}
\newcommand{\Expect}{\operatorname{\mathbb{E}}}
\newcommand{\V}{\operatorname{\mathbb{V}}}
\newcommand{\T}{\mathcal{T}}
\newcommand{\tg}{f}
\newcommand{\pr}{g}
\newcommand{\qrwmh}{q_{\mbox {\tiny{\rm RWMH}}}}
\newcommand{\pind}{p_{\mbox {\tiny{\rm Indep}}}}
\newcommand{\pmh}{p_{\mbox {\tiny{\rm MH}}}}
\newcommand{\pdugs}{p_{\mbox {\tiny{\rm DUGS}}}}
\newcommand{\qlgv}{q_{\mbox {\tiny{\rm Lgv}}}}
\newcommand{\qind}{q_{\mbox {\tiny{\rm Indep}}}}
\newcommand{\pgs}{p_{\mbox {\tiny{\rm GS}}}}
\newcommand{\dkl}{d_{\mbox {\tiny{\rm KL}}}}
\newcommand{\dtv}{d_{\mbox {\tiny{\rm TV}}}}
\newcommand{\dchi}{d_{\mbox {\tiny{$ \chi^2$}}}}
\newcommand{\Nc}{\mathcal{N}}
\newcommand{\bt}{x}
\newcommand{\bti}{x_i}
\newcommand{\btmi}{x_{-i}}
\newcommand{\ELBO}{\textsc{elbo}}
\newcommand{\EEq}[1]{\mathbb{E}_{g}\left[#1\right]}
\newcommand{\EEqi}[1]{\mathbb{E}_{g_i}\left[#1\right]}
\newcommand{\EEqmi}[1]{\mathbb{E}_{g_{-i}}\left[#1\right]}
\definecolor{dark-gray}{gray}{0.3}
\definecolor{dkgray}{rgb}{.4,.4,.4}
\definecolor{dkblue}{rgb}{0,0,.5}
\definecolor{medblue}{rgb}{0,0,.75}
\definecolor{rust}{rgb}{0.5,0.1,0.1}
\definecolor{midnight}  {rgb}{0,0,.5}
\newtcolorbox{mybox}[2][]{colframe=midnight,fonttitle=\bfseries, coltext=black,
colbacktitle=midnight,
coltitle=midnight,
enhanced,arc is angular,boxed title style={arc is angular},
detach title,before upper={\tcbtitle:~},
title=#2,#1,
before skip=6pt, after skip=6pt,
left=2pt,right=2pt,top=3pt,bottom=1pt}
\newcommand{\upperRomannumeral}[1]{\uppercase\expandafter{\romannumeral#1}}
\renewcommand{\hat}{\widehat}
\begin{document}

\maketitle
\frontmatter


\listoffigures
\listofalgorithms

\begin{thepreface}
\paragraph{Overview} This is a concise mathematical introduction to Monte Carlo methods, a rich family of algorithms with far-reaching applications in science and engineering. Monte Carlo methods are an exciting subject for mathematical statisticians and computational and applied mathematicians: the design and analysis of modern algorithms are rooted in a broad mathematical toolbox that includes ergodic theory of Markov chains, Hamiltonian dynamical systems, transport maps, stochastic differential equations, information theory, optimization, Riemannian geometry, and gradient flows, among many others. These lecture notes celebrate the breadth of mathematical ideas that have led to tangible advancements in Monte Carlo methods and their applications. To accommodate a diverse audience, the level of mathematical rigor varies from chapter to chapter, giving only an intuitive treatment to the most technically demanding subjects. The aim is not to be comprehensive or encyclopedic, but rather to illustrate some key principles in the design and analysis of Monte Carlo methods through a carefully-crafted choice of topics that emphasizes timeless over timely ideas. Algorithms are presented in a way that is conducive to conceptual understanding and mathematical analysis ---clarity and intuition are favored over state-of-the-art implementations that are harder to comprehend or rely on \emph{ad-hoc} heuristics. To help readers navigate the expansive landscape of Monte Carlo methods, each algorithm is accompanied by a summary of its pros and cons, and by a discussion of the type of problems for which they are most useful. The presentation is self-contained, and therefore adequate for self-guided learning or as a teaching resource.  Each chapter contains a section with bibliographic remarks that will be useful for computational scientists and graduate students interested in conducting research on Monte Carlo methods and their applications.

\paragraph{Acknowledgements} These lecture notes developed as a result of several courses taught by Daniel Sanz-Alonso at the University of Chicago since 2019; Omar Al-Ghattas served as teaching assistant for two of these courses and contributed significantly to improve the presentation. The main target audience are advanced undergraduate and graduate students in computational mathematics, applied mathematics, and statistics. Our courses are also regularly taken by graduate students in financial mathematics, chemistry, physics, geophysics, economics, and public policy that are interested in using Monte Carlo methods in their research. A first draft was created by the students of the course Stochastic Simulation, STAT 31510, in Spring 2019. The students responsible for this first draft are: Weilin Chen, Fuheng Cui, Zhen Dai, Marlin Figgins, Kim Liu, Yuwei Luo, Shinpei Nakamura Sakai, David Noursi, Nick Rittler, Matthew Shin, Zihao Wang, Wen Yuan Yen, Joey Yoo, and Yanfei Zhou. We are grateful to these students, without whom this work would not exist, and to all the students that provided encouragement and constructive feedback over the years. We are particularly thankful to Zijian Wang for improving the material on Hamiltonian Monte Carlo and Gibbs samplers, to Shiv Agrawal and Yu-Chun Lai for first drafts on variational inference and hidden Markov models on discrete state-space, to Zhaoming Li and Bobbi Shi for creating several figures, to Jiajun Bao, Haoyuan Mao and Adi Raman for developing new figures, exercises and Python notebooks, and to Ruiyi Yang for his feedback and for typesetting a first draft of the chapter on annealing strategies.
Lanran Fang and Felix Poirier also contributed to develop new exercises.  

Finally, we are grateful to DOE, FBBVA, NGIA and NSF for their support, which has facilitated the completion and shaped the presentation of these lecture notes.

\paragraph{Warning}
This is a preliminary abridged version of a forthcoming textbook on Monte Carlo methods, which will contain numerous exercises and additional references.  
This preliminary version is likely to contain typographical and mathematical errors, inconsistencies in notation, and incomplete bibliographical information. We hope that it is nonetheless useful. Please contact the authors with any feedback from typos, through mathematical errors and bibliographical omissions, to comments on the structural organization of the material. Instructors interested in exercises to use in their courses are particularly welcome to contact the authors.

\flushright D. Sanz-Alonso and O. Al-Ghattas \\ University of Chicago

\end{thepreface}

\tableofcontents

\mainmatter

\chapter{Introduction}\label{ch:introduction}

This chapter provides a high-level introduction to Monte Carlo methods and to some overarching themes and ideas in this course. 
We start with a motivating example in Section \ref{sec:motivatingexch1}, where we explain how a simple Monte Carlo method can be used to approximate the area $\pi$ of the unit circle. Next, we introduce  two guiding areas of application in 
Section \ref{sec:motivatingapplications}: Bayesian statistics and statistical mechanics. Both of these application domains illustrate the need to develop Monte Carlo methods that scale well to high dimension and that are implementable when the normalizing constant of the target distribution is unknown. Section \ref{sec:challenges} overviews these and other challenges that play an important role in the design of Monte Carlo methods. Section \ref{sec:themes} summarizes some key principles and ideas that underlie the development of many Monte Carlo methods and sampling algorithms. Section \ref{sec:coursestructure} discusses the structure of these lecture notes and how to use them as a teaching resource. 
The chapter closes in Section \ref{sec:biblioch1} with historical and bibliographical remarks.

\section{Motivating Example}\label{sec:motivatingexch1}
Monte Carlo methods represent a quantity of interest as a parameter of a distribution and use a random sample from that distribution to estimate the parameter. As an illustrative example, this section describes a Monte Carlo method to approximate $\pi,$ the area of the unit circle $C$. Let $(X_1,X_2)\sim$ Unif$([-1,1]\times[-1,1])$ be a random vector with uniform distribution on the square $S= [-1,1]\times[-1,1] $. 
Then, 
$$p:=\Prob\Bigl( (X_1,X_2) \in C \Bigr) = \Expect \biggl[{\bf{1}}_{\bigl\{(X_1,X_2) \in C \bigr\}} \biggr] =  \frac{\text{Area C}}{\text{Area S} } = \frac{\pi}{4}.$$
Hence, 
$\pi = 4 p, $
and  we have expressed the quantity of interest $\pi$ in terms of a probability $p$. Now, we can estimate $p$ ---and hence $\pi$--- using a sample $\bigl(X_1^{(1)},X_2^{(1)}\bigr), \ldots, \bigl(X_1^{(N)},X_2^{(N)}\bigr)$ of $N$ independent random vectors uniformly distributed on $S.$ 
Precisely, define 
$$B_N = \sum_{n=1}^N {\bf{1}}_{\Bigl\{\bigl(X_1^{(n)},X_2^{(n)}\bigr) \in C \Bigr\}}. $$
Then, $B_N \sim \text{Binomial}(N,p)$ and we can estimate $p$ by 
$\hat{p}_N = B_N/N$
and $\pi$ by $\hat{\pi}_N = 4\hat{p}_N.$
For example, as shown in Figure \ref{fig:estimatepi}, if $N= 10^4$ and we observe $B_{10000} = 7854,$  we would estimate
$$\hat{\pi}_{10000} = 4 \times \frac{7854}{10000} = 3.1416.$$ 
If $N$ is very large, the probability ---\emph{before the random sample is generated}--- that our estimation is poor is rather small. For $\alpha \in (0,1),$ define the $z$-score $z_{\alpha/2}$ by the requirement that $\Prob(- z_{\alpha/2} < Z <z_{\alpha/2}) = 1-\alpha,$ where $Z \sim \Nc(0,1).$ Then,
$$\biggl( \hat{p}_N - z_{\alpha/2} \sqrt{\frac{\hat{p}_N(1-\hat{p}_N)}{N}},  \hat{p}_N + z_{\alpha/2} \sqrt{\frac{\hat{p}_N(1-\hat{p}_N)}{N}} \, \biggr) $$
is a $1-\alpha$ confidence interval for $p,$ and therefore 
$$\biggl( \hat{\pi}_N - z_{\alpha/2} \sqrt{\frac{\hat{\pi}_N(4-\hat{\pi}_N)}{N}},  \hat{\pi}_N + z_{\alpha/2} \sqrt{\frac{\hat{\pi}_N(4-\hat{\pi}_N)}{N}} \, \biggr) $$
is a $1 - \alpha$ confidence interval for $\pi.$ In the example shown in Figure \ref{fig:estimatepi} where $N = 10^4$ and $\hat{\pi}_{10000} = 3.1416$, using a $z$-score of $1.96$ for a 95\% confidence level, we obtain the confidence interval 
$(3.1094, 3.1738)$. 

\bigskip

Let us summarize:
\begin{enumerate}
\item We have written the quantity of interest (in our case $\pi$) as an expectation, using that for any event $A$, $\mathbb{P}(A) = \Expect [{\bf{1}}_A]$. 
In this course, the quantity of interest we seek to approximate via Monte Carlo methods will often be an expected value. 
\item  We have constructed an estimator of the expectation in step 1 based on a random sample.
\item We have used statistical theory to assess the quality of the estimator. The law of large numbers guarantees that the sample approximation converges  to the quantity of interest. Furthermore, the central limit theorem quantifies the speed of convergence and can be used to obtain confidence intervals. 
\end{enumerate}

\begin{figure}
    \centering
    \includegraphics[width=0.5\linewidth]{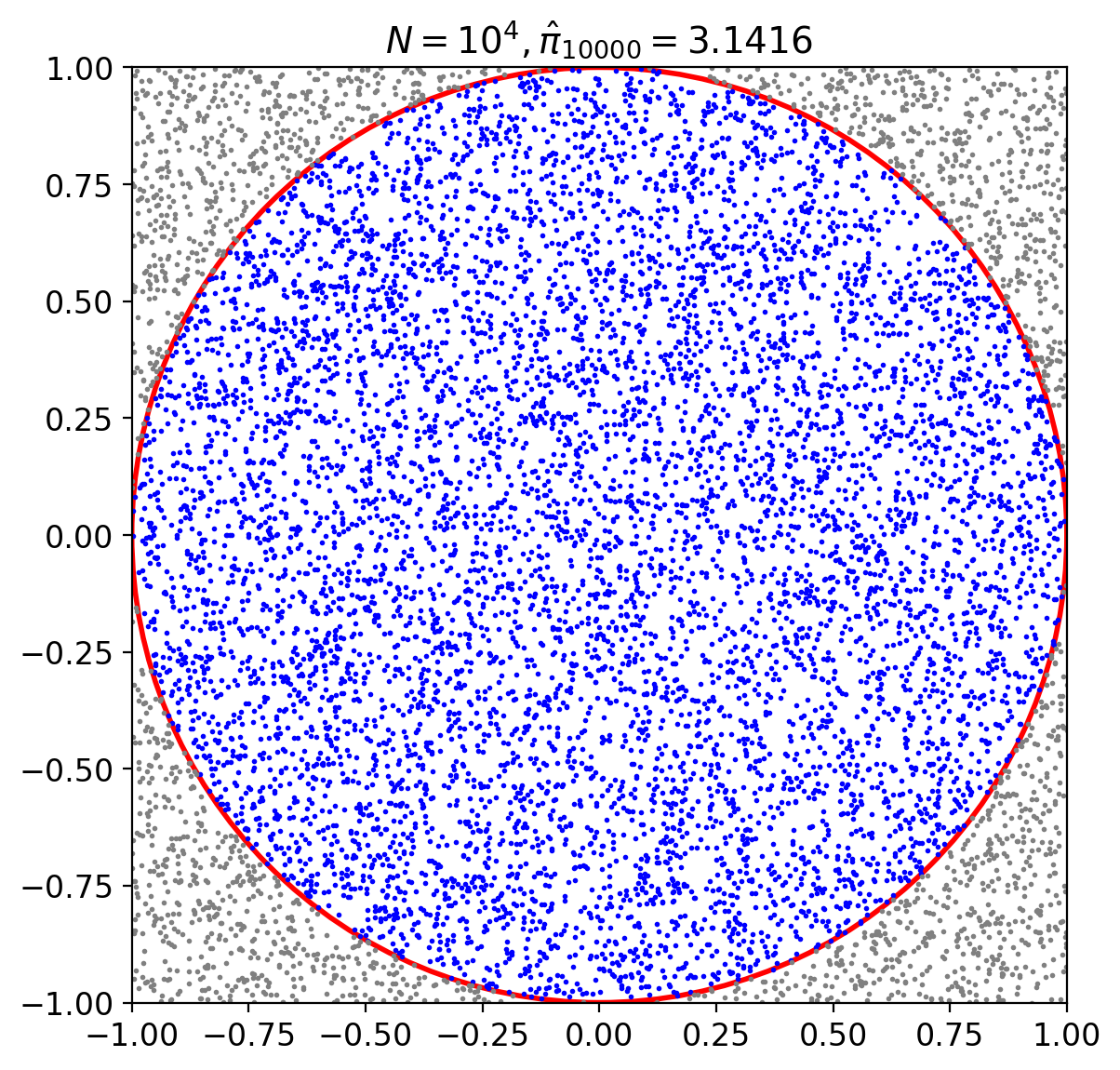}
    \caption{Estimation of $\pi$ with sample size $N=10^4$. 
    Here,  $B_{10000} = 7854$ draws fell within the unit circle, leading to an estimate $\hat{\pi}_{10000} = 3.1416.$}
    \label{fig:estimatepi}
\end{figure}

\section{Guiding Applications}\label{sec:motivatingapplications}
This course is primarily concerned with Monte Carlo methods to compute integrals of the form
\begin{equation}\label{eq:genericintegral}
    \mathcal{I}_f[h]:=\int_{E}h(x)f(x) \, dx = \mathbb{E}_{X \sim f} \bigl[h(X)\bigr],
\end{equation}
where $f$ is a given \emph{target distribution} supported on a set $E \subset \R^d$ and $h : E \to \mathbb{R}$ is a given \emph{test function}. For instance, taking $h(x) = {\bf{1}}_A(x)$ the indicator function of a subset $A \subset E,$ we can then compute the probability $\Prob_{X \sim f}(X \in A).$ In this section, we describe two application domains where computing integrals of the form \eqref{eq:genericintegral} is important: Bayesian statistics and statistical mechanics. In both, the dimension $d$ can be exceedingly large and the target distribution $f$ is only known up to a normalizing constant, making the use of deterministic quadrature algorithms unfeasible and posing interesting challenges to the design of Monte Carlo methods. 

\paragraph{Bayesian Statistics}
Given a likelihood model $f(y| \theta)$ and a prior density $f(\theta),$ the posterior density of the parameter $\theta$ given the data $y$ is given by 
\begin{equation}\label{eq:bayesianintro}
f(\theta| y) = \frac{1}{c} f(y|\theta) f(\theta),
\end{equation}
where $c = \int f(y|\theta) f(\theta) \, d\theta$ is a normalizing constant so that $f(\theta|y)$ integrates to $1$. Bayesian inference relies on computing expectations of the form
$$\mathcal{I}_f[h] = \int h(\theta) f(\theta|y) \, d\theta,$$
where the posterior $f(\theta |y)$ is the target distribution and $h$ is a test function. The most important choice of test function is arguably $h(\theta) = \theta,$ which gives the posterior mean estimator of $\theta$.  Other choices of test function enable computation of credible intervals ---regions of the parameter space with prescribed posterior probability--- and higher posterior moments for uncertainty quantification. 

Before the advent of Monte Carlo methods, Bayesian inference was only computationally feasible in low dimension and under the restrictive model assumption of conjugate priors: given a likelihood, choosing a conjugate prior ensures that the posterior belongs to the same family of distributions as the prior, whereby point estimators and credible intervals may admit closed form expressions.
In modern applications, non-conjugate models are routinely used and the parameter $\theta$ can be high-dimensional; think, for instance, that $\theta$ may represent a vectorized high-resolution image which we may want to denoise. 
Then, the normalizing constant $c$ in \eqref{eq:bayesianintro} ---which represents the \emph{marginal likelihood} of the data---  is typically unknown and hard to compute, as it is itself defined as an integral on a high-dimensional parameter space. This course emphasizes the importance of designing Monte Carlo methods that scale well to high dimension and that are applicable when the normalizing constant of the target distribution is unknown. 

The development of Monte Carlo methods goes hand in hand with the popularization of large-scale Bayesian statistics. The Bayesian approach to statistics allows to conveniently formulate hierarchical models and to do sequential inference by updating beliefs as new data arrives. Gibbs samplers and particle filters ---two classes of Monte Carlo methods studied in this course--- dovetail with hierarchical and sequential inference, respectively.

\paragraph{Statistical Mechanics}
Another important source of challenging sampling problems is statistical mechanics, and in particular the simulation of molecular dynamics. Boltzmann postulated that the positions $q$ and momenta $p$ of the atoms in a molecular system of constant size, occupying a constant volume, and in contact with a heat bath (at constant temperature), are distributed according to
\begin{equation*}\label{eq:boltzmann}
f(q, p) = \frac{1}{c} \exp\Bigl( -\beta \bigl(V(q)+K(p)\bigr)\Bigr), 
\end{equation*}
where $c$ is a normalizing constant known as the partition function, $\beta$ represents the inverse temperature,  $V$ is a potential energy describing the interaction of the particles in the system, and $K$ represents the kinetic energy of the system. The potential $V$ is often a very rough function with many local minima. This and other typical features of the potential make computing averages with respect to $f(p,q)$  very difficult. However, many quantities of interest are defined as averages with respect to the Boltzmann distribution, and it is therefore of great scientific interest to compute integrals of the form
$$\mathcal{I}_f[h] = \int h(q,p) f(q,p) \, dq\, dp.$$
Similarly as in Bayesian statistics, computing the normalizing constant $c$ of the Boltzmann (also known as Gibbs) distribution $f(q,p)$ is challenging, which motivates again the need to develop Monte Carlo methods for densities that are known up to a normalizing constant.

\section{Six Challenges}\label{sec:challenges}
To understand the design and motivation behind different Monte Carlo methods, it is essential to first appreciate the main challenges they attempt to tackle. 
This section provides a high-level summary of some of these challenges.
Clearly, no single algorithm is best at simultaneously addressing all of them; otherwise, there would be no need to introduce the many types of Monte Carlo methods studied in this course!
Throughout this course, each new algorithm we introduce will be accompanied by a discussion of its pros and cons in relation to addressing different challenges, thus helping explain its role within the vast landscape of Monte Carlo methods.

\paragraph{Challenge 1: High Dimension}
Monte Carlo integration methods are highly effective in high dimension, \emph{provided that one can obtain a sample from the target distribution}.
Unfortunately, sampling from a generic target distribution in high dimension is often a difficult task. On the bright side, there are important classes of target distributions for which scalable sampling algorithms are available. For instance, log-concave target distributions can be efficiently sampled in high dimension employing sampling algorithms reminiscent of convex optimization algorithms (Chapter \ref{chap:diffusions}).
As another example, if one can find a \emph{proposal distribution} that is close to the target and easy to sample from, then samples from the proposal can be turned into samples from the target (Chapters \ref{chapter1} and \ref{chap:MCintegration}). 

\paragraph{Challenge 2: Evaluating the Target and its Gradient}
The guiding applications to Bayesian statistics and statistical mechanics in Section \ref{sec:motivatingapplications} showcase that oftentimes the target distribution can only be evaluated up to an unknown normalizing constant. In some Bayesian inference problems, evaluating the likelihood function can also be challenging, which has motivated the development of approximate Bayesian computation algorithms, likelihood-free methods, and Monte Carlo algorithms for big data (Chapter \ref{chapter1}). Relatedly, some Monte Carlo methods leverage the gradient of the logarithm of the target density to improve the scalability to high dimension (Chapters \ref{chap:diffusions} and \ref{chap:HMC}), but evaluating these gradients can  be computationally expensive.

\paragraph{Challenge 3: Multi-Modality}
Many sampling algorithms studied in this course are \emph{local} in the sense that each new draw typically lies on a small neighborhood around the previous one (Chapters \ref{chap:MCMC}, \ref{chap:diffusions}, and \ref{chap:HMC}). For these methods, sampling multi-modal target distributions poses a significant challenge, especially when regions of high target density are separated by regions of low density. In this setting, local sampling methods may over-sample around one mode without adequately exploring others. Many strategies have been developed to address this issue, such as relying on flattened or \emph{tempered} ancillary target distributions to speed-up exploration (Chapter \ref{chap:annealing}), or introducing auxiliary variables to define an extended target in a higher dimensional space with a more benign landscape (Chapters \ref{chap:annealing} and \ref{chap:HMC}). 

\paragraph{Challenge 4: Computing Rare Events}
A rather different set of challenges arises when the test function $h(x) = {\bf{1}}_A(x)$ is the indicator function of a small probability event $A.$ In that case, a random sample from the target distribution may typically include no draws that lie in the set $A$ of interest, and the Monte Carlo estimate for $\Prob(A)$ would simply be 0. To obtain an estimate with low \emph{relative error}, it is then paramount to obtain samples from a different proposal distribution, rather than the target (Chapters \ref{chap:MCintegration} and \ref{chap:particlefilters}). How should one choose the proposal distribution in order to minimize the variance of Monte Carlo estimates of rare events? You will find out in Chapter \ref{chap:MCintegration}.

\paragraph{Challenge 5: Choice of Variables and Conditioning of the Target}
For some algorithms, an important challenge is to sample joint distributions with strongly correlated variables (Chapter \ref{chap:gibbs}) or with variables that have different scales (Chapters \ref{chap:MCMC}, \ref{chap:diffusions}, and \ref{chap:HMC}). To alleviate this issue, it is useful, whenever possible, to carefully parameterize the problem before utilizing Monte Carlo methods. Other techniques that help alleviate this issue include  blocking together correlated variables, and adequately preconditioning  sampling algorithms.

\paragraph{Challenge 6: Assessing Convergence}
Assessing the convergence of Monte Carlo methods can be challenging. Diagnostics can be utilized to rule out convergence, but these diagnostics can still not guarantee convergence (Chapter \ref{chap:MCMC}). This challenge is particularly conspicuous when sampling multi-modal distributions, since it is hard to identify lack of converge for algorithms that have adequately sampled around one mode but failed to explore all modes. 

\section{Ten Ideas}\label{sec:themes}
This section summarizes some key ideas that underlie the design of many Monte Carlo methods. 

\paragraph{Idea 1: Deterministic Transformations} To sample from a given target, deterministic transformation methods rely on a transport map to turn draws from an easy-to-sample proposal distribution into draws from the target (Chapter \ref{chapter1}). Under mild assumptions on target and proposal, there are many deterministic transport maps between them. Deterministic transformation methods replace the question of how to sample the target with the question of how to parameterize and learn a transport map between proposal and target distributions. 

\paragraph{Idea 2: Probabilistic Accept/Reject Mechanisms}
In their simplest form, probabilistic accept/reject methods rely on a probabilistic mechanism to turn draws from a proposal distribution into draws from a given target (Chapter \ref{chapter1}). A related idea also underlies Markov chain Monte Carlo methods (Chapter \ref{chap:MCMC}), where a probabilistic accept/reject mechanism is used to turn samples from a proposal Markov kernel into samples from a Markov kernel that satisfies detailed balance with respect to the target. The unifying idea is to leverage a probabilistic rule to turn easy-to-obtain samples into approximate samples from a given target distribution.

\paragraph{Idea 3: Weighting Samples} Rather than deterministically transforming or probabilistically accepting/rejecting samples, some Monte Carlo methods assign weights to draws from a proposal distribution to approximate expected values with respect to the target (Chapters \ref{chap:MCintegration} and \ref{chap:particlefilters}). While not computing a transport map and not rejecting any samples via a probabilistic accept/reject mechanism sounds appealing, a caveat of Monte Carlo methods that rely on weighted samples is that the variance of the weights is typically large when target and proposal are far apart. As a result, one draw from the proposal may receive an exceedingly large weight compared to the others, leading to a small effective sample size. Such weight degeneracy is particularly common in high dimension.

\paragraph{Idea 4: Markov Chains and Local Moves} Some of the most popular Monte Carlo methods rely on random samples that are not independent, but that instead form a Markov chain: the distribution of each new random draw depends on the value of the previous one (Chapters \ref{chap:MCMC}, \ref{chap:gibbs}, \ref{chap:diffusions}, and \ref{chap:HMC}). Positive correlation between samples increases the variance of Monte Carlo estimates (Appendix \ref{chap:markovchains}), but the great flexibility afforded by the Metropolis Hastings framework for Markov chain Monte Carlo makes up for this caveat (Chapter \ref{chap:MCMC}). In particular, proposal Markov kernels that leverage the target and its gradient can be utilized to speed-up convergence in high dimension. 

\paragraph{Idea 5: Discretization of (Stochastic) Differential Equations} 
A key idea to design proposal kernels for Markov chain Monte Carlo algorithms is to leverage Langevin stochastic differential equations and Hamiltonian differential equations that preserve the target (Chapters \ref{chap:diffusions} and \ref{chap:HMC}). Time discretizations of these stochastic and ordinary differential equations provide natural proposal Markov kernels, which, if desired, can be corrected via a probabilistic accept/reject mechanism to ensure convergence to the target.

\paragraph{Idea 6: Optimization} Optimization and sampling share numerous conceptual similarities, and their synergistic combination can be leveraged for algorithmic design. First, many methods to sample log-concave target distributions are similar to convex optimization algorithms:  while gradient descent methods can be viewed as time-discretizations of gradient systems in parameter space, Langevin Monte Carlo methods can be viewed as discretizations of gradient flows in the space of probability densities (Chapter \ref{chap:diffusions}). Second,  annealing algorithms for global optimization sample from a sequence of target distributions that become more peaked around their mode, thus leveraging sampling for optimization (Chapter \ref{chap:annealing}). As a final example, variational inference methods find the closest tractable distribution to a given target, thus solving an optimization problem over densities to approximate integrals with respect to a given target (Chapter \ref{chap:optimization}). 

\paragraph{Idea 7: Annealing and Tempering}
To facilitate sampling multi-modal distributions, a powerful idea is to introduce auxiliary tempered distributions. In this direction, a popular strategy is parallel tempering, where one runs several tempered Markov chains in parallel occasionally swapping their states (Chapter \ref{chap:annealing}). Tempering is also useful in Bayesian inference applications, where instead of directly targeting the posterior with the full data, one may introduce a sequence of intermediate tempered posteriors with smaller batches of data.

\paragraph{Idea 8: Auxiliary Variables} Many Monte Carlo methods speed-up convergence by introducing auxiliary variables. In this course, you will study many algorithms that build on this idea, including simulated tempering (Chapter \ref{chap:annealing}), Hamiltonian Monte Carlo (Chapter \ref{chap:HMC}), and the slice sampler (Chapter \ref{chap:gibbs}). Auxiliary variables also play an important role in sequential Monte Carlo methods (Chapter \ref{chap:particlefilters}). 

\paragraph{Idea 9: Coordinate-Wise Updates}
The Gibbs sampler and the coordinate ascent variational inference algorithm are coordinate-wise sampling and optimization algorithms, where joint distributions are iteratively sampled or approximated by updating only one coordinate at a time (Chapters \ref{chap:gibbs} and \ref{chap:optimization}). Because of the coordinate-wise structure of these algorithms, they are applicable in high dimension. Gibbs samplers are very effective in sampling from intractable joint distributions with easy-to-sample conditionals. Coordinate ascent variational inference is particularly appealing in exponential family models, where closed form formulas can be used to update the parameters.

\paragraph{Idea 10: Minimization-Maximization}
This course also covers the expectation maximization algorithm (Chapter \ref{chap:optimization}), an important example of a minimization-maximization method that plays a similar role in maximum likelihood estimation as the Gibbs sampler does in posterior sampling; in particular the expectation maximization algorithm can be used to initialize the Gibbs sampler.

\section{Course Structure}\label{sec:coursestructure}
These lecture notes developed as a result of several courses taught at the University of Chicago since 2019. Here, we discuss the structure of the course Monte Carlo Simulation  CAAM/STAT 31511 as taught in spring 2024. The prerequisites were multivariate calculus,  linear algebra, a first course in probability and statistics (including Markov chains), and elementary knowledge of ordinary differential equations. We covered all the material in these notes, following the schedule outlined in Table \ref{table:structure} below. To review Markov chain theory in preparation for the class, students were encouraged to read  Appendix \ref{chap:markovchains} and to consult some of the standard textbooks on Markov chains and stochastic processes referenced therein.

The course had weekly homework assignments, each corresponding to one chapter. These 9 homework assignments accounted for 72\% of the grade and we will include them in a future version of these notes. 
In addition, students had to complete a 10-page independent project on a topic of their choice, which accounted for the remaining 28\% of the grade. The goal of this independent project was for students to learn more deeply about a topic of their choice, giving them freedom to focus on theory, methodology, or applications. 

Numerous modifications of this timeline and grading scheme can be considered. At places that follow a semester rather than a quarter system, we would review Markov chain theory before teaching the Metropolis Hastings algorithm in Chapter \ref{chap:MCMC}. We would also include group projects as part of the grading, and have students present their projects to the class over the last two weeks of the semester.

\vspace{1cm}

\setlength{\tabcolsep}{25pt}
\renewcommand{\arraystretch}{1.2}
\begin{table}
\begin{tabular}{cllll}
Week &Date  &    Lecture Notes  & Homework \\\hline \hline
1 & 3/19 &   Chapter 1   & \\
&   3/21 &    Chapter 2   &
  \\\hline
2 & 3/26 &    Chapter 2
 & \\
& 3/28 &  Chapter 3 &  Homework 1 \\\hline
3 &   4/2 &  Chapter 3 &  
    \\
& 4/4  &   Chapter 4 &  Homework 2 \\\hline
4 &4/9    &   Chapter 4
 &  \\
&  4/11    & Chapter 5 &   Homework 3
 \\\hline
5 &  4/16   & Chapter 5 &\\
&  4/18   &    Chapter 6&    Homework 4 \\\hline
6 &4/23   &   Chapter 6 &\\
&   4/25  &    Chapter 7  & Homework 5 \\\hline
7 & 4/30 &
 Chapter 8&    \\
& 5/2 & 
 Chapter 8 & Homework 6  \\\hline
8 & 5/7 &   Chapter 9  &   \\
&  5/9 &  
 Chapter 9   &  Homework 7 \\\hline
9 & 5/14 &   Chapter 10  &   \\
&5/16 &  Chapter 10 & Homework 8  \\ \hline
&5/23   & &Homework 9  \vspace{0.4cm} \\
\end{tabular}
\caption{\label{table:structure} \vspace{1cm}Structure of the course Monte Carlo Simulation CAAM/STAT 31511 as taught in spring 2024.}
\end{table}

\section{Discussion and Bibliography}\label{sec:biblioch1}
One of the first documented Monte Carlo experiments is Buffon’s needle experiment in 1733; see \cite{comte1770histoire} for a solution to this problem. The paper \cite{marquis1812theorie} suggested that this experiment could be used to approximate $\pi.$ Historically, the main drawback of Monte Carlo methods was that they used to be expensive to carry out. Physical random experiments were difficult to perform, and so was the numerical processing of their results. This however changed fundamentally with the advent of the digital computer. Among the first to realize this potential were John von Neumann and Stanislaw Ulam, who were then working for the Manhattan project in Los Alamos; see \cite{eckhardt1987stan}. They proposed in 1947 to use a computer simulation for solving the problem of neutron diffusion in fissionable material. Enrico Fermi previously considered using Monte Carlo techniques in the calculation of neutron diffusion using a mechanical device, the so-called ``Fermiac,'' for generating the randomness. The name ``Monte Carlo'' goes back to Ulam, who claimed to be stimulated by playing poker and whose uncle once borrowed money from him to go gambling in Monte Carlo (unpublished remarks by Ulam in 1983). In 1949 Metropolis and Ulam published their results in the Journal of the American Statistical Association \cite{metropolis1949monte}. Nonetheless, in the following 30 years Monte Carlo methods were used and analyzed predominantly by physicists, and not by statisticians: it was only in the 1980s ---following the paper \cite{geman1987stochastic} that proposed the Gibbs sampler--- that the relevance of Monte Carlo methods in the context of Bayesian statistics was fully realized.

\chapter{Transformation and Accept/Reject Sampling Methods}\label{chapter1}
This chapter is concerned with sampling methods: algorithms to generate random draws from a given distribution. To be concrete, suppose that the \emph{target distribution} that we want to sample from has p.d.f. $\tg$ on a subset $E \subset \R^d.$ The goal is to obtain a \emph{sample} $\X^{(1)},\ldots,\X^{(N)}\in E$ satisfying that each independent draw $\X^{(n)}\in E,$ $1\le n \le N,$ is  randomly distributed with p.d.f. $\tg.$ Intuitively, a histogram of the sample should accurately approximate the target p.d.f. provided that the sample size $N$ is sufficiently large.
We focus on transformation and accept/reject sampling methods to showcase two important overarching themes in the development of Monte Carlo methods: the use of deterministic transformations and probabilistic accept/reject mechanisms to transform samples from a tractable distribution into samples from the target distribution. 

Our starting point will be a ``random'' number generator to produce independent and uniformly distributed draws $U^{(1)},\ldots,U^{(N)} \sim \text{Unif}(0,1).$ In practice, random number generators are deterministic algorithms;  each \emph{pseudo-random} number $U^{(n)}$ is obtained by iteratively applying a deterministic transformation to a deterministically chosen initial seed. While the procedure is deterministic and the resulting sample is not truly random, it imitates the behavior of a uniform sample in that statistical tests for departure from independence and the uniform distribution are not rejected more often than would be expected by chance. Similarly, the sampling methods considered in this chapter are not truly random, but imitate the statistical behavior of a sample drawn from the target distribution.
Throughout this course, we assume to have available a random number generator, that is, a computer code that outputs values $U^{(1)},\ldots,U^{(N)}$ that are statistically indistinguishable from an i.i.d. sample from the Unif$(0,1)$ distribution.

When the target distribution is not uniform and does not belong to a standard parametric model (e.g. normal, exponential, geometric, etc.), sampling can be challenging, especially in high dimension. In this chapter, we consider two general approaches for sampling: transformation methods (Section \ref{sec:transformation}) and accept/reject methods (Section \ref{sec:acceptreject}).
Both approaches first produce a sample from a \emph{reference} or \emph{proposal distribution} that is tractable in the sense that it is easy to sample from. Transformation methods find a map to turn draws from the reference distribution into draws from the target distribution; accept/reject methods rely on a probabilistic mechanism to turn draws from the reference distribution into draws from the target distribution. The chapter closes in Section \ref{sec: bibliochapter1} with bibliographic remarks. For simplicity, we assume throughout that the target distribution is discrete or continuous, and characterized by a p.m.f. or p.d.f. denoted by $\tg.$ Similarly, the reference distribution will be characterized by a p.m.f. or p.d.f. denoted by $\pr.$

\section{Transformation Methods}\label{sec:transformation}
Let $\X \sim f$ be a random variable taking values on a set $E \subset \R^d$ and let $U \sim \text{Unif}(0,1)^d$ be a random variable uniformly distributed on the $d$-dimensional unit cube $(0,1)^d$. In this section, we seek to find a map $\T : (0,1)^d \to E$ so that the random variable $\T(U)$ has the same distribution as $\X,$ written $\X \stackrel{d}{=} \T(U).$  Once we find such a \emph{push-forward} or \emph{transport} map $\mathcal{T}$, we can obtain a sample from the target distribution $\tg$ by:
 \begin{enumerate}
     \item Sampling $U^{(1)}, \ldots, U^{(N)} \stackrel{\text{i.i.d.}}{\sim} \text{Unif}(0,1)^d;$ and
     \item Setting $\X^{(n)} := \T(U^{(n)})$ for $1 \le n \le N.$
 \end{enumerate}
It is natural to ask if such a transport map $\T$ is guaranteed to exist. We will provide an affirmative answer and an explicit construction in this section under the assumption that the target admits a positive p.d.f. For exposition purposes, we consider first the scalar case $d=1$ in Subsection \ref{sssec:inversetransformation}, and then generalize to the $d$-dimensional case in Subsection \ref{sssec:RKrearrangement}.

\subsection{Inverse Transformation Method}\label{sssec:inversetransformation}
Let $\X \sim f$ be a real-valued random variable, and denote by $F$ its c.d.f. The inverse transformation method relies on the  (generalized) inverse c.d.f. of $X,$ given by 
$$ F^-(u):= \inf \{x: F(x) \ge u\},$$
to transform a uniform sample into a sample from $\tg.$ The following result shows that the inverse c.d.f.
is a valid transport map. 
\begin{theorem}[Transport: Uniform to Target]\label{thm:inversetransf}
Let $X$ be a real-valued random variable with inverse c.d.f. $F^-$ and let $U\sim \emph{Unif}(0,1).$ It holds that $F^-(U) \stackrel{d}{=} \X.$
\end{theorem}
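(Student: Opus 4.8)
The plan is to show that $F^-(U)$ and $X$ share the same cumulative distribution function, namely $F$; since the distribution of a real-valued random variable is determined by its c.d.f., this immediately yields $F^-(U) \stackrel{d}{=} X.$ Concretely, I would fix an arbitrary $x \in \R$ and compute $\Prob\bigl(F^-(U) \le x\bigr),$ aiming to show it equals $F(x).$

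The crux of the whole argument is the following equivalence, a ``Galois connection'' between a c.d.f. and its generalized inverse:
$$F^-(u) \le x \quad \Longleftrightarrow \quad u \le F(x), \qquad u \in (0,1), \ x \in \R.$$
Once this is in hand, the conclusion is immediate: since $U \sim \text{Unif}(0,1)$ and $F(x) \in [0,1],$ we obtain $\Prob\bigl(F^-(U) \le x\bigr) = \Prob\bigl(U \le F(x)\bigr) = F(x),$ which is exactly the c.d.f. of $X$ evaluated at $x.$ So essentially all the work is in proving the displayed equivalence.

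To establish the equivalence I would treat the two implications separately. The direction $(\Leftarrow)$ is the easy one: if $u \le F(x),$ then $x$ belongs to the set $\{y : F(y) \ge u\},$ so its infimum $F^-(u)$ is at most $x.$ The direction $(\Rightarrow)$ is the step I expect to be the main obstacle, because it is where the finer structure of $F$ is needed. Writing $x_0 := F^-(u),$ the definition of the infimum furnishes a sequence $y_n \downarrow x_0$ with $F(y_n) \ge u$ for all $n$; invoking the \emph{right-continuity} of $F$ then gives $F(x_0) = \lim_n F(y_n) \ge u,$ so the infimum defining $F^-(u)$ is in fact attained. Combining $x_0 \le x$ with the \emph{monotonicity} of $F$ finally yields $F(x) \ge F(x_0) \ge u,$ as required.

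The single point demanding care throughout is that $F^-$ is merely a \emph{generalized} inverse: $F$ need not be continuous or strictly increasing, so one cannot simply ``apply $F$ to both sides'' of an inequality. The two defining features of a c.d.f. that rescue the argument are monotonicity (used in both implications) and right-continuity (used precisely to show the infimum is attained in the forward direction), and I would flag explicitly where each is invoked.
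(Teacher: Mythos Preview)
Your proof is correct and in fact more complete than the paper's. Both arguments follow the same strategy---showing that the c.d.f. of $F^-(U)$ coincides with $F$---but the paper only treats the special case where $F$ is a bijection from $\R$ to $(0,1)$, in which case $F^- = F^{-1}$ and the equivalence $F^{-1}(U) \le x \iff U \le F(x)$ is immediate; the general case is explicitly left as an exercise. Your version handles the general case directly by establishing the Galois-type equivalence $F^-(u) \le x \iff u \le F(x)$ from monotonicity and right-continuity of $F$, which is exactly the content the paper defers. So what you have done differently is supply the exercise the paper omits, and what it buys is a proof that covers discontinuous or non-strictly-increasing c.d.f.s without any additional assumptions.
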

\begin{proof}
We only prove here the case where $F$ is a bijection from $\R$ to $(0,1).$ In such a case, it holds that $F^-=F^{-1},$ as illustrated in Figure \ref{fig:transformationmethod}.
Then, for any $x \in \R,$
$$\Prob \bigl(F^-(U)\leq x\bigr)=\Prob \bigl(F^{-1}(U)\leq x\bigr)=\Prob\bigl(U\leq F(x)\bigr)=F(x),$$
which shows that $F$ is the c.d.f. of the random variable $F^-(U).$ 
The general case where the c.d.f. $F$ may not be invertible is left as an exercise.  \hfill \qedhere
\end{proof}

  \FloatBarrier
    \begin{figure}[htp]
      \centering
      \includegraphics[width=0.75\columnwidth]{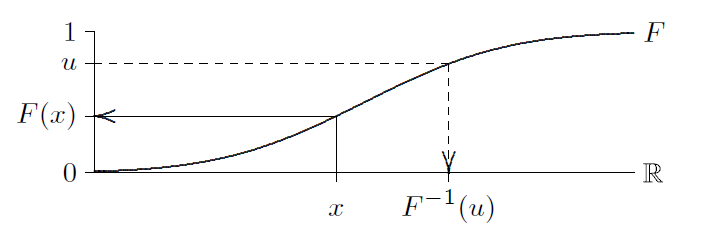}
      \caption{A strictly increasing c.d.f. and associated inverse c.d.f.}
      \label{fig:transformationmethod}
    \end{figure}
    \FloatBarrier

The inverse transformation method produces a sample from the target distribution as follows: 
\FloatBarrier
\begin{algorithm}
\caption{\label{alg:general} Inverse Transformation Method}
\begin{algorithmic}[1]
\STATE {\bf Input}: Target distribution $f$ with inverse c.d.f. $F^-,$ sample size $N.$
\STATE Set $\X^{(n)} := F^-(U^{(n)}),\quad 1\leq n\leq N,$ \quad 
where $ U^{(1)}, \ldots, U^{(N)} \stackrel{\text{i.i.d.}}{\sim} \text{Unif}(0,1)$.\STATE {\bf Output}: Sample $ \{\X^{(n)}\}_{n=1}^N \stackrel{\text{i.i.d.}}{\sim} f.$
\end{algorithmic}
\end{algorithm}

\begin{mybox}[colback=white]{Pros and Cons}
The inverse transformation method is very efficient whenever $F^-$ has a closed form expression that can be evaluated cheaply.
However, Algorithm \ref{alg:general} is only applicable to sample real-valued random variables. 
\end{mybox}

\begin{example}[Inverse Transformation Method: Exponential Distribution]
Suppose that $\X\sim \text{Exponential}(\lambda)$, so that $F(x)=1-e^{-\lambda x}, \, x\geq 0$. Let $U \sim \text{Unif}(0,1).$ Then,
$$ -\frac{1}{\lambda}\log(1-U)\sim \text{Exponential}(\lambda).$$
Using that $1-U \stackrel{d}{=} U$, we also have that
$$ -\frac{1}{\lambda}\log(U)\sim \text{Exponential}(\lambda).$$
Figure \ref{fig:inverse_transformation_exponential} shows a histogram of $N = 10^5$ draws generated from an Exponential(1) distribution using the inverse transformation method. \hfill \qedhere
\end{example}

\FloatBarrier

\begin{figure}[!htb]
\minipage{0.5\textwidth}
  \includegraphics[width=0.9\linewidth,height = 4.65cm]{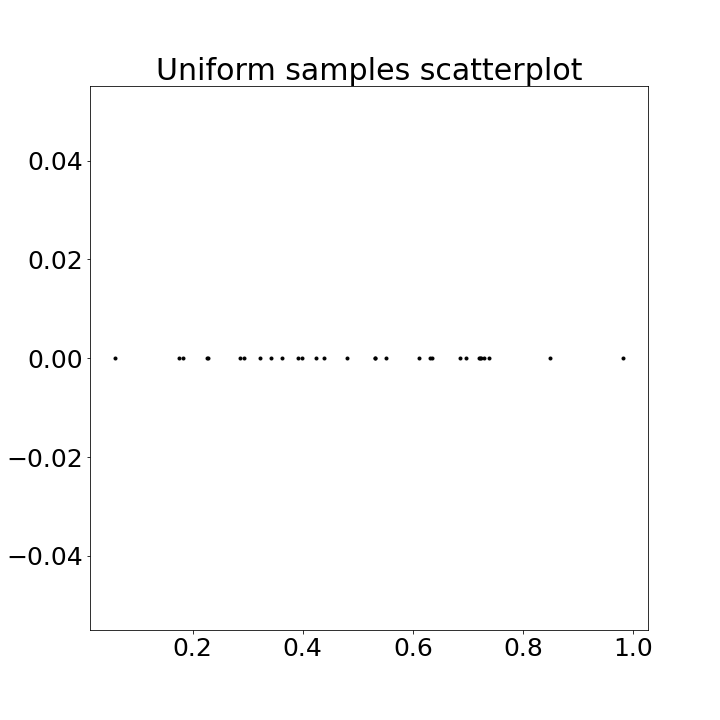}
\endminipage\hfill
\minipage{0.5\textwidth}
  \includegraphics[width=0.9\linewidth,height = 4.65cm]{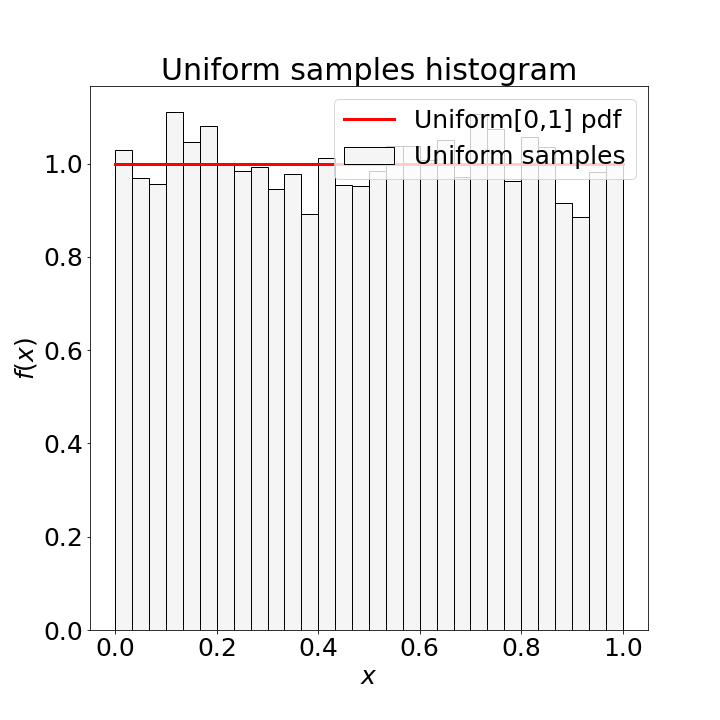}
\endminipage\hfill
\minipage{0.5\textwidth}%
  \includegraphics[width=0.9\linewidth,height = 4.65cm]{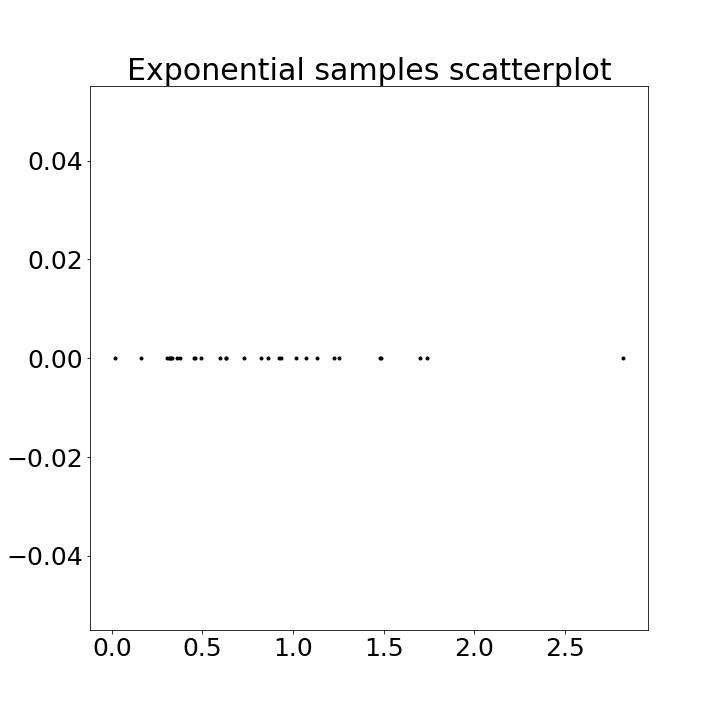}
\endminipage\hfill
\minipage{0.5\textwidth}%
  \includegraphics[width=0.9\linewidth,height = 4.65cm]{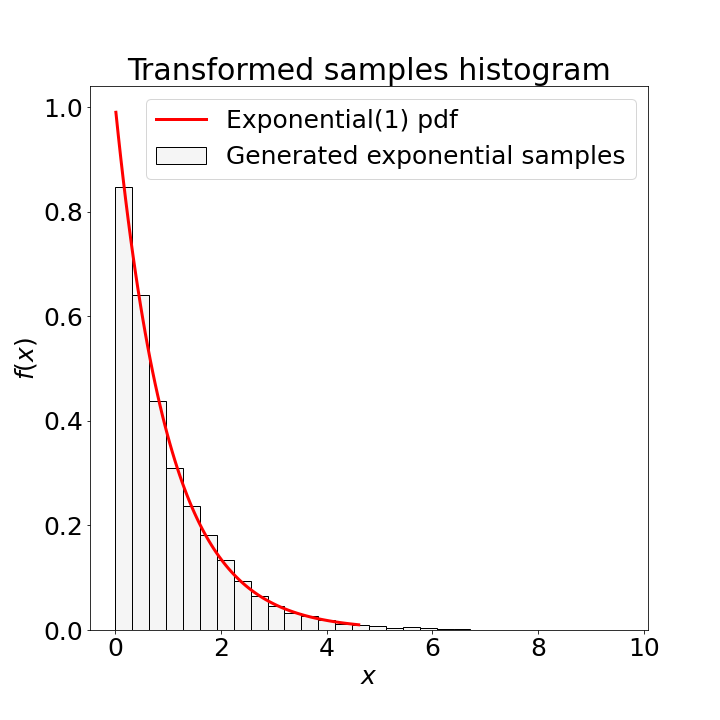}
\endminipage\hfill
\caption{Inverse transformation method for sampling from an Exponential(1) distribution. Uniform samples (first row) are transformed into samples from the target distribution (second row).}
\label{fig:inverse_transformation_exponential}
\end{figure}

\FloatBarrier


So far, we have discussed how to transform a uniform sample into a sample from a given target distribution. In some applications, it is important to use a proposal distribution that is not uniform. The following theorem provides a transport map from proposal to target.

\begin{theorem}[Transport: Proposal to Target]\label{th:transportproposal}
  Let $Z$ be a real-valued random variable with continuous and strictly increasing c.d.f. $F_Z$ and let $X$ be a real-valued random variable with inverse c.d.f. $F_X^-.$   Then, it holds that $F_X^-\circ F_Z(Z)  \stackrel{\text{d}}{=} X.$ 
\end{theorem}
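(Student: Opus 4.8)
The plan is to factor the claimed transport map through a uniform random variable and then invoke Theorem~\ref{thm:inversetransf}. Concretely, I would set $U := F_Z(Z)$ and argue in two steps: first, that $U \sim \mathrm{Unif}(0,1)$; and second, that $F_X^-(U) \stackrel{d}{=} X$, which is exactly the content of the preceding theorem once $U$ is known to be uniform. Composing the two identities then gives $F_X^- \circ F_Z(Z) = F_X^-(U) \stackrel{d}{=} X$, as desired.

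The crux is therefore the first step, the \emph{probability integral transform}: showing that pushing $Z$ through its own continuous, strictly increasing c.d.f. yields a uniform variable. Because $F_Z$ is continuous and strictly increasing on $\R$, with $\lim_{z\to-\infty}F_Z(z)=0$ and $\lim_{z\to+\infty}F_Z(z)=1$, it is a bijection from $\R$ onto $(0,1)$ and hence admits a genuine inverse $F_Z^{-1}:(0,1)\to\R$. For any $u \in (0,1)$ I would then compute
$$\Prob\bigl(U \le u\bigr) = \Prob\bigl(F_Z(Z) \le u\bigr) = \Prob\bigl(Z \le F_Z^{-1}(u)\bigr) = F_Z\bigl(F_Z^{-1}(u)\bigr) = u,$$
where the middle equality uses that $F_Z$ is strictly increasing, so that $F_Z(Z) \le u \iff Z \le F_Z^{-1}(u)$, and the last equality uses that $F_Z^{-1}$ is a right inverse of $F_Z$. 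Since the c.d.f. of $U$ equals that of a $\mathrm{Unif}(0,1)$ variable on $(0,1)$, and is $0$ below and $1$ above, we conclude $U \sim \mathrm{Unif}(0,1)$.

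With $U \sim \mathrm{Unif}(0,1)$ in hand, Theorem~\ref{thm:inversetransf} applied to the random variable $X$ gives $F_X^-(U) \stackrel{d}{=} X$ directly, closing the argument. The only place demanding care is the justification of the event equality $\{F_Z(Z)\le u\} = \{Z \le F_Z^{-1}(u)\}$; this is where strict monotonicity, and not merely continuity, of $F_Z$ is used, and it is the step I would expect to be the main, if modest, obstacle. If one wished to weaken the hypothesis to a merely continuous c.d.f., one would instead verify $\Prob(F_Z(Z)\le u)=u$ using the generalized inverse together with a short argument handling the flat pieces of $F_Z$, but the stated hypotheses make the bijective route the cleanest.
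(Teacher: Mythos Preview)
Your proposal is correct and follows essentially the same route as the paper: reduce to Theorem~\ref{thm:inversetransf} by showing $F_Z(Z)\sim\mathrm{Unif}(0,1)$ via the probability integral transform computation $\Prob(F_Z(Z)\le u)=\Prob(Z\le F_Z^{-1}(u))=F_Z(F_Z^{-1}(u))=u$. The paper's proof is terser but otherwise identical in structure and content.
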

\begin{proof}
By Theorem \ref{thm:inversetransf}, it suffices to show that $F_Z(Z) \sim \text{Unif}(0,1).$
To that end, note that, for any $z \in (0,1),$
\begin{equation*}
  \Prob \bigl(F_Z(Z) \le z \bigr) = \Prob \bigl(Z \le F_Z^{-1}(z) \bigr) = F_Z \bigl(F_Z^{-1}(z)\bigr) = z,
\end{equation*}
as desired. \hfill \qedhere
\end{proof}

Let $X \sim f$ and let $Z \sim g,$ where $g$ is a strictly positive p.d.f. In view of Theorem \ref{th:transportproposal}, we can obtain a draw from $f$ by (1) drawing $Z^{(n)} \sim g;$ and (2) setting $X^{(n)} : = F_X^-\circ F_Z(Z^{(n)}).$

\subsection{Knothe-Rosenblatt Rearrangement}\label{sssec:RKrearrangement}
We now generalize the inverse transformation method to target distributions on $\R^d,$  $d\in \N.$ 
 For ease of exposition, let us consider first the case $d = 2.$ Let $X = (X_1,X_2)\sim f$ and let $U = (U_1, U_2) \sim \text{Unif}(0,1)^2.$ The idea is to \emph{disintegrate} the target $f$  as the product of the marginal of $X_1$ times the conditional of $X_2$ given $X_1$:
$$f(x) = f_{X_1}(x_1) f_{X_2|X_1 = x_1} (x_2 |x_1), \quad x = (x_1,x_2) \in \R^2.$$

We then sequentially apply the inverse transformation method to the one-dimensional targets 
 $f_{X_1}$ and $f_{X_2 |X_1=x_1}$ as will be detailed next. 
 
Denote by $F_{X_1}$ the marginal c.d.f. of $X_1.$ By the inverse transformation method, we know that $F_{X_1}^- (U_1) \stackrel{d}{=} X_1.$ Now, for any $x_1 
\in \R$ denote by $F_{X_2|X_1 = x_1}$ the c.d.f. of $X_2|X_1 = x_1.$ By the inverse transformation method, we know that $F_{X_2|X_1 = x_1}^- (U_2) \stackrel{d}{=} X_2|X_1=x_1.$ To summarize,
define recursively maps $\T^1$ and $\T^2$ by
\begin{alignat*}{2}
    \T^1: (0,1)  &\to \R,           \hspace{3.5cm}   \T^2:(0,1)^2 &&\to \R, \\ 
    u_1 &\mapsto F_{X_1}^-(u_1),    \hspace{3cm}  (u_1,u_2) &&\mapsto F_{X_2|X_1 = \T^1(u_1)}^-(u_2).
\end{alignat*}
Then, the triangular map $\T : (0,1)^2 \to \R^2$ given by
    \begin{equation}\label{eq:KR2d}
    \T(u) = 
    \begin{bmatrix*}[l]
    \,\,\T^1(u_1 ) \\
    \,\,\T^2 (u_1 , u_2  ) 
    \end{bmatrix*},\quad \quad u = (u_1,u_2)\in \R^2,
\end{equation}
satisfies that $\T(U) \stackrel{d}{=} X.$ The map $\T$ defined in \eqref{eq:KR2d} is called the \emph{Knothe-Rosenblatt rearrangement} from $\text{Unif}(0,1)^2$ to $f.$ The construction can be generalized to higher dimension:

\begin{theorem}[Transport via Triangular Maps]\label{th:transport}
Let $f$ be a strictly positive p.d.f. on $\R^d.$ Let $X\sim f$ and let $U \sim \emph{Unif}(0,1)^d.$ 
For $1 \le i \le d,$ define maps $\T^i$ as follows:\footnote{For vector $u = (u_1, \ldots, u_d) \in \R^d$ and $1\le i \le d,$ we denote by $u_{1:i}:= (u_1, \ldots, u_i) \in \R^i$ the vector defined by the first $i$ coordinates of $u.$ Likewise, for triangular map $\T:\R^d \to \R^d,$ we denote by $\T^{1:i}: \R^i \to \R^i$ the map defined by the first $i$ components of $\T.$} 
\begin{align}
   &\T^1(u_1) = F_{X_1}^-(u_1), \\
   &\T^{i}(u_{1:i}) := F^-_{X_{i}| X_{1:i-1}=\T^{1:i-1}(u_{1:i-1})}(u_i), \quad \quad  2 \le i \le d. 
\end{align}
Then, the triangular map
    \begin{equation}
    \T(u) = 
    \begin{bmatrix*}[l]
    \,\,\T^1(u_1 ) \\
    \,\,\T^2 (u_1 , u_2  ) \\
   \quad  \vdots \\
    \,\,\T^d (u_1,\ldots, u_d )
    \end{bmatrix*}, \quad \quad u = \bigl(u_1, \ldots, u_d \bigr) \in \R^d,
\end{equation}
satisfies that $X \stackrel{\text{d}}{=} \T(U).$ Moreover, for all $1\le i \le d,$ the components $\T^i : (0,1)^d \to \R$ are monotonically increasing in the $i$-th variable. 
\end{theorem}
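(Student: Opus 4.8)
The plan is to argue by induction on the dimension $d$, using the disintegration of $f$ into a marginal times a sequence of conditionals and invoking the one-dimensional result (Theorem~\ref{thm:inversetransf}) at each stage. The base case $d=1$ is exactly Theorem~\ref{thm:inversetransf}, since then $\T = \T^1 = F_{X_1}^-$ and $\T(U_1) = F_{X_1}^-(U_1) \stackrel{\text{d}}{=} X_1$. The monotonicity claim will be handled separately and is essentially immediate.

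For the inductive step, I would suppose the claim holds in dimension $d-1$ and write $U = (U_{1:d-1}, U_d)$, where $U_{1:d-1} \sim \text{Unif}(0,1)^{d-1}$ and $U_d \sim \text{Unif}(0,1)$ are independent. The key structural observation is that $\T^{1:d-1}$ is precisely the triangular map associated with the marginal density $f_{X_{1:d-1}}$, so the inductive hypothesis yields $\T^{1:d-1}(U_{1:d-1}) \stackrel{\text{d}}{=} X_{1:d-1}$. To incorporate the last coordinate, I would condition on $U_{1:d-1}$: for any bounded measurable $h:\R^d \to \R$,
\begin{equation*}
\Expect\bigl[h(\T(U))\bigr] = \Expect_{U_{1:d-1}} \Bigl[ \Expect_{U_d}\bigl[ h\bigl(\T^{1:d-1}(U_{1:d-1}),\, F^-_{X_d \mid X_{1:d-1}=\T^{1:d-1}(U_{1:d-1})}(U_d)\bigr) \bigr] \Bigr].
\end{equation*}
Freezing $U_{1:d-1}$ fixes the value $x_{1:d-1} := \T^{1:d-1}(U_{1:d-1})$, and Theorem~\ref{thm:inversetransf} applied to the one-dimensional target $X_d \mid X_{1:d-1}=x_{1:d-1}$ identifies the inner expectation with $\int h(x_{1:d-1}, x_d)\, f_{X_d\mid X_{1:d-1}=x_{1:d-1}}(x_d)\, dx_d$. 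Taking the outer expectation and using the inductive hypothesis for the law of $\T^{1:d-1}(U_{1:d-1})$ reassembles the disintegration $f(x) = f_{X_{1:d-1}}(x_{1:d-1})\, f_{X_d\mid X_{1:d-1}=x_{1:d-1}}(x_d)$, giving $\Expect[h(\T(U))] = \int h(x) f(x)\, dx = \Expect[h(X)]$; since $h$ is arbitrary, $\T(U) \stackrel{\text{d}}{=} X$.

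The monotonicity claim is immediate: in its $i$-th variable, $\T^i$ equals a generalized inverse c.d.f. $u_i \mapsto F^-_{\,\cdot\,}(u_i)$, and every generalized inverse is nondecreasing by construction. Strict positivity of $f$ then forces each conditional density, and hence each conditional c.d.f., to be strictly increasing and continuous, so each $\T^i$ is genuinely increasing in $u_i$.

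The main obstacle I expect is not the algebra but the measure-theoretic bookkeeping in the recursion: one must ensure that the conditional densities $f_{X_i \mid X_{1:i-1}}$ are well defined at \emph{every} conditioning value produced by the map, so that the recursive definition of $\T^i$ is unambiguous and the conditioning step above is legitimate. This is exactly where the hypothesis that $f$ is strictly positive is essential --- it guarantees that each conditional density is given by the honest ratio $f_{X_{1:i}}/f_{X_{1:i-1}}$, is strictly positive, and therefore defines a strictly increasing continuous c.d.f. whose generalized inverse is a genuine inverse, eliminating any null-set ambiguities in the disintegration.
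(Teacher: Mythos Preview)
Your inductive argument is correct and is the natural formalization of the idea. Note, however, that the paper does not actually supply a formal proof of this theorem: it works out the case $d=2$ in detail just before the statement and then presents the general result without proof. Your induction on $d$ is precisely the expected generalization of that $d=2$ discussion, so there is nothing to compare against beyond observing that you have filled in what the paper leaves implicit.
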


The map $\T$ defined in Theorem \ref{th:transport} is called the Knothe-Rosenblatt (KR) rearrangement from Unif$(0,1)^d$ to $f.$ The map $\T$ is referred to as a \textit{triangular} map since its Jacobian is a (lower) triangular matrix. The KR rearrangement can be used for sampling, as summarized in the following algorithm:

\FloatBarrier
\begin{algorithm}
\caption{\label{alg:KR} Sampling via the KR Rearrangement}
\begin{algorithmic}[1]
\STATE {\bf Input}: KR rearrangement $\T$ from Unif$(0,1)^d$ to $f,$ sample size $N.$
\STATE Set $\X^{(n)} := \T(U^{(n)}),\quad 1\leq n\leq N,$ \quad 
where $U_1, \ldots, U^{(N)}\stackrel{\text{i.i.d.}}{\sim} \text{Unif}(0,1)^d$.\STATE {\bf Output}: Sample $\{\X^{(n)}\}_{n=1}^N \stackrel{\text{i.i.d.}}{\sim}f.$
\end{algorithmic}
\end{algorithm}
\FloatBarrier

\FloatBarrier
\begin{mybox}[colback=white]{Pros and Cons}
As with other transformation methods, sampling via the KR rearrangement is very efficient if we can find and evaluate the KR map.  However, computing the KR rearrangement can be expensive in high dimension. Conditional independence of the target-reference pair leads to sparsity in the KR map that can be exploited when computing or estimating the KR map. We refer to Section \ref{sec: bibliochapter1} for further discussion on this topic.  
\end{mybox}
\bigskip

Using Theorem \ref{th:transportproposal}, it is possible to generalize the construction of the KR rearrangement to transport a strictly positive proposal p.d.f. $g$ on $\R^d$ into a strictly positive target p.d.f. $f.$ Such generalization is left as an exercise.

\section{Accept/Reject Methods}\label{sec:acceptreject}
In this section, we study rejection sampling as a prototypical example of an \emph{accept/reject} method. The idea is to sample from a \emph{proposal} (also known as reference or instrumental) distribution and discard samples that are unlikely under the \emph{target} distribution of interest.
We denote the target distribution by $\tg$ and the proposal distribution by  $\pr$. Typically, the target is expensive or impossible to sample from directly, but the proposal is easy to sample from. We present rejection sampling in Subsection \ref{ssec:rejectionsampling} and a related accept/reject approach for Approximate Bayesian Computation (ABC) in Subsection \ref{ssec:ABC}.

\subsection{Rejection Sampling}\label{ssec:rejectionsampling}
Before introducing the rejection sampling algorithm, we start with a simple identity and an example, both of which will correspond to a uniform proposal distribution $\pr$ in the subsequent developments. 
 
First, the identity
\begin{equation}\label{eq:identitych1}
    \tg(x)=\int_{0}^{\tg(x)}1 \, du
\end{equation}
shows that $\tg(x)$ is the marginal of a random vector that is uniformly distributed on the area under the density $\tg(x)$, that is, on the region $\bigl\{(x,u):0\leq u\leq \tg(x) \bigr\}.$ In the next example, we use the identity \eqref{eq:identitych1} to obtain an algorithm to sample from a beta distribution.

\FloatBarrier

\begin{figure}[!htb]
\minipage{0.5\textwidth}
  \includegraphics[width=\linewidth]{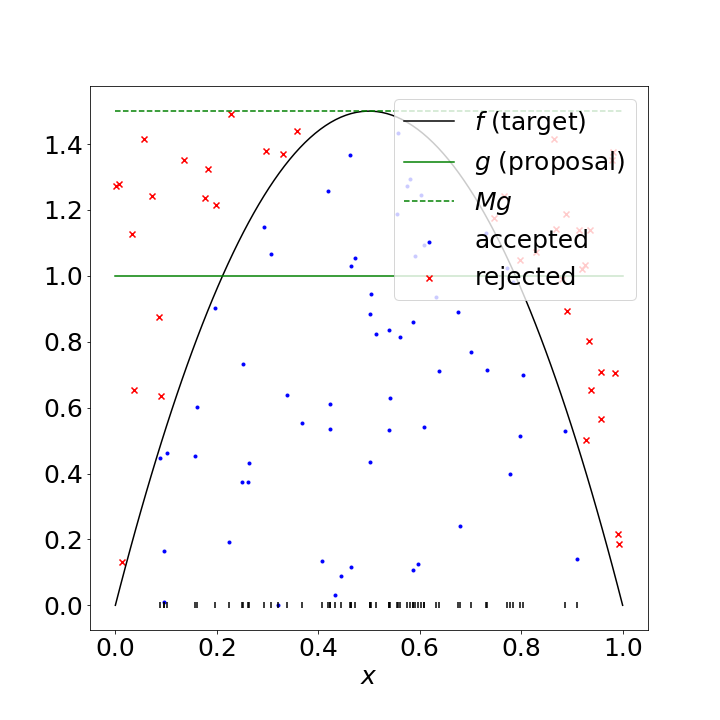}
\endminipage\hfill
\minipage{0.5\textwidth}
  \includegraphics[width=\linewidth]{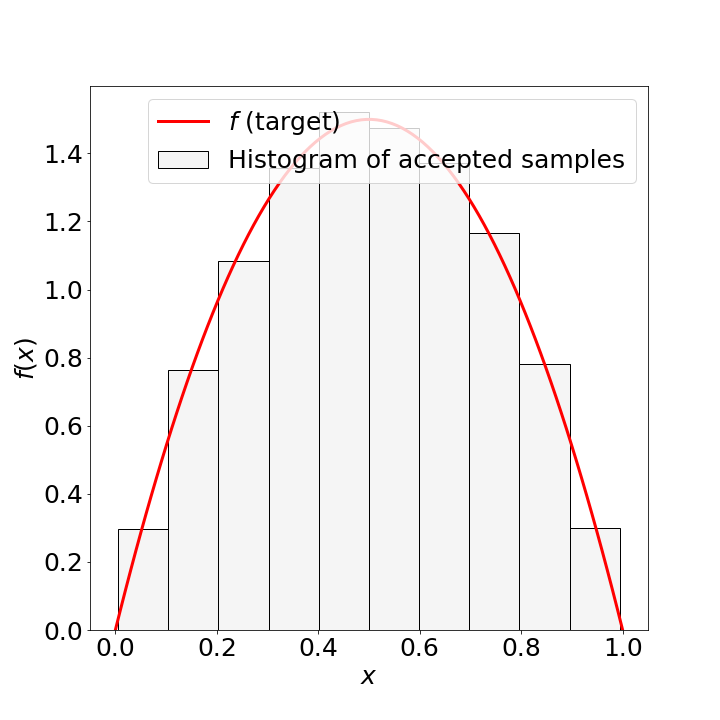}
\endminipage\hfill
\caption{Rejection sampling for a $\text{Beta}(2,2)$ distribution, see Example \ref{ex:example}. The target p.d.f. is given by $\tg(x) = 6x(1-x) \ \text{for} \ x \in (0,1).$ Proposed samples are obtained by uniformly sampling in the rectangle with base $1$ and height $M.$ We accept those samples that lie below the curve of $f.$ The first coordinate of each accepted sample is distributed according to the target.}
\label{fig:rejectionsampling}
\end{figure}

\FloatBarrier
  
\begin{example}[Rejection Sampling: Beta Distribution]\label{ex:example}
Let  $\tg=\text{Beta}(\alpha,\beta)$ so that $f(x) \propto x^{\alpha -1}(1-x)^{\beta -1}$ for $x\in (0,1),$ and suppose that $\alpha,\beta>1.$ The identity \eqref{eq:identitych1} suggests that to obtain a draw $\X^{(n)}\sim f$, we could draw uniformly in the rectangle with base $1$ and height $M$ until we get a draw under the curve $\tg$ (see Figure \ref{fig:rejectionsampling}), and then keep the first coordinate.
The following pseudocode formalizes the procedure: 
\FloatBarrier
\begin{algorithm}
  \begin{algorithmic}[1]
  \STATE For $n=1, \ldots, N$ do:
    \STATE{ \textbf{Step 1:} Generate independently $\Z^{(n)}\sim \text{Unif}(0,1)$ and  $U^{(n)}\sim \text{Unif}(0,1)$, so $(\Z^{(n)},MU^{(n)})$ is uniform on the rectangle $[0,1] \times [0,M]$.}
    \STATE{ \textbf{Step 2:} Set $\X^{(n)}=\Z^{(n)}$ if $\tg(\Z^{(n)})\geq MU^{(n)}$ (that is, if $(\Z^{(n)},MU^{(n)})$ lies under the curve $f$). Otherwise, go back to step 1.}
  \end{algorithmic}
\end{algorithm}
\FloatBarrier
 Any $\X^{(n)}$ defined this way has p.d.f. $f$ as desired.
Note that the conditional probability that $(\Z^{(n)},MU^{(n)})$ is accepted if $\Z^{(n)}=z$ is
$$\Prob\Bigl(MU^{(n)}<\tg(\Z^{(n)}) \Big|\Z^{(n)}=z\Bigr)=\Prob\biggl(U^{(n)}<\frac{\tg(z)}{M}\biggr)=\frac{\tg(z)}{M}.$$
Since $U^{(n)}$ and $\Z^{(n)}$ are independent, we can rewrite the above procedure as follows:

\FloatBarrier
\begin{algorithm}
  \begin{algorithmic}[1]
  \STATE For $n=1, \ldots, N$ do:
    \STATE{\textbf{Step 1:} Draw $\Z^{(n)}\sim \text{Unif}(0,1).$}
    \STATE{\textbf{Step 2:} Set $\X^{(n)}:=\Z^{(n)}$ with probability $\frac{\tg(\Z^{(n)})}{M}$. Otherwise, go back to step 1.}
  \end{algorithmic}
\end{algorithm}
\FloatBarrier
\hfill \qedhere
\end{example}

Example \ref{ex:example} uses that the density $\text{Beta}(\alpha,\beta), \alpha,\beta>1$ is bounded within a (finite) rectangle. The proposal distribution can then be taken to be uniform. We now generalize the procedure to cases where the range or the domain of $\tg$ are unbounded. To that end, we use a proposal $\pr$ satisfying the following conditions:
\begin{itemize}
\item $\pr$ can be sampled from and evaluated.
\item $\pr$ has compatible support with $\tg$ (i.e. $\pr(x)>0$ when $\tg(x)>0$).
\item There is $M>0$ such that, for all  $x\in E,$ $\tg(x)\leq M \pr(x).$
\end{itemize}
Notice that for the target $\tg = \text{Beta}(2,2)$ in the example in Figure \ref{fig:rejectionsampling}, the choices $\pr = \text{Unif}(0,1)$ and $M = 3/2$ satisfy these conditions. We are ready to introduce the rejection sampling algorithm: 

\FloatBarrier
\begin{algorithm}
  \caption{Rejection Sampling\label{algo:rejection}}
  \begin{algorithmic}[1]
  \STATEx{ \textbf{Input:} Target $\tg,$  proposal $\pr,$  constant $M>0$ such that, for all $x \in E,$ $\tg(x)\leq M \pr(x)$, sample size $N.$}
  \STATEx For $n = 1, \ldots, N$ do:
  \STATE{\textbf{Step 1:} Draw $\Z^{(n)} \sim g$ independently of all other draws.}
  \STATE{\textbf{Step 2:} Set $\X^{(n)}:=\Z^{(n)}$ with probability $\frac{\tg(\Z^{(n)})}{M\pr(\Z^{(n)})}$. Otherwise, go to step 1.}
  \STATEx{\textbf{Output:} Sample  $\{X^{(n)}\}_{n=1}^N \stackrel{\text{i.i.d.}}{\sim} f.$}
    \end{algorithmic}
\end{algorithm}
\FloatBarrier


For a convenient implementation of steps 1 and 2, we may proceed as follows:
\FloatBarrier
\begin{algorithm}
  \begin{algorithmic}[1]
    \STATE{\textbf{Step 1:} Draw $\Z^{(n)}\sim g$ and $U^{(n)}\sim \text{Unif}(0,1)$ independently.}
    \STATE{\textbf{Step 2:} Accept $\X^{(n)}=\Z^{(n)}$ if $U^{(n)}\leq \frac{\tg(\Z^{(n)})}{M\pr(\Z^{(n)})}$. Otherwise, go to step 1.}
  \end{algorithmic}
\end{algorithm}
\FloatBarrier

\FloatBarrier
\begin{mybox}[colback=white]{Pros and Cons}
 Rejection sampling can be used when $\tg$ and $\pr$ are only known up to a normalizing constant (this will be the content of an exercise in a forthcoming version of these notes). The method is very versatile and can be applied in general state space $E,$ beyond the Euclidean setting considered here.
Potential disadvantages include that (i) samples that are not accepted are thrown away; (ii) the time to obtain each sample is random; and (iii) the method suffers from the curse of dimension, as will be discussed below. 
\end{mybox}
\FloatBarrier

The next result shows that the rejection sampling algorithm produces draws distributed according to the target. It also characterizes the acceptance rate under the assumption that $\tg$ and $\pr$ are normalized; otherwise, the second and third assertions need to be adjusted. 
\begin{theorem}[Rejection Sampling Acceptance Rate]\label{th:rejectionsampling} \leavevmode
  \begin{enumerate}
    \item The draws $\X^{(n)}$ generated by the rejection sampling algorithm have distribution $\tg$ and are independent.
    \item The probability that a draw $Z^{(n)}$ generated in step 1 is accepted in step 2 is $1/M$.
    \item The algorithm will finish in finite time (with probability one). The expected number of iterations to accept each sample $\X^{(n)}$ is $M$.
  \end{enumerate}
\end{theorem}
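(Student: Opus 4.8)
The plan is to analyze a single iteration of the inner loop and then propagate the conclusion to the full algorithm via its restart structure. Throughout, the working assumptions are that $\tg$ and $\pr$ are normalized densities satisfying $\tg(x) \le M\pr(x)$ on $E$, so that the acceptance ratio $\tg(z)/\bigl(M\pr(z)\bigr)$ genuinely lies in $[0,1]$ and defines a valid probability. I would prove the three assertions in the order 2, 1, 3, since the acceptance-probability computation is the basic ingredient for the other two.

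First I would establish the second assertion. Considering one iteration in which $\Z \sim \pr$ is drawn and then accepted with probability $\tg(\Z)/\bigl(M\pr(\Z)\bigr)$, the law of total probability (conditioning on the value $\Z = z$) gives
$$\Prob(\text{accept}) = \int \frac{\tg(z)}{M\pr(z)}\,\pr(z)\,dz = \frac{1}{M}\int \tg(z)\,dz = \frac{1}{M},$$
using that $\tg$ integrates to one. Next, for the first assertion I would compute the conditional law of an accepted draw: for any measurable $A \subset E$,
$$\Prob\bigl(\Z \in A,\ \text{accept}\bigr) = \int_A \frac{\tg(z)}{M\pr(z)}\,\pr(z)\,dz = \frac{1}{M}\int_A \tg(z)\,dz,$$
so dividing by $\Prob(\text{accept}) = 1/M$ yields $\Prob\bigl(\Z \in A \mid \text{accept}\bigr) = \int_A \tg(z)\,dz$. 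Thus a single accepted sample has density $\tg$, as claimed.

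For the third assertion, I would observe that the successive passes through the loop are independent Bernoulli trials, each succeeding (i.e.\ producing an acceptance) with probability $p = 1/M > 0$. Hence the number of iterations required to obtain one accepted sample is Geometric$(1/M)$, which is finite almost surely and has expectation $1/p = M$. Independence of the $N$ accepted draws then follows from the fact that each outer iteration consumes a fresh, independent pair $(\Z, U)$, and both the accept/reject decision and the accepted value are deterministic functions of that pair alone; since the loop restarts with independent randomness after every rejection, the accepted values $\X^{(1)}, \ldots, \X^{(N)}$ are i.i.d.\ with density $\tg$.

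The main obstacle is the rigorous treatment of the random stopping time underlying independence and almost-sure termination: because the number of rejections preceding each acceptance is itself random, one must argue that the eventually-accepted value is independent of the count of discarded proposals. I would handle this cleanly through the renewal structure of the inner loop---each trial uses independent randomness and the event ``accept at step $k$'' depends only on the $k$-th pair $(\Z, U)$---so that conditioning on acceptance does not bias the accepted value beyond reweighting by the ratio computed above. The remaining steps are routine once the single-iteration identities are in place.
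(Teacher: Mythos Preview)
Your proposal is correct and follows essentially the same route as the paper: both compute the joint probability $\Prob(\Z \in A,\ \text{accept}) = \frac{1}{M}\int_A \tg(z)\,dz$, specialize to $A = E$ to get the acceptance rate $1/M$, condition to obtain the law of an accepted draw, and invoke the geometric distribution for the expected number of trials. Your discussion of the renewal structure underlying independence and almost-sure termination is a bit more explicit than the paper's one-line remark, but the argument is the same.
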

\begin{proof}
First, the independence of the $X^{(n)}$ follows from the fact that the $Z^{(n)}$ are drawn independently. 
Let now $A\subset E$. Note that
\begin{align*}
  \Prob(\Z^{(n)}\in A\ \& \ \Z^{(n)}\text{ is accepted})&=\int_{A}\frac{\tg(x)}{M\pr(x)}\pr(x) \, dx\\
  &=\frac{\int_{A}\tg(x) \, dx}{M},
\end{align*}
where $\frac{\tg(x)}{M\pr(x)}$ is the probability that $\Z^{(n)}$ is accepted given that $\Z^{(n)}=x$ and $\pr$ is the density of $\Z^{(n)}$. Setting $A=E$ and noting that $\tg$ integrates to $1$ gives
 \begin{equation*}
  \Prob(\Z^{(n)}\text{ is accepted})=\frac{1}{M},
\end{equation*}
proving the second claim in the statement. Now,
\begin{align*}
  \Prob(\X^{(n)}\in A)=\Prob(\Z^{(n)}\in A | \Z^{(n)}\text{ is accepted})=&\frac{\Prob(\Z^{(n)}\in A\ \& \ \Z^{(n)}\text{ is accepted})}{\Prob(\Z^{(n)}\text{ is accepted})}\\
  &=\frac{\frac{\int_{A}\tg(x) \, dx}{M}}{\frac{1}{M}}=\int_{A}\tg(x) \, dx,
\end{align*}
which proves the first claim. The third claim follows by noting that $M$ is the expected value of a geometric random variable with parameter $1/M.$ \hfill \qedhere
\end{proof}

\paragraph {Choice of Upper-Bound}
 Theorem \ref{th:rejectionsampling} shows that if $\tg$ and $\pr$ are p.d.f.s  such that $\tg(x)\leq M \pr(x)$ for every $x \in E$, then Algorithm \ref{algo:rejection} has acceptance rate $1/M$. Therefore, given $\tg$ and $\pr$, $M$ should be chosen as small as possible while still satisfying the requirement that $\tg(x)\leq M \pr(x)$ for every $x\in E$. This way the acceptance rate is maximized. Note that it always holds that $M\ge 1,$ since $ 1 = \int_E f(x) \, dx \le M \int_E g(x) \, dx = M.$ 

\paragraph{Choice of Proposal}
We want our proposal to be (1) easy to sample from; and (2) close to the target, so that we can find a small constant $M\ge 1$  satisfying $\tg(x)\leq M \pr(x)$ for every $x\in E.$
These two goals are often in conflict with each other, and a compromise must be found. In the extreme case where $\tg=g$ one could choose $M=1$ (lowest possible bound), which is clearly not optimal if $\tg$ is extremely expensive (or impossible) to sample from directly! 
A common strategy is to look for proposals in some parametric family \{$\pr_\theta, \theta \in \Theta\}.$ If for each $\theta \in \Theta$ we can find $M_\theta$ such that $\tg(x)\leq M_\theta \pr_\theta(x)$ for every $x\in E,$ then it is recommended to use as proposal the density with parameter $\theta^\star$ that minimizes $M_\theta$ over $\theta \in \Theta.$

\begin{remark}
Note that if $E\subset \R^d$ is unbounded and $\tg(x)\leq M \pr(x)$ for every $x\in E,$ then $\pr$ necessarily has fatter tails than $\tg$. Several Monte Carlo methods (e.g. importance sampling in Chapter \ref{chap:MCintegration}) share this principle that the proposal should have fatter tails than the target.
\end{remark}

\paragraph{Envelope Rejection Sampling}
If evaluating the target density $\tg$ is expensive, delaying the acceptance can improve the efficiency. Specifically, let $\ell$ be a function that is cheap to evaluate and such that $\ell(x) \le \tg(x)$ for all $x\in E.$ Then, we can modify the rejection sampling algorithm by delaying acceptance as follows:
\FloatBarrier
\begin{algorithm}
  \begin{algorithmic}[1]
\STATE{ \textbf{Step 1:} Sample $\Z^{(n)}\sim g$, $U^{(n)}\sim$ Unif$(0,1)$.}
\STATE{ \textbf{Step 2:} Accept  $\X^{(n)}=\Z^{(n)}$ if $U^{(n)}\leq \frac{\ell(\Z^{(n)})}{M\pr(\Z^{(n)})}$. Otherwise, go to step 3.}
\STATE{ \textbf{Step 3:} Accept  $\X^{(n)}=\Z^{(n)}$ if $U^{(n)}\leq \frac{\tg(\Z^{(n)})}{M\pr(\Z^{(n)})}$. 
Otherwise, go to step 1.}
  \end{algorithmic}
\end{algorithm}
\FloatBarrier

\paragraph{Rejection Sampling and the Curse of Dimension}
Suppose that, for a given target $\tg$ and proposal $\pr,$ we have found a constant $M$ with $\tg(x)\leq M \pr(x)$ for every $x\in E=\R$. We know that the probability of accepting a sample from $\pr$ in the rejection sampling algorithm is $\frac{1}{M}$. Now consider a $d$-dimensional i.i.d. setting, where $\tg_d$ and $\pr_d$ are p.d.f.s on $E=\R^d$ defined by
\begin{alignat*}{3}
      \tg_d(x ) :=\tg(x_1)\times \dotsb \times \tg(x_d),  \quad &&x  = (x_1, \ldots, x_d) \in \R^d,  \\
      \pr_d( x ) :=\pr(x_1)\times \dotsb \times \pr(x_d), \quad &&x  = (x_1, \ldots, x_d) \in \R^d.
\end{alignat*}
To implement rejection sampling with target $\tg_d$ and proposal $\pr_d$ we can only guarantee that, for every $x \in \R^d,$ the following upper bound holds:
\begin{equation*}
\frac{f_d(x)}{g_d(x )}=\frac{\tg(x_1)\times \dotsb \times \tg(x_d)}{\pr(x_1)\times \dotsb \times \pr(x_d)}\leq M^d.
\end{equation*}
Thus, Theorem \ref{th:rejectionsampling} shows that the probability of accepting each sample from $\pr_d$ decreases exponentially with $d$. In other words, the expected number of iterations needed to obtain one sample grows exponentially with $d$.

\subsection{Approximate Bayesian Computation} \label{ssec:ABC}
In Bayesian statistics, inference on a parameter $\theta$ given data $y$ is based on the posterior distribution $f(\theta|y).$ The posterior is obtained by combining a likelihood model $f(y| \theta)$ and a prior density $f(\theta)$ using Bayes' formula:
\begin{equation*}\label{eq:bayesian}
f(\theta| y) = \frac{1}{c} f(y|\theta) f(\theta),
\end{equation*}
where $c$ is a normalizing constant so that $f(\theta|y)$ integrates to $1$.
Approximate Bayesian Computation (ABC) refers to a family of methods designed to obtain \emph{approximate} posterior samples when the likelihood function is costly to evaluate but easy to sample from. Thus, ABC is useful in applications where it is possible to simulate data from given model parameters, but computing the likelihood is intractable. ABC was first introduced in population genetics, where one is interested in recovering the history of a population from genetic data. The history is represented by a genealogy tree, which is easy to sample from given parameters such as mutation and growth rates, but the likelihood of the corresponding model is intractable. Likelihood-free methods that avoid evaluating the likelihood are important in many applications beyond population genetics, including epidemiology and cosmology.

The key idea behind ABC is to replace the evaluation of likelihoods with a comparison between sampled and observed data. This comparison is based on a choice of distance and tolerance. In its original, most basic form, ABC is the following accept/reject method.
\FloatBarrier
\begin{algorithm}
  \caption{ABC Rejection Sampler\label{ABC}}
  \begin{algorithmic}[1]
   \STATEx{\textbf{Input:} Prior $\tg(\theta)$, likelihood $f(y|\theta)$, distance $d$, tolerance $\eps,$ data $y_o,$ sample size $N.$}
\STATEx   For $n = 1, \ldots, N$ do:
    \STATE{\textbf{Step 1:}} {Sample $\theta^{(n)}\sim \tg(\theta)$ from the prior.}
    \STATE{\textbf{Step 2:}} {Sample data $y^{(n)}\sim \tg( y | \theta^{(n)})$ from the likelihood}.
    \STATE{ If $d(y_o, y^{(n)})\leq \eps$, accept $\theta^{(n)}$.
    Otherwise, go to step 1.}
    \STATEx{\textbf{Output:}  $\{\theta^{(n)}\}_{n=1}^N \stackrel{\text{i.i.d.}}{\sim} \tg(\theta | d(y, y_\text{sim})\leq \eps)$, where $y_\text{sim} \sim f(y).$}
  \end{algorithmic}
\end{algorithm}
\FloatBarrier

\begin{mybox}[colback=white]{Pros and Cons}
ABC is useful when the likelihood is easy to sample from but expensive to evaluate. For some Bayesian inference problems, ABC is virtually the only implementable method. Some caveats of ABC include that (i) it only produces approximate posterior samples; (ii) theoretical developments are scarce; and (iii) the tolerance and the distance are often chosen heuristically, and the results are sensitive to these choices.
\end{mybox}


\paragraph{Choice of Distance}
It is often impractical to define a  distance $d(y_o, y^{(n)})$ between the full data-sets. Instead, it is preferable to use lower-dimensional sufficient summary statistics $S(y_o)$ and $S(y^{(n)})$ of the data-sets and define \begin{equation*}
  d(y_o, y^{(n)})=\tilde{d}\bigl(S(y_o), S(y^{(n)})\bigr),
\end{equation*}
where $\tilde{d}$ is a distance function defined on the summary statistic space.

\paragraph{Choice of Tolerance}
In choosing $\eps$, there is a trade-off between computational burden and approximation of the posterior. Smaller $\eps$ decreases the acceptance probability, but leads to samples whose distribution is closer to the posterior. This trade-off will be illustrated in the following example.

\begin{example}[ABC: Beta-Binomial Bayesian Model]
\label{ABC-Binomial}
Consider a random experiment which involves $n_{o}$ flips of a coin with unknown probability of heads $p$.
Letting $y$ denote the number of heads observed out of the $n_{o}$ flips, we have a $\text{Binomial}(n_{o},p)$ likelihood function
\begin{align*}
\tg(y | p) = {n_{o} \choose y} p^y (1-p)^{n_{o}-y}.
\end{align*}
Taking a $\text{Beta}(\alpha,\beta)$ prior for $p,$ with p.d.f. $f(p) \propto p^{\alpha - 1}(1-p)^{\beta - 1},$ $p \in (0,1),$
leads to a posterior distribution
\begin{align*}
\tg(p | y) &\propto \tg(y | p) f(p) \\
&\propto p^{\alpha + y - 1}(1-p)^{n_{o} + \beta - y -1}, \quad \quad 0 < p <1.
\end{align*}
That is, the posterior of $p$ given $y$ is a $ \text{Beta}(\alpha + y, n_{o} + \beta - y)$ distribution. For the purpose of this example, we choose $\alpha = \beta = 1$, which corresponds to a Unif(0,1) prior.

Next, for the sake of the example, assume that we do not know the true posterior distribution of $p$. We will approximate it using ABC. To that end, we simulate the model using a sample size of $n_{o} = 500$ with $p = 0.4$ and record the number $y_o$ of observed successes. We seek to approximate the posterior of $p$ given $y_o.$ We repeatedly sample values $p^{(n)}$ from the prior distribution $\text{Beta}(1,1)$, and simulate the number of successes $y^{(n)} \sim \tg(\cdot | p^{(n)})$. We next define a distance function,
\begin{align*}
d(y_o, y^{(n)}) := \frac{1}{n_{o}} |y^{(n)}-y_o|.
\end{align*}
Using three tolerance values $\eps = 0.01, 0.02, 0.05$, we accept the sample $p^{(n)}$ if $d(y^{(n)},y_o) < \eps$. For each choice of tolerance, we continue this procedure until we obtain $N = 1000$ accepted samples. Finally, for each tolerance, we plot a histogram of the accepted samples against the true posterior distribution, which is $\text{Beta}(1 + y_o, n_{o} + 1 - y_o)$. Figure \ref{fig:ABCtolerance} shows the results, along with the  number of samples generated in each case  to reach $N=1000$ accepted samples.  \hfill \qedhere
\end{example}

\begin{figure}[!htb]
\minipage{0.33\textwidth}
  \includegraphics[width=\linewidth]{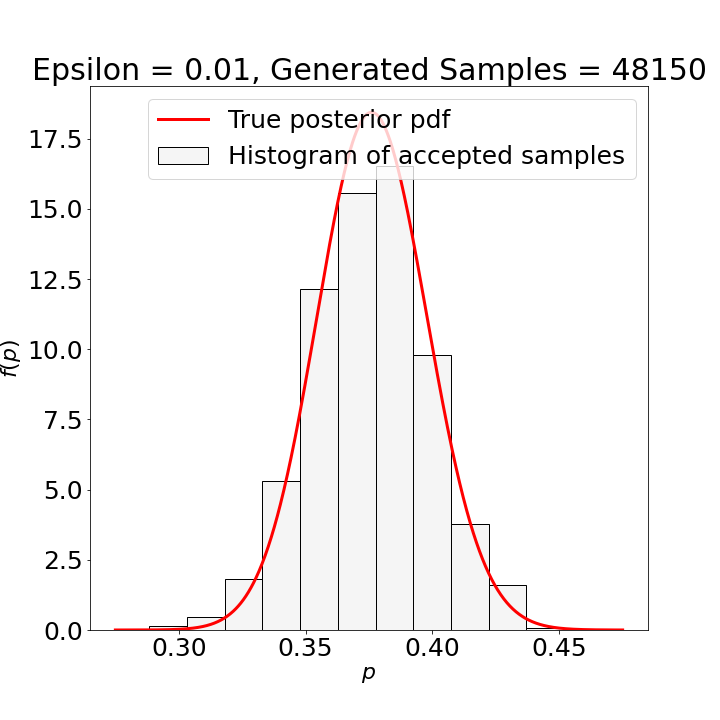}
\endminipage\hfill
\minipage{0.33\textwidth}
  \includegraphics[width=\linewidth]{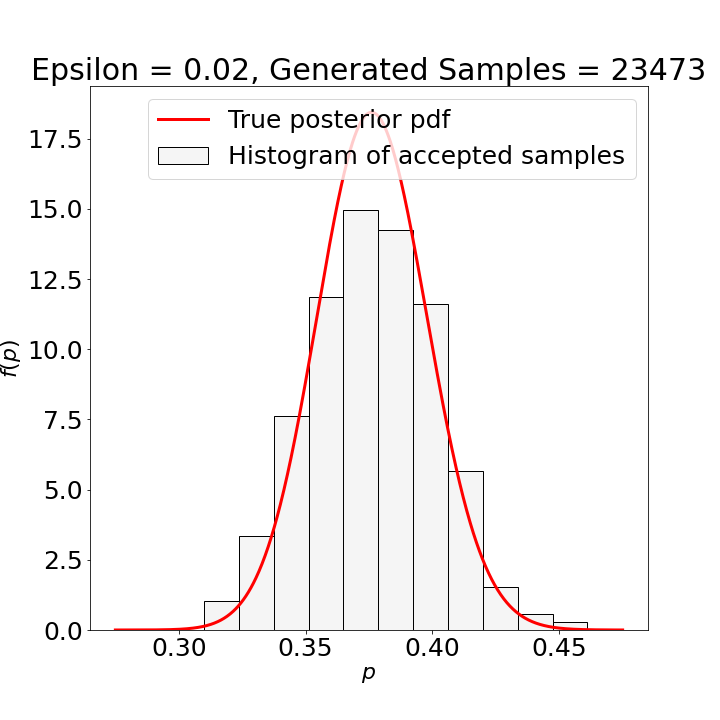}
\endminipage\hfill
\minipage{0.33\textwidth}%
  \includegraphics[width=\linewidth]{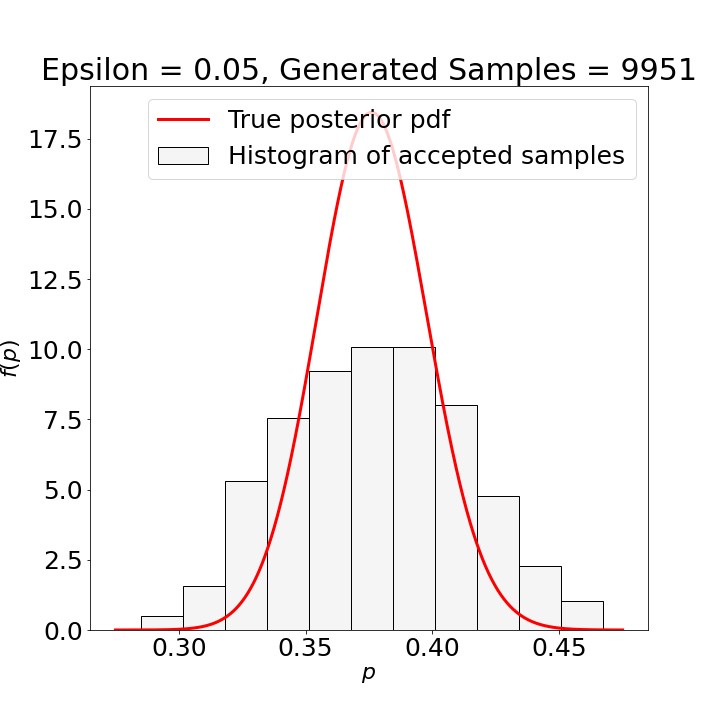}
\endminipage\hfill
\caption{\label{fig:ABCtolerance} ABC for a Beta-binominal model as described in Example \ref{ABC-Binomial}. The results illustrate the trade-off between cost and accuracy in the choice of tolerance.}
\end{figure}

\section{Discussion and Bibliography}\label{sec: bibliochapter1}
In this chapter, we took as our starting point a random number generator to produce uniform draws. The study of random number generators is a rich subject in and on itself; we refer to  \cite{gentle2003random,knuth1997art,l2012random} for in-depth surveys of existing techniques and philosophical discussions on the use of deterministic algorithms for random simulation.

Transformation methods and accept/reject methods for non-uniform sampling are reviewed in many textbooks, see e.g.  \cite{devroye2013non,robert2013monte}. The first algorithmic use of rejection sampling is often credited to Von Neumann \cite{von195113} although the idea is already underpinning Buffon's needle experiment. The paper \cite{von195113} also considers inversion from a uniform proposal distribution. Another early example of a transformation method is the Box-Muller algorithm \cite{box1958note} to sample from a normal distribution. The KR rearrangement was introduced in \cite{knothe1957contributions,rosenblatt1952remarks}. We refer to \cite{ambrosio2008gradient} for further background on disintegration of measures. More recently,  \cite{marzouk2016sampling} investigates the use of transport maps for sampling. Triangular transport maps also form the building blocks of many complex architectures for \emph{normalizing flows} in machine learning \cite{papamakarios2021normalizing}.

We refer to
\cite{turner2012tutorial} for an accessible introduction to ABC. The main ideas behind ABC can be traced back at least to \cite{rubin1984bayesianly}, but it was not until the late 90s that ABC methods received their name and gained popularity through a series of influential works, including \cite{tavare1997inferring,pritchard1999population,beaumont2002approximate}. The ABC rejection sampler in Algorithm \ref{algo:rejection} was introduced in 
\cite{pritchard1999population}. Implementations of ABC using Markov chain Monte Carlo algorithms  and particle filters were introduced in \cite{marjoram2003markov,Sisson1760}.
ABC originates in the population genetics literature, where models are inherently complex and likelihoods are usually prohibitively costly to evaluate or impossible to compute. ABC techniques are also widely used in genetics \cite{genetics,genetics1,genetics2},  epidemiology \cite{epidemiology1}, human demographic history \cite{hamilton}, phylogeography \cite{becquet}, and many other applications. From a theoretical viewpoint, an influential paper is
\cite{wilkinson2013approximate}, which shows that ABC gives exact inference for a wrong model. ABC methods for Bayesian model choice have been extensively investigated, see for instance
\cite{robert2011lack} and references therein. An active area of research relies on conditional density estimation for likelihood-free inference, see \cite{papamakarios2016fast} and also \cite{papamakarios2017masked}. These techniques have the advantage of not relying on heuristically chosen tolerance levels.

\chapter{Monte Carlo Integration and Importance Sampling}
\label{chap:MCintegration}

This chapter introduces Monte Carlo methods to compute integrals of the form
$$\mathcal{I}_f[h]:=\int_{E}h(x)f(x) \, dx = \mathbb{E}_{X \sim f} \bigl[h(X)\bigr],$$
where $f$ is a p.d.f. supported on a set $E \subset \R^d$ and $h : E \to \mathbb{R}$ is a real-valued \emph{test function}. For instance, taking $h(x) = {\bf{1}}_A(x)$ the indicator function of a subset $A \subset E,$ we can then compute the probability $\Prob_{X \sim f}(X \in A).$ The methods in this chapter also apply to discrete target distributions and to vector-valued test functions, and are thus useful to compute moments of discrete or continuous target distributions.

The main idea behind Monte Carlo integration is to compute $\mathcal{I}_f[h]$ using a random sample. We will consider two strategies: 
\begin{enumerate}
    \item \emph{Classical Monte Carlo integration} relies on a sample $X^{(1)}, \ldots, X^{(N)} \stackrel{\text{i.i.d.}}{\sim} f$ from the target distribution to define an estimator 
    $$ \mathcal{I}_f^{\text{\tiny MC}}[h]:= \frac{1}{N} \sum^{N}_{n=1} h(X^{(n)}) \approx \mathcal{I}_f[h]. $$
    \item \emph{Importance sampling} relies on a sample $X^{(1)}, \ldots, X^{(N)} \stackrel{\text{i.i.d.}}{\sim} g$ from a proposal distribution $g \neq f$ to define an estimator 
    \begin{align}\label{auximportanceestimator}
        \mathcal{I}_f^{\text{\tiny IS} }[h] := \frac{1}{N}  \sum_{n=1}^{N} w(X^{(n)})  h(X^{(n)})\approx\mathcal{I}_f[h].
    \end{align}
    Each draw $X^{(n)}\sim g$ is given an \emph{importance weight} $w(X^{(n)}) := f(X^{(n)})/ g(X^{(n)})$ determined by the ratio between target and proposal. 
\end{enumerate}

We will study classical Monte Carlo in Section \ref{ssec:classicalMC} and importance sampling in Section \ref{ssec:importancesampling}.  There are two main motivations to use importance sampling: (i) it can reduce the variance of classical Monte Carlo; and (ii) it enables Monte Carlo integration in cases where sampling from $f$ is intractable. In addition to the importance sampling estimator in \eqref{auximportanceestimator}, we will introduce an \emph{autonormalized} estimator that is applicable when the target and proposal distributions can only be evaluated up to a normalizing constant. Section \ref{sec:variancereduction} discusses other variance reduction techniques and Section \ref{sec:bibliochapter2} closes with bibliographical remarks.

\section{Classical Monte Carlo}\label{ssec:classicalMC}
Classical Monte Carlo integration presupposes that we can obtain $N$ independent draws from the target distribution. Then, the integral of interest $\mathcal{I}_f[h]$ is approximated by the average of the values of $h$ at those draws. We refer to the method, summarized in Algorithm \ref{alg:MCI} below, as \emph{classical} Monte Carlo to distinguish it from other Monte Carlo methods.  In the literature it is often simply called Monte Carlo, standard Monte Carlo, or vanilla Monte Carlo. 
\begin{algorithm}[H]
\caption{Classical Monte Carlo Integration}
\label{alg:MCI}
\begin{algorithmic}[1]
 \STATEx{ \textbf{Input:} Target distribution $f,$ test function $h,$ sample size $N.$}
\STATE Sample $X^{(1)}, \ldots, X^{(N)} \stackrel{\text{i.i.d.}}{\sim} f.$
			  \STATEx{\textbf{Output:} Monte Carlo estimator $\mathcal{I}_f^{\text{\tiny MC}}[h]:= \frac{1}{N} \sum^{N}_{n=1} h(X^{(n)}) \approx \mathcal{I}_f[h].$}
\end{algorithmic}
\end{algorithm}

The following result shows that the classical Monte Carlo estimator is unbiased and that its mean squared error (which agrees with its variance) is of order $1/N.$

\begin{theorem}[Classical Monte Carlo Error]\label{thm2.1} The following holds:
\begin{enumerate}
\item $\mathbb{E} \bigl[\mathcal{I}_f^{\emph{\tiny MC}}[h] \bigr]=\mathcal{I}_f[h];$ 
\item $\mathbb{V} \bigl[\mathcal{I}_f^{\emph{\tiny MC}}[h] \bigr]=\frac{1}{N} \mathbb{V}_{X \sim f} \bigl[h(X)\bigr].$
\end{enumerate} 
\end{theorem}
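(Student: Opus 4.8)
The plan is to prove both claims by direct computation, exploiting the linearity of expectation and the independence of the i.i.d. draws. The estimator is a simple average of the i.i.d. random variables $h(X^{(1)}), \ldots, h(X^{(N)})$, each with the same distribution as $h(X)$ where $X \sim f$, so this is really just the statement that the sample mean is unbiased for the population mean and that its variance shrinks like $1/N$.

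For the first claim, I would write
$$\mathbb{E}\bigl[\mathcal{I}_f^{\text{\tiny MC}}[h]\bigr] = \mathbb{E}\left[\frac{1}{N}\sum_{n=1}^N h(X^{(n)})\right] = \frac{1}{N}\sum_{n=1}^N \mathbb{E}\bigl[h(X^{(n)})\bigr],$$
using linearity of expectation. Since each $X^{(n)} \sim f$, every term equals $\mathbb{E}_{X\sim f}[h(X)] = \mathcal{I}_f[h]$, and averaging $N$ identical terms returns $\mathcal{I}_f[h]$. This establishes unbiasedness.

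For the second claim, I would compute
$$\mathbb{V}\bigl[\mathcal{I}_f^{\text{\tiny MC}}[h]\bigr] = \mathbb{V}\left[\frac{1}{N}\sum_{n=1}^N h(X^{(n)})\right] = \frac{1}{N^2}\,\mathbb{V}\left[\sum_{n=1}^N h(X^{(n)})\right].$$
Here the key step is that independence of the draws makes the cross-covariance terms vanish, so the variance of the sum is the sum of the variances. Since each $h(X^{(n)})$ has variance $\mathbb{V}_{X\sim f}[h(X)]$, the sum equals $N\,\mathbb{V}_{X\sim f}[h(X)]$, giving
$$\mathbb{V}\bigl[\mathcal{I}_f^{\text{\tiny MC}}[h]\bigr] = \frac{1}{N^2}\cdot N\,\mathbb{V}_{X\sim f}\bigl[h(X)\bigr] = \frac{1}{N}\,\mathbb{V}_{X\sim f}\bigl[h(X)\bigr].$$

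There is no substantial obstacle here; the proof is entirely elementary. The only point deserving care is the use of independence in the variance computation: without it, one would retain covariance terms, and the clean $1/N$ rate would not hold. I would therefore explicitly note that the vanishing of covariances relies on the i.i.d. assumption on the sample. One might also remark that the mean squared error coincides with the variance precisely because the estimator is unbiased, which is why the two-part statement together justifies the preceding sentence about the order-$1/N$ mean squared error.
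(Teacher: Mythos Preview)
Your proof is correct and follows essentially the same route as the paper: linearity of expectation for unbiasedness, and independence (hence additivity of variance) for the $1/N$ variance. The paper's argument is line-for-line the same computation, without your additional remarks on vanishing covariances and the MSE--variance identification.
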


\begin{proof}
For the first claim, linearity of expectation and the fact that $X^{(n)} \sim f$ give 
\begin{align*}
\Expect\bigl[\mathcal{I}_f^{\text{\tiny MC}}[h] \bigr]  &= \mathbb{E}\biggl[ \frac{1}{N}\sum_{n=1}^N  h\bigl(X^{(n)} \bigr)\biggr]  
=\frac{1}{N} \sum _{n=1}^{N} \mathbb{E} \bigl[h\bigl(X^{(n)} \bigr) \bigr]=\mathbb{E}_{X \sim f} \bigl[h(X)\bigr]=\mathcal{I}_f[h].
\end{align*}
For the second claim note that since $X^{(n)} \stackrel{\text{i.i.d.}}{\sim}  f,$ 
\begin{equation*}
\mathbb{V} \bigl[\mathcal{I}_f^{\text{\tiny MC}}[h] \bigr] =\frac{1}{N^2}\sum_{n=1}^N \mathbb{V} \bigl[h\bigl(X^{(n)} \bigr)\bigr] 
=\frac{1}{N^2}N\mathbb{V}_{X \sim f} \bigl[h(X)\bigr]=\frac{1}{N}\mathbb{V}_{X \sim f}\bigl[h(X)\bigr].   \tag*{\qedhere}
\end{equation*}
\end{proof}

Theorem \ref{thm2.1} can be used to construct confidence intervals for $\mathcal{I}_f^{\text{\tiny MC}}[h]$ as an estimator of $\mathcal{I}_f[h].$
To do so, note that by the central limit theorem, as $N \to \infty$,
\begin{equation}\label{eq:MonteCarloCLT}
    \sqrt{N} \, \frac{\mathcal{I}_f^{\text{\tiny MC}}[h]-\mathcal{I}_f[h]}{\sqrt{\mathbb{V}_{X \sim f}[h(X)]}} \Rightarrow \mathcal{N}(0,1).
\end{equation}
For $\alpha \in (0,1)$ define the $z$-score $z_{\alpha/2}$ by the requirement that $\Prob( - z_{\alpha/2} < Z < z_{\alpha/2}) = 1 - \alpha,$ where $Z \sim \Nc(0,1).$ Then, it follows from \eqref{eq:MonteCarloCLT} that
\begin{equation*}
   \biggl( \mathcal{I}_f^{\text{\tiny MC}}[h] - z_{\alpha/2} \sqrt{ \frac{\mathbb{V}_{X \sim f}[h(X)]}{N}}, \mathcal{I}_f^{\text{\tiny MC}}[h] + z_{\alpha/2} \sqrt{ \frac{\mathbb{V}_{X \sim f}[h(X)]}{N}} \biggr) 
\end{equation*}
is an approximate $1-\alpha$ confidence interval for $\mathcal{I}_f[h].$
The variance term can also be approximated by Monte Carlo:
\begin{align}\label{eq:varianceMonteCarlo}
\begin{split}
\mathbb{V}_{X \sim f} \bigl[h(X)\bigr] =\mathbb{E}\Bigl[ \bigl(h(X)-\mathbb{E}_{X \sim f} \bigl[h(X) \bigr]\bigr)^2\, \Bigr] \\
\approx \frac{1}{N} \sum^N_{n=1}  \Bigl(h(X^{(n)})-\mathcal{I}_f^{\text{\tiny MC}}[h]\Bigr)^2.
\end{split}
\end{align}
The sample size $N$ is typically large in Monte Carlo simulations, and hence large $N$ asymptotics, needed to invoke the central limit theorem \eqref{eq:MonteCarloCLT} and to ensure the convergence of the variance approximation \eqref{eq:varianceMonteCarlo}, are well justified.

\begin{mybox}[colback=white]{Pros and Cons}
The standard error of the classical Monte Carlo estimator is of order $N^{-1/2}$, which implies a very slow rate of convergence compared to most deterministic quadrature methods: increasing by a factor of $100$ the number of samples only reduces the expected error by a factor of $10$. Another drawback of Monte Carlo methods is that the error can only be understood probabilistically, while for deterministic quadrature rules deterministic bounds can be derived. On the bright side, Theorem \ref{thm2.1} does not make any assumption on the
state space $E$ (or its dimension) and the only implicit assumption on $h$ is that $\mathbb{E}_{X \sim f} \bigl[h(X)^2 \bigr]<\infty.$  In contrast, classical error bounds for deterministic quadrature rules in $\R^d$  need the number $N$ of function evaluations to scale as $N^d$ as $d\to \infty.$
Moreover, differentiability conditions on $h$ often needed for deterministic methods play no role in Theorem \ref{thm2.1}. The insensitivity of classical Monte Carlo to dimension and roughness of $f$ and $h$ makes it appealing in cases where deterministic methods struggle.  It is important to note, however, that the classical Monte Carlo method presupposes that sampling  from $f$ (without sampling error) is possible. When this is not true (which is often the case in applications) the error of Monte Carlo methods that rely on approximate target samples may be sensitive to the dimension of the state space. 
\end{mybox}

\section{Importance Sampling}\label{ssec:importancesampling}
Now we study how to approximate $\mathcal{I}_f[h]$ using samples from a proposal distribution $g\neq f$.
There are two distinct motivations to sample from a proposal distribution $g$, rather than from the target density $f.$

\paragraph{Motivation 1: Variance Reduction}
If the proposal distribution is cleverly chosen, one can obtain an estimator with a relatively lower variance. From a historical perspective, variance reduction was the original motivation for introducing importance sampling; in particular, the design of proposal distributions for the simulation of rare events has been extensively researched. A natural question in this area is how best to choose the proposal distribution in order to compute the probability of a rare event under the target. In such a context, the test function $h$ would be the indicator of a small-probability set.  \hfill $\square$
\paragraph{Motivation 2: Intractability of Target} In some applications, sampling directly from $f$ may not be possible, but one may have access to samples from a different distribution $g$. This scenario arises for instance in sequential signal estimation, where prior samples obtained by a stochastic dynamics model are used to compute expectations with respect to a posterior distribution which incorporates observed data. A natural question in this area is: given a proposal and a target, what is the worst-case error of importance sampling over a given class of test functions? \hfill $\square$

\begin{figure}[!htb]
\begin{center}
  \includegraphics[width=0.45\linewidth]{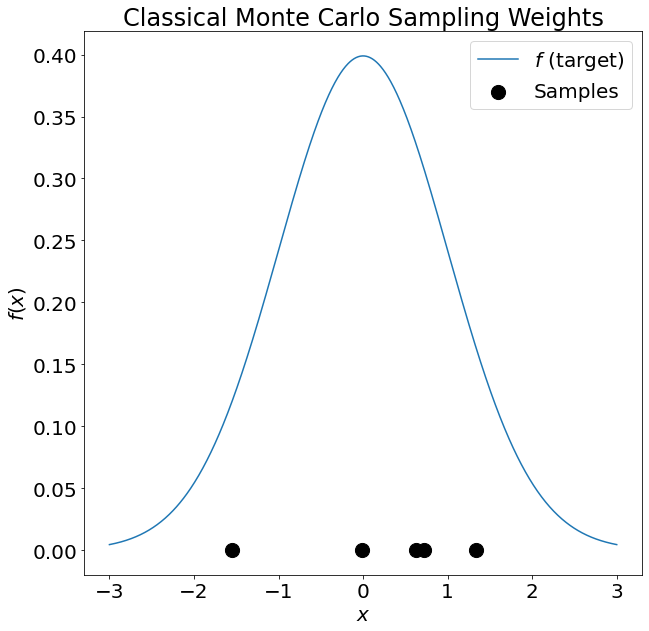}
  \includegraphics[width=0.45\linewidth]{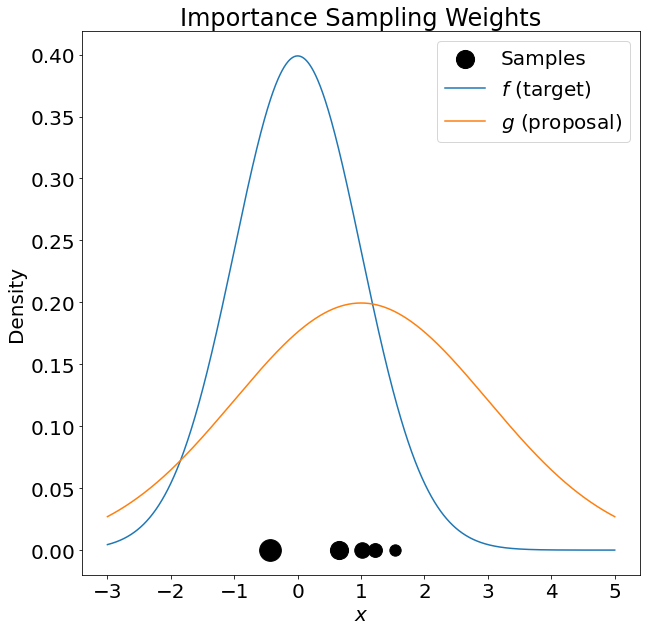}
 \caption{In classical Monte Carlo (left), samples from $f$ are given equal weights. In importance sampling (right), samples from $g$ are given weights proportional to $f/g,$ represented by the radii of the dots.}
 \end{center}
\end{figure}

In its simplest form, importance sampling is based on the observation that if $f, g$ and $h$ have compatible supports, then
$$\mathcal{I}_f[h]=\int_Eh(x)f(x)\, dx=\int_E h(x)\frac{f(x)}{g(x)}g(x) \, dx=\mathcal{I}_g\biggl[h\frac{f}{g}\biggr].$$
This identity suggests that $\mathcal{I}_f[h]$ may be approximated using classical Monte Carlo integration by sampling the density $g$ and considering $hf/g$ as the test function:

\begin{algorithm}[H]
\caption{Importance Sampling}
\label{alg:IS}
\begin{algorithmic}[1]
 \STATEx{ \textbf{Input:} Target distribution $f,$ proposal distribution $g,$ test function $h,$ sample size $N.$}
\STATE Sample $X^{(1)}, \ldots, X^{(N)} \stackrel{\text{i.i.d.}}{\sim} g.$
\STATE Set $$w(X^{(n)}):=\frac{f(X^{(n)})}{g(X^{(n)})}.$$
 \STATEx{ \textbf{Output:} Estimator  $\mathcal{I}_f^{\text{\tiny IS}}[h] := \frac{1}{N}  \sum_{n=1}^{N} w(X^{(n)})  h(X^{(n)})\approx\mathcal{I}_f[h] .$  }
\end{algorithmic}
\end{algorithm}
Note that the importance sampling estimator $\mathcal{I}_f^{\text{ \tiny IS}}[h]$ depends on the choice of proposal $g$, but we do not include that dependence in our notation. In addition to studying the importance sampling estimator $\mathcal{I}_f^{\text{ \tiny IS}}[h],$ we will study an autonormalized implementation motivated by the two following drawbacks of Algorithm \ref{alg:IS}. 

First, note that in order to compute the weights
$$w(X^{(n)}):=\frac{f(X^{(n)})}{g(X^{(n)})}$$
we need to be able to evaluate the ratio $f/g$, which may in practice involve evaluating both the target and the proposal with their appropriate normalizing constants. 

Second, note that
$$\mathcal{I}_g[w]=\int w(x)g(x)\, dx=\int\frac{f(x)}{g(x)}g(x) \, dx=1,$$
but in general
$$\frac{1}{N}  \sum_{n=1}^{N}w(X^{(n)}) \neq 1.$$
In other words, importance sampling does not estimate exactly the constant test function $h \equiv 1.$
These two observations motivate the use of \emph{autonormalized} importance sampling (AIS).

\begin{algorithm}[H]
\caption{Autonormalized Importance Sampling}
\label{alg:AIS}
\begin{algorithmic}[1]
 \STATEx{ \textbf{Input:} Target $f,$ proposal $g$, test function $h,$ sample size $N.$}
\STATE Sample $X^{(1)}, \ldots, X^{(N)} \stackrel{\text{i.i.d.}}{\sim} g.$
\STATE  Set $$w^*(X^{(n)}):=\frac{w(X^{(n)})}{\sum_{l=1}^Nw(X^{(l)})}.$$
 \STATEx{ \textbf{Output:} Autonormalized estimator  $\mathcal{I}_f^{\text{\tiny AIS}}[h]:= \sum_{n=1}^{N} w^*(X^{(n)})  h(X^{(n)})\approx  \mathcal{I}_f[h].$  }
\end{algorithmic}
\end{algorithm}
The quantities $w^*(X^{(n)})$ are known as \emph{normalized} weights because they add up to $1$. Notice that $w$ appears in the numerator and the denominator of $w^*$, and so $w^*$ can be evaluated as long as $w(x)=\frac{f(x)}{g(x)}$ is known up to a multiplicative constant. Thus, autonormalized importance sampling can be implemented as long as $f$ and $g$ are known up to a normalizing constant, which greatly extends the applicability of the method. As we will see, the advantages afforded by the autonormalized estimator come at the cost of introducing a bias.

\subsection{Error Bounds for Importance Sampling}
In this subsection we analyze the importance sampling estimator $\mathcal{I}_f^{\text{\tiny IS}}[h].$ The presentation is focused on the variance reduction motivation for importance sampling; thus we aim at understanding the error for a given test function, and we will study the choice of proposal distribution to reduce the variance of the Monte Carlo estimator. 

The following result is parallel to Theorem \ref{thm2.1}. As in classical Monte Carlo, the importance sampling estimator is unbiased and its mean squared error (which agrees with its variance) is of order $1/N.$ Now, however, the mean squared error depends on the proposal distribution $g.$
\begin{theorem}[Importance Sampling Error] \label{thm2.2} 
The following holds:
\begin{enumerate}
\item $\mathbb{E} \bigl[\mathcal{I}_f^{\emph{\tiny IS}}[h] \bigr]=\mathcal{I}_f[h];$
\item $ \mathbb{V} \bigl[\mathcal{I}_f^{\emph{\tiny IS}}[h]\bigr]=\frac{1}{N}\mathbb{V}_{X \sim g} \bigl[h(X)w(X) \bigr].$
\end{enumerate}
\end{theorem}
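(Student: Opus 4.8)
This theorem is the importance sampling analogue of Theorem \ref{thm2.1}, so the plan is to mirror that proof almost exactly, exploiting the fact that the importance sampling estimator $\mathcal{I}_f^{\text{\tiny IS}}[h]$ is literally a classical Monte Carlo estimator applied to the modified test function $\tilde{h} := hw = hf/g$ under the proposal $g$. The key identity, already established in the text, is $\mathcal{I}_f[h] = \mathcal{I}_g[hf/g] = \mathbb{E}_{X \sim g}[h(X)w(X)]$. Once this is in hand, both claims reduce to the linearity of expectation and the independence of the samples, just as before.

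For the first claim, I would write $\mathbb{E}[\mathcal{I}_f^{\text{\tiny IS}}[h]] = \frac{1}{N}\sum_{n=1}^N \mathbb{E}[w(X^{(n)})h(X^{(n)})]$ using linearity of expectation. Since each $X^{(n)} \sim g$, each summand equals $\mathbb{E}_{X \sim g}[w(X)h(X)]$, and this in turn equals $\mathcal{I}_f[h]$ by the change-of-measure identity above. Averaging $N$ identical terms gives unbiasedness, establishing item (1).

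For the second claim, I would use that the $X^{(n)}$ are i.i.d. under $g$, so that the variance of the sum decomposes as a sum of variances. Writing $\mathbb{V}[\mathcal{I}_f^{\text{\tiny IS}}[h]] = \frac{1}{N^2}\sum_{n=1}^N \mathbb{V}[w(X^{(n)})h(X^{(n)})]$, each term is $\mathbb{V}_{X \sim g}[h(X)w(X)]$, so the sum is $N\,\mathbb{V}_{X \sim g}[h(X)w(X)]$, and dividing by $N^2$ yields $\frac{1}{N}\mathbb{V}_{X \sim g}[h(X)w(X)]$, which is exactly item (2). This is the direct parallel to the computation in Theorem \ref{thm2.1}.

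There is no serious obstacle here: the argument is routine once one recognizes the estimator as classical Monte Carlo for the test function $hw$ under $g$. The only points worth stating carefully are the compatibility-of-supports assumption (so that the weight $w = f/g$ is well-defined wherever it matters and the change-of-measure identity holds) and the implicit second-moment condition $\mathbb{E}_{X \sim g}[(h(X)w(X))^2] < \infty$ needed for the variance to be finite, analogous to the condition $\mathbb{E}_{X \sim f}[h(X)^2] < \infty$ noted after Theorem \ref{thm2.1}. I would mention these briefly to keep the statement honest, but the computations themselves are immediate.
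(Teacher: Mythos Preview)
Your proposal is correct and matches the paper's approach exactly: the paper's proof is the one-line observation that $\mathcal{I}_f[h]=\mathcal{I}_g[hw]$ and $\mathcal{I}_f^{\text{\tiny IS}}[h]=\mathcal{I}_g^{\text{\tiny MC}}[hw]$, so the result is a corollary of Theorem~\ref{thm2.1}. You have simply spelled out the details of that corollary, which is fine.
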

\begin{proof}
It is a corollary of Theorem \ref{thm2.1} since
$\mathcal{I}_f[h]=\mathcal{I}_g[hw]$ and $\mathcal{I}_f^{\text{\tiny IS}}[h]=\mathcal{I}_g^{\text{\tiny MC}}[hw].$ \hfill $\square$
\end{proof}

Theorem \ref{thm2.2} suggests that in order to approximate $\mathcal{I}_f[h]$ for given target $f$ and  test function $h$ one should choose the proposal $g$ that minimizes
$$\mathbb{V}_{X \sim g}[h(X)w(X)].$$
The next result is interesting for historical reasons and to develop intuition. As will be discussed below, its practical significance is rather limited.

\begin{theorem}[Importance Sampling Optimal Proposal] \label{thm2.3}
The proposal $g$ that minimizes the variance of $\mathcal{I}_f^{\text{\tiny IS}}[h]$ is 
$$g^*(x)=\frac{{|h(x)|}f(x)}{\int_E {|h(t)|}f(t)dt}.$$
\end{theorem}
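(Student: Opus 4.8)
The plan is to prove optimality by first computing the variance of $\mathcal{I}_f^{\text{\tiny IS}}[h]$ for a generic proposal $g$ using Theorem \ref{thm2.2}, and then showing that the proposed $g^*$ achieves the minimum. By Theorem \ref{thm2.2}, the variance is $\frac{1}{N}\mathbb{V}_{X \sim g}[h(X)w(X)]$, and since variance equals second moment minus the squared mean, I would write
\begin{equation*}
\mathbb{V}_{X \sim g}\bigl[h(X)w(X)\bigr] = \mathbb{E}_{X \sim g}\bigl[h(X)^2 w(X)^2\bigr] - \mathcal{I}_f[h]^2.
\end{equation*}
The key observation is that the second term $\mathcal{I}_f[h]^2$ does not depend on $g$, since Theorem \ref{thm2.2} already established that the estimator is unbiased for any proposal. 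Therefore minimizing the variance over $g$ is equivalent to minimizing the second moment $\mathbb{E}_{X \sim g}[h(X)^2 w(X)^2]$.

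Next I would rewrite this second moment explicitly in terms of $g$. Using $w(x) = f(x)/g(x)$, we have
\begin{equation*}
\mathbb{E}_{X \sim g}\bigl[h(X)^2 w(X)^2\bigr] = \int_E h(x)^2 \frac{f(x)^2}{g(x)^2} g(x)\, dx = \int_E \frac{h(x)^2 f(x)^2}{g(x)}\, dx.
\end{equation*}
The task is thus to minimize $\int_E \frac{h(x)^2 f(x)^2}{g(x)}\, dx$ over all probability densities $g$, subject to the constraint $\int_E g(x)\, dx = 1$. The cleanest route is to apply the Cauchy--Schwarz inequality: writing $h(x)^2 f(x)^2/g(x) = \bigl(|h(x)|f(x)/\sqrt{g(x)}\bigr)^2$ and pairing $|h(x)|f(x)/\sqrt{g(x)}$ with $\sqrt{g(x)}$, I would obtain
\begin{equation*}
\left(\int_E |h(x)| f(x)\, dx\right)^2 = \left(\int_E \frac{|h(x)|f(x)}{\sqrt{g(x)}} \sqrt{g(x)}\, dx\right)^2 \le \left(\int_E \frac{h(x)^2 f(x)^2}{g(x)}\, dx\right)\left(\int_E g(x)\, dx\right).
\end{equation*}
Since $\int_E g(x)\, dx = 1$, this gives the lower bound $\int_E \frac{h(x)^2 f(x)^2}{g(x)}\, dx \ge \left(\int_E |h(x)|f(x)\, dx\right)^2$, and the bound is independent of $g$.

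The final step is to verify that $g^*$ attains equality. Equality in Cauchy--Schwarz holds when the two factors are proportional, i.e. when $|h(x)|f(x)/\sqrt{g(x)} \propto \sqrt{g(x)}$, equivalently $g(x) \propto |h(x)|f(x)$; normalizing yields precisely $g^*(x) = |h(x)|f(x)/\int_E |h(t)|f(t)\, dt$. I would then confirm by direct substitution that plugging $g^*$ into the second moment recovers the lower bound $\left(\int_E |h(x)|f(x)\, dx\right)^2$, completing the argument. The main subtlety to flag is the degenerate case where $h$ has a sign that changes: the optimal variance lower bound involves $\int |h|f$ rather than $\int hf$, and in fact if $h \ge 0$ the minimal variance can be zero (when $g^* \propto hf$ makes $hw$ constant), which is worth noting as a remark; also one must assume $\int_E |h(x)|f(x)\, dx < \infty$ for $g^*$ to be well defined, which I would state explicitly.
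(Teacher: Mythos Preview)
Your proof is correct and follows essentially the same approach as the paper: decompose the variance into a second moment minus $\mathcal{I}_f[h]^2$, observe that the latter is independent of $g$, and then lower-bound the second moment by $\bigl(\int_E |h|f\bigr)^2$ with equality at $g^*$. The only cosmetic difference is that the paper invokes Jensen's inequality $\mathbb{E}_g\bigl[(|h|f/g)^2\bigr] \ge \bigl(\mathbb{E}_g[|h|f/g]\bigr)^2$ rather than Cauchy--Schwarz, but these are the same inequality here.
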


\begin{proof}
We have that
\begin{align*}
N\mathbb{V}\bigl[\mathcal{I}_f^{\text{\tiny IS}}[h]\bigr]&=\mathbb{V}_g\biggl[h(X)\frac{f(X)}{g(X)}\biggr] \\
&=\mathbb{E}_g\biggl[\Bigl(h(X)\frac{f(X)}{g(X)}\Bigr)^2\biggr]-\mathbb{E}_g\biggl[h(X)\frac{f(X)}{g(X)}\biggr]^2.
\end{align*}
Since $\mathbb{E}_g \Bigl[h(X)\frac{f(X)}{g(X)} \Bigr]^2=\mathcal{I}_f[h]^2$ does not depend on $g$, we only need to minimize the first term in the right-hand side. Note that:
\\(i) Plugging $g=g^*$ into said term gives
\begin{align*}
\mathbb{E}_{g^*}\biggl[\Bigl(h(X)\frac{f(X)}{g^*(X)}\Bigr)^2\biggr]&=\int \frac{h(x)^2f(x)^2}{g^*(x)} \, dx \\
&=\int \frac{h(x)^2f(x)^2}{|h(x)|f(x)} \, dx\int|h(t)|f(t) \, dt=\Bigl(\int|h(x)|f(x) \, dx\Bigr)^2.
\end{align*}
(ii) Applying Jensen's inequality we have that, for any $g$,
$$\mathbb{E}_g\biggl[\Bigl(\frac{h(X)f(X)}{g(X)}\Bigr)^2\biggr] \geq \mathbb{E}_g\biggl[\frac{|h(X)|f(X)}{g(X)}\biggr]^2=\Bigl(\int|h(x)|f(x) \, dx\Bigr)^2.$$
Combining (i) and (ii) gives the result. \hfill $\square$
\end{proof}

Theorem \ref{thm2.3} is of little practical relevance: in order to use $g^*$ as a proposal we need to be able to evaluate $g^*$, which involves computing an integral similar to the one we originally wanted to compute! In fact, if $h$ is positive, then the denominator in the definition of $g^*$ equals $\mathcal{I}_f[h];$ in that case, the proof of Theorem \ref{thm2.3} implies that $\mathcal{I}_f^{\text{\tiny IS}}[h]$ has zero variance, and a direct calculation using the definition of $\mathcal{I}_f^{\text{\tiny IS}}[h]$ shows that  $\mathcal{I}_f^{\text{\tiny IS}}[h] =\mathcal{I}_f[h].$ Despite the limited practical relevance of Theorem \ref{thm2.3}, a useful takeaway is that importance sampling can be super-efficient: it is possible to obtain a better (lower) variance estimator using samples from a proposal distribution than from the target. To further demonstrate this point, note that when $g=g^*$,
\begin{align*}
N\mathbb{V} \bigl[\mathcal{I}_f^{\text{\tiny IS}}[h] \bigr]&=\mathbb{E}_f \big[|h(X)| \big]^2-\mathcal{I}_f[h]^2 \\
&\leq \mathbb{E}_f \big[h(X)^2\big]-\mathbb{E}_f \big[h(X)\big]^2=\mathbb{V}_f \big[h(X)\big]=N\mathbb{V} \big[\mathcal{I}_f^{\text{\tiny MC}}[h]\big].
\end{align*}
Another useful takeaway is that samples from the proposal should concentrate on regions where $|h|$ is large, as illustrated in the following example.

\begin{example}[Computing a Gaussian Tail Probability] \label{egIS}
Let $p=\mathbb{P}(X>4)$ where $X \sim \mathcal{N}(0,1)$ with corresponding p.d.f. $f$.  Using classical Monte Carlo we can approximate $p$ by 
$$\hat{p}=\frac{1}{N}\sum_{n=1}^N \textbf{1}_{\{X^{(n)}>4\}}, \quad \quad X^{(1)}, \ldots, X^{(N)}  \stackrel{\text{i.i.d.}}{\sim} \mathcal{N}(0,1).$$
However, $N$ would have to be very large to get an estimate that differs from 0.
An alternative is to generate a sample of \text{i.i.d.} $\rm{Exponential}(1)$ random variables translated to the right by $4,$ with corresponding p.d.f. $g$. The weights given to the samples would be determined by the weight function
$$w(x)=\frac{f(x)}{g(x)}=\frac{1}{\sqrt{2\pi}} \exp\Bigl(-\frac{1}{2}x^2+(x-4)\Bigr).$$
Figure \ref{egpic2.6} shows an experiment where sampling from $g$ rather than $f$ is advantageous. \hfill \qedhere
\begin{figure}[H]
\centering
\includegraphics[width=0.8\columnwidth]{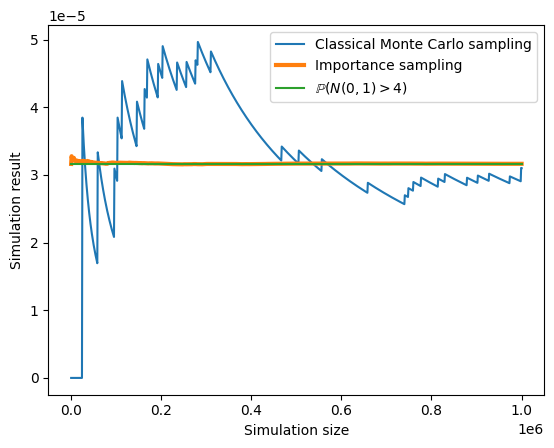}
\caption{Approximation of $p=\mathbb{P}(X>4)$ with $X\sim \Nc(0,1)$ using classical Monte Carlo and importance sampling with a shifted exponential. The green line shows the true value, computed with Python. The approximation with importance sampling is accurate and stable with $N=10^4$ samples, while classical Monte Carlo requires around $N=10^6$ samples to reach a similar level of accuracy.}
\label{egpic2.6}
\end{figure}
\end{example}

\subsection{Error Bounds for Autonormalized Importance Sampling}
Recall that autonormalized importance sampling can be used as long as the ratio $f/g$ can be evaluated up to a multiplicative constant. Here we analyze AIS in the following setting:
\begin{itemize}
\item Target $f(x) = \tilde{f}(x)/Z$, where $f$ is a normalized distribution and $\tilde{f}$ is unnormalized.
\vspace{-0.2cm}
\item Normalized proposal distribution $g(x)$.
\end{itemize}

Our analysis is motivated by applications where sampling $f$ directly may not be possible. We are interested in estimating $\mathcal{I}_f[h]$ for many different test functions $h$ or for a test function $h$ that is unknown before designing the algorithm. Thus, we will be concerned with bounding the \emph{worst-case error} over a \emph{family} of test functions. We will work with the family of bounded functions, but similar results can be obtained for other classes of test functions. 

Autonormalized importance sampling is based on writing $\mathcal{I}_f[h]$ as a ratio of expectations with respect to $g$ and approximating both expectations with classical Monte Carlo. Precisely,
\begin{equation*}
    \mathcal{I}_f[h] = \\ \frac{\mathcal{I}_g[h(X)\tilde{w}(X)]}{\mathcal{I}_g[\tilde{w}(X)]}
    \\ \approx \frac{\mathcal{I}^{\text{\tiny MC}}_g[h\tilde{w}]}{\mathcal{I}^{\text{\tiny MC}}_g[\tilde{w}]} \\
    = \mathcal{I}^{\text{\tiny AIS}}_f[h],
\end{equation*} 
where $\tilde{w}(x) = \tilde{f}(x)/g(x)$. Note that the denominator $\mathcal{I}^{\text{\tiny MC}}_g[\tilde{w}]$ is an approximation of the unknown normalizing constant $Z$ of the density $f.$ The next result is parallel to Theorems \ref{thm2.1} and \ref{thm2.2}. Note that, in contrast to classical Monte Carlo and importance sampling, autonormalized importance sampling gives a \emph{biased} estimator, since the ratio of two unbiased estimators is in general biased.

\begin{theorem}[Autonormalized Importance Sampling Error]\label{thm:autonormalized} 
Let $h:E \to \mathbb{R}$ with $|h|_\infty :=\sup_{x \in E} |h(x)| \leq 1$. 
The following holds:
\begin{enumerate}
\item $\Bigl|\mathbb{E} \Big[\mathcal{I}^{\emph{\tiny AIS}}_f[h] - \mathcal{I}_f[h]\Big] \Bigr| \leq \frac{2}{N} \frac{\mathcal{I}_g[\tilde{w}^2]}{\mathcal{I}_g[\tilde{w}]^2}$;
\item $\mathbb{E} \Big[\bigl(\mathcal{I}^{\emph{\tiny AIS}}_f[h] - \mathcal{I}_f[h]\bigr)^2 \Big] \leq \frac{4}{N} \frac{\mathcal{I}_g [\tilde{w}^2]}{\mathcal{I}_g [\tilde{w}]^2}.$
\end{enumerate}
\end{theorem}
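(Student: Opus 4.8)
The plan is to rewrite the error as a ratio of two i.i.d. sample averages and to exploit that the autonormalized estimator is unchanged if $h$ is recentered. Write $w(x) = \tilde{w}(x)/\mathcal{I}_g[\tilde{w}] = f(x)/g(x)$ for the normalized weight, so that $\mathbb{E}_g[w] = 1$ and $\rho := \mathcal{I}_g[w^2] = \mathcal{I}_g[\tilde{w}^2]/\mathcal{I}_g[\tilde{w}]^2$ is exactly the quantity appearing on the right-hand sides. Since the self-normalized weights $w^*(X^{(n)})$ are invariant under replacing $\tilde{w}$ by $w$, and since $\sum_n w^*(X^{(n)}) = 1$,
\[
\mathcal{I}^{\text{\tiny AIS}}_f[h] - \mathcal{I}_f[h] = \frac{\tfrac1N\sum_{n=1}^N w(X^{(n)})\bigl(h(X^{(n)}) - \mathcal{I}_f[h]\bigr)}{\tfrac1N\sum_{n=1}^N w(X^{(n)})} =: \frac{R_N}{S_N}.
\]
The denominator $S_N$ is an average of i.i.d. terms with mean $\mathbb{E}_g[w] = 1$, while the numerator $R_N$ is an average of i.i.d. terms with mean $\mathbb{E}_g[w(h - \mathcal{I}_f[h])] = \mathcal{I}_f[h] - \mathcal{I}_f[h] = 0$. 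I would also record at the outset the deterministic bound $|R_N/S_N| \le 2$: indeed $\mathcal{I}^{\text{\tiny AIS}}_f[h] = \sum_n w^*(X^{(n)})h(X^{(n)})$ is a convex combination of values of $h$, hence lies in $[-1,1]$, as does $\mathcal{I}_f[h]$.

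For the mean squared error I would start from the algebraic identity $\tfrac{R_N}{S_N} = R_N - \tfrac{R_N}{S_N}(S_N - 1)$, which isolates an unbiased leading term $R_N$ from a correction driven by the fluctuation $S_N - 1$. The leading term is controlled by a direct variance computation: since the summands are i.i.d. and $|h - \mathcal{I}_f[h]| \le 2$,
\[
\mathbb{E}[R_N^2] = \frac1N\,\mathbb{V}_g\bigl[w(h - \mathcal{I}_f[h])\bigr] \le \frac1N\,\mathbb{E}_g\bigl[w^2(h - \mathcal{I}_f[h])^2\bigr] \le \frac{4}{N}\,\rho,
\]
which already matches the stated bound. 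The remaining work is to show that the correction term $\tfrac{R_N}{S_N}(S_N - 1)$ contributes only at lower order; here the deterministic bound $|R_N/S_N| \le 2$ and the elementary estimate $\mathbb{E}[(S_N - 1)^2] = \mathbb{V}[S_N] = \tfrac1N\mathbb{V}_g[w] = \tfrac1N(\rho - 1)$ are the key tools, together with Cauchy--Schwarz for the cross term.

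The bias bound then follows from the mean squared error essentially for free. Since $\mathbb{E}[R_N] = 0$, the identity above gives $\mathbb{E}\bigl[\mathcal{I}^{\text{\tiny AIS}}_f[h] - \mathcal{I}_f[h]\bigr] = -\,\mathbb{E}\bigl[\tfrac{R_N}{S_N}(S_N - 1)\bigr]$, so by Cauchy--Schwarz
\[
\Bigl|\mathbb{E}\bigl[\mathcal{I}^{\text{\tiny AIS}}_f[h] - \mathcal{I}_f[h]\bigr]\Bigr| \le \Bigl\|\tfrac{R_N}{S_N}\Bigr\|_{L^2}\,\bigl\|S_N - 1\bigr\|_{L^2} \le \sqrt{\tfrac{4}{N}\rho}\,\sqrt{\tfrac1N(\rho - 1)} \le \frac{2}{N}\,\rho .
\]
This explains why the bias carries the constant $2 = \sqrt{4}$: it is the geometric mean of the mean-squared-error constant and the denominator-variance constant, and the square root upgrades the naive $O(N^{-1/2})$ into the correct $O(N^{-1})$ decay.

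The main obstacle is the random denominator $S_N$. Because $\mathcal{I}^{\text{\tiny AIS}}_f[h]$ is a genuine ratio of estimators it is biased, and the factor $1/S_N$ is not bounded away from zero a priori, so one cannot simply replace $S_N$ by its mean $1$. The device that resolves this is precisely the a priori bound $|R_N/S_N| \le 2$, which tames the event where $S_N$ is small and confines every term to order $\rho/N$; the delicate point in the mean-squared-error estimate is the bookkeeping that keeps the leading constant equal to $4$ when the correction and cross terms are added, rather than letting the triangle inequality inflate it.
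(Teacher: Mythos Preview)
Your decomposition is essentially the paper's identity in a slightly reshuffled form, and your bias argument (Cauchy--Schwarz against $S_N-1$, then plug in the MSE bound) matches the paper's exactly. The gap is in the MSE step.

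The claim that the correction $\tfrac{R_N}{S_N}(S_N-1)$ ``contributes only at lower order'' is false: with $|R_N/S_N|\le 2$ you get $\mathbb{E}\bigl[(R_N/S_N)^2(S_N-1)^2\bigr]\le 4(\rho-1)/N$, which is the \emph{same} order $O(1/N)$ as the leading term. No amount of bookkeeping with cross terms recovers the constant $4$ from your decomposition: the recentering $h\mapsto h-\mathcal{I}_f[h]$ replaces the bound $|h|\le 1$ by $|h-\mathcal{I}_f[h]|\le 2$, inflating $\mathbb{E}[R_N^2]$ from $\rho/N$ to $4\rho/N$, and similarly replaces $|\mathcal{I}^{\text{\tiny AIS}}_f[h]|\le 1$ by $|R_N/S_N|\le 2$ in the correction. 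With $(a+b)^2\le 2a^2+2b^2$ this yields $16\rho/N$, not $4\rho/N$.

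The fix is to undo the recentering. Writing $T_N=\tfrac1N\sum_n w(X^{(n)})h(X^{(n)})$, your identity $\tfrac{R_N}{S_N}=R_N-\tfrac{R_N}{S_N}(S_N-1)$ becomes
\[
\mathcal{I}^{\text{\tiny AIS}}_f[h]-\mathcal{I}_f[h]=\bigl(T_N-\mathcal{I}_f[h]\bigr)-\mathcal{I}^{\text{\tiny AIS}}_f[h]\,(S_N-1),
\]
which is exactly the paper's decomposition. Now $\mathbb{E}\bigl[(T_N-\mathcal{I}_f[h])^2\bigr]=\tfrac1N\mathbb{V}_g[wh]\le \rho/N$ (using $|h|\le 1$) and $\mathbb{E}\bigl[(\mathcal{I}^{\text{\tiny AIS}}_f[h])^2(S_N-1)^2\bigr]\le \mathbb{E}[(S_N-1)^2]\le\rho/N$ (using $|\mathcal{I}^{\text{\tiny AIS}}_f[h]|\le 1$), so $(a-b)^2\le 2a^2+2b^2$ gives the MSE bound $4\rho/N$ directly.
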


\begin{proof}
Note that 
\begin{align*}
    \mathcal{I}^{\text{\tiny AIS}}_f[h] - \mathcal{I}_f[h]  
    &= \mathcal{I}^{\text{\tiny AIS}}_f[h] - \frac{\mathcal{I}_g[\tilde{w}h]}{\mathcal{I}_g[\tilde{w}]}  \\
  &= \Bigl(\mathcal{I}_g[\tilde{w}] - \mathcal{I}^{\text{\tiny MC}}_g[\tilde{w}] \Bigr)\frac{\mathcal{I}^{\text{\tiny AIS}}_f[h]}{\mathcal{I}_g[\tilde{w}]} - \frac{\mathcal{I}_g[\tilde{w}h] - \mathcal{I}^{\text{\tiny MC}}_g[\tilde{w}h]}{\mathcal{I}_g[\tilde{w}]}.
\end{align*} 
This identity will be used to show the bounds for the bias and the mean squared error. We start with the latter:
\begin{align*}
    \mathbb{E} \Big[(\mathcal{I}^{\text{\tiny AIS}}_f[h] - \mathcal{I}_f[h])^2 \Big] 
     &\leq \frac{2}{\mathcal{I}_g[\tilde{w}]^2} \biggl( \mathbb{E} \Big[(\mathcal{I}^{\text{\tiny AIS}}_f[h])^2(\mathcal{I}_g[\tilde{w}] - \mathcal{I}^{\text{\tiny MC}}_g [\tilde{w}])^2\Big]  
     + \mathbb{E} \Big[(\mathcal{I}_g[\tilde{w}h] - \mathcal{I}^{\text{\tiny MC}}_g[\tilde{w}h])^2\Big] \biggr) \\
    &\leq  \frac{2}{\mathcal{I}_g[\tilde{w}]^2N} \Bigl(\mathbb{V}_g[\tilde{w}]+\mathbb{V}_g[\tilde{w}h]  \Bigr)  \\
   & \leq \frac{4}{N} \frac{\mathcal{I}_g[\tilde{w}^2]}{\mathcal{I}_g[\tilde{w}]^2}.
\end{align*}
We have used  (i) for any $a,b \in \R,$  $(a - b)^2 \leq 2(a^2 +b^2)$; (ii)
 Theorem \ref{thm2.1} for MSE of classical Monte Carlo;  (iii) that $h$ is bounded by $1;$ and (iv) the bound
$\mathbb{V}[\tilde{w}] \leq \mathbb{E}_g[\tilde{w}^2] = \mathcal{I}_g[\tilde{w}^2]$. 

Now we prove the result for the bias:
\begin{align*}
    \Bigl|\mathbb{E} \Big[\mathcal{I}^{\text{AIS}}_f[h] - \mathcal{I}_f[h]\Big] \Bigr|
     &= \frac{1}{\mathcal{I}_g[\tilde{w}]}  \Bigl|\mathbb{E}\Bigl[(\mathcal{I}^{\text{AIS}}_f[h] - \mathcal{I}_f[h])(\mathcal{I}_g[\tilde{w}] - \mathcal{I}^{\text{\tiny MC}}_g[\tilde{w}]) \Bigr]  \Bigr| \\
     &\leq  \frac{1}{\mathcal{I}_g[\tilde{w}]}  \Bigl(\mathbb{E} \Bigl[(\mathcal{I}^{\text{AIS}}_f[h] - \mathcal{I}_f[h])^2 \Bigr]\Bigl)^{1/2} \Bigl(\mathbb{E} \Bigl[(\mathcal{I}_g[\tilde{w}] - \mathcal{I}^{\text{\tiny MC}}_g[\tilde{w}])^2\Bigr] \Bigr)^{1/2}  \\
   & \leq  \frac{1}{\mathcal{I}_g[\tilde{w}]}   \biggl(\frac{4}{N} \frac{\mathcal{I}_g[\tilde{w}^2]}{\mathcal{I}_g[\tilde{w}]^2}\biggr)^{1/2} \biggl(\frac{\mathcal{I}_g[\tilde{w}^2]}{N}\biggr)^{1/2}\\
     &\leq \frac{2}{N} \frac{\mathcal{I}_g[\tilde{w}^2]}{\mathcal{I}_g[\tilde{w}]^2}.
\end{align*}
For the first equality we used that $\mathbb{E} \big[\mathcal{I}_g[\tilde{w}] - \mathcal{I}^{\text{\tiny MC}}_g[\tilde{w}]\big] = 0,$ for the first inequality Cauchy-Schwartz, and for the second inequality the bound for the mean squared error. \hfill $\square$
\end{proof}

\begin{remark} 
The autonormalized importance sampling estimator is biased. Its bias can be bounded uniformly over the class of bounded test functions $h:E \to \R$ and it is of order $\frac{1}{N}$. 
The MSE of the autonormalized importance sampling estimator is of order $\frac{1}{N}$ as well. Hence, since
\begin{equation*}
    \text{MSE} = \text{bias}^{2} + \text{variance},
\end{equation*}
it follows that for large $N$ the MSE is dominated by the variance term.
\end{remark}
\paragraph{Autonormalized Importance Sampling and Closeness Between Target and Proposal }
The upper bounds on the bias and MSE depend on
\begin{equation*}
    \zeta := \frac{\mathcal{I}_g[\tilde{w}^2]}{\mathcal{I}_g[\tilde{w}]^2}.
\end{equation*}
The value of $\zeta$ quantifies the variability on the weights. It holds that $\zeta \ge 1$ with equality only if  $\tilde{w}$ is constant, i.e. if $f = g$. 
Larger values of $\zeta$ imply worse behavior of importance sampling over the class of bounded test functions. Thus, if no knowledge of the relationship between the test function $h$ and the target is available, it is important to choose a proposal that is close to the target, so that $\zeta$ is small. 

The quantity $\zeta$ is intimately related to the $\chi^{2}$-divergence between the target and the proposal. The $\chi^{2}$-divergence is a way to quantify the closeness between two densities, defined by
\begin{equation*}
    \dchi(f \| g) := \int_{E} \biggl(\frac{f(x)}{g(x)} - 1 \biggr)^2g(x) \, dx.
\end{equation*}
Therefore, we have that  $\dchi (f \| g) = \zeta - 1$.
Hence, Theorem \ref{thm:autonormalized} implies that for AIS to perform accurately over the class of bounded test functions, the $\chi^2$-divergence between the target and the proposal needs to be small/moderate. 

\paragraph{Effective Sample Size}
The effective sample size 
$$\text{ESS} := \frac{1}{\sum_{n = 1}^N w^*(X^{(n)})^2}, \quad \quad 
    w^*(X^{(n)}) := \frac{\tilde{w}(X^{(n)})}{\sum_{n = 1}^N \tilde{w}(X^{(n)})}$$
is widely used by practitioners as a  diagnostic for the performance of (autonormalized) importance sampling.
Note that
\begin{itemize}
\item  $\text{ESS} = 1$ if there is $n \in \{1,\dots,N\}$ such that $w^*(X^{(n)}) = 1$ (in which case all other normalized weights are zero).
\item $\text{ESS} = N$ if $w^*(X^{(n)}) = \frac{1}{N} $ for all  $n \in \{1,\dots,N\}.$
\item $\text{ESS} \in [1,N]$ for any configuration of weights.
\end{itemize}
The effective sample size ESS is intimately related to $\zeta$ and $\dchi(f \| g).$
Indeed,
\begin{equation*}
    \frac{\text{ESS}}{N} = \frac{1}{N\sum w^*(X^{(n)})^2}
    = \frac{(\sum \tilde{w}(X^{(n)}))^2}{N \sum \tilde{w}(X^{(n)})^2}
    = \frac{ \biggl(\frac{\sum \tilde{w}(X^{(n)})}{N} \biggr)^2}{\frac{\sum \tilde{w}(X^{(n)})^2}{N}}
    = \frac{\mathcal{I}^{\text{\tiny MC}}_g[\tilde{w}]^2}{\mathcal{I}^{\text{\tiny MC}}_g[\tilde{w}^2]},
\end{equation*}
and so 
$$\frac{N}{ \text{ESS}} = \frac{ \mathcal{I}^{\text{\tiny MC}}_g[\tilde{w}^2]}{ \mathcal{I}^{\text{\tiny MC}}_g[\tilde{w}]^2}
\approx \zeta.$$
 It is important to note, however, that ESS does not include any test function $h$ in its definition. Therefore, the performance of (autonormalized) importance sampling with any given test function cannot be fully understood via the effective sample size.

\paragraph{Autonormalized Importance Sampling and Curse of Dimension}
Consider (as for rejection sampling) the i.i.d. setting
 \begin{align*}
 \tilde{f}_d(x) &= \tilde{f}(x_1) \times \dots \times \tilde{f}(x_d), \hspace{3mm} \tilde{f}:\mathbb{R} \longrightarrow \R,
\hspace{3mm} x = (x_1,\dots,x_d) \in \mathbb{R}^d, \\
g_d( x ) &= g(x_1) \times \dots \times g(x_d), \hspace{6mm} g:\mathbb{R} \longrightarrow \R,
 \end{align*}
 where the tildes stand for unnormalized densities. 
Then, 
\begin{equation*}
    \frac{\mathcal{I}_{g_d}[\tilde{w}^2_d]}{\mathcal{I}_{g_d}[\tilde{w}_d]^2} = \biggl(\frac{\mathcal{I}_{g}[\tilde{w}^2]}{\mathcal{I}_{g}[\tilde{w}]^2}\biggr)^d.
\end{equation*}
This identity along with Theorem \ref{thm:autonormalized} suggest that in order to have the same accuracy in importance sampling with proposal $g_d$ and target $f_d$ across a sequence of problems with increasing $d$, one needs to increase the number of samples exponentially with $d$.


\begin{mybox}[colback=white]{Pros and Cons}
As opposed to classical Monte Carlo, (autonormalized) importance sampling can be implemented when $f$ cannot be sampled from and, in contrast to rejection sampling, no samples are thrown away. Importance sampling can have smaller variance than classical Monte Carlo integration, but finding an adequate proposal that gives variance reduction can be challenging. Autonormalized importance sampling can have smaller variance than both importance sampling and classical Monte Carlo, but is biased. On the other hand, importance sampling requires evaluating the normalized target and proposal distributions, but is unbiased.  If the target and the proposal are far apart (which is typically the case in high-dimensional problems) the weights have a large variance. Then, the effective sample size will be small, and there will be test functions for which importance sampling performs poorly.
\end{mybox}


\section{Variance Reduction Techniques}\label{sec:variancereduction}
We have seen in Theorem \ref{thm2.3} that importance sampling with an appropriate choice of proposal distribution  ---adapted to both the target and the test function--- can have smaller variance than classical Monte Carlo. In this section, we describe two other variance reduction techniques: antithetic sampling and control variates. The general problem that we consider is again the estimation of $\mathcal{I}_f[h] := \mathbb{E}_{X \sim f}[h(X)]$ by using $N$ queries of the p.d.f. $f.$ Recall that the classical Monte Carlo estimator $\mathcal{I}_f^{\text{\tiny MC}}[h] = \frac{1}{N} \sum^{N}_{n=1} h(X^{(n)}) $ is unbiased and has variance $\sigma^2/N$ with $\sigma^2:=\mathbb{V}_{X \sim f}[h(X)].$

	\subsection{Antithetic Sampling}
	This method requires the target distribution to be symmetric with respect to some center $c,$ so that $f(x)=f(2c-x)$ holds for all $x.$ We let $N$ be an even integer and draw $N/2$ samples from $f;$ symmetry is used to define the remaining $N/2$ samples. 
		\begin{algorithm}[H]
			\caption{Antithetic Sampling}
			\label{alg:anti_samp}
			\begin{algorithmic}[1]
			 \STATEx{ \textbf{Input:} Target $f$ which is symmetric with respect to $c,$ even sample size $N.$}
			\STATE Sample $X^{(1)}, \ldots, X^{(\frac{N}{2})} \stackrel{\text{i.i.d.}}{\sim} f.$
			\STATE Compute symmetrized samples $\tilde X^{(n)} := 2c-X^{(n)}, \, \, n = 1, \ldots, N/2.$
			  \STATEx{\textbf{Output:} Antithetic estimator $\mathcal{I}_f^{\text{ \tiny AS}}[h] :=\frac{1}{N} \sum_{n \leq\frac{N}{2}} \Bigl(h(X^{(n)})+h(\tilde X^{(n)})\Bigr) \approx \mathcal{I}_f[h].$}
			\end{algorithmic}
		\end{algorithm}
The estimator $\mathcal{I}_f^{\text{ \tiny AS}}[h]$ is unbiased and has variance
		\begin{align*}
			\V\big[    \mathcal{I}_f^{\text{\tiny AS}}[h]  \big]
			&=\frac{N/2}{N^2}  \V\bigl[h(X)+h(\tilde X)\bigr]\\
			&=\frac{1}{2N} \Bigl(  \V[h(X)]+\V[h(\tilde X)]+2\operatorname{Cov} \bigl(h(X),h(\tilde X)\bigr) \Bigr)\\
			&=\frac{\sigma^2}{N}(1+\rho),
		\end{align*}
	where $\tilde X = 2c - X,$ $\sigma^2 = \V[h(X)] = \V[h(\tilde X)],$ and $\rho= \operatorname{Corr} \bigl(h(X),h(\tilde X)\bigr)$. Note that when $h$ is linear, we have $\rho=-1$ and hence zero variance. Indeed, for any affine function $h(X)=AX+b$, the estimator $\mathcal{I}_f^{\text{\tiny AS}}[h] $ is constant and agrees with the  expectation of interest. In the general case where $-1\leq\rho\leq 1$, we have $\V\big[    \mathcal{I}_f^{\text{ \tiny AS}}[h]  \big]\leq \frac{2 \sigma^2}{N}=\V\big[  \mathcal{I}_f^{\text{ \tiny MC}}[h] \big]$, i.e. in the worst case antithetic estimation has twice the variance of classical Monte Carlo.
	
	 In order to analyze the variance for different choice of test functions, we decompose $h$ into $h_o+h_e,$ where $$h_o(x):=\frac{h(x)-h(\tilde x)}{2}, \quad \quad h_e(x):=\frac{h(x)+h(\tilde x)}{2}$$ are odd and even components of $h$ with respect to the density $f$
	\footnote{Recall that the definition of symmetry depends on the center of the target distribution.}.
	Note that $\Expect[h_e]= \Expect[h]$ and $\Expect[h_o]=0$. Furthermore, $h_e$ and $h_o$ are uncorrelated
		\begin{align*}
	\operatorname{Cov} \bigl(h_e(X),h_o(X) \bigr)
		&=\Expect\left[ \biggl(\frac{h(X)+h(\tilde X)}{2}-\Expect[h]\biggr) \biggl(\frac{h(X)-h(\tilde X)}{2} \biggr)\right ]\\
		&=\frac{1}{4}\Expect\left [ \Bigl(h(X)+h(\tilde X)-2\Expect[h]\Bigr) \Bigl(h(X)-h(\tilde X) \Bigr)\right ]\\
		&=\frac{1}{4}\Expect\left [ h(X)^2-h(\tilde X)^2-2\Expect[h] \bigl(h(X)-h(\tilde X)\bigr)\right ]\\
		&=0.
		\end{align*}
	As a result, $\sigma^2=\sigma_e^2+\sigma_o^2,$ where $\sigma_e^2 :=\V [h_e(X)]$ and $\sigma_o^2 :=\V [h_o(X)]$. Rewriting the variance of our estimators, we have
	$$\V\big[  \mathcal{I}_f^{\text{ \tiny MC}}[h] \big]=\frac{1}{N}\sigma^2=\frac{\sigma_e^2+\sigma_o^2}{N} $$
	and
	\[  \V\big[  \mathcal{I}_f^{\text{ \tiny AS} }[h] \big]   =\frac{1}{2N} \V\bigl[ h(X)+h(\tilde X)\bigr]=\frac{1}{2N}\V\bigl[2h_e(X)\bigr]=\frac{2\sigma_e^2}{N}.\]
	Observe that antithetic sampling eliminates $\sigma_o^2$ but amplifies $\sigma_e^2$. In particular, any affine function is ``odd'' with respect to a symmetric density, which explains again our previous findings. Generally, antithetic sampling performs well when the ``odd'' part of the function has smaller variance than the ``even'' part.

		\begin{example}[Integral Approximation with Antithetic Sampling]\label{example:anthithetic}
		Suppose we want to approximate an integral 
		$$\int_0^1 h(x) \, dx = \Expect_{X\sim \text{Unif}(0,1)} [h(X)].$$
		We can apply Algorithm \ref{alg:anti_samp} since the uniform distribution is symmetric with respect to the center $c = \frac{1}{2}$. For instance, we can first draw $50$ samples 
		$X^{(1)},X^{(2)},\ldots,X^{(50)}$
		 uniformly from $[0,1]$ and estimate the integral by 
		$\frac{1}{100}\sum_{i=1}^{50}\left(h(X^{(i)})+h(1-X^{(i)})\right)$. \hfill \qedhere
		\end{example}

	\subsection{Control Variates}
	If we have access to an approximation $\hat h$ of the test function $h$ along with its true mean $\mu := \mathbb{E}_{X \sim f}\bigl[\hat h(X) \bigr]$ we can apply control variates.
		\begin{algorithm}[H]
		\caption{Control Variates}
		\label{alg:control_var}
			\begin{algorithmic}[1]
			 \STATEx{ \textbf{Input:} Target $f,$ test function $h$ and approximation $\hat h$, $\mu:=\Expect \bigl[\hat h(X) \bigr]$, sample size $N.$}
			\STATE Sample $X^{(1)}, \ldots, X^{(N)} \stackrel{\text{i.i.d.}}{\sim} f.$
			 \STATEx{ \textbf{Output:} Estimator $\mathcal{I}_f^{\text{ \tiny CV}}[h]:=\mu+\frac{1}{N}\sum_{n\leq N} \bigl(h(X^{(n)})-\hat h(X^{(n})\bigr)   \approx \mathcal{I}_f[h] $.}
			\end{algorithmic}
		\end{algorithm}
	Note that the trivial approximation $\hat h = 0$ leads to the classical Monte Carlo estimator $\mathcal{I}_f^{\text{ \tiny CV}}[h]\equiv \mathcal{I}_f^{\text{\tiny MC}}[h]$. On the other hand, the oracle approximation $\hat h=h$ gives $\mathcal{I}_f^{\text{\tiny CV}}[h]=\mathcal{I}_f[h],$ which is unsurprising since control variates assumes access to the true mean of our approximation $\hat h = h$. In general, $\mathcal{I}_f^{\text{\tiny CV}}[h]$ is an unbiased estimator with variance $\V\bigl[  \mathcal{I}_f^{\text{ \tiny CV}}[h]\bigr]=\frac{1}{N}\V\bigl[ h(X)-\hat h(X)\bigr]$. Therefore, the performance of control variate estimation is determined by the quality of the approximation $\hat h \approx h$.
		\begin{example}[Integral Approximation with Control Variates]\label{example:controlvariates}
		Suppose, as in Example \ref{example:anthithetic}, that we wish to approximate 
		$$\int_0^1 h(x) \, dx = \Expect_{X\sim \text{Unif}(0,1)} [h(X)],$$
		where $h$ is a smooth test function which does not admit a closed-form primitive. We can use a polynomial approximation $\hat h$ (e.g. Taylor expansion) of $h$,  compute $\mu = \int_0^1 \hat h(x) \, dx$ exactly, and use Algorithm \ref{alg:control_var} by drawing uniform samples from the unit interval. \hfill \qedhere
		\end{example}

\section{Discussion and Bibliography}\label{sec:bibliochapter2}
The book chapter \cite{hammersley1964percolation} is a classic reference on the Monte Carlo method and variance reduction techniques. Modern textbooks include \cite{asmussen2007stochastic,owenbook,glasserman2004monte,kroese2013handbook,liu2008monte,robert2013monte,graham2013stochastic,barbu2020monte,newman1999monte,robert2010introducing}. Some of these books emphasize specific applications, such as finance \cite{glasserman2004monte,wang2012monte,jackel2002monte}, statistics \cite{asmussen2007stochastic,robert2013monte,owenbook}, scientific computing \cite{liu2008monte}, statistical physics \cite{newman1999monte}, stochastic simulation \cite{graham2013stochastic}, or computer vision, machine learning, and artificial intelligence \cite{barbu2020monte}. 
For a textbook on quantum Monte Carlo methods with applications in chemistry, we refer to \cite{gubernatis2016quantum}. The textbook \cite{robert2010introducing} introduces Monte Carlo methods through examples coded in R, while \cite{kroese2013handbook} includes numerous worked examples using MATLAB. Finally, \cite{caflisch1998monte} provides an accessible overview of Monte Carlo methods aimed at applied mathematicians. 

In Section \ref{ssec:classicalMC} we invoked the central limit theorem to derive an asymptotic confidence interval for classical Monte Carlo integration. Non-asymptotic analyses are also possible, and we refer to \cite[Chapter 3]{graham2013stochastic} for further details. Empirical process theory provides another perspective on Monte Carlo methods, see e.g. \cite[Chapter 8]{vershynin2018high} for a gentle introduction. 

The lecture notes \cite{anderson2014monte} compare Monte Carlo and importance sampling through examples. Importance sampling was developed as a variance reduction technique in the early 1950's \cite{kahn1953methods,kahn1955use}. Theorem \ref{thm2.3}, which shows how to optimally choose the proposal density for a given test function $h$ and target density $f$, was already shown in \cite{kahn1953methods}. For recent methodological developments, see  \cite{li2013two,owen2000safe,liu1998sequential,tan2004likelihood,kawai2017adaptive}. The book \cite{chopin2020introduction} contains a modern presentation of importance sampling in a general framework.
A review of importance sampling, from the perspective of filtering
and sequential importance resampling, can be found in 
\cite{agapiou2017importance}; the proofs in this chapter closely follow the presentation in that paper. For a review of adaptive importance sampling techniques, see \cite{bugallo2017adaptive}.

The key role of the second moment of the weight function $w$ has long been realized \cite{liu1996metropolized,pitt1999filtering}, and it is known to be  asymptotically linked to the effective sample size \cite{kong1992note,kong1994sequential,liu1996metropolized}. The question of how to choose a proposal distribution that leads to small value of said second moment has been widely studied, and we refer to \cite{liang2007stochastic} and references therein.  Similar to \cite{chen2005another}, Theorem \ref{thm:autonormalized} quantifies the estimation error in terms of the $\chi^2$-divergence between target and proposal; recent complementary analysis of importance sampling in \cite{CP15} utilizes the Kullback-Leibler divergence. Necessary sample size results for importance sampling in terms of several divergences between target and proposal were established in \cite{sanz2018importance,sanz2021bayesian}.  A review of useful distances between probability measures can be found in \cite{gibbs2002choosing}. Effective sample size based on discrepancy measures are studied in \cite{martino2017effective}. The paper \cite{elvira2022rethinking} overviews existing notions of effective sample size for importance sampling algorithms and proposes new ones. 

Variance reduction techniques are covered in most standard textbooks on Monte Carlo methods, see e.g. \cite[Chapter 5]{asmussen2007stochastic} and \cite[Chapter 4]{robert2013monte}. The monograph \cite{graham2013stochastic} considers more advanced variance reduction techniques. In particular, we refer to  \cite[Part III]{graham2013stochastic} for variance reduction techniques for simulation of stochastic differential equations.

\chapter{Metropolis Hastings}
\label{chap:MCMC}

This chapter introduces the powerful idea of using Markov chains for sampling. This idea underlies many Markov chain Monte Carlo (MCMC) algorithms studied in this and subsequent chapters. Here, we focus on the Metropolis Hastings framework: we combine a proposal Markov kernel with an accept/reject mechanism to obtain a Markov kernel that satisfies detailed balance with respect to the target distribution. We refer to Appendix \ref{chap:markovchains} for background on Markov chains that is needed to understand the material in this chapter. 

The Metropolis Hastings algorithm outputs draws $\{X^{(n)}\}_{n=1}^N$ that are correlated and only approximately distributed like the target.  Similar to the Monte Carlo integration methods in Chapter \ref{chap:MCintegration}, the draws 
$\{X^{(n)}\}_{n=1}^N$ can be used to approximate integrals of the form 
$$
\mathcal{I}_f[h]=\int_E h(x)f(x) \, dx,
$$
where the target p.d.f. $f$ is assumed to be supported on $E\subset \R^d$ and $h:E\rightarrow \R$ is a real-valued test function.\footnote{The methodology is also applicable for discrete target distributions $f$ and vector-valued test functions.} To that end, one can define an estimator 
$$
\mathcal{I}_f^{\text{\tiny MH}}[h]: = \frac{1}{N}\sum_{n=1}^N h(X^{(n)}) \approx \mathcal{I}_f[h]. 
$$
Standard Markov chain theory, reviewed in Appendix \ref{chap:markovchains}, guarantees that if the chain $\{X^{(n)}\}_{n=1}^N$ is ergodic, then the estimator $\mathcal{I}_f^{\text{\tiny MH}}[h]$ is asymptotically unbiased (see Theorem \ref{thm:LLN}), and if the chain is geometrically ergodic, then it satisfies a central limit theorem  (see Theorem \ref{thm:CLT}).
As in classical Monte Carlo, the samples are given uniform weights $1/N$, but now they are not drawn from $f,$ and they are not independent. Thus, the Metropolis Hastings algorithm is useful when the target distribution is intractable and cannot be sampled directly. Importance sampling sidesteps the intractability of the target by sampling from a proposal distribution and giving each draw an importance weight. In contrast, Metropolis Hastings sidesteps the intractability of the target by sampling from a Markov kernel and using an accept/reject mechanism. An advantage of this latter approach is that the draws receive uniform weights, avoiding the weight degeneracy of importance sampling in high dimension. Additionally, the choice of a proposal Markov kernel affords more flexibility than the choice of a proposal distribution. 

This chapter is organized as follows. Section \ref{ssec:MHA} introduces the Metropolis Hastings algorithm. Two choices of proposal Markov kernel are discussed in Section \ref{ssec:proposalDistn}: the independence sampler and the random walk Metropolis Hastings algorithm. Section \ref{ssec:implementation} discusses implementation issues and convergence diagnostics. These diagnostics address an important caveat of the Metropolis Hastings algorithm: it is often difficult in practice to determine whether the Markov chain has converged, and, relatedly, whether our sample size $N$ is large enough to meet a given error tolerance. Section \ref{sec:bibliochapter3} closes with bibliographical remarks.

\section{Metropolis Hastings Algorithm}\label{ssec:MHA}
The Metropolis Hastings algorithm provides a flexible framework to define a Markov kernel $\pmh$ that can be sampled from and leaves the target invariant.
The idea is to leverage a probabilistic accept/reject mechanism to turn a user-chosen proposal Markov kernel $q$ into a Markov kernel $\pmh$ that satisfies detailed balance with respect to the target $f.$  From the definition of Markov kernel, for each $x\in E,$ $q(x,z)$ will represent the probability (or probability density in the continuous case) with which a  move from $x$ to $z$ is proposed.  
Recall further that, for fixed $x,$ $q(x,\cdotp)$ is a p.m.f. if $E$ is discrete and a p.d.f. if $E$ is continuous. 

The method is summarized in Algorithm \ref{algo:MH}. Given the $n$-th sample $X^{(n)},$ we first draw a proposed move $Z^*\sim q(X^{(n)},\cdotp)$. We accept the move, which means setting $X^{(n+1)}=Z^*,$ with probability $a(X^{(n)},Z^*)$. If the proposed draw $Z^*$ is rejected, we set $X^{(n+1)}=X^{(n)}$. The probability of accepting a proposed move from $x\in E$ to $z \in E$ is defined to be 
$$
a(x,z):=\min\biggl\{1,\frac{f(z)}{f(x)}\frac{q(z,x)}{q(x,z)} \biggr\}.
$$
We will show in Theorem \ref{thm:detailedbalanceMCMC} that this choice of acceptance probability turns the kernel $q$ into a kernel $\pmh$ that satisfies detailed balance with respect to $f.$ 

\begin{algorithm}[H]
  \caption{Metropolis Hastings Algorithm\label{algo:MH}}
  \begin{algorithmic}[1]
  \STATEx{ \textbf{Input:} Target $f$, initial distribution $\pi_0$, proposal Markov kernel $q(x,z),$ sample size $N.$ }
    \STATEx{ Initial draw: Sample $X^{(0)}\sim \pi_0.$ }
    \STATEx{ Subsequent draws: For $n=0,\ldots ,N-1$ do:}
  \STATE{\textbf{Proposal step:} Sample $ Z^*\sim q(X^{(n)},\cdotp)$.}
  \STATE{\textbf{Accept/reject step:} Update $$X^{(n+1)}=\begin{cases}
    Z^* \qquad \quad \text{ with probability } a(X^{(n)},Z^*), \\
    X^{(n)} \quad \quad \,   \text{ with probability } 1-a(X^{(n)},Z^*).
                    \end{cases} $$}
  \STATEx{\textbf{Output:} Sample $\{X^{(n)}\}_{n=1}^N.$}
    \end{algorithmic}
\end{algorithm}

As previously mentioned, given a test function $h:E \to \R,$ we can use the output $\{X^{(n)}\}_{n=1}^N$ of the Metropolis Hastings algorithm to approximate
\begin{equation*}
    \mathcal{I}_f[h] = \int_E h(x)f(x)\, dx \approx\frac{1}{N}\sum_{n=1}^N h(X^{(n)}) = :\mathcal{I}_f^{\text{\tiny MH}}[h].
\end{equation*}

\begin{remark}
\begin{enumerate}
	\item To implement the Metropolis Hastings algorithm we need to be able to sample $q(x,\cdotp)$ for  $x\in E$ and evaluate the acceptance probability $a(x,z)$ for $x,z\in E$. A priori, the only necessary condition on the proposal is that $$E\subset \bigcup_{x\in E} \text{support}\bigl(q(x,\cdotp)\bigr).$$
	Importantly, the target $f$ appears in the acceptance probability only as a ratio, and therefore the algorithm can be implemented even if $f$ is only known up to a normalizing constant.
	\item If $q(x,z)=q(z,x)$ for all $x,z \in E,$ then $a(x,z)=\min\Bigl\{1,\frac{f(z)}{f(x)}\Bigr\}$. Moves to regions of higher target density are always accepted, while moves to regions of lower but non-zero target density are accepted with positive probability in order to ensure exploration of the state space $E$. If $q$ is not symmetric, moves for which $q(z,x)\geq q(x,z)$ are favored, as they are more likely to be undone by chance.
	\item The accept/reject step can be implemented as follows:
	\begin{itemize}
	    \item Draw $U^*\sim$ \emph{Unif}$(0,1)$ independently from $Z^*.$
	    \item Update 
	    \begin{align*}
	    X^{(n+1)}=\begin{cases}
    Z^* \quad \text{if }\quad U^*<a(X^{(n)},Z^*),\\
    X^{(n)} \text{ if }\quad a(X^{(n)},Z^*)<U^*. 
    \end{cases}
	    \end{align*}
	\end{itemize}
\end{enumerate}
\end{remark}

\begin{mybox}[colback=white]{Pros and Cons} The Metropolis Hastings algorithm can be implemented without knowing the normalizing constant of the target, and is extremely flexible due to the freedom in the choice of proposal kernel $q.$ It has important advantages over all the algorithms covered in Chapters \ref{chapter1} and \ref{chap:MCintegration}.
In contrast to rejection sampling, no samples are discarded; in contrast to importance sampling, MCMC samples are given equal weights, avoiding weight collapse; in contrast to ABC, in many cases MCMC is provably accurate in the large $N$ limit. The Metropolis Hastings algorithm also has some caveats. First, it can be very hard (often impossible) to tell if the chain is close to stationarity. As opposed to rejection sampling and importance sampling, MCMC samples are typically positively correlated, which, as established in Lemma \ref{lemmaMarkovchainerror}, results in larger variance estimates than independent or negatively correlated samples. Additionally, it is often challenging to choose an appropriate proposal kernel, and the ergodic behavior of the algorithm depends on that choice. In particular, it is often necessary to tune the proposal by trial and error (the Gibbs sampler, studied in Chapter \ref{chap:gibbs}, is an exception). Finally, the Metropolis Hastings algorithm cannot be naturally parallelized.
\end{mybox}

\bigskip

The Metropolis Hastings algorithm implicitly defines a Markov kernel $\pmh(x,z)$ which gives the probability (or probability density) of finding the  $(n+1)$-th sample at location $z\in E$ given that the $n$-th sample was at $x\in E.$

\begin{lemma}
The Metropolis Hastings Markov kernel is given by 
$$\pmh(x,z)=q(x,z)a(x,z)+\delta_x(z) r(x),$$
 where
 \begin{align*}
r(x)=\begin{cases}
    \sum_{y\in E}q(x,z)\bigl(1-a(x,z)\bigr)\quad  &\text{ if } E\, \, \text{ is discrete,} \\
    \int_{E}q(x,z)\bigl(1-a(x,z)\bigr)\, dz \quad &\text{ if } E\,\, \text{ is continuous,}
                    \end{cases} 
\end{align*}                    
  and $\delta_x(z)$ denotes a Dirac mass at $x.$                  
(Note: $\pmh$ is not absolutely continuous with respect to Lebesgue measure when $E$ is continuous, since the probability of moving from $x$ to $x$ is positive.)
\end{lemma}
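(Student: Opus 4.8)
The plan is to identify $\pmh$ through its action on test functions, since the claim is cleanest to verify by checking that both sides produce the same conditional expectation $\Expect[\phi(X^{(n+1)}) \mid X^{(n)} = x]$ for an arbitrary bounded test function $\phi$. Reading off the two-step structure of Algorithm \ref{algo:MH}, I would condition first on the proposed draw $Z^*\sim q(x,\cdot)$ and then on the accept/reject coin: given $Z^* = y$, the next state equals $y$ with probability $a(x,y)$ and equals $x$ with probability $1 - a(x,y)$. The law of total expectation then gives
$$\Expect\bigl[\phi(X^{(n+1)}) \mid X^{(n)} = x\bigr] = \int_E q(x,y)\Bigl[a(x,y)\phi(y) + \bigl(1 - a(x,y)\bigr)\phi(x)\Bigr]\, dy$$
in the continuous case, with the integral replaced by a sum over $y \in E$ in the discrete case.

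The key step is then to split this expression into the two advertised contributions. The term carrying $\phi(y)$ integrates against $q(x,y)a(x,y)$, which is exactly the first piece of the claimed kernel. The term carrying $\phi(x)$ factors out as $\phi(x)\int_E q(x,y)\bigl(1 - a(x,y)\bigr)\, dy = \phi(x)\, r(x)$, which is precisely the definition of $r(x)$. Writing $\phi(x) = \int_E \phi(z)\, \delta_x(dz)$ recasts this second term as integration against $\delta_x(z)\, r(x)$, so that altogether
$$\Expect\bigl[\phi(X^{(n+1)}) \mid X^{(n)} = x\bigr] = \int_E \phi(z)\Bigl[q(x,z)a(x,z) + \delta_x(z) r(x)\Bigr]\, dz.$$
Since this holds for every bounded $\phi$, I would conclude that the bracketed expression is the transition kernel $\pmh(x,z)$, as claimed.

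The main obstacle is conceptual rather than computational: the kernel is a genuine mixture of an absolutely continuous part $q(x,z)a(x,z)$ and an atom at $x$, so $\pmh(x,\cdot)$ is not a density against Lebesgue measure in the continuous case, exactly the caveat flagged in the parenthetical note. I would therefore treat $\delta_x$ as a point mass and phrase the identification at the level of measures (equivalently, via the test-function computation above) rather than pretending $\pmh(x,z)$ is an ordinary function of $z$. The discrete case is notationally cleaner, since there $\delta_x(z)$ is simply the indicator $\mathbf{1}\{z = x\}$ and every object is an honest p.m.f.; I would present the continuous argument in full and then remark that the discrete one is identical with sums replacing integrals.
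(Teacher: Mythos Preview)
Your proof is correct. The paper takes a slightly more elementary route: it argues directly in the discrete case by computing, for each $z$, the probability of landing at $z$. For $z\neq x$ this is $q(x,z)a(x,z)$ (propose $z$ and accept), while for $z=x$ there are two disjoint ways to stay put---propose $x$ and accept it, or propose anything and reject---giving $q(x,x)a(x,x)+r(x)$; assembling the cases yields the formula. The paper explicitly proves only the discrete case.

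Your test-function approach is essentially the same decomposition viewed dually, but it buys you something: by identifying the kernel through $\Expect[\phi(X^{(n+1)})\mid X^{(n)}=x]$ you handle the continuous case cleanly and make precise the sense in which the Dirac term is a point mass rather than a density, which is exactly the issue the paper flags but does not resolve. The paper's case-by-case computation is perhaps more transparent for a first reading; your weak formulation is the more robust way to treat the mixed absolutely-continuous-plus-atom structure.
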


\begin{proof} We only prove the discrete case.
The chain moves to a new state $z$ if the state $z$ was proposed and accepted, which happens with probability $q(x,z)a(x,z)$. This is the probability of moving from $x$ to $z$ if $x\neq z$. A move from $x$ to $x$ may occur in two different ways:
\begin{enumerate}
	\item Propose $x$ as new state and accept it, which happens with probability $q(x,x)a(x,x)$.
	\item Propose any $z\in E$ and reject it, which happens with probability $$r(x)=\sum_{z\in E}q(x,z)\bigl(1-a(x,z)\bigr).$$
\end{enumerate}
Putting everything together
\begin{equation*}
 \pmh(x,z)=q(x,z)a(x,z)+\delta_x(z) r(x).   \tag*{\qedhere}
\end{equation*}
\end{proof}
As part of the proof of the previous lemma, we have shown that if $x\neq z,$ then
$$
\pmh(x,z)=q(x,z)a(x,z).
$$
A consequence of this identity is the detailed balance of $\pmh$ with respect to $f$.

\begin{theorem}[Detailed Balance of Metropolis Hastings]\label{thm:detailedbalanceMCMC}
The Metropolis Hastings kernel $\pmh$ satisfies detailed balance with respect to $f.$
\end{theorem}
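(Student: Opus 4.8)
The plan is to verify the detailed balance condition $f(x)\pmh(x,z) = f(z)\pmh(z,x)$ for all $x,z \in E$ directly from the definition of the acceptance probability. First I would dispose of the diagonal case $x = z$, where the identity holds trivially since both sides coincide. For the off-diagonal case $x \neq z$, I would invoke the identity established just before the theorem statement, namely that $\pmh(x,z) = q(x,z)a(x,z)$ whenever $x \neq z$. This lets me discard the Dirac term $\delta_x(z)r(x)$, which is supported only on the diagonal and therefore never enters the comparison of off-diagonal entries.

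The key step is a symmetrization trick that pulls a factor inside the minimum. Substituting the definition of $a$, I would compute
$$f(x)\pmh(x,z) = f(x)q(x,z)\min\left\{1, \frac{f(z)}{f(x)}\frac{q(z,x)}{q(x,z)}\right\} = \min\bigl\{f(x)q(x,z),\, f(z)q(z,x)\bigr\}.$$
The crucial observation is that the resulting right-hand side is manifestly symmetric under the exchange $x \leftrightarrow z$. Running the identical computation with the roles of $x$ and $z$ swapped yields
$$f(z)\pmh(z,x) = \min\bigl\{f(z)q(z,x),\, f(x)q(x,z)\bigr\},$$
which is literally the same expression. Equating the two establishes detailed balance on the off-diagonal, and combined with the trivial diagonal case this completes the proof.

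I expect no serious obstacle: the whole argument reduces to absorbing the factor $f(x)q(x,z)$ into the minimum and recognizing the ensuing symmetry. The only subtlety worth flagging is well-definedness of the ratios appearing in $a(x,z)$; one should remark that when $x \neq z$ and either $f(x)q(x,z) = 0$ or $f(z)q(z,x) = 0$, both sides of the detailed balance equation vanish, so the conclusion persists under the natural interpretation of the minimum. I would state the computation for the discrete case and note that the continuous case is identical, with the equality read as an identity of densities.
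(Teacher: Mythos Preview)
Your proposal is correct and follows essentially the same approach as the paper: handle the diagonal $x=z$ trivially, then for $x\neq z$ use $\pmh(x,z)=q(x,z)a(x,z)$ and absorb $f(x)q(x,z)$ into the minimum to obtain the symmetric expression $\min\{f(x)q(x,z),\,f(z)q(z,x)\}$. The paper's proof is precisely this symmetrization argument, without your added remark on the degenerate ratio case.
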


\begin{proof}
We need to show that 
$$
f(x)\pmh(x,z)=f(z)\pmh(z,x),\quad \forall x,z\in E.
$$
For $x=z$ the equality is trivial.
If $x\neq z$
\begin{align*}
f(x)\pmh(x,z)&=f(x)q(x,z)a(x,z)\\
&=\min\Bigl\{f(x)q(x,z),f(z)q(z,x)\Bigr\}.
\end{align*}
The right-hand side is symmetric in $x$ and $z,$ and therefore the result follows.  \hfill $\square$
\end{proof}

Note that since detailed balance implies general balance, Theorem \ref{thm:detailedbalanceMCMC} implies that $f$ is an invariant distribution for $\pmh.$

\section{Proposal Kernels}\label{ssec:proposalDistn}
We can classify Metropolis Hastings algorithms based on the form of their proposal kernels. In this section, we study two types of proposal kernels that lead to the independence sampler and the random walk Metropolis Hastings algorithm. Other proposal kernels will be studied in subsequent chapters.

\subsection{Independence Sampler}
Independence samplers are Metropolis Hastings algorithms where the proposal Markov kernel does not include information on the current state of the chain. This is to say that, for some distribution $g$ on $E$, 
$$\qind(x,z) = g(z),$$ i.e. the proposal Markov kernel $\qind(x,z)$ does not depend on the current state $x$ or the chain.

\begin{algorithm}[H]
  \caption{Independence Sampler \label{algo:independencesampler}}
  \begin{algorithmic}[1]
  \STATEx{\textbf{Input:} Target distribution $f$, initial distribution $\pi_0$, proposal $g(z),$ sample size $N.$}
  \STATE{ Define the Markov kernel
  $$\qind(x,z) := g(z).$$}
  \STATE{ Run Metropolis Hastings (Algorithm \ref{algo:MH}) with inputs $f,$ $\pi_0,$ $\qind(x,z)$, $N.$}
  \vspace{2mm}
  \STATEx{\textbf{Output:} Sample $\{X^{(n)}\}_{n=1}^N$.}
 \end{algorithmic}
\end{algorithm}
Note that the acceptance probability of the independence sampler reduces to
\begin{align*}
a(x,z) = \min \biggl\{1, \frac{f(z)}{f(x)}  \frac{g(x)}{g(z)} \biggr  \}.
\end{align*}
By making analogy to importance sampling, we can define $w(x) := \frac{f(x)}{g(x)}$ and express the acceptance probability as follows:
\begin{align*}
a(x,z) = \min \biggl\{1, \frac{w(z)}{w(x)}  \biggr\}.
\end{align*}
Thus, moves $x \mapsto z$ for which $w(z)\ge w(x)$ are always accepted.

In practice, independence samplers typically perform poorly. However, they are an important ingredient of the MCMC toolbox because their theoretical properties are well understood. To gain intuition and to illustrate the flavor of the theory, we compare the independence sampler with rejection sampling. Precisely, we establish a relation between the acceptance probability for rejection sampling and the acceptance rate of the independence sampler at stationarity. 

\begin{theorem}[Acceptance Rate of Independence Sampler]\label{thm:acceptanceindependencesampler}
Suppose that the target $f$ and the proposal distribution $g$ for an independence sampler satisfy, for some $M \ge 1,$
\begin{equation}\label{eq:bound}
f(x) \leq M g(x), \quad \quad \forall x \in E.
\end{equation}
Then, the independence sampler at stationarity has acceptance rate greater than $ 1/M.$
\end{theorem}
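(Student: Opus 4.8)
The plan is to identify the stationary acceptance rate with an expectation and then bound it below using the hypothesis $f \le Mg$. At stationarity the current state $X$ is distributed according to the invariant distribution $f$ (which by Theorem \ref{thm:detailedbalanceMCMC} is indeed invariant for the independence sampler), while the proposed state $Z \sim g$ is drawn independently. Hence the acceptance rate is
\[
\alpha := \mathbb{E}_{X \sim f,\, Z \sim g}\bigl[a(X,Z)\bigr] = \mathbb{E}\left[\min\left\{1, \frac{w(Z)}{w(X)}\right\}\right],
\]
where $w(x) = f(x)/g(x)$, using the form of the independence-sampler acceptance probability derived just above the statement.

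The key manipulation is to absorb the density $f$ of $X$ into the weight. Writing $f(x) = w(x) g(x)$ and using the elementary identity $\min\{1, w(z)/w(x)\}\, w(x) = \min\{w(x), w(z)\}$, I would rewrite
\[
\alpha = \int_E \int_E \min\{w(x), w(z)\}\, g(x) g(z)\, dx\, dz = \mathbb{E}_{X, Z \sim g}\bigl[\min\{w(X), w(Z)\}\bigr],
\]
where now $X$ and $Z$ are i.i.d. draws from $g$. This symmetrization is the main conceptual step: it converts an expectation over the pair $(f,g)$ into a symmetric expectation of $\min\{w,w\}$ under two independent copies of $g$, at which point both the pointwise bound on $w$ and its normalization become available.

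To finish, I would exploit that the hypothesis $f \le Mg$ is precisely the pointwise bound $w \le M$, together with the normalization $\mathbb{E}_g[w] = \int_E w g\, dx = \int_E f\, dx = 1$. From the identity $\min\{a,b\} = ab/\max\{a,b\}$ and $\max\{w(X), w(Z)\} \le M$, I obtain the pointwise lower bound $\min\{w(X), w(Z)\} \ge w(X)w(Z)/M$, so by independence
\[
\alpha \ge \frac{1}{M}\, \mathbb{E}[w(X)]\, \mathbb{E}[w(Z)] = \frac{1}{M}.
\]
The only delicate point is upgrading this to the strict inequality claimed in the statement. In the interesting case $M > 1$, the variable $w$ cannot equal $M$ for $g$-almost every $x$ (otherwise $\mathbb{E}_g[w] = M > 1$, a contradiction); hence $\{w < M\}$ has positive $g$-measure, so with positive probability both independent draws land there, and on that event the bound $\min\{w(X),w(Z)\} \ge w(X)w(Z)/M$ is strict. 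Integrating, $\alpha > 1/M$. I expect this strictness bookkeeping to be the fiddliest part, while the inequality itself drops out immediately from the symmetrization and the $\min\{a,b\}=ab/\max\{a,b\}$ trick.
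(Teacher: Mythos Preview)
Your argument is correct and reaches the same bound $\alpha\ge 1/M$ that the paper proves, but by a genuinely different route. The paper splits $\min\{1,\cdot\}$ via indicators into two integrals, swaps variables in the second to show the two pieces coincide (this is where the paper invokes the continuous-setting assumption $\Prob\bigl(f(Z)g(X)=f(X)g(Z)\bigr)=0$), obtains $2\int\!\!\int \mathbf{1}_{\{f(z)g(x)>f(x)g(z)\}}f(x)g(z)\,dx\,dz$, and then uses $g(z)\ge f(z)/M$ to reduce to $\tfrac{2}{M}\,\Prob\bigl(w(X_2)>w(X_1)\bigr)=\tfrac{1}{M}$ with $X_1,X_2\stackrel{\text{i.i.d.}}{\sim} f$. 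Your symmetrization $f=wg$ leading to $\alpha=\mathbb{E}_{X,Z\sim g}\bigl[\min\{w(X),w(Z)\}\bigr]$, followed by $\min\{a,b\}=ab/\max\{a,b\}\ge ab/M$ and $\mathbb{E}_g[w]=1$, is more direct: it avoids the indicator decomposition entirely and, notably, does not need the no-ties assumption, so it works verbatim in discrete state spaces as well.

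One remark on strictness. Despite the wording of the statement, the paper's proof only establishes $\alpha\ge 1/M$, so your inequality already matches it. Your strictness argument has a small gap: it can happen that $\{w<M\}$ has positive $g$-measure yet the bound is still tight, namely when $w$ takes only the values $0$ and $M$ (then $\mathbb{E}_g[w]=1$ forces $g(\{w=M\})=1/M$, and one checks directly that $\alpha=M\cdot(1/M)^2=1/M$). On $\{w<M\}\times\{w<M\}$ you then have $w(X)w(Z)=0$, so $\min\{w(X),w(Z)\}=w(X)w(Z)/M$ with equality, not strict inequality. This does not affect the bound $\alpha\ge 1/M$.
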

\begin{proof}
Let $X\sim f$ and let $Z \sim g.$ We work in the continuous setting so that $\Prob\bigl( f(z) g(x) = f(x) g(z)\bigr) = 0.$   We need to show that 
\begin{align*}
\Expect\bigr[a(X,Z)\big] =\Expect\biggr[\min  \Bigl\{1, \frac{f(Z)}{f(X)}  \frac{g(X)}{g(Z)}  \Bigr\} \biggr]  \geq \frac{1}{M}.
\end{align*}
We express the minimum with indicator functions  
\begin{align*}
\min \biggl\{1, \frac{f(Z)}{f(X)}  \frac{g(X)}{g(Z)}  \biggr\}
={\bf{1}}_{\bigr\{\frac{f(Z)}{f(X)}  \frac{g(X)}{g(Z)} > 1\bigr\}}
+   \frac{f(Z)}{f(X)}  \frac{g(X)}{g(Z)} {\bf{1}}_{\bigl\{\frac{f(Z)}{f(X)}  \frac{g(X)}{g(Z)} \leq 1\bigr\}}
\end{align*}
to obtain that 
\begin{align*}
\Expect\bigr[a(X,Z)\big] = \int \int  {\bf{1}}_{\{ f(z) g(x) > f(x) g(z) \}}& f(x) g(z) \, dx dz \\
& \hspace{-0.5cm}+ \int \int   \frac{f(z)}{f(x)}  \frac{g(x)}{g(z)}  {\bf{1}}_{\{ f(z) g(x) \leq f(x) g(z) \}} f(x) g(z) \, dx dz. 
\end{align*}
Now use symmetry and the bound \eqref{eq:bound} to deduce that
\begin{align*}
 \Expect\bigr[a(X,Z)\big] &= 2 \int \int  {\bf{1}}_{\{ f(z) g(x) > f(x) g(z) \}} f(x) g(z)\, dx dz \\
&\geq 2 \int \int  {\bf{1}}_{\{ f(z) g(x) > f(x) g(z) \}} f(x) \frac{ f(z)}{M}\, dx  dz \\
&= \frac{2}{M}   \Prob \Bigl( f(X_2) g(X_1) > f(X_1) g(X_2) \Bigr) \\
&= \frac{2}{M} \frac{1}{2} = \frac{1}{M},
\end{align*}
where  $X_1, X_2 \stackrel{\text{i.i.d.}}{\sim}  f.$  \hfill $\square$
\end{proof}

\begin{mybox}[colback=white]{Pros and Cons}
Theorem \ref{thm:acceptanceindependencesampler} shows that the acceptance rate of the independence sampler at stationarity is higher than the acceptance rate $1/M$ of rejection sampling. Moreover, contrary to rejection sampling, the independence sampler does not throw away rejected draws. However, the independence sampler produces correlated draws and in practice is initialized far from stationarity ---several iterations will be needed to approximately reach stationarity (see Theorem \ref{thm} below). 
Among Metropolis Hastings algorithms, the independence sampler is  rarely used in practice and is mainly of theoretical importance. 
\end{mybox}

As for rejection sampling, it is advisable to choose the independence sampler proposal $g$ to be close to the target $f$, while at the same time being easy to sample from. Note that in the extreme case where $g=f$, the algorithm outputs draws from $f$, the acceptance probability is 1, and the Markov chain reaches stationarity immediately. However, choosing $g =f$ is clearly not useful in the interesting case where sampling $f$ directly is not possible. The following result establishes uniform ergodicity of the independence sampler. The notions of total variation distance and uniform ergodicity are reviewed  in Definitions \ref{def:totalvariation} and \ref{def:geometricuniformergodicity}.

\begin{theorem}[Uniform Ergodicity of Independence Sampler]\label{thm}
Let $\pind$ be the Markov kernel of the independence sampler with proposal kernel $\qind(x,z) = g(z)$. Suppose that there is $M\ge1$ such that 
$f(x) \leq M g(x)$ for every  $x \in E.$
Then, for any $x \in E,$
\begin{align*}
\dtv \bigl(\pind^n(x, \cdot), f \bigr) \leq \Bigl(1 - \frac{1}{M} \Bigr)^n.
\end{align*}
\end{theorem}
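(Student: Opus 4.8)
The plan is to reduce the statement to a one-step \emph{minorization} (Doeblin) condition and then conclude by a standard coupling argument. Throughout I write $w(x) = f(x)/g(x)$, so that the hypothesis $f(x) \le M g(x)$ reads $w(x) \le M$, and I recall from the formula for the Metropolis Hastings kernel established earlier that, for $x \ne z$,
$$\pind(x,z) = \qind(x,z)\, a(x,z) = g(z)\,\min\left\{1, \frac{w(z)}{w(x)}\right\}.$$

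First I would establish that $\pind(x, \cdot)$ dominates $\tfrac{1}{M} f$ uniformly in $x$. For $x \ne z$ the displayed identity gives
$$\pind(x,z) = \min\left\{g(z),\, \frac{f(z)\,g(x)}{f(x)}\right\} = \min\left\{g(z),\, \frac{f(z)}{w(x)}\right\}.$$
Since $f(z) \le M g(z)$ we have $g(z) \ge f(z)/M$, and since $w(x) \le M$ we have $f(z)/w(x) \ge f(z)/M$; hence both arguments of the minimum are bounded below by $f(z)/M$, so $\pind(x,z) \ge f(z)/M$. Integrating over any measurable $A \subset E$ and discarding the nonnegative atomic contribution $r(x)\,\delta_x$ yields the minorization
$$\pind(x, A) \ge \frac{1}{M}\int_A f(z)\,dz = \frac{1}{M}\, \Prob_{X \sim f}(X \in A), \qquad \forall\, x \in E.$$

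With the minorization in hand, the plan is to split the kernel as $\pind(x, \cdot) = \tfrac{1}{M} f + \bigl(1 - \tfrac{1}{M}\bigr) Q(x, \cdot)$, where $Q(x, \cdot)$ is a genuine probability measure: it is nonnegative by the minorization, and of total mass one since both $\pind(x,\cdot)$ and $f$ are probability measures. This sets up a coupling of the chain started from $\delta_x$ with a stationary chain started from $f$: at each step, with probability $1/M$ both chains jump to a common draw from $f$, and with the remaining probability they move according to their respective residuals. Because $f$ is invariant for $\pind$ (detailed balance, Theorem \ref{thm:detailedbalanceMCMC}), once the two chains coincide they can be made to evolve together forever, so the coupling time $T$ is stochastically dominated by a $\mathrm{Geometric}(1/M)$ variable and $\Prob(T > n) \le (1 - 1/M)^n$. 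The coupling inequality then gives $\dtv\bigl(\pind^n(x,\cdot), f\bigr) \le \Prob(T > n) \le (1 - 1/M)^n$, as claimed; alternatively one may invoke directly the uniform-ergodicity-from-minorization result in Appendix \ref{chap:markovchains}.

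I expect the minorization computation itself to be short; the main points requiring care are the handling of the Dirac component of $\pind$---that the atom at $x$ can only add mass and hence does not spoil the lower bound coming from the absolutely continuous part---and the verification that it is precisely the invariance of $f$ that allows the coupling, once successful, to persist so that the geometric tail bound translates into the stated contraction in total variation.
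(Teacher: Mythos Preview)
Your proposal is correct and follows essentially the same route as the paper: establish the pointwise minorization $\pind(x,z)\ge f(z)/M$ from the explicit form of the independence-sampler kernel, then split $\pind(x,\cdot)=\tfrac{1}{M}f+(1-\tfrac{1}{M})Q(x,\cdot)$ and conclude via the same Doeblin coupling argument used in the proof of Theorem~\ref{thm:ergodicity}. Your treatment of the Dirac atom (it can only add mass to $\pind(x,A)$) is exactly the right way to handle the $x=z$ case.
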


\begin{proof}
We only provide a sketch proof. If $x \not = z$, 
\begin{align*}
\pind(x, z) 
= \qind(x,z) a(x,z)
= g(z) \min  \biggl\{1, \frac{f(z)}{f(x)}  \frac{g(x)}{g(z)}  \biggr\}
= \min  \biggl\{g(z), \frac{g(x) f(z)}{f(x)}  \biggr\}
\geq \frac{f(z)}{M}.
\end{align*}

The inequality $\pind(x, z) \geq \frac{f(z)}{M}$ also holds when  $x = z.$ This bound allows us to conclude the result with the same coupling argument used to show ergodicity of finite state space Markov chains in Theorem \ref{thm:ergodicity}.  A sketch is given below.

Let $\{B_n\}_{n=0}^\infty$ be a sequence of i.i.d. Bernoulli$(\frac{1}{M})$ random variables, independent of all other randomness, and define
$$W_{n+1} \sim \begin{cases}
s(W_n,\cdot)&{\text{if} } \,\, B_n=0,\\
r(W_n,\cdot)&{\text{if} } \,\, B_n=1,\\
\end{cases}$$
where $r(x,z) = f(z)$ and $$s(x,z)=\frac{\pind(x,z)-\frac{1}{M} f(z)}{1-\frac{1}{M}}.$$ The bound $\pind(x, z) \geq \frac{f(z)}{M}$ ensures that $s$ defines a Markov kernel. Moreover,  $\{W_n\}$ has transition kernel $\pind(x,z).$ The rest of the proof is identical to that of Theorem \ref{thm:ergodicity}. \hfill $\square$
\end{proof}

Notice the common underlying structure between Theorem \ref{thm} and Theorem \ref{thm:ergodicity}. In both cases we used a  lower bound on the kernel $P$ in the sense that there exist
$ \delta > 0$  and a probability measure $\nu$ such that $$P^m(x,A) \geq \delta \nu(A), \quad \quad \forall x \in E, \, \forall A \subset E.$$
Indeed, this condition is equivalent to ergodicity of the chain. Whenever it holds, 
\begin{align*}
\dtv(P^n(x, \cdot), f) \leq (1-\delta)^{[\frac{n}{m}]}. 
\end{align*}

\begin{remark}
    It can be shown that if $\emph{ess inf} \frac{g(z)}{f(z)} = 0$, then the independence sampler is not even geometrically ergodic.
\end{remark}

\paragraph{Scaling for the Independence Sampler}
Suppose we want to use the independence sampler with a proposal distribution chosen from some parametric family $\{ g_\theta : \theta \in \Theta \}$. For example, we may have decided to use a normal distribution as our proposal, and we want to determine an appropriate mean and variance. This general type of problem is called ``scaling'' of Metropolis Hastings methods.

Recall that we have shown that if $\frac{f(x)}{g(x)} \leq M$, then the independence sampler is uniformly ergodic with rate $1 - \frac{1}{M}.$ This suggests choosing the parameter $\theta$ which minimizes $M_\theta$ under the constraint that $\frac{f(x)}{g_\theta(x)} \leq M_\theta$. Note that this is the same constrained optimization we would like to (approximately) solve for in rejection sampling. In many problems of interest, finding an upper bound for the ratio $f/g$ is unfeasible. In such cases, one can monitor the acceptance rate of the independence sampler with various $\theta$'s, and choose the proposal $g_\theta$ that gives the highest acceptance rate.

\subsection{Random Walk Metropolis Hastings}
A  Random Walk Metropolis Hastings (RWMH) algorithm is a particular type of Metropolis Hastings method, where the proposal Markov kernel is chosen to be of the form
$$\qrwmh(x,z) = g(z-x)$$
for some distribution $g$.

\begin{algorithm}[H]
  \caption{Random Walk Metropolis Hastings (RWMH) \label{algo:RWMH}}
  \begin{algorithmic}[1]
  \STATEx{\textbf{Input:} Target distribution $f$, initial distribution $\pi_0$, proposal distribution $g,$ sample size $N.$}
  \STATE{ Define the Markov kernel
  $$\qrwmh(x,z) := g(z-x).$$}
  \STATE{ Run Metropolis Hastings (Algorithm \ref{algo:MH}) with inputs $f,$ $\pi_0,$ $\qrwmh(x,z)$, $N$.}
  \vspace{2mm}
  \STATEx{\textbf{Output:} Sample $\{X^{(n)}\}_{n=1}^N$.}
 \end{algorithmic}
\end{algorithm}

The name RWMH comes from the fact that proposals are made according to a random walk: proposed moves $Z^*$ may be obtained by sampling $\xi^* \sim g$ and setting
\begin{align*}
Z^* = X^{(n)} + \xi^*.
\end{align*}
The acceptance probability is 
\begin{align*}
a(x,z) = \min \biggl\{1, \frac{f(z)}{f(x)}  \frac{g(x-z)}{g(z-x)}  \biggr\}.
\end{align*}
The original paper by Metropolis et al. proposed an RWMH with $g$ symmetric about $0$, which simplifies the expression of the acceptance probability to
\begin{align*}
a(x,z) = \min \biggl\{1, \frac{f(z)}{f(x)} \biggr\}.
\end{align*}
 Common choices of distribution $g$ are normal, Student's $t$-distribution, and uniform (in some bounded subset, when $E$ is unbounded).

\begin{remark}
If $E = \mathbb{R}^d$ and $\qrwmh(x,z) = g(x-z) = g(z-x)$ for every $x,z \in E$, then the RWMH kernel is not uniformly ergodic for any $f$.
However, if the target $f$ is log-concave in the tails and symmetric, and $g > 0$ is continuous, then the RWMH kernel is geometrically ergodic. If $f$ is not symmetric, the conclusion still holds if additionally $g(x) \leq \ell e^{-\alpha |x|}$, where $\ell > 0$ and $\alpha$ is some constant which quantifies the concavity of $\log(f)$.
\end{remark}

\begin{figure}[!htb]
    \centering
    \includegraphics[width=1\linewidth]{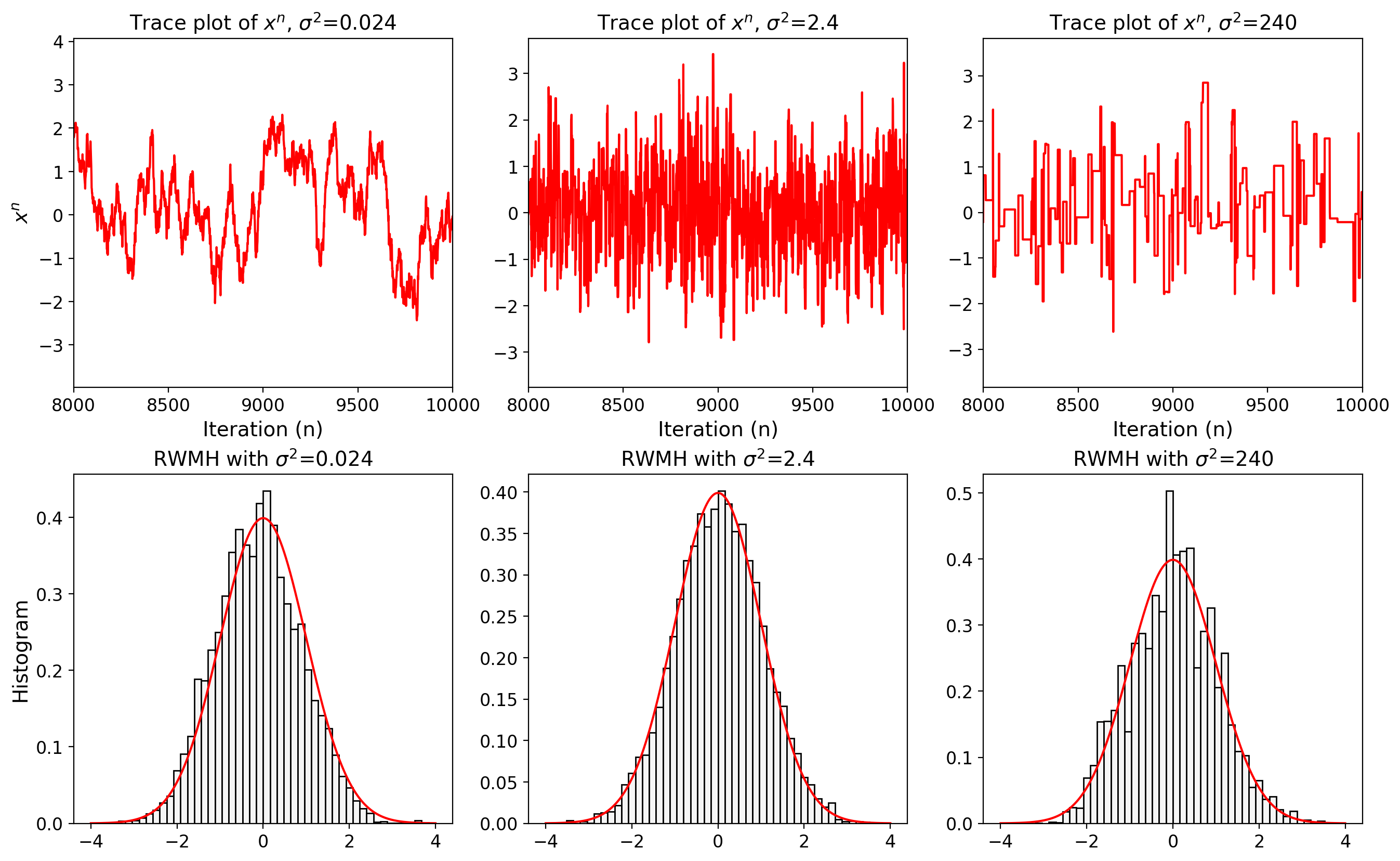}
\caption{\label{figure:RWMH}RWMH with standard Gaussian target $f = \Nc(0,1)$ and three proposals $g_\sigma = \Nc(0,\sigma^2)$ with $\sigma^2 = 0.024, 2.4, 240.$ The choice $\sigma^2 = 2.4$ leads to good mixing of the chain, whereas $\sigma^2 = 0.024$ leads to low exploration and $\sigma^2 = 240$ leads to many rejections.}
\end{figure}

\paragraph{Scaling for RWMH}
While for the independence sampler it is advantageous to maximize the expected acceptance probability, this is \textit{not} the case for RWMH. Consider for intuition the case where $f = \mathcal{N}(0,1)$ and $g_\sigma = \Nc(0, \sigma^2)$, as in Figure \ref{figure:RWMH}. Then,

\begin{itemize}
\item 
Small $\sigma^2$ leads to high acceptance rate but poor exploration of the state space.
\item
Large $\sigma^2$ leads to good exploration but low acceptance rate.
\end{itemize}
There are some widely applicable guidelines for tuning proposals.

\begin{enumerate}
\item 
First, in the simple case of a normal target with a normal distributed random walk proposal, the optimal choice of variance $\sigma^2$ can be determined by minimizing the integrated autocorrelation time of the associated chains.
The optimal value was found to be $\sigma^2 = 2.4$, with corresponding acceptance rate
\begin{align*}
\alpha = \frac{2}{\pi} \arctan\Bigl(\frac{2}{\sigma}\Bigr) = 0.44.
\end{align*}

This result has motivated the general recommendation of aiming at an acceptance rate around $1/2$ for low dimensional problems.
\item
Second, analysis of RWMH in the setting
\begin{align*}
f_d( x) &= f(x_1) \times \dotsb \times f(x_d), \quad x = (x_1, \ldots, x_d) \in \mathbb{R}^d, \\
g_{d}(x) &= \Nc(0, \sigma_d^2 I_d),
\end{align*}
reveals two important insights.

First, in the large-$d$ asymptotic, $\sigma_d^2$ should be scaled as $1/d$; this in turn implies that the convergence time of the algorithm scales like $\mathcal{O}(d)$.

Second, in a precise sense and under suitable assumptions, the optimal acceptance probability for large $d$ is roughly $0.234.$ This result has motivated the general recommendation of aiming at an acceptance rate around $ 1/4$ for high dimensional problems.
\end{enumerate}

\section{Implementation of Markov Chain Monte Carlo Methods}\label{ssec:implementation}

\subsection{Initialization Bias and Burn-In}
MCMC chains are usually initialized outside stationarity (if we could sample the initial draw from the target we would not be doing MCMC!).  To reduce the initialization bias caused by the effect of the starting value, the first $M$ draws may be discarded. Estimation is then based on the states visited after time $M$:
$$\mathcal{I}_f[h] \approx \frac{1}{N-M} \sum_{n=M+1}^N h(X^{(n)}).$$
The initial phase up to time $M$ is called the transient phase or burn-in period. How do we decide on the length of the burn-in period? A first step would be to examine the output of the chain by eye. This is a very crude method, but is very quick and cheap. However, this method should be followed up by a more sophisticated analysis.

  \FloatBarrier
    \begin{figure}[htp]
      \centering
      \includegraphics[width=0.75\columnwidth]{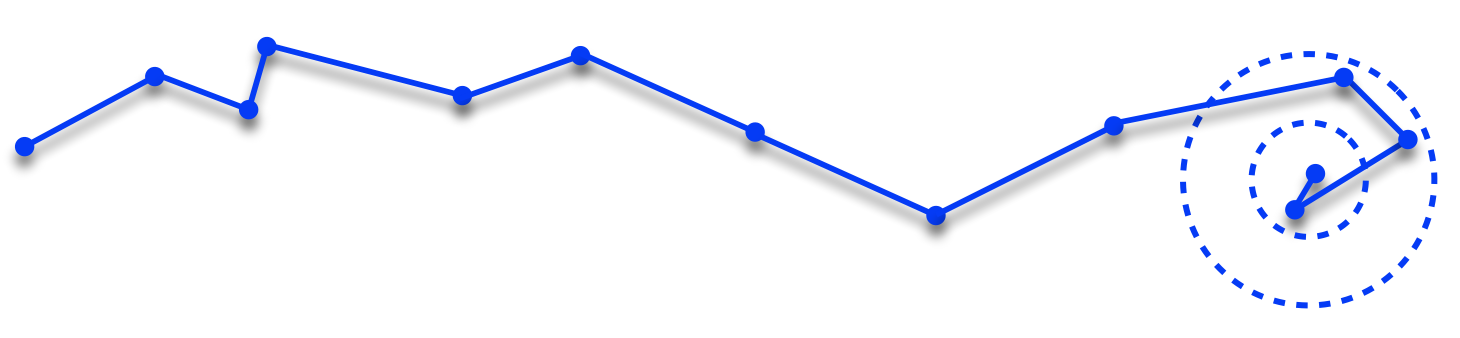}
      \caption{The MCMC chain is typically started outside stationarity, and it may take several iterations to reach a region of high target density.}
      \label{fig:MCMC_chain}
    \end{figure}
    \FloatBarrier

\subsection{Assessing Convergence}
How to assess convergence has been a fiercely discussed topic in the literature. Some of the disagreement is due to the fact that there are different types of convergence. First, there is the issue of whether the distribution of the chain is close to the stationary distribution. Second, there is the issue of how well the chain explores the state space. Generally, chains that do not explore the state space well (we say, mix slowly) tend to approach stationarity more slowly. But note that even if the chain is at stationarity, ergodic averages of a slow mixing chain typically result in inaccurate estimates, a third issue that is discussed under the general keyword of convergence. All three issues are related to each other but are not exactly the same, which can be confusing. For example, a chain with a bi-modal stationary distribution may not have reached equilibrium yet, as it only ever has visited the region around one of the modes. Now suppose we are interested in estimating a function that is equal for both modes. Then, it can happen that the corresponding ergodic average converges much faster than the chain itself.

There are two general approaches to assess convergence to stationarity: convergence rate calculation and convergence diagnostics.

\begin{enumerate}
\item {\bf Convergence rate calculation}: When we looked at the speed of convergence for the independence sampler, we essentially performed a convergence rate calculation. The advantage of this approach is, of course, that it is exact. In general, these calculations can be difficult and will only apply to specific cases. Because they are exact methods, they can be overly pessimistic as they take into account any worst-case scenario even if it has a very small probability of occurring. At times they can be so pessimistic that the suggested burn-in period simply becomes impractical.
\item {\bf Convergence diagnostics}: This is the most widely used method. Convergence diagnostics examine the output of the chain and try to detect a feature that may suggest divergence from stationarity. They are usually relatively easy to implement. However, while they may indicate that convergence has not been reached, they are not able to guarantee convergence.
\end{enumerate}
    
   We will now examine some of the most commonly used diagnostics. 
    \paragraph{Geweke's Method}
    Geweke proposed a diagnostic method based on spectral analysis of time series. Consider two subsequences of a run of your MCMC, one consisting of states immediately after burn-in and the other consisting of states at the very end of the run. Suppose that the run is of length $N$ and we want to examine if a burn-in of $M$ steps is sufficient. Consider ergodic averages
    $$ \mathcal{I}_f^A[h]: = \frac{1}{N_A} \sum_{n= M+1}^{M+N_A} h(X^{(n)}) , \quad \quad  \mathcal{I}_f^B[h]: = \frac{1}{N_B} \sum_{n= N-N_B}^{N} h(X^{(n)}) $$
with $M + N_A < N - N_B.$ If the chain has converged, then the two ergodic averages should be similar. How similar they should be can be quantified using an appropriate central limit theorem. The variance of ergodic averages under stationarity is determined by the spectral density of the time series $\{h(X^{(n)}) \}_{n=1}^\infty,$ defined as 
$$S(\omega) = \frac{1}{2\pi} \sum_{k = -\infty}^ \infty \gamma_k \exp(ik \omega),$$
where $\gamma_k = \operatorname{Cov} \bigl( h(X^{(0}), h(X^{(k)}) \bigr)$  is the $k$-lag autocovariance of the time series (see Definition \ref{ref:autocovariance} in Appendix \ref{chap:markovchains}). The following asymptotic result holds. If the ratios $N_A/N$  and $N_B/N$ are fixed with $(N_A + N_B)/(N-M) <1,$ then, under stationarity and as $N \to \infty,$
\begin{equation*}
\frac{\mathcal{I}_f^A[h]  - \mathcal{I}_f^B[h] }{\sqrt{ \frac{1}{N_A} \hat{S}_A(0) +   \frac{1}{N_B} \hat{S}_B(0)    }}  \to \Nc(0,1),
\end{equation*}
where $\hat{S}_A(0)$ and $\hat{S}_B(0)$ are spectral estimates of the variances. We can use the above asymptotics to test whether the means of the two sequences are equal subject to variation. Geweke originally suggested taking the values  $N_A = N/10$ and $N_B = N/2.$

    \paragraph{Gelman and Rubin's Method}
    This method uses several chains started in initial states that are over-dispersed compared to the stationary distribution. The idea is that if there is initialization bias, then the chains will be close to different modes while under the stationary regime the chains should behave similarly. If the chain tends to get stuck in a mode, then the path appears to be experiencing stable fluctuations, but the chain has not converged yet. In this case, we speak of \emph{meta-stability}, which is often caused by multi-modality of the stationary distribution. Gelman and Rubin apply concepts from ANOVA techniques (and thus rely on Normal asymptotics). We assume each of a total of $J$ chains are run for $2N$ iterations where the first $N$ iterations are classified as burn-in. We first compute the variance of the means of the chains:
$$ B = \frac{1}{J-1} \sum_{j= 1}^J \Bigl( \mathcal{I}_f^j[h] - \bar{\mathcal{I}_f^j  }[h] \Bigr)^2. $$
Here we run $J$ chains, $\mathcal{I}_f^j[h]$  is the ergodic average of the $j$-th chain based on the last $N$ iterations, and $\bar{\mathcal{I}_f^j  }[h]$ is the average of the ergodic averages of the $J$ chains. This is interpreted as the between chain variance. As $N \to \infty$ the between chain variance will tend to zero.

Then, we compute the within chain variance $W$, that is, the mean of the variance of each chain:
$$W = \frac{1}{J} \sum_{j = 1}^J \frac{1}{N-1}  \sum_{n= N+1}^{2N} \Bigl( h(X_j^{(n)}) - \mathcal{I}_f^j[h]    \Bigr)^2,$$
where $X_j^{(n)}$ denotes the $n$-th iterate of the $j$-th chain.
 We calculate the weighted average of both variance estimates 
 $$V = \frac{N-1}{N} W + B$$
to obtain an estimate of the target variance. We then monitor $R = \sqrt{ V/W},$ which is called the potential scale reduction factor. $R$ tends to be larger than one and will converge towards one as stationarity is reached. Using Normal asymptotics, one can show that $R$ has an $F$-distribution and the hypothesis of $R = 1$ can be tested. As a rule of thumb, if the $0.975$-quantile is less than $1.2$, no lack of convergence has been detected.
    
    \paragraph{Raftery and Lewis' Method}
In this diagnostic, a posterior quantile $p$  and an acceptable tolerance $r$ for error in computing $p$ are specified. In addition, the user specifies the desired probability $\alpha$ of obtaining an estimate for $p$ within the prescribed error tolerance. Raftery and Lewis suggested a way to estimate the number $N$ of iterations and the burn-in period $M$ that are necessary to satisfy the prescribed conditions.

\section{Discussion and Bibliography}\label{sec:bibliochapter3}
The Metropolis Hastings algorithm was introduced in \cite{metropolis1953equation} restricted to the case of Gaussian random walk proposals; the extension to user-chosen proposal kernel was proposed in \cite{hastings1970monte}. The paper \cite{haario2001adaptive} introduced an adaptive strategy, where the proposal kernel is updated along the process using the full information gathered so far. Historical overviews of the Metropolis Hastings algorithm and related MCMC methods include  \cite{robert2011short,dunson2020hastings}. Many books  \cite{gilks1995markov,brooks2011handbook,robert2013monte} and survey articles \cite{brooks1998markov,tierney1994markov} are devoted to the Metropolis Hastings algorithm; adaptive strategies are surveyed in \cite{andrieu2008tutorial,roberts2009examples}.

As shown in Theorem \ref{thm:detailedbalanceMCMC}, the Metropolis Hastings algorithm is designed to ensure detailed balance with respect to the target; in other words, Markov chains generated using the Metropolis Hastings algorithm are \emph{reversible}. The paper \cite{bierkens2016non} shows that the algorithm can be modified to generate \emph{non-reversible} chains that may have better properties in terms of mixing behavior or asymptotic variance.
Uniform ergodicity of the independence sampler and geometric ergodicity of RWMH were established in \cite{mengersen1996rates}.
A comparison between rejection sampling, importance sampling, and the independence sampler can be found in \cite{liu1996metropolized}. The study of optimal scaling for Metropolis Hastings algorithms, which originates from \cite{roberts1997weak,roberts1996exponential} and is reviewed in \cite{roberts2001optimal}, is still an active area of research \cite{yang2019optimal}. The convergence diagnostics by Raftery, Geweke, and Gelman and Rubin were introduced in \cite{raftery1991many,geweke1991evaluating,gelman1992inference}. We refer to \cite{mykland1995regeneration,yu1998looking,vehtari2021rank,vats2021revisiting} for other important examples of convergence diagnostics, and to \cite{brooks1998convergence} for an early survey on this topic. 

An important feature of the Metropolis Hastings algorithm is that it can be implemented without knowledge of the normalizing constant of the target. However, in many applications such as Bayesian inverse problems \cite{stuart2010inverse}, machine learning with tall data \cite{bardenet2017markov}, and genetics \cite{beaumont2003estimation} evaluating the unnormalized target density to compute the Metropolis Hastings acceptance probability can be expensive. If a \emph{deterministic} cheap-to-evaluate approximation of the target is available, this issue can be alleviated using delayed-acceptance Metropolis Hastings algorithms \cite{christen2005markov,efendiev2006preconditioning,efendiev2005efficient,sherlock2017adaptive}. The idea is similar to that behind the envelope rejection sampling algorithm studied in Chapter \ref{chapter1}: first, we do an accept/reject step with the cheap-to-evaluate target approximation; if the move is accepted, we do a second accept/reject step with the expensive-to-evaluate target. If a \emph{stochastic} unbiased estimator of the unnormalized target density is available, pseudo-marginal MCMC algorithms \cite{beaumont2003estimation,andrieu2009pseudo} can be employed. Particular instances of pseudo-marginal MCMC algorithms are particle MCMC algorithms \cite{andrieu2010particle} that rely on unbiased estimates computed via particle filters (see Chapter \ref{chap:particlefilters}). Particle MCMC algorithms are powerful algorithms for parameter estimation in hidden Markov models, see e.g. \cite{golightly2011bayesian}.

In this chapter, we introduced the Metropolis Hastings algorithm on a countable or uncountable state space $E\subset \R^d.$ Reversible jump MCMC algorithms \cite{green1995reversible} allow to sample posterior distributions on spaces of varying dimension; this important extension of the Metropolis Hastings framework enables Bayesian inference in applications where the number of parameters in the model is unknown. An adaptive, delayed-rejection algorithm for reversible jump MCMC was introduced in \cite{green2001delayed}.  Another important extension of the Metropolis Hastings framework concerns its formulation in general state spaces \cite{Tie,roberts2004general}, which enables for instance function space sampling \cite{cotter2013mcmc}.
A common limitation of MCMC methodology is that it is not easy to parallelize due to its serial structure. We refer to \cite{calderhead2014general,glatt2022parallel,schwedes2021rao} for attempts to parallelize Metropolis Hastings algorithms. These approaches often propose several moves in each iteration, an idea also found in \cite{liu2000multiple}. Finally, we remark that MCMC algorithms that rely on piecewise deterministic Markov processes are studied for instance in \cite{fearnhead2018piecewise,bierkens2019zig,bouchard2018bouncy}.

\chapter{Gibbs Sampling}
\label{chap:gibbs}

The Gibbs sampler is a Markov chain Monte Carlo algorithm for multivariate target distributions. Gibbs samplers operate coordinate-wise: each new draw differs from the previous one in only a subset of variables, which are updated by sampling from the distribution of those variables conditioned to all others.
This idea is extremely powerful when sampling conditionals is easier than sampling the joint distribution. In many problems where conditionals are known, Gibbs samplers avoid the curse of dimension by sampling these low dimensional conditional distributions instead of directly sampling the joint target distribution. 

Gibbs samplers were conceived and developed independently of the Metropolis Hastings algorithms studied in Chapter \ref{chap:MCMC}, and their design and analysis have a unique flavor. However,
 Gibbs samplers can be seen as a particular instance of the Metropolis Hastings framework, where the proposal kernel is defined by so-called \emph{full conditionals} of the target distribution. Remarkably, such a choice of proposal kernel implies that proposed moves are always accepted.  Hence, in contrast to other Metropolis Hastings algorithms that rely on a step-size parameter to control the acceptance rate, Gibbs samplers require no tuning. On the other hand, the parameterization of the problem can severely impact the behavior of the algorithm. Specifically, Gibbs samplers may converge slowly if the variables are strongly correlated, or, relatedly, when the target is poorly conditioned. Thus, care should be taken to conveniently parameterize the variables of interest and understand their dependencies before implementing the Gibbs sampler.

This chapter is organized as follows. Section \ref{sec:example} introduces the idea behind the Gibbs sampler through a simple example. The main algorithm is presented in Section \ref{sec:fullconditionalsandGibbs}, where we also give the formal definition of full conditionals, show that Gibbs samplers belong to the Metropolis Hastings family, and prove that choosing full conditionals as proposal kernels results in moves that are always accepted. In Section \ref{sec:convergenceGibbs} we study the convergence of Gibbs samplers in finite state spaces and for Gaussian target distributions, where we show that the algorithm converges slowly if the variables are highly correlated. Section \ref{sec:Gibbsbibliography} closes with bibliographical remarks.

\section{Motivating Example}\label{sec:example}
Consider a likelihood model
$$
f(y|\mu, \sigma) = \frac{1}{(2\pi\sigma^2)^{K/2}} \exp \biggl(-\frac{1}{2\sigma^2} \sum_{k = 1}^K (y^{(k)} - \mu)^2\biggr),
$$
where $y = \{y^{(k)}\}_{k = 1}^K$ is the observed data. 

We take a Bayesian approach and choose the following priors for the precision $\tau = \frac{1}{\sigma^2}$ and mean $\mu$:
\begin{equation*}
    \begin{aligned}
        & \tau \sim \text{Gamma}(\alpha, \beta),\\
        & \mu \sim \Nc(m, s^2).
    \end{aligned}
\end{equation*}
The posterior for $\mu, \tau$ is 
$$
f(\mu, \tau | y) \propto \exp \biggl(-\frac{1}{2s^2}(\mu - m)^2 \biggr)\exp\biggl(-\frac{\tau}{2} \sum_{k = 1}^K (y^{(k)} - \mu)^2\biggr) \exp\bigl(-\beta \tau \bigr)\tau^{\alpha - 1 + \frac{K}{2}}.
$$
There is no closed form expression for the normalizing constant. Note that:
\begin{enumerate}
    \item Conditional on $\tau$,\\
    $$ \mu | \tau \sim \Nc\biggl(\frac{K\bar{y}\tau + ms^{-2}}{K\tau + s^{-2}}, \frac{1}{K\tau + s^{-2}} \biggr),   $$
    where $\bar{y}$ is the sample average. 
    
    \item Conditional on $\mu$,\\
    $$
    \tau | \mu \sim \text{Gamma} \biggl(\alpha + \frac{K}{2}, \beta + \frac{1}{2} \sum_{k =1}^K (y^{(k)} - \mu)^2 \biggr).
    $$
\end{enumerate}
Since both of these conditional distributions are standard, it seems natural to use them as proposal distributions. 
Suppose that $(\mu^{(n)}, \tau^{(n)})$ are given. Inspired by the Metropolis Hastings algorithm studied in Chapter \ref{chap:MCMC}, a natural idea would be to obtain $(\mu^{(n+1)}, \tau^{(n+1)})$ as follows:
\begin{enumerate}
    \item Propose
    \begin{equation*}
        \begin{aligned}
             \mu^* & \sim \Nc\biggl(\frac{K\bar{y}\tau^{(n)} + ms^{-2}}{K\tau^{(n)} + s^{-2}}, \frac{1}{K\tau^{(n)} + s^{-2}} \biggr).
        \end{aligned}
    \end{equation*}

    \item Accept $\mu^{(n+1)} = \mu^*$ with probability  $a\bigl((\tau^{(n)}, \mu^{(n)}),(\tau^{(n)}, \mu^{*}) \bigr).$\\ Otherwise, set $\mu^{(n+1)} = \mu^{(n)}$. 
    
    \item Propose
    \begin{equation*}
        \begin{aligned}
             \tau^* & \sim \text{Gamma} \biggl(\alpha + \frac{K}{2}, \beta + \frac{1}{2} \sum_{k =1}^K (y^{(k)} - \mu^{(n+1)})^2 \biggr).
        \end{aligned}
    \end{equation*}

    \item Accept  $\tau^{(n+1)} = \tau^*$ with probability $a\bigl((\tau^{(n)}, \mu^{(n+1)}),(\tau^{*}, \mu^{(n+1)})\bigr).$\\ Otherwise set $\tau^{(n+1)} = \tau^{(n)}$. 
\end{enumerate}
The fact that the normalizing constant of the posterior is unknown does not obstruct the implementation of this natural scheme. Moreover, the theory we will develop in the next section will show that the acceptance probability in steps 2 and 4 is always 1. Therefore, we could simplify the algorithm as follows:
            \begin{enumerate}
                \item Sample 
                \begin{equation*}
                         \mu^{(n+1)}  \sim \Nc\biggl(\frac{K\bar{y}\tau^{(n)} + ms^{-2}}{K\tau^{(n)} + s^{-2}}, \frac{1}{K\tau^{(n)} + s^{-2}} \biggr).
                \end{equation*}
                
                \item Sample 
                \begin{equation*}
                         \tau^{(n+1)}   \sim \text{Gamma} \biggl(\alpha + \frac{K}{2}, \beta + \frac{1}{2} \sum_{k =1}^K (y^{(k)} - \mu^{(n+1)})^2 \biggr).   
                \end{equation*}
            \end{enumerate}
We have derived a Gibbs sampler for the posterior distribution $f(\mu,\tau|y).$

\section{Full Conditionals and the Gibbs Sampler}\label{sec:fullconditionalsandGibbs}
We next define the concept of full conditionals. For clarity, we give the definition in both discrete and continuous cases. 

\begin{definition}[Full Conditionals]
\begin{enumerate}[i)]
    \item If $E$ is discrete and $f(x) = f(x_1, \ldots, x_d)$ is the p.m.f. of a random vector $X = (X_1,\ldots, X_d)$ taking values in $E \times \cdots \times E = E^d$. We denote 
    
   $$ f(x_{-j}) = \sum_{\xi\in E} f(x_1,\ldots, x_{j-1}, \xi, x_{j+1},\ldots, x_d).$$
    The $j$-th full conditional p.m.f. is:
    \begin{equation*}
        \begin{aligned}
          f_j(x_j|x_{-j})  =   f_{j}(x_{j} | x_{i}, i \neq j) & = \Prob(X_j = x_j | X_i = x_i, i \neq j)\\
            & = \frac{\Prob(X_j = x_j, X_i = x_i, i \neq j)}{\Prob(X_i = x_i, i \neq j)}\\
            & = \frac{f(x_1,\ldots,x_{j-1}, x_{j},x_{j+1}, \ldots x_{d})}{f(x_{-j})}.
        \end{aligned}
    \end{equation*}
    
    \item If $E$ is continuous and $f(x) = f(x_1,\ldots,x_d)$ is the p.d.f. of the random vector $X = (X_1,\ldots, X_d)$ taking values on $E \times \cdots \times E = E^d$, we denote 
    $$
    f(x_{-j}) = \int_{E} f(x_1, \ldots, x_{j-1}, \xi, x_{j-1}, \ldots, x_d) \, d\xi,$$
    and define the $j$-th full conditional density 
    \begin{equation*}
    f_j(x_j|x_{-j}) =   f_{j} (x_j | x_i, i \neq j) = \frac{f(x_1,\ldots, x_{j-1}, x_{j}, x_{j+1}, \ldots, x_d)}{f(x_{-j})}.    \tag*{\qedhere}
    \end{equation*}
\end{enumerate} 
\end{definition}
The distribution defined by $f_j(x_j | x_i, i \neq j)$ is called the $j$-th full conditional distribution of $f$. We will only need to know these distributions up to a normalizing constant, and so terms that do not depend on $x_j$ can be ignored. In particular, it is useful to note that full conditionals are proportional to the joint density.

\begin{example}[Full Conditionals of Multivariate Gaussians\label{ex:fullcondGau}]
Consider a multivariate Gaussian random vector $X = (X_1, \ldots, X_d) \sim \Nc( { \mu}, H^{-1}).$ Then, $X_j|X_{-j}\sim\mathcal{N}(\nu_j,H_{jj}^{-1}),$ with 
$$\nu_j=\mu_j-H_{jj}^{-1}\sum_{k\neq j}  H_{jk}(x_k-\mu_k).$$
To check this, recall that a univariate normal random variable with mean $\alpha$ and precision $\beta$ has p.d.f. proportional to 
\begin{equation}\label{eq:meanprec}
 \exp \Bigl( -\frac12  x^2 \beta + x \beta  \alpha     \Bigr).
\end{equation}
Assume for the moment that $\mu = 0 \in \R^d.$
Using that the $j$-th full conditional is proportional to the joint density, 
\begin{align}\label{eq:eq2meanprec}
f_j(x_j | x_{-j}) \propto \exp \Bigl( - \frac12 x ^T H  x \Bigr) 
\propto  \exp \Bigl( - \frac12   x_j^2 H_{jj}  - x_j \sum_{k\neq j} H_{jk} x_k  \Bigr).
\end{align} 
Thus $X_j |X_{-j}$ is univariate normal. Comparing the coefficients of the quadratic terms in Equations \eqref{eq:meanprec} and \eqref{eq:eq2meanprec} we see that the precision is $H_{jj}, $ and comparing the coefficients for the linear term we see that the mean is $\nu_j.$ Finally, if $X$ has mean $\mu,$ then applying the above argument to $X- \mu$ gives the result. \hfill \qedhere
\end{example}

The Gibbs sampler is based on sampling full conditionals. It is not immediately obvious that full conditionals (unlike marginals) can specify a distribution uniquely, provided they are consistent. Sufficient and necessary conditions are given by the Hammersley-Clifford theorem, which for simplicity we only present in the 2-dimensional case. 
\begin{theorem}[Hammersley–Clifford 2-dimensional Case] 
If $\int \frac{f(x_2|x_1)}{f(x_1|x_2)} \, dx_2$ exists, then the joint density associated with $f(x_2|x_1)$ and $f(x_1|x_2)$ is given by $$f(x_1,x_2) = \frac{f(x_2 | x_1)}{\int \frac{f(x_2|x_1)}{f(x_1|x_2)} \,dx_2} .$$
\end{theorem}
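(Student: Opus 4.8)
The plan is to work directly from the definitions of the two conditional densities in terms of a common joint density and its marginals, and then to verify that the proposed formula reproduces the joint. Write $f(x_1)$ and $f(x_2)$ for the marginals of the (assumed) joint density $f(x_1,x_2)$, so that by definition of conditional density $f(x_2|x_1) = f(x_1,x_2)/f(x_1)$ and $f(x_1|x_2) = f(x_1,x_2)/f(x_2)$ wherever the marginals are positive.

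First I would form the ratio of the two conditionals and observe that the joint density cancels:
\begin{equation*}
\frac{f(x_2|x_1)}{f(x_1|x_2)} = \frac{f(x_1,x_2)/f(x_1)}{f(x_1,x_2)/f(x_2)} = \frac{f(x_2)}{f(x_1)}.
\end{equation*}
This reduces the ratio to a simple quotient of marginals, which is the key simplification that makes the rest of the argument immediate.

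Next I would integrate this identity over $x_2$, pulling the $x_1$-dependent factor $1/f(x_1)$ out of the integral and using that the marginal $f(x_2)$ integrates to $1$:
\begin{equation*}
\int \frac{f(x_2|x_1)}{f(x_1|x_2)} \, dx_2 = \frac{1}{f(x_1)} \int f(x_2) \, dx_2 = \frac{1}{f(x_1)}.
\end{equation*}
The hypothesis that this integral exists is exactly what guarantees that the right-hand side of the claimed formula is well-defined, and it identifies the denominator as $1/f(x_1)$. Substituting back then gives
\begin{equation*}
\frac{f(x_2|x_1)}{\int \frac{f(x_2|x_1)}{f(x_1|x_2)} \, dx_2} = f(x_1) \, f(x_2|x_1) = f(x_1,x_2),
\end{equation*}
which is the desired identity.

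The computation itself is routine; the main conceptual point --- and the only real obstacle --- is recognizing precisely what the statement asserts. The theorem says that the two full conditionals determine the joint uniquely (subject to the stated integrability condition), so the substantive content is that the marginal $f(x_1)$ can be recovered purely from the conditionals via $f(x_1) = \bigl( \int f(x_2|x_1)/f(x_1|x_2) \, dx_2 \bigr)^{-1}$, bypassing the unknown normalizing constant of the joint. I would also flag the implicit support/positivity assumption needed for the ratio $f(x_2|x_1)/f(x_1|x_2)$ to be defined wherever it is used; this is the two-dimensional shadow of the positivity condition in the general Hammersley--Clifford theorem.
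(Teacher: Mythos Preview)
Your proof is correct and follows essentially the same approach as the paper: both use the identity $f(x_2|x_1)/f(x_1|x_2) = f(x_2)/f(x_1)$ and integrate over $x_2$ to recover $1/f(x_1)$. Your version is simply more detailed, with the added (and useful) remark about the implicit positivity assumption.
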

\begin{proof}
Since $f(x_2|x_1)f(x_1) = f(x_1|x_2)f(x_2)$, 
$$\int \frac{f(x_2|x_1)}{f(x_1|x_2)}\, dx_2 = \int \frac{f(x_2)}{f(x_1)} \,dx_2 = \frac{1}{f(x_1)}, $$
and so the claim follows if the integral above exists. \hfill $\square$
\end{proof}

We are now ready to introduce the Gibbs sampler.

\begin{algorithm}[H]
  \caption{Gibbs Sampler\label{algo:Gibbssampler}}
  \begin{algorithmic}[1]
  \STATEx{ \textbf{Input:} target $f(x_1, \ldots, x_d)$, initialization $X^{(0)} = \bigl(X_1^{(0)}, \ldots, X_d^{(0)}\bigr)$, sample size $N.$ }
    \STATEx{For $n=1,\ldots ,N$ do:}
  \STATE {Choose $j \in \{1,\ldots, d\}$ (randomly or sequentially).}
  \STATE  {Sample $X_j^{(n)} \sim f_j \bigl(x_j | X_i^{(n-1)}, i \neq j\bigr)$ and set $$X^{(n)} = \Bigl(X_1^{(n-1)}, \ldots, X_{j-1}^{(n-1)},  X_{j}^{(n)},  X_{j+1}^{(n-1)}, \ldots, X_d^{(n-1)}\Bigr).$$}
  \STATEx{\textbf{Output:} Sample $\{X^{(n)} \}_{n=1}^N.$}
    \end{algorithmic}
\end{algorithm}

\FloatBarrier
\begin{mybox}[colback=white]{Pros and Cons}
An important advantage of the Gibbs sampler is that there are no parameters to choose and tune.
The Gibbs sampler is particularly useful in problems where the target distribution cannot be sampled directly, but its full conditionals are easy to sample from; e.g. in  hierarchical Bayesian models, mixture models, and Bayesian missing data problems. Updating only one variable at a time makes the implementation scalable to high dimensional settings. However, the convergence can be slow when the individual variables are strongly correlated. Moreover, as other methods based on Markov chain sampling, the Gibbs sampler is not parallelizable. 
\end{mybox}
\FloatBarrier


There are two main ways to improve the performance of the Gibbs sampler when the variables are correlated.

\begin{enumerate}
\item Reparametrize the distribution so that the new variables are less correlated.
\item Block variables that are highly correlated. Suppose for example that $X_1, X_2$ are highly correlated and we want to sample from the distribution of $(X_1,X_2,X_3)$. Then, we could do:
\begin{itemize}
\item Sample $X_1, X_2 | X_3.$
\item Sample $X_3 | X_1, X_2$.
\end{itemize}
\end{enumerate}

We conclude this section showing that the Gibbs sampler is a Metropolis Hastings algorithm.

\begin{theorem}[Gibbs Sampler is a Metropolis Hastings Algorithm]
The Gibbs sampler is a Metropolis Hastings algorithm that uses full conditionals as proposals. Proposed moves are always accepted. 
\end{theorem}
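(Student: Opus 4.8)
The plan is to exhibit the Gibbs update of a single coordinate as a Metropolis Hastings step with an explicit proposal kernel, and then to verify by direct computation that the acceptance probability equals one. First I would fix the coordinate $j$ being updated and observe that the Gibbs move proposes a new state $z$ that agrees with the current state $x$ in every coordinate except the $j$-th, with $z_j$ drawn from the full conditional $f_j(\cdot \mid x_{-j})$. Thus the proposal kernel is
$$q(x,z) = f_j(z_j \mid x_{-j}), \qquad \text{supported on } \{z : z_{-j} = x_{-j}\}.$$
Because every proposed move leaves the off-$j$ coordinates untouched, any admissible $z$ satisfies $z_{-j} = x_{-j}$, and hence the reverse proposal simplifies to $q(z,x) = f_j(x_j \mid z_{-j}) = f_j(x_j \mid x_{-j})$, since the conditioning variables are identical.

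Next I would substitute this proposal into the Metropolis Hastings acceptance probability
$$a(x,z) = \min\left\{ 1, \frac{f(z)}{f(x)} \frac{q(z,x)}{q(x,z)} \right\} = \min\left\{ 1, \frac{f(z)}{f(x)} \frac{f_j(x_j \mid x_{-j})}{f_j(z_j \mid x_{-j})} \right\}.$$
The crucial step is to invoke the identity that a full conditional is the joint density divided by the marginal of the remaining variables. Since $z_{-j} = x_{-j}$, we have $f_j(z_j \mid x_{-j}) = f(z)/f(x_{-j})$ and likewise $f_j(x_j \mid x_{-j}) = f(x)/f(x_{-j})$. Substituting these two expressions, the common factor $f(x_{-j})$ cancels and the ratio collapses to
$$\frac{f(z)}{f(x)} \cdot \frac{f(x)/f(x_{-j})}{f(z)/f(x_{-j})} = \frac{f(z)}{f(x)} \cdot \frac{f(x)}{f(z)} = 1,$$
so that $a(x,z) = \min\{1,1\} = 1$ for every proposed move.

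To complete the argument for the full algorithm, I would note that a Gibbs sampler that cycles through coordinates one at a time is a composition (or, in the random-scan case, a mixture) of the single-coordinate kernels just analyzed; since each such kernel is a Metropolis Hastings kernel whose moves are accepted with probability one and which leaves $f$ invariant, the same holds for their composition. The main subtlety I expect to handle carefully is that the proposal is degenerate: it is not absolutely continuous with respect to Lebesgue measure on $E^d$ but lives on the one-dimensional slice $\{z : z_{-j} = x_{-j}\}$. I would address this by interpreting $q(x,\cdot)$ as a density with respect to the natural measure on that slice; the forward and reverse moves share the same slice, so the ratio $q(z,x)/q(x,z)$ is a well-defined Radon--Nikodym derivative and the cancellation above goes through verbatim. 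The discrete case requires no such care and follows from exactly the same computation with sums replacing integrals.
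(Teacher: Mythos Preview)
Your proof is correct and follows essentially the same approach as the paper: fix the updated coordinate $j$, write the proposal as the full conditional $f_j(\cdot\mid x_{-j})$, substitute $f_j(x_j\mid x_{-j})=f(x)/f(x_{-j})$ and $f_j(z_j\mid x_{-j})=f(z)/f(x_{-j})$ using $z_{-j}=x_{-j}$, and watch the ratio collapse to $1$. Your additional remarks on composition/mixture of single-coordinate kernels and on the degeneracy of the proposal are sound elaborations that the paper omits, but the core computation is identical.
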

\begin{proof}
Let $x = (x_1, \ldots, x_d)$ and  $z = (z_1, \ldots, z_d)$ with $x_i = z_i$ for $i \neq j$.\\
\begin{align*}
        a(x, z) & = \min \biggl\{1, \frac{f(z)}{f(x)} \frac{f_j (x_j | z_i, i \neq j)}{f_j (z_j | x_i, i \neq j)} \biggr\} \\
        & = \min \biggr\{1, \frac{f(z)}{f(x)} \frac{f(x)/ f(z_{-j})}{f(z)/f(x_{-j})}   \biggr\} \quad (\text{by definition of full conditional})\\
        & = \min \biggl\{1, \frac{f(z)}{f(x)} \frac{f(x)/ f(x_{-j})}{f(z)/f(x_{-j})}   \biggr\} \quad \quad  \bigl(f(z_{-j}) = f(x_{-j}) \ \text{ because }\ x_i = z_i \, \forall i \neq j \bigr) \\
        & = 1.   \tag*{\qedhere}
\end{align*}
\end{proof}

The Gibbs sampler is a particular example of a Metropolis Hastings algorithm, and therefore its transition kernel satisfies detailed balance with respect to $f$. We show this directly as an exercise in the following proposition.
\begin{proposition}\label{prop:gibbs_detailed_balance}
Each transition of the Gibbs sampler satisfies detailed balance with respect to the target distribution $f$. 
\end{proposition}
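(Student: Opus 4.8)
The plan is to verify detailed balance directly for the kernel of a single coordinate update, say of coordinate $j$ (the proposition's ``each transition'' refers to each such update; if $j$ is chosen randomly, the overall kernel is a mixture that inherits detailed balance from its components). This single-coordinate kernel keeps all coordinates other than $j$ fixed and resamples the $j$-th coordinate from its full conditional. Writing $x = (x_1,\ldots,x_d)$ and $z = (z_1,\ldots,z_d)$, the transition kernel is
$$P_j(x,z) = f_j(z_j \,|\, x_{-j}) \, \delta_{x_{-j}}(z_{-j}),$$
where the factor $\delta_{x_{-j}}(z_{-j})$ enforces that the move cannot alter any coordinate other than $j$ (in the continuous case this is a Dirac mass, exactly as for the Metropolis Hastings kernel discussed earlier in the text).

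First I would observe that both sides of the detailed balance identity $f(x) P_j(x,z) = f(z) P_j(z,x)$ vanish unless $x_{-j} = z_{-j}$, since the $j$-th update cannot move the remaining coordinates. Hence it suffices to verify the identity on the set where the off-$j$ coordinates agree. Restricting to $x_{-j} = z_{-j}$, I would then substitute the definition of the full conditional, namely $f_j(z_j \,|\, x_{-j}) = f(z)/f(x_{-j})$ and $f_j(x_j \,|\, z_{-j}) = f(x)/f(z_{-j})$, to compute
$$f(x) P_j(x,z) = \frac{f(x) f(z)}{f(x_{-j})}, \qquad f(z) P_j(z,x) = \frac{f(z) f(x)}{f(z_{-j})}.$$
The final step is to note that, since $x_{-j} = z_{-j}$, the marginal densities coincide, $f(x_{-j}) = f(z_{-j})$, so the two expressions are identical and detailed balance holds.

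The computation itself is entirely routine; the only point requiring care is the bookkeeping associated with the Dirac factor $\delta_{x_{-j}}(z_{-j})$, that is, the fact that the single-coordinate kernel is not absolutely continuous with respect to Lebesgue measure on $\R^d$. I would address this by phrasing detailed balance on the line $\{z_{-j} = x_{-j}\}$ along which the resampled coordinate varies, mirroring the treatment of the non-absolutely-continuous Metropolis Hastings kernel earlier in the chapter, so that the symmetry $f(x)P_j(x,z) = f(z)P_j(z,x)$ reduces to the symmetric expression $f(x)f(z)/f(x_{-j})$ displayed above.
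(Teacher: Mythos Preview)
Your proof is correct and follows essentially the same route as the paper: both restrict to pairs $x,z$ that agree off coordinate $j$, substitute the definition of the full conditional, and observe that the resulting expression $f(x)f(z)/f(x_{-j})$ is symmetric in $x$ and $z$. The paper's version is terser (it simply takes $z$ to differ from $x$ only in the $j$-th coordinate rather than carrying the Dirac factor explicitly), but the computation is identical.
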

\begin{proof}
Denote by $\pgs$ the transition kernel of the Gibbs sampler and let
\begin{equation*}
    \begin{aligned}
        & x  = (x_1, \ldots, x_d),\\
        & z  = (x_1, \ldots,x_{j-1}, \xi ,x_{j-1},\ldots,  x_d).\\
    \end{aligned}
\end{equation*}
Then,
\begin{align*}
        f(x) \pgs(x, z) & = f(x)\frac{f(z)}{f(x_{-j})} = f(z)\frac{f(x)}{f(x_{-j})} \\
        & = f(z) f_{j} (x_j | x_i, i \neq j) = f(z)\pgs(z, x).   \tag*{\qedhere}
\end{align*}
\end{proof}

\begin{example}[2D Slice Sampler]\label{ex:slice}
Recall that when we introduced rejection sampling in Chapter \ref{chapter1} we observed that given a density $f(x), \, x \in \mathbb{R}$, we can represent it as the marginal of $f(x,u) = \textbf{1}_{\{ 0 < u < f(x)\} }.$

\begin{figure}[h!]
    \centering
    \includegraphics[width=0.5\columnwidth]{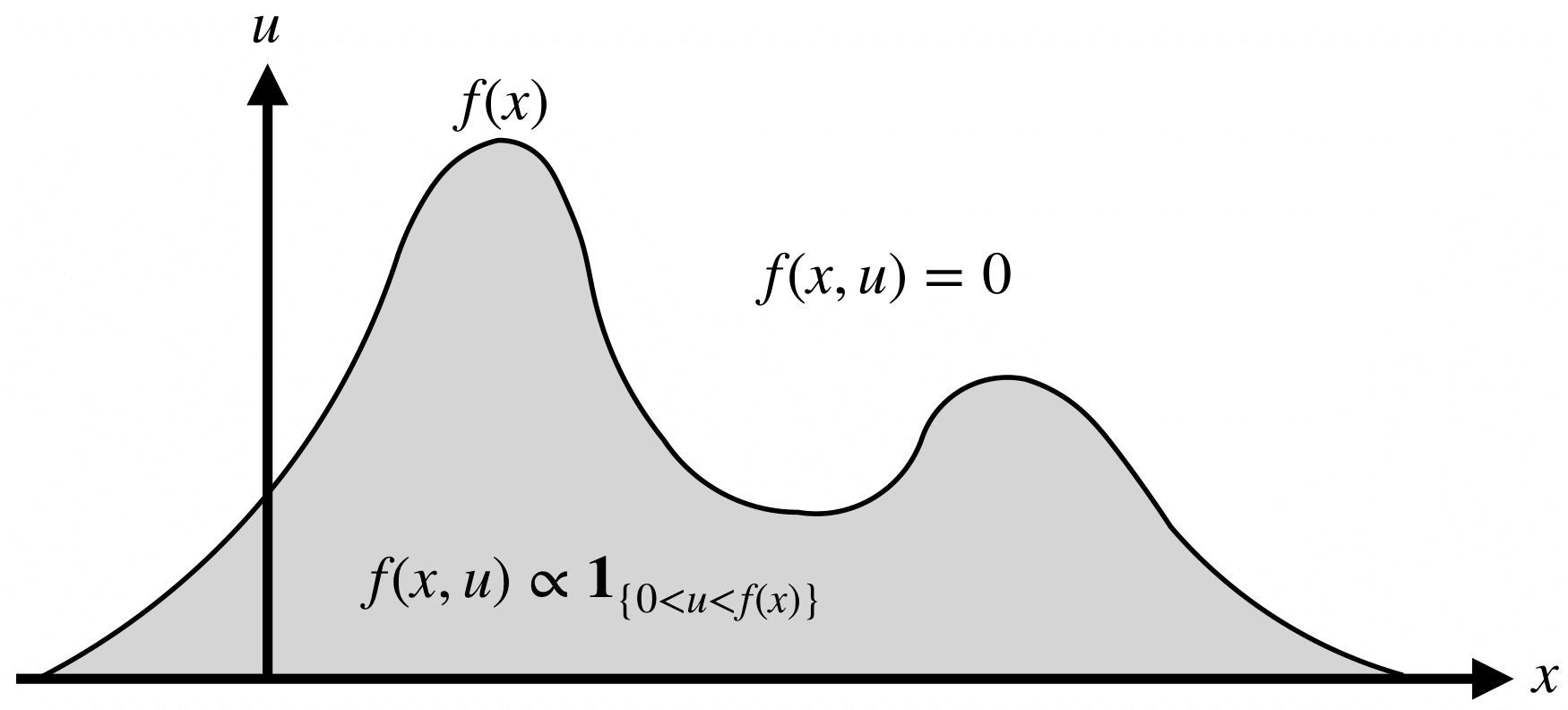}
    \caption{$f(x,u) = \textbf{1}_{\{ 0 < u < f(x)\} }.$}
    \label{fig:2dslice}
\end{figure}

The full conditionals are 
\begin{equation*}
    \begin{aligned}
    & f(x|u)  \propto \textbf{1}_{\{   0 < u < f(x)  \} }, \\
    & f(u|x)  \propto \textbf{1}_{\{     0 < u < f(x)  \} }.
    \end{aligned}
 \end{equation*}
We can use the Gibbs sampler to obtain samples from $f(x,u).$  Keeping the $x$-variable, we obtain samples from $f(x)$. This specific type of Gibbs sampler is called the 2D slice sampler.   \hfill \qedhere
\end{example}

\section{Convergence of the Gibbs Sampler}\label{sec:convergenceGibbs}
As we know, detailed balance implies that the Gibbs sampler has the desired invariant distribution. Therefore, if the chain is ergodic, then its distribution converges to the target. The following lemma gives a sufficient condition for irreducibility.
\begin{lemma}
Suppose that the target $f(x_1, \ldots, x_d)$ satisfies:
$$\underbrace{f_i (x_i)}_{\text{Marginal}} > 0  \quad \forall i \in \left\{ 1,\ldots,d\right\} \Longrightarrow f(x_1, \ldots, x_d) > 0.$$
Then, the Gibbs sampler is f-irreducible.
\end{lemma}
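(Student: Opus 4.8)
The plan is to show that from any starting point in the support of $f$, the Gibbs sampler reaches, with positive probability, any set carrying positive target mass. Recall that $f$-irreducibility means that for every $x$ and every measurable $A$ with $\int_A f(z)\,dz>0$ there is some $n$ with $P^n(x,A)>0$, where $P$ denotes the transition kernel. Since a single Gibbs step moves only one coordinate, I would work with $d$ consecutive steps that update every coordinate once. For the systematic scan this is exactly one sweep; for the random scan, the particular ordering $1,2,\dots,d$ is selected with probability $(1/d)^d>0$, so it suffices to lower-bound the probability of reaching $A$ along that ordering.

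First I would write the density of the state after updating coordinates $1,\dots,d$ in order, starting from $x=(x_1,\dots,x_d)$ and landing at $z=(z_1,\dots,z_d)$:
\begin{equation*}
K(x,z)=\prod_{j=1}^d f_j\bigl(z_j \,\big|\, z_1,\dots,z_{j-1},x_{j+1},\dots,x_d\bigr).
\end{equation*}
The key claim is that $K(x,z)>0$ whenever $x,z\in\operatorname{supp}(f)$. Granting this, for any $A$ with $\int_A f>0$ the set $A\cap\operatorname{supp}(f)$ has positive Lebesgue measure, and $\int_A K(x,z)\,dz\ge\int_{A\cap\operatorname{supp}(f)}K(x,z)\,dz>0$, which yields $P^d(x,A)>0$ (up to the scan-ordering factor) and hence $f$-irreducibility.

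To prove the key claim I would argue factor by factor, using that full conditionals are proportional to the joint density. The $j$-th factor is positive precisely when the joint density at the hybrid point $(z_1,\dots,z_j,x_{j+1},\dots,x_d)$ is positive. At that point every coordinate lies in the support of its own marginal: the coordinates $z_1,\dots,z_j$ because $z\in\operatorname{supp}(f)$ forces $f_i(z_i)>0$, and $x_{j+1},\dots,x_d$ because $x\in\operatorname{supp}(f)$ forces $f_i(x_i)>0$. The hypothesis of the lemma then guarantees that the joint density there is strictly positive, so every factor of $K(x,z)$ is positive, establishing the claim.

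The step I expect to be the main obstacle is handling the positivity hypothesis cleanly at the measure-theoretic level: asserting that membership in $\operatorname{supp}(f)$ forces each marginal to be positive, and that $\int_A f>0$ forces $A\cap\operatorname{supp}(f)$ to carry positive measure, requires care about null sets, since the stated implication is pointwise (these statements are most naturally phrased up to modifications on sets of measure zero, or under a mild regularity assumption on $f$). The conceptual heart, however, is the role of the positivity condition itself: it guarantees that $\operatorname{supp}(f)$ is the full product of the marginal supports, so that the intermediate hybrid points traversed during a sweep never leave the support, keeping every conditional well-defined and strictly positive.
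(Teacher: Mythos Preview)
The paper states this lemma without proof, so there is nothing to compare against directly. Your argument is correct and is the standard one: the positivity condition ensures that the support of $f$ is the Cartesian product of the marginal supports, so every hybrid state $(z_1,\dots,z_{j-1},x_j,\dots,x_d)$ encountered during a sweep remains in the support, each full conditional is well-defined with positive normalizing constant, and the one-sweep kernel density $K(x,z)$ is strictly positive on $\operatorname{supp}(f)\times\operatorname{supp}(f)$. Your own caveat about the pointwise phrasing of the positivity condition and null sets is the only place requiring care, and you have identified it accurately.
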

\begin{example}[Reducible Gibbs Sampler\label{example:gibbs_reducible_bad_case}]
Consider the following distribution in $\mathbb{R}^2:$
$$f(x_1,x_2)=\frac{1}{2} \textbf{1}_{\{  0 \le x_1 \le 1, 0 \le x_2\le 1\} } + \frac{1}{2} \textbf{1}_{\{  -1 \le x_1 \le 0,  -1 \le x_2 \le 0\} },$$
which is depicted in Figure \ref{fig:2stateunif}. The full conditionals are:
\begin{equation*}
    \begin{aligned}
    & f(x_1|x_2)=
\begin{cases}
\textbf{1}_{\{  x_1 \in [0,1]\} }  & x_2 \in (0,1), \\
\textbf{1}_{\{  x_1 \in [-1,0]\} }  & x_2 \in (-1,0), 
\end{cases} \\
  & f(x_2|x_1)=
\begin{cases}
\textbf{1}_{\{  x_2 \in (0,1)\} }  & x_1 \in (0,1), \\
\textbf{1}_{\{  x_2 \in (-1,0)\} }  & x_1 \in (-1,0). 
\end{cases}
    \end{aligned}
   \end{equation*}
   \FloatBarrier
\begin{figure}[h!]
    \centering
    \includegraphics[width=0.5\columnwidth]{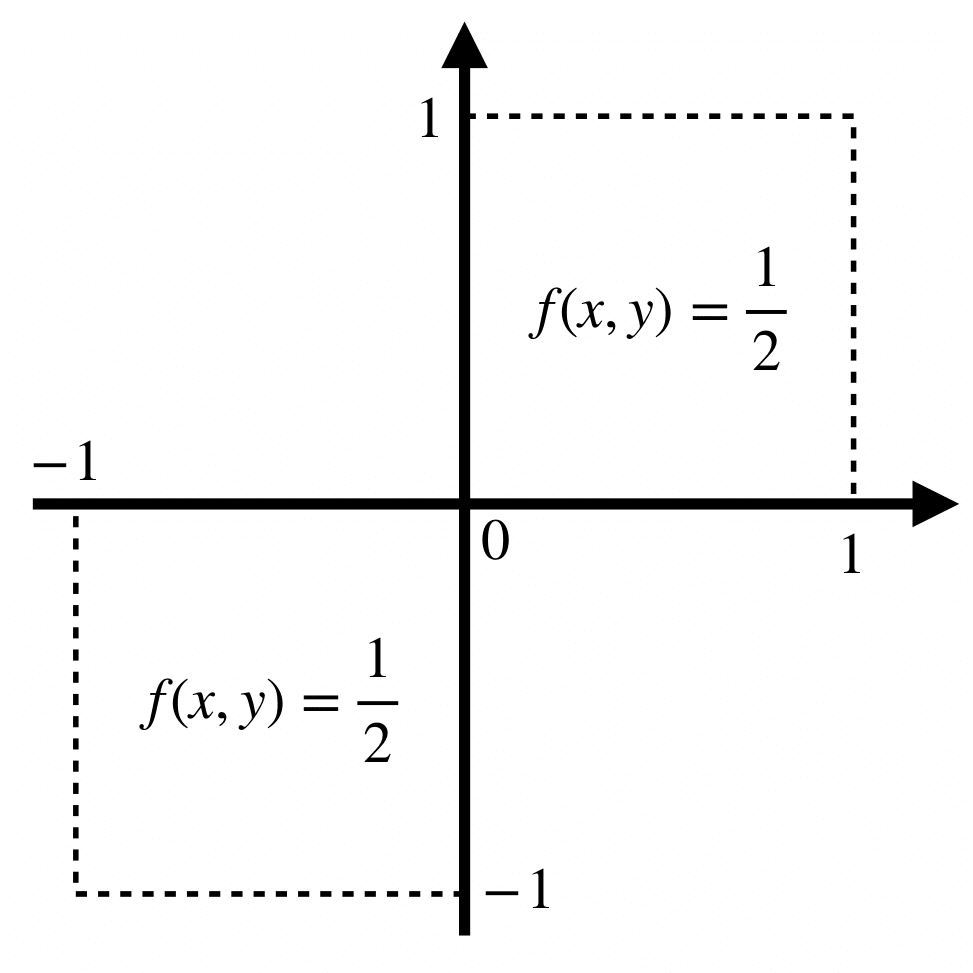}
    \caption{Target density for which the Gibbs sampler is reducible.}
    \label{fig:2stateunif}
\end{figure}
\FloatBarrier
If we used a Gibbs sampler, it would be reducible: if we start in one of the two squares, it would never reach the other square.
Note that the condition of the previous lemma fails here:
\begin{align*}
    & f_1 (x_1)  = \frac{1}{2} \textbf{1}_{\{  -1 \le x_1 \le 1\} }, \\
    & f_2 (x_2)  = \frac{1}{2} \textbf{1}_{\{  -1 \le x_2 \le 1\} } ,
 \end{align*}
 \begin{equation*}
  f(-0.5, 0.5) = 0 \quad \text{but} \quad f_1(-0.5)>0, \quad f_2(0.5)>0.     \tag*{\qedhere}
 \end{equation*} 
\end{example}
\begin{remark}
It can be shown that Harris recurrence holds if the transition kernel is absolutely continuous with respect to an appropriate dominating measure.
\end{remark}

\subsection{Convergence for Finite State Distributions}
Convergence of the Gibbs sampler in a finite state space can be established using the results on ergodicity of Markov chains reviewed in Appendix \ref{chap:markovchains}. Proposition \ref{prop:gibbs_detailed_balance} ensures that the target is the invariant distribution of the chain; convergence to the target follows if we guarantee that the chain is irreducible and aperiodic.

\begin{theorem}[Convergence of Gibbs Sampler in Finite State Space]
Consider a target p.m.f. $f(x)$ on a finite state space $E = \{1, \ldots,d\},$ and assume that $f(x)>0$ for all $x\in E.$ Suppose we run a Gibbs sampler on $f$ by sampling each of the full conditionals $f_j$ in sequential order. Then, there is $\epsilon>0$ such that, for any $x \in E,$ 
\begin{equation*}
\dtv\bigl(\pgs^n(x, \cdot), f\bigr) \le \bigl(1 - \epsilon)^n,
\end{equation*}
where $\pgs$ denotes the Markov kernel defined by one swipe of sampling all full conditionals $f_j,$ $1\le j \le d.$ 
\end{theorem}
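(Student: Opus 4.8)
The plan is to establish a one-step Doeblin minorization condition for the sweep kernel $\pgs$ and then invoke the coupling argument already used in Theorems \ref{thm:ergodicity} and \ref{thm}. Concretely, I will show that there is $\epsilon > 0$ with $\pgs(x,z) \ge \epsilon\, f(z)$ for all $x,z \in E$; by the remark following Theorem \ref{thm} (applied with $m=1$, $\delta = \epsilon$, and dominating measure $\nu = f$), this immediately yields $\dtv\bigl(\pgs^n(x,\cdot), f\bigr) \le (1-\epsilon)^n$. Before doing so I would record that $f$ is indeed the invariant distribution of $\pgs$: Proposition \ref{prop:gibbs_detailed_balance} shows each single-coordinate update satisfies detailed balance with respect to $f$ and hence leaves $f$ invariant, and $\pgs$ is the composition of these $f$-invariant updates, so it too leaves $f$ invariant. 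This guarantees that $f$ is the correct limiting distribution.

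The key step is to write the sweep kernel explicitly as a product of full conditionals. Updating coordinates $1, \ldots, d$ in order, the probability of moving from $x = (x_1, \ldots, x_d)$ to $z = (z_1, \ldots, z_d)$ in one sweep is
$$\pgs(x,z) = \prod_{j=1}^d f_j\bigl(z_j \mid z_1, \ldots, z_{j-1}, x_{j+1}, \ldots, x_d\bigr),$$
with the understanding that the conditioning lists degenerate at the endpoints $j=1$ and $j=d$. Because $f(x) > 0$ everywhere on the finite state space, every full conditional probability appearing above is strictly positive, so each factor — and hence the whole product — is strictly positive for every pair $(x,z)$. As the state space is finite, the minimum $\epsilon := \min_{x,z} \pgs(x,z)$ is attained and positive. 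Since $f(z) \le 1$, this gives $\pgs(x,z) \ge \epsilon \ge \epsilon\, f(z)$, which is exactly the minorization we sought.

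Finally, I would run the coupling argument verbatim from the proof of Theorem \ref{thm} (itself borrowed from Theorem \ref{thm:ergodicity}): introduce i.i.d. $\mathrm{Bernoulli}(\epsilon)$ switches that, on a success, resample the next state from $f$ directly, and on a failure resample it from the residual kernel $s(x,z) = \bigl(\pgs(x,z) - \epsilon\, f(z)\bigr)/(1-\epsilon)$, which is a genuine Markov kernel precisely because the minorization bound guarantees its nonnegativity. Coupling two copies of the chain — one started at $x$, one started in stationarity — at the first time a success occurs, and noting that a success happens by step $n$ with probability $1-(1-\epsilon)^n$, gives $\dtv\bigl(\pgs^n(x,\cdot), f\bigr) \le (1-\epsilon)^n$.

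The main obstacle, and really the only conceptual point, is recognizing that one \emph{full sweep} rather than a single-coordinate update is what makes a one-step minorization possible: an individual coordinate update cannot reach an arbitrary $z$ and thus admits no uniform lower bound of the form $\pgs(x,z)\ge\epsilon f(z)$, whereas after a complete sweep every coordinate has been refreshed and any state is reachable with positive, uniformly-bounded-below probability. Everything else reduces to the strict positivity of $f$ on a finite set together with the standard coupling argument, so no new machinery is required.
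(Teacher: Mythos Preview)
Your proposal is correct and follows essentially the same approach as the paper: the paper's proof is a two-sentence sketch noting that positivity of $f$ makes all full conditionals positive, hence $\pgs$ is bounded below by a positive constant, and then invokes Theorem \ref{thm:ergodicity}. You have simply filled in the details the paper leaves implicit (the explicit product formula for the sweep kernel, the reason $f$ is the invariant distribution, and the coupling argument), and your choice to minorize by $\epsilon f(z)$ rather than by the uniform measure used in Theorem \ref{thm:ergodicity} is an inessential variation.
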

\begin{proof}
The assumption that $f(x) >0$ for all $x \in E$ implies that all full conditionals are positive. Thus, the Markov kernel $\pgs$ is bounded below by a positive constant and the result follows from Theorem \ref{thm:ergodicity} on ergodicity of Markov chains in finite state space. \hfill $\square$
\end{proof}

\subsection{Convergence for Gaussian Distributions}
Now we extend our objective to continuous distributions. Instead of investigating a general continuous distribution, we will focus on multivariate Gaussian targets $f = \Nc( \mu, \Sigma) = \Nc(\mu, H^{-1})$. Moreover, we restrict our analysis to the following vanilla Gibbs sampler that makes deterministic block updates:
\begin{algorithm}[H]
  \caption{Deterministic Update Gibbs Sampler (DUGS)  \label{algo:dugs}}
  \begin{algorithmic}[1]
  \STATEx{ \textbf{Input:}  Target $f(x_1, \ldots, x_d)$, initialization $X^{(0)} = \bigl(X_1^{(0)}, \ldots, X_d^{(0)}\bigr)$, sample size $N.$  }
    \FOR {$n=1,\ldots ,N$}
        \FOR {$j=1,\ldots,d$}
          \STATE{Sample $X_j^{(n)} 
            \sim f_j \bigl(X_j | X_1^{(n)},\ldots,X_{j-1}^{(n)},X_{j+1}^{(n-1)},\ldots,X_d^{(n-1)}\bigr).$}
          \STATE{Update the $j$-th coordinate by setting
          $$X^{(n)} = \Bigl(X^{(n)}_1, \ldots, X^{(n)}_{j-1},  X_j^{(n)},  X^{(n-1)}_{j+1}, \ldots, X^{(n-1)}_d \Bigr).$$}
        \ENDFOR
        \STATE{Record sample $X^{(n)}$.}
    \ENDFOR
  \STATEx{\textbf{Output:} Sample $\{X^{(n)}\}_{n=1}^N.$}
    \end{algorithmic}
\end{algorithm}

We first find a compact way to write  the deterministic Gibbs updates. We decompose the precision matrix $H= H_L + H_D + H_U$ into its lower-triangular, diagonal, and upper-triangular components. We have the following lemma.
	\begin{lemma}\label{gibbs_ar}
	Let $\{X^{(n)}\}_{n \geq 0}$ be the Gibbs output with target $f = \Nc( \mu, \Sigma) = \Nc(\mu, H^{-1})$ and initial state $X^{(0)}$. Then, $\{X^{(n)}\}_{n \geq 0}$ satisfies the following recurrence:
	    \begin{equation}
        X^{(n+1)}-\mu = -(H_L+H_D)^{-1}H_U(X^{(n)}-\mu)+(H_L+H_D)^{-1} U^{(n)},   \label{eqn:gibbs_ar}
        \end{equation}
	where $U^{(n)}\sim\mathcal{N}(0,H_D)$. 
	\end{lemma}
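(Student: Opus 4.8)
The plan is to derive the scalar update rule for each coordinate during one sweep of the sampler and then assemble these $d$ scalar updates into the claimed matrix recurrence. First I would reduce to the centered case $\mu = 0$: exactly as in Example \ref{ex:fullcondGau}, applying the argument to $X^{(n)} - \mu$ shows it suffices to treat a mean-zero Gaussian target, after which the shift is reinstated at the end.

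Next I would invoke the full-conditional formula from Example \ref{ex:fullcondGau}. With $\mu = 0$, the $j$-th full conditional of $\Nc(0, H^{-1})$ is $\Nc(\nu_j, H_{jj}^{-1})$ with $\nu_j = -H_{jj}^{-1}\sum_{k \neq j} H_{jk} x_k$. The crucial bookkeeping step is to record which coordinates have already been updated when coordinate $j$ is sampled in the sweep: for $k < j$ the conditioning value is the new $X_k^{(n+1)}$, while for $k > j$ it is the old $X_k^{(n)}$. Writing the sampled value as its conditional mean plus a Gaussian fluctuation, the scalar update becomes
$$H_{jj} X_j^{(n+1)} = -\sum_{k < j} H_{jk} X_k^{(n+1)} - \sum_{k > j} H_{jk} X_k^{(n)} + U_j^{(n)},$$
where $U_j^{(n)} := H_{jj}\,\eps_j$ and $\eps_j \sim \Nc(0, H_{jj}^{-1})$ are independent across $j$.

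Then I would stack these $d$ equations into vector form. The diagonal coefficients assemble into $H_D X^{(n+1)}$; the sum over $k < j$ assembles into $H_L X^{(n+1)}$ (the strictly lower-triangular part); and the sum over $k > j$ assembles into $H_U X^{(n)}$ (the strictly upper-triangular part), using the decomposition $H = H_L + H_D + H_U$. This yields $(H_D + H_L)X^{(n+1)} = -H_U X^{(n)} + U^{(n)}$, and inverting $H_L + H_D$ (lower-triangular with positive diagonal entries $H_{jj} > 0$, hence invertible) produces the stated recurrence. Finally I would verify the noise law: since $\eps_j \sim \Nc(0, H_{jj}^{-1})$, we have $U_j^{(n)} = H_{jj}\eps_j \sim \Nc(0, H_{jj})$, and independence across $j$ gives $U^{(n)} \sim \Nc(0, H_D)$. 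Restoring the mean $\mu$ completes the argument.

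The main obstacle is the index bookkeeping in the assembly step: getting the ``new versus old'' split of coordinates exactly right, so that the strictly lower-triangular block multiplies $X^{(n+1)}$ and the strictly upper-triangular block multiplies $X^{(n)}$. Once that split is correct, the remaining work (matrix assembly, invertibility of $H_L + H_D$, and the covariance computation for $U^{(n)}$) is routine.
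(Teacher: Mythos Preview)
Your proposal is correct and follows essentially the same approach as the paper: both invoke the full-conditional formula from Example~\ref{ex:fullcondGau}, write the scalar update with the old/new coordinate split, assemble the $d$ equations into the matrix form $(H_L+H_D)(X^{(n+1)}-\mu) + H_U(X^{(n)}-\mu) = U^{(n)}$, and check that $U^{(n)}\sim\Nc(0,H_D)$. The only cosmetic difference is that you center first and reinstate $\mu$ at the end, whereas the paper carries $\mu$ throughout.
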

		\begin{proof}
    Denote the state at iteration $n$ by $x=(x_1,x_2,\ldots,x_d)$ and the state at iteration $n+1,$ after a Gibbs update of all coordinates, by $x'=(x'_1,x'_2,\ldots,x'_d)$. Using Example \ref{ex:fullcondGau}  we can write the Gibbs update on block $j$ as
    \[H_{jj}(x'_j-\nu_j)=u_j,  \]
    where $u_j\sim\mathcal{N}(0,H_{jj})$ with 
    \[\nu_j=\mu_j-H_{jj}^{-1}\sum_{i<j}H_{ji}(x_i'-\mu_i)-H_{jj}^{-1}\sum_{i>j}H_{ji}(x_i-\mu_i)\]
    since we are conditioning on $x'_1,x'_2,\ldots,x'_{j-1},x_{j+1},\ldots,x_d$. Expanding out $\nu_j$ gives 
    \[H_{jj}(x'_j-\mu_j)+
    [H_{j1};\ldots;H_{jj-1}]^T(x'_{1:j-1}-\mu_{1:j-1})+
    [H_{jj+1};\ldots;H_{jd}]^T(x_{j+1:d}-\mu_{j+1:d})=u_j
    \footnote{$x_{1:j}$ stands for $[x_1;\ldots;x_j]$.}.\]
    Therefore, a full Gibbs update can be written in vectorized form as
    \[(H_L+H_D)(x'-\mu)+H_U(x-\mu)=u,\]
    where $u\sim\mathcal{N} (0,H_D)$ as desired. \hfill $\square$
    \end{proof}
    
    The matrix $$B=-(H_L+H_D)^{-1}H_U$$ is known as the Gauss-Seidel companion matrix. The Gauss-Seidel recurrence in \eqref{eqn:gibbs_ar} belongs to a larger family of iterative techniques to solve linear systems of equations via matrix splitting. It is clear that a sufficient condition for convergence is that $\rho(B)<1$. This condition holds provided that the matrix $H$ is symmetric and positive definite.

In the following theorem, we will use that if
 $f=\mathcal{N}(\mu,\Sigma)$ and $\widetilde{f}=\mathcal{N}( \widetilde{\mu}, \widetilde{\Sigma}),$ then
 \begin{equation}\label{eq:chisqdivergenceGaussians}
 \dchi(f \, \| \widetilde{f} \, )=\frac{|W|}{\sqrt{|2W-I|}}e^{u^T(2W-I)^{-1}\Sigma^{-1} u}
 \end{equation}
    with $W= \widetilde{\Sigma}\Sigma^{-1}$ and $u=\mu- \widetilde{\mu}$.

	\begin{theorem}[Convergence of Gibbs Sampler for Gaussian Targets]\label{thm:gibbs_cvg_rate}
	Let the target distribution of the Deterministic Update Gibbs Sampler (DUGS) be $f = \mathcal{N}(\mu,H^{-1}) = \mathcal{N}(\mu,\Sigma) $, and let $\pdugs$ denote the Markov kernel defined by one full swipe of the DUGS scheme. Then,
	\[\mathbb{E}_f \Bigl[ \dchi \bigl( \pdugs^n(x^{(0)}, \cdot)   \|f \bigr) \Bigr] \sim o \bigl(\rho(B)^{n} \bigr),\]
	where $B=-(H_L+H_D)^{-1}H_U.$
	\end{theorem}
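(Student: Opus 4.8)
The plan is to exploit the fact that, by Lemma \ref{gibbs_ar}, the Deterministic Update Gibbs Sampler is a linear Gaussian autoregression, so that $\pdugs^n(x^{(0)}, \cdot)$ is itself a Gaussian whose mean and covariance can be written in closed form; I would then substitute this Gaussian and the target into \eqref{eq:chisqdivergenceGaussians} and expand in the small parameter governed by powers of $B$.

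\textbf{Step 1 (closed form for the conditional law).} Writing $M := (H_L + H_D)^{-1}$ and iterating \eqref{eqn:gibbs_ar}, I would obtain, conditional on $X^{(0)} = x^{(0)}$, the representation $X^{(n)} - \mu = B^n(x^{(0)} - \mu) + \sum_{k=0}^{n-1} B^k M U^{(n-1-k)}$, so that $\pdugs^n(x^{(0)}, \cdot) = \mathcal{N}(\mu_n, \Sigma_n)$ with $\mu_n = \mu + B^n(x^{(0)}-\mu)$ and $\Sigma_n = \sum_{k=0}^{n-1} B^k M H_D M^T (B^T)^k$. The crucial observation is that, since $f$ is invariant for $\pdugs$, feeding $X^{(0)}\sim f$ through one swipe forces $\Sigma$ to solve the discrete Lyapunov equation $\Sigma = B\Sigma B^T + M H_D M^T$; as $\rho(B)<1$ this solution is unique and equals $\sum_{k\ge 0}B^k M H_D M^T (B^T)^k$. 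Subtracting the truncated series yields the clean identity
\[
\Sigma - \Sigma_n = B^n \Sigma (B^T)^n,
\]
so that the mean deviation $\mu_n - \mu$ is of order $\rho(B)^n$ and the covariance deviation $\Sigma - \Sigma_n$ is of order $\rho(B)^{2n}$.

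\textbf{Step 2 (insert into the $\chi^2$ formula and expand).} Applying \eqref{eq:chisqdivergenceGaussians} with first argument $\mathcal{N}(\mu_n,\Sigma_n)$ and second argument $f=\mathcal{N}(\mu,\Sigma)$ gives $W = \Sigma\Sigma_n^{-1}$ and $u = \mu_n-\mu = B^n(x^{(0)}-\mu)$; note that \eqref{eq:chisqdivergenceGaussians} evaluates to $1$ when the two Gaussians coincide, so it in fact computes $1+\dchi$, and I would recover $\dchi$ by subtracting $1$. Setting $\delta := W - I = (\Sigma-\Sigma_n)\Sigma_n^{-1} = O(\rho(B)^{2n})$, the determinantal prefactor expands as $|I+\delta|\,|I+2\delta|^{-1/2} = (1+\operatorname{tr}\delta)(1+\operatorname{tr}\delta)^{-1} + O(\|\delta\|^2) = 1 + O(\rho(B)^{4n})$, the linear terms cancelling. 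Since $u=O(\rho(B)^n)$ and $(2W-I)^{-1}\Sigma_n^{-1}=\Sigma^{-1}+O(\rho(B)^{2n})$, exponentiating and multiplying by the prefactor gives
\[
\dchi\bigl(\pdugs^n(x^{(0)},\cdot)\,\|\,f\bigr) = (x^{(0)}-\mu)^T (B^T)^n \Sigma^{-1} B^n (x^{(0)}-\mu) + O(\rho(B)^{4n}).
\]

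\textbf{Step 3 (average over the initial state) and main obstacle.} Finally I would integrate $x^{(0)}\sim\mathcal{N}(\mu,\Sigma)$; using $\mathbb{E}_f[(x^{(0)}-\mu)(x^{(0)}-\mu)^T]=\Sigma$ and cyclicity of the trace,
\[
\mathbb{E}_f\Bigl[\dchi\bigl(\pdugs^n(x^{(0)},\cdot)\,\|\,f\bigr)\Bigr] = \operatorname{tr}\bigl(\Sigma^{-1} B^n \Sigma (B^T)^n\bigr) + O(\rho(B)^{4n}),
\]
and the Gelfand estimate $\|B^n\|\le C_\varepsilon(\rho(B)+\varepsilon)^n$ bounds the trace by a constant times $\|B^n\|^2$, which for small $\varepsilon$ is $o(\rho(B)^n)$, giving the claim. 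The delicate point is the expansion in Step 2: one must check that the first-order contribution of the covariance mismatch cancels between $|W|$ and $|2W-I|^{1/2}$, so that the covariance error enters only at order $\rho(B)^{4n}$ and the squared mean shift (of order $\rho(B)^{2n}$) dominates. A secondary subtlety is that $B$ need not be diagonalizable, so $\|B^n\|$ may carry polynomial-in-$n$ factors; these are exactly what make the sharp rate $\rho(B)^{2n}$ be phrased as $o(\rho(B)^n)$, and they are absorbed by the Gelfand bound. One should also restrict to $n$ large enough that $2W-I\succ 0$, which holds since $W\to I$, guaranteeing finiteness of the divergence and validity of the expansions.
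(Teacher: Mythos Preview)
Your Step 1 is exactly what the paper does, including the key identity $\Sigma-\Sigma_n=B^n\Sigma(B^T)^n$ obtained from the Lyapunov equation $\Sigma=B\Sigma B^T+\Sigma_z$.

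Where you diverge is in the order of operations in Steps 2--3. The paper observes that the only $x^{(0)}$-dependence in the $\chi^2$ formula sits in the exponential, and since $u=B^n(x^{(0)}-\mu)$ is linear in $x^{(0)}$, integrating against $x^{(0)}\sim\mathcal{N}(\mu,\Sigma)$ is an exact Gaussian integral; this yields a closed form $\mathbb{E}_f[\dchi+1]=|W|/\sqrt{|2W-I|\,|U|}$ with $U=I-2\Sigma (B^T)^n(2W-I)^{-1}\Sigma_n^{-1}B^n$, and only then does one read off the asymptotics from $W,U\to I$. You instead Taylor-expand the exponential pointwise in $x^{(0)}$ and integrate afterwards. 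Your leading term $\operatorname{tr}(\Sigma^{-1}B^n\Sigma(B^T)^n)$ is correct and is indeed what drops out of the paper's closed form to first order, so the extra transparency is a genuine benefit.

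The soft spot in your route is that the pointwise remainder you label $O(\rho(B)^{4n})$ is not uniform in $x^{(0)}$: the Lagrange form of $e^t-1-t$ carries a factor $e^{\theta t}$ with $t=u^TA_nu$ growing quadratically in $|x^{(0)}-\mu|$, so after integrating against a Gaussian you need to control a tilted Gaussian moment, not just a polynomial one. This can be done (the tilt is of size $\|B^n\|^2$, so for large $n$ the tilted density is close to the original and all moments are comparable), but it is extra work that you have not spelled out. The paper's ``integrate first, expand second'' avoids the issue entirely because the integrated expression is a deterministic function of $n$, so the Taylor expansion there is trivially justified.
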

    \begin{proof}
    We rewrite the recurrence relation of Gibbs updates derived in \eqref{eqn:gibbs_ar} as 
    \[X^{(n)}-\mu=B(X^{(n-1)}-\mu)+z^{(n)},\]
    where  $z^{(n)} \stackrel{\text{i.i.d.}}{\sim}\mathcal{N}(0,\Sigma_z)$ for some $\Sigma_z$ that we will determine in what follows. Taking expectation on both sides gives 
    \[\mathbb{E} \bigl[X^{(n)}-\mu\bigr]= B\mathbb{E}\bigl[X^{(n-1)}-\mu\bigr].\]
    On the other hand, we know that $\mathcal{N}(\mu,\Sigma)$ is the invariant distribution, so
    $\mathbb{V}\bigl[x\bigr]=\mathbb{V} \bigl[Bx\bigr]+\mathbb{V} \bigl[z \bigr],$
    where $x$ has the target distribution and $z$ is the noise in the recurrence equation. This implies that
    $\Sigma_z=\Sigma-B\Sigma B^T.$ 
    Taking the variance on both sides of the original recurrence relation yields
    \begin{align*}
    \mathbb{V} \bigl[ X^{(n)} \bigr]
    & =\mathbb{V}  \bigl[BX^{(n-1)}\bigr]+\Sigma_z\\
    & =B\mathbb{V}\bigl[X^{(n-1)} \bigr] B^T+\Sigma-B\Sigma B^T\\
    & =\Sigma+B\Bigl(\mathbb{V} \bigl[X^{(n-1)} \bigr] -\Sigma\Bigr)B^T.
    \end{align*}
    
    Let $\mu_n$ and $\Sigma_n$ be the mean and variance of the conditional distribution $X^{(n)}|x^{(0)}$. Then,
    \begin{align*}
     \mu_n &=\mu+B^n(x^{(0)}-\mu), \\
    \Sigma_n  &=\Sigma+B^n(0-\Sigma)(B^T)^n=\Sigma-B^n\Sigma(B^T)^n,
    \end{align*}
    because $\Sigma^{(0)}=0$. 
     Using Equation \eqref{eq:chisqdivergenceGaussians}, we have
    \[\dchi \bigl(\pdugs^n(x^{(0)},\cdot \bigr)  \|f)=\frac{|W|}{\sqrt{|2W-I|}}e^{u^T(\Sigma^{(n)})^{-1}(2W-I)^{-1}u}-1,\]
    with
        \begin{align*}
        W &=\Sigma_n \Sigma^{-1}=I-B^n\Sigma(B^T)^n,\\
        u &=B^n(x^{(0)}-\mu).
        \end{align*}
    Observe that the only $x^{(0)}$-dependence lies in the exponent. Integrating over $x^{(0)}\sim f$ gives 
        \begin{align*}
        \mathbb{E}_f  \Bigl[ \dchi( & \pdugs^n(x^{(0)}\|f)\Bigr]  + 1
        =\frac{|W|}{\sqrt{|2W-I|}}     
        \int
        e^{u^T(\Sigma^{(n)})^{-1}(2W-I)^{-1}u}
        f(x) 
        dx
        \\
        &=\frac{|W|}{\sqrt{|2W-I|}}      
        \frac{1}{(2\pi)^{\frac{d}{2}}|\Sigma|^{\frac{1}{2}}}
        \int e^{\frac{1}{2}(x-\mu)^T(2(B^T)^t\Sigma^{-1}W^{-1}(2W-I)^{-1}B^t-\Sigma^{-1})(x-\mu)}
     \\
        &= \frac{|W|}{\sqrt{|2W-I||U|}},
        \end{align*}
    where $U=I-2(B^T)^n\Sigma^{-1}W^{-1}(2W-I)^{-1}B^n\Sigma$. From the definition of $W$, we obtain the desired rate $\rho(B)$. \hfill $\square$
    \end{proof}

\begin{figure}[!htb]
    \centering
\minipage{1\textwidth}
  \includegraphics[width=\linewidth]{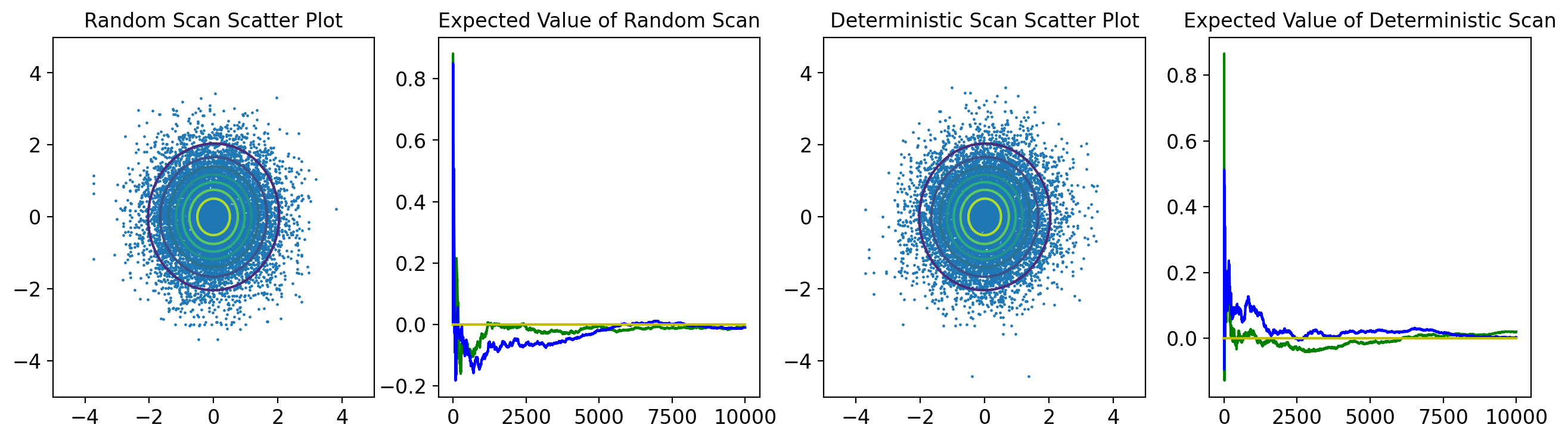}
\endminipage\hfill
\minipage{1\textwidth}%
  \includegraphics[width=\linewidth]{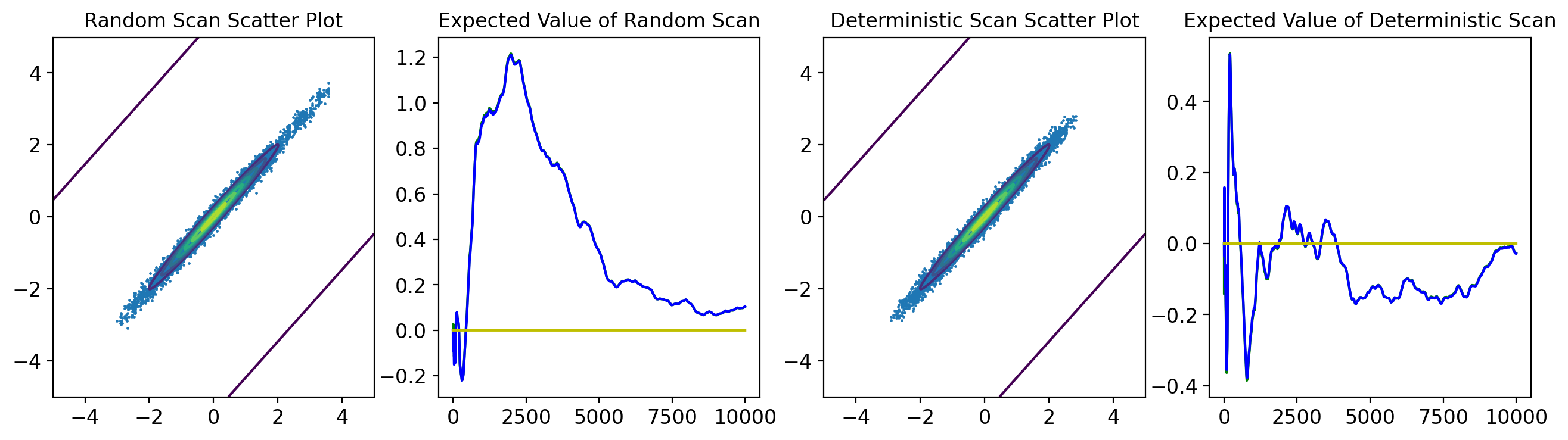}
\endminipage\hfill
\caption{Gibbs samplers for a zero-mean bivariate Gaussian distribution with covariance matrix $[1, \alpha; \alpha, 1]$. Top row: $\alpha = 0.01.$ Bottom row: $\alpha = 0.99$.}
\end{figure}

\section{Discussion and Bibliography}\label{sec:Gibbsbibliography}
The Gibbs sampler was introduced in \cite{geman1987stochastic}. The paper \cite{gelfand1990sampling} further popularized the approach, making apparent its usefulness for Bayesian inference.
 The derivation of full conditionals for Gaussians in Example \ref{ex:fullcondGau}  can be found for instance in \cite{rue2005gaussian}. Example \ref{ex:slice} briefly introduced the slice sampler for univariate targets. We refer to \cite{neal2003slice} for multivariate extensions and to \cite{murray2010elliptical} for slice samplers for function space sampling.
 A comparison between random and deterministic updates can be found in \cite{amit1991comparing}. Simple conditions on the target to guarantee convergence were established in \cite{roberts1994simple}, and geometric ergodicity of the Gibbs sampler was further studied in \cite{roberts1994geometric}. The proof of convergence for Gaussian targets is taken from \cite{roberts1997updating}, which also contains fundamental results for the understanding of blocking and other variants of the Gibbs sampler. Gibbs samplers for hierarchical models have been widely studied, see e.g. \cite{papaspiliopoulos2008stability} and references therein. Finally, adaptive Gibbs samplers are investigated in \cite{chimisov2018adapting}.

\chapter{Langevin Monte Carlo}
\label{chap:diffusions}

This chapter introduces Langevin Monte Carlo, a family of sampling algorithms that shares numerous similarities with gradient descent optimization algorithms. Gradient descent minimizes an objective function $V: \R^d \to \R$ by taking steps in the direction of $-\nabla V;$ similarly,
Langevin Monte Carlo samples from a target distribution $f \propto \exp(-V)$ by moving, on average, in the direction of $-\nabla V.$ In the same way that gradient descent can quickly find a local minimum of $V,$ Langevin algorithms can quickly find and explore a mode of the target distribution. Consequently, gradient descent and Langevin Monte Carlo are well-suited for global optimization of convex objective functions and sampling of log-concave distributions. On the other hand, 
for non-convex objectives gradient descent can struggle to escape local minima, and for multi-modal targets Langevin Monte Carlo may require many iterations to adequately explore the region around each mode.

 We will study two Langevin sampling algorithms: the Unadjusted Langevin Algorithm (ULA)  and the Metropolis Adjusted Langevin Algorithm (MALA). Our presentation of ULA highlights the analogy with gradient descent optimization. In a Gaussian setting, we will show that ULA is biased: the limit distribution of draws from ULA does not agree with the target. Nonetheless, this bias can be reduced by using a smaller step-size. MALA provides a natural approach to remove the bias of ULA by leveraging a Metropolis Hastings accept/reject mechanism. 
In addition to the optimization perspective on Langevin Monte Carlo algorithms that we emphasize in Sections \ref{sec:ULA} and \ref{sec:MALA}, we will provide in Section \ref{sec:LangevinSDE} an alternative perspective deriving ULA by Euler-Maruyama discretization of the Langevin stochastic differential equation. This perspective predates the optimization one and motivates the name of the algorithms. As we shall see,  the  Langevin equation has the key property of leaving the target invariant; however, this property is no longer satisfied after discretization, explaining the bias of ULA. 
 In this light, the Metropolis Hastings accept/reject step in MALA removes the Euler-Maruyama discretization error, ensuring convergence to the target. The chapter concludes in Section \ref{sec:literatureLangevin}  with bibliographical remarks.

\section{Unadjusted Langevin Algorithm}\label{sec:ULA}
A simple but effective approach to minimize an objective function $V: \R^d \to \R$ is via the gradient descent algorithm, which starting from an initial guess $x_0 \in \R^d$ recursively defines 
\begin{equation}\label{eq:gradientdescent}
    x_{n+1} = x_n - \epsilon \nabla V(x_n), \qquad n \ge 0.
\end{equation}
Under convexity assumptions on $V,$ the deterministic iterates $\{x_n\}_{n=1}^\infty$ converge to the minimizer of $V$ provided that the step-size $\epsilon>0$ is sufficiently small.
The Unadjusted Langevin Algorithm (ULA) relies on a stochastic variation of the deterministic update \eqref{eq:gradientdescent} to obtain approximate samples from a target distribution $f(x) \propto \exp\bigl(-V(x)\bigr),$ see Figure \ref{fig:diffusioninvariant}. Starting from a random variable $X^{(0)},$ draws $\{X^{(n)} \}_{n=1}^\infty$ from ULA  satisfy 
\begin{equation*}
    \Expect\bigl[X^{(n+1)} | X^{(n)} \bigr] = X^{(n)} - \epsilon \nabla V\bigl(X^{(n)}\bigr), \qquad n \ge 0.
\end{equation*}
Therefore, on average, ULA moves $X^{(n)} \mapsto X^{(n+1)}$ follow the direction of $-\nabla V\bigl(X^{(n)}\bigr),$ decreasing the value of $V$ and increasing the value of the target $f \propto \exp (-V)$ provided that the step-size is sufficiently small. The method is summarized in Algorithm \ref{algo:ULA}.

\begin{figure}[h!]
    \centering
    \includegraphics[width=0.75\textwidth]{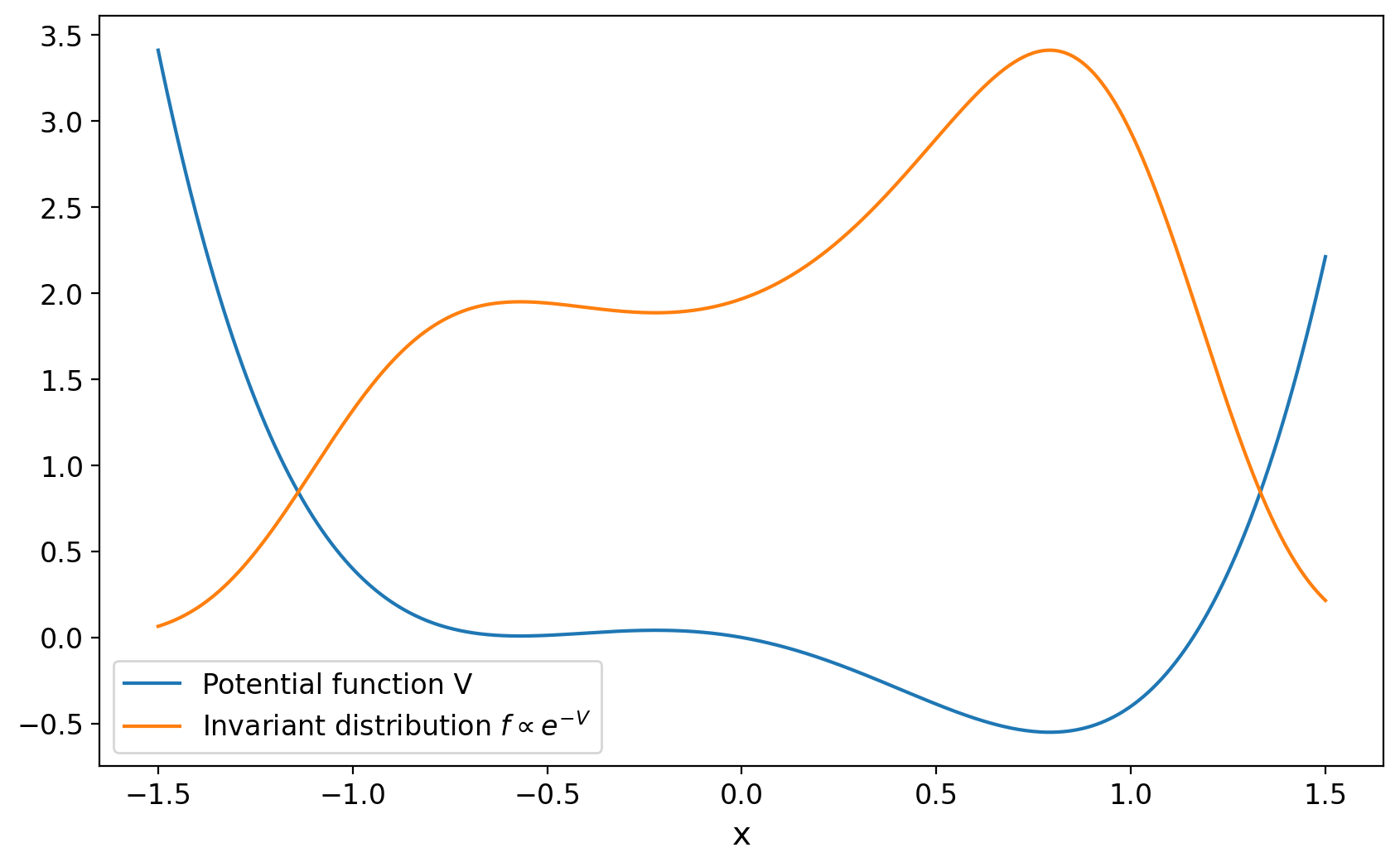}
    \caption{Objective function $V(x) = x^4 -x^2 -0.4x$ and corresponding target density $f(x) \propto \exp\bigl( -V(x) \big)$. Low values of $V$ correspond to large values of $f.$}
    \label{fig:diffusioninvariant}
\end{figure}

\begin{algorithm}[H]
  \caption{Unadjusted Langevin Algorithm (ULA) \label{algo:ULA}}
  \begin{algorithmic}[1]
  \STATEx{\textbf{Input:} Target $f(x) \propto \exp\bigl(-V(x)\bigr)$, initial distribution $\pi_0$, step-size $\epsilon>0,$ sample size $N.$}
  \STATE{{\bf{Initial draw:}} Sample $X^{(0)} \sim \pi_0.$}
  \STATE{{\bf Subsequent draws:}} For $n = 0, 1, \ldots, N-1$ set 
  $$X^{(n+1)} = X^{(n)} - \epsilon \nabla V\bigl(X^{(n)}\bigr) + \sqrt{2 \epsilon} \xi^{(n)}, \qquad \xi^{(n)} \sim \Nc(0,I).$$
  \STATEx{\textbf{Output:} Sample $\{X^{(n)}\}_{n=1}^N.$}
 \end{algorithmic}
\end{algorithm}

\begin{mybox}[colback=white]{Pros and Cons}
ULA can be implemented without knowing the normalizing constant of the target distribution. Moreover, by taking steps $X^{(n)} \mapsto X^{(n+1)}$ that are, on average, in the direction of $-\nabla V\bigl(X^{(n)}\bigr),$ ULA is able to quickly find and explore regions of high target density. Under convexity assumptions on $V$ and for sufficiently small step-size $\epsilon,$ draws from ULA are, after a burn-in period, approximately distributed according to the target. However, the algorithm is biased in that, for fixed $\epsilon,$ the distribution of the $n$-th sample does not converge to the target in the large $n$ asymptotic.
Another caveat of ULA is that taking steps in the direction of $-\nabla V \bigl(X^{(n)}\bigr)$ can be counterproductive if the target is multi-modal. In that case, ULA can get stuck in one mode without adequately exploring others. Finally, ULA requires evaluating $\nabla V,$ which can be expensive.
\end{mybox}

We will analyze ULA for Gaussian target $f = \Nc(\mu, \Sigma).$  In this setting, similar to the Gibbs sampler in Chapter \ref{chap:gibbs}, draws from ULA satisfy a simple recursion. As in Chapter \ref{chap:gibbs}, we will denote by $H= \Sigma^{-1}$ the precision matrix.

\begin{lemma}\label{lemma:recursionULA} Let $\{X^{(n)} \}_{n\ge 1}$ be the output of ULA with target $f = \Nc(\mu, \Sigma) =  \Nc(\mu, H^{-1})$ and initial state $X^{(0)}.$ It holds that
\begin{equation}\label{eq:ULArecursion}
    X^{(n+1)} - \mu = (I -\epsilon H)^{n+1}(X^{(0)} - \mu) + \sqrt{2\epsilon} \sum_{i = 0}^n (I - \epsilon H)^{n -i} \xi^{(i)}.
\end{equation}
Suppose further that $\pi_0 = \Nc(\mu_0, \sigma_0^2 I)$ for some $\mu_0 \in \R^d$ and $\sigma_0^2>0.$ Then, $X^{(n)} \sim \Nc(\mu_n, \Sigma_n),$ where 
\begin{align}
    \mu_n &= \mu + (I - \epsilon H)^n (\mu_0 - \mu), \label{eq:ULAmean}\\
    \Sigma_n &= \Bigl(H - \frac{\epsilon}{2}H^2\Bigr)^{-1} + (I - \epsilon H)^{2n} \Biggl(\sigma_0^2 I -\Bigl(H - \frac{\epsilon}{2}H^2\Bigr)^{-1}\Biggr). \label{eq:ULAcovariance}
\end{align}
\end{lemma}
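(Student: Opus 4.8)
The plan is to exploit that for a Gaussian target the drift $\nabla V$ is linear, which turns the ULA update into an affine recursion driven by Gaussian noise; everything then reduces to unrolling the recursion and tracking the first two moments.

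First I would note that since $f = \Nc(\mu, H^{-1}) \propto \exp(-V)$, we may take $V(x) = \tfrac12 (x-\mu)^\top H (x-\mu)$, so that $\nabla V(x) = H(x-\mu)$. Substituting this into the ULA update and centering by $\mu$, the increment becomes
$$X^{(n+1)} - \mu = (I - \epsilon H)(X^{(n)} - \mu) + \sqrt{2\epsilon}\, \xi^{(n)}.$$
A straightforward induction on $n$ then unrolls this one-step recursion into the closed form \eqref{eq:ULArecursion}: the homogeneous part contributes $(I - \epsilon H)^{n+1}(X^{(0)} - \mu)$, while each noise term $\xi^{(i)}$ is propagated forward by the appropriate power $(I - \epsilon H)^{n-i}$.

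For the second part I would observe that $X^{(n)}$ is an affine function of the jointly Gaussian vector $(X^{(0)}, \xi^{(0)}, \ldots, \xi^{(n-1)})$, with $X^{(0)} \sim \Nc(\mu_0, \sigma_0^2 I)$ independent of the i.i.d. $\xi^{(i)} \sim \Nc(0,I)$, hence $X^{(n)}$ is Gaussian and it remains only to compute its mean and covariance. Taking expectations in \eqref{eq:ULArecursion} and using $\Expect[\xi^{(i)}] = 0$ gives $\mu_n = \mu + (I-\epsilon H)^n(\mu_0 - \mu)$, which is \eqref{eq:ULAmean}. For the covariance, independence of $X^{(0)}$ and the $\xi^{(i)}$, together with the symmetry of $I - \epsilon H$, yields (after reindexing the noise sum)
$$\Sigma_n = \sigma_0^2 (I - \epsilon H)^{2n} + 2\epsilon \sum_{j=0}^{n-1} (I - \epsilon H)^{2j}.$$

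The main work is evaluating this matrix geometric series and massaging it into the stated form. Since every matrix in sight is a polynomial in $H$, they all commute, so I can treat the sum as a genuine geometric series with ratio $A := (I - \epsilon H)^2$, obtaining $\sum_{j=0}^{n-1} A^j = (I - A^n)(I - A)^{-1}$. The key algebraic identity is $I - A = I - (I - \epsilon H)^2 = 2\epsilon\bigl(H - \tfrac{\epsilon}{2}H^2\bigr)$, so that $2\epsilon (I-A)^{-1} = \bigl(H - \tfrac{\epsilon}{2}H^2\bigr)^{-1}$. Substituting and collecting terms gives
$$\Sigma_n = \bigl(H - \tfrac{\epsilon}{2}H^2\bigr)^{-1} + (I - \epsilon H)^{2n}\Bigl(\sigma_0^2 I - \bigl(H - \tfrac{\epsilon}{2}H^2\bigr)^{-1}\Bigr),$$
which is exactly \eqref{eq:ULAcovariance}. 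The only subtlety to track is the invertibility of $I - A$ (equivalently that $H - \tfrac{\epsilon}{2}H^2$ is nonsingular), which holds provided $\epsilon$ is small enough that $I - \epsilon H$ has no eigenvalue equal to $\pm 1$; this is precisely where the commuting-matrix viewpoint pays off, since each identity can equivalently be checked eigenvalue-by-eigenvalue via simultaneous diagonalization of the functions of $H$.
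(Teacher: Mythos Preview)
Your proof is correct and follows essentially the same approach as the paper: linearize the ULA update via $\nabla V(x) = H(x-\mu)$, unroll the resulting affine recursion, and compute mean and covariance using independence and the matrix geometric series $\sum_{j=0}^{n-1}(I-\epsilon H)^{2j}$. Your added remarks on commutativity and the invertibility of $H - \tfrac{\epsilon}{2}H^2$ make explicit a point the paper leaves implicit, but otherwise the arguments are the same.
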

\begin{proof}
    Due to the Gaussian target assumption, we can set $V(x) = \frac{1}{2}(x - \mu)^T H(x-\mu),$ which implies that $\nabla V(x) = H(x-\mu).$ Therefore,
    \begin{align*}
        X^{(n+1)} - \mu &= X^{(n)} - \epsilon \nabla V(X^{(n)}) + \sqrt{2 \epsilon} \,  \xi^{(n)} - \mu \\
        & = (I - \epsilon H) (X^{(n)} - \mu) + \sqrt{2 \epsilon} \, \xi^{(n)} \\ 
        & = (I -\epsilon H)^{n+1}(X^{(0)} - \mu) + \sqrt{2\epsilon} \sum_{i = 0}^n (I - \epsilon H)^{n -i} \xi^{(i)},
    \end{align*}
 thus establishing \eqref{eq:ULArecursion}. Under the additional assumption that  $\pi_0 = \Nc(\mu_0, \sigma_0^2 I),$ it follows from \eqref{eq:ULArecursion} that each $X^{(n)}$ has Gaussian distribution. To compute the mean, using that $\Expect[X^{(0)}] = \mu_0$ and that $\Expect[\xi^{(i)}] = 0$ for all $i \ge 0,$
 \begin{align*}
    \mu_n - \mu &=  \Expect\Bigl[ (I -\epsilon H)^{n}(X^{(0)} - \mu) + \sqrt{2\epsilon} \sum_{i = 0}^{n-1} (I - \epsilon H)^{n -i -1} \xi^{(i)}  \Bigr] \\
     & = (I - \epsilon H)^n  (\mu_0 - \mu),
 \end{align*}
 which proves \eqref{eq:ULAmean}.
 To compute the covariance, note first that
\begin{align*}
    X^{(n)} - \mu_n &= (I -\epsilon H)^{n}(X^{(0)} - \mu) + \sqrt{2\epsilon} \sum_{i = 0}^{n-1} (I - \epsilon H)^{n -i -1} \xi^{(i)} - ( \mu_n - \mu) \\
    & = (I - \epsilon H)^n ( X^{(0)} - \mu_0 ) + \sqrt{2\epsilon} \sum_{i = 0}^{n-1} (I - \epsilon H)^{n -i -1} \xi^{(i)}.
\end{align*}
Hence, using that $\Expect[(X^{(0)} - \mu_0)(X^{(0)} - \mu_0)^T] = \sigma_0^2 I,$  that $\Expect[ \xi^{(i)} (\xi^{(i)})^T] = I, $ and the independence of $X^{(0)}$ and of each $\xi^{(i)}$ from all other randomness, we deduce that
 \begin{align*}
     \Sigma_n &= \Expect\bigl[(X^{(n)} - \mu_n)(X^{(n)} - \mu_n)^T \bigr] \\
     & = \sigma_0^2 (I - \epsilon H)^{2n} + 2 \epsilon \sum_{i = 0}^{n-1} (I - \epsilon H)^{2i} \\
     & = \Bigl(H - \frac{\epsilon}{2}H^2\Bigr)^{-1} + (I - \epsilon H)^{2n} \biggl(\sigma_0^2 I - \Bigl(H - \frac{\epsilon}{2}H^2\Bigr)^{-1}\biggr),
 \end{align*}
where in the last line we used the formula for the partial sum of the geometric series $\sum_{i=0}^{n-1} A^i = (I - A)^{-1} - A^n(I-A)^{-1}.$ We have hence established \eqref{eq:ULAcovariance}, completing the proof. \hfill $\square$
\end{proof}

The key takeaway from Lemma \ref{lemma:recursionULA} is that, for fixed and sufficiently small $\epsilon,$ we have 
\begin{align*}
    \mu_n &\xrightarrow{ n \to \infty } \mu, \\
    \Sigma_n &\xrightarrow{ n \to \infty } \Sigma_\epsilon:= \Bigl(H - \frac{\epsilon}{2}H^2\Bigr)^{-1}.
\end{align*}
The next theorem shows that, in Wasserstein distance, $\pi_n := \Nc(\mu_n,\Sigma_n)$ converges exponentially to $f_\epsilon:= \Nc(\mu, \Sigma_\epsilon)$ and that $f_\epsilon$ is order $\epsilon$ apart from $f$. We recall that the Wasserstein distance between Gaussians $\pi = \Nc(\mu, \Sigma)$ and $\widetilde{\pi} = \Nc(\widetilde{\mu}, \widetilde{\Sigma})$ is given by 
\begin{equation*}
    W_2(\pi, \widetilde{\pi})^2 =  \| \mu - \widetilde{\mu} \|^2 + \| \Sigma^{1/2} - \widetilde{\Sigma}^{1/2} \|_F^2 
\end{equation*}
provided that $\Sigma$ and $\widetilde{\Sigma}$ commute. Here $\| A \|_F = \sqrt{\text{Tr}(A^T A)}$ denotes the Frobenius norm. We will denote by $\lambda_{\min}(H)$ and $\lambda_{\max}(H)$ the smallest and largest eigenvalues of $H.$

\begin{theorem}[Convergence of ULA for Gaussian Targets]\label{thm:ULAtheorem}
    Let $\{X^{(n)} \}_{n\ge 1}$ be the output of ULA with target $f = \Nc(\mu, \Sigma) = \Nc(\mu, H^{-1})$ and initial distribution $\pi_0 = \Nc(\mu_0, \sigma_0^2 I).$ Let $\pi_n$ denote the distribution of $X^{(n)}.$  Suppose that $\sigma_0^2 \le \lambda_{\max}(H)^{-1}.$ Then, there is a constant $c$ which depends on $\mu_0, \Sigma_0, \mu,$ and $\Sigma$ such that
    \begin{equation*}
        W_2(\pi_n, f) \le c\bigl(1 - \epsilon \lambda_{\min}(H)\bigr)^n + \frac{\epsilon}{4} \sqrt{\emph{Tr}(H)} + \mathcal{O}(\epsilon^2), \qquad n =1, 2, \ldots
    \end{equation*}
\end{theorem}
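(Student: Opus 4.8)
The plan is to interpose the biased stationary target $f_\epsilon = \Nc(\mu, \Sigma_\epsilon)$, with $\Sigma_\epsilon = (H - \tfrac{\epsilon}{2}H^2)^{-1}$, between $\pi_n$ and $f$, and to invoke the triangle inequality for the Wasserstein distance,
$$ W_2(\pi_n, f) \le W_2(\pi_n, f_\epsilon) + W_2(f_\epsilon, f). $$
The first term will produce the exponentially decaying contribution and the second the $\epsilon$-bias. The whole argument reduces to scalar computations in the common eigenbasis of $H$: every matrix appearing (namely $I - \epsilon H$, $\Sigma$, $\Sigma_\epsilon$, and $\Sigma_n$) is a function of $H$ and hence they all commute, so the commuting-covariance formula for $W_2$ between Gaussians applies and each Frobenius norm decomposes eigenvalue by eigenvalue. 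Throughout I work in the regime $\epsilon < \lambda_{\max}(H)^{-1}$, so that $I - \epsilon H$ is symmetric with spectrum in $(0,1)$ and operator norm $1 - \epsilon \lambda_{\min}(H)$.

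For the contraction term I would start from Lemma \ref{lemma:recursionULA}, which gives $\mu_n - \mu = (I - \epsilon H)^n(\mu_0 - \mu)$ and $\Sigma_n - \Sigma_\epsilon = (I-\epsilon H)^{2n}(\sigma_0^2 I - \Sigma_\epsilon)$. Writing $W_2(\pi_n, f_\epsilon)^2 = \|\mu_n - \mu\|^2 + \|\Sigma_n^{1/2} - \Sigma_\epsilon^{1/2}\|_F^2$ and using $\sqrt{a^2+b^2}\le a + b$ for $a,b\ge 0$, it suffices to bound each piece. The mean piece obeys $\|\mu_n - \mu\| \le (1-\epsilon\lambda_{\min}(H))^n\|\mu_0 - \mu\|$. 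For the covariance piece I diagonalize: writing $\lambda_i$ for the eigenvalues of $H$ and $\sigma_{\epsilon,i}^2$ for those of $\Sigma_\epsilon$, the $i$-th eigenvalue of $\Sigma_n^{1/2} - \Sigma_\epsilon^{1/2}$ has absolute value
$$ \bigl|\sqrt{\sigma_{\epsilon,i}^2 + b_i} - \sigma_{\epsilon,i}\bigr| = \frac{|b_i|}{\sqrt{\sigma_{\epsilon,i}^2 + b_i} + \sigma_{\epsilon,i}} \le \frac{|b_i|}{\sigma_{\epsilon,i}}, \qquad b_i := (1-\epsilon\lambda_i)^{2n}(\sigma_0^2 - \sigma_{\epsilon,i}^2). $$
Since $(1-\epsilon\lambda_i)^{2n} \le (1-\epsilon\lambda_{\min}(H))^{2n}$ for every $i$, summing squares gives $\|\Sigma_n^{1/2} - \Sigma_\epsilon^{1/2}\|_F \le (1-\epsilon\lambda_{\min}(H))^{2n}\, C$ for a constant $C$ built from $\sigma_0^2$ and $\Sigma_\epsilon$. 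Here the hypothesis $\sigma_0^2 \le \lambda_{\max}(H)^{-1} \le \sigma_{\epsilon,i}^2$ is exactly what bounds $|\sigma_0^2 - \sigma_{\epsilon,i}^2|/\sigma_{\epsilon,i} \le \sigma_{\epsilon,i}$ and hence keeps $C$ finite and clean. Collecting the two pieces, and using $(1-\epsilon\lambda_{\min}(H))^{2n}\le(1-\epsilon\lambda_{\min}(H))^n$, yields $W_2(\pi_n, f_\epsilon) \le c\,(1-\epsilon\lambda_{\min}(H))^n$ with $c$ depending only on $\mu_0, \sigma_0^2, \mu, \Sigma$.

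For the bias term both distributions share the mean $\mu$, so $W_2(f_\epsilon, f)^2 = \|\Sigma_\epsilon^{1/2} - \Sigma^{1/2}\|_F^2$. In the eigenbasis the $i$-th summand is $\lambda_i^{-1}\bigl((1-\tfrac{\epsilon}{2}\lambda_i)^{-1/2} - 1\bigr)^2$, and the Taylor expansion $(1-\tfrac{\epsilon}{2}\lambda_i)^{-1/2} = 1 + \tfrac{\epsilon}{4}\lambda_i + \mathcal{O}(\epsilon^2)$ turns this into $\tfrac{\epsilon^2}{16}\lambda_i + \mathcal{O}(\epsilon^3)$. Summing over $i$ gives $W_2(f_\epsilon, f)^2 = \tfrac{\epsilon^2}{16}\,\text{Tr}(H) + \mathcal{O}(\epsilon^3)$, that is $W_2(f_\epsilon, f) = \tfrac{\epsilon}{4}\sqrt{\text{Tr}(H)} + \mathcal{O}(\epsilon^2)$. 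Adding the two bounds gives the claimed estimate.

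The main obstacle, and the only genuinely delicate step, is controlling the matrix square-root difference $\|\Sigma_n^{1/2} - \Sigma_\epsilon^{1/2}\|_F$: the crude Lipschitz bound $|\sqrt{x}-\sqrt{y}| \le \sqrt{|x-y|}$ would degrade the decay to $(1-\epsilon\lambda_{\min}(H))^n$ with a possibly $\epsilon$-dependent prefactor and lose the clean algebra, so one must instead use the sharper identity $\sqrt{x}-\sqrt{y} = (x-y)/(\sqrt{x}+\sqrt{y})$ together with the lower bound on $\sigma_{\epsilon,i}$ supplied by the hypothesis $\sigma_0^2 \le \lambda_{\max}(H)^{-1}$. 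The remaining work is purely bookkeeping: tracking the $\mathcal{O}(\epsilon^2)$ remainder in the bias expansion and verifying that all $\mathcal{O}$-constants are uniform in $n$.
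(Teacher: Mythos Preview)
Your proof is correct and follows the same architecture as the paper's: interpose $f_\epsilon$, use the commuting-Gaussian $W_2$ formula, diagonalize in the eigenbasis of $H$, and Taylor-expand the bias term. The only wrinkle is your final paragraph's diagnosis of the ``main obstacle'': the paper in fact uses precisely the bound you dismiss, namely $(\sqrt{\alpha_i}-\sqrt{\alpha_i-\beta_i})^2\le\beta_i$ (the hypothesis $\sigma_0^2\le\lambda_{\max}(H)^{-1}$ is invoked to guarantee $\beta_i\ge 0$, not to feed a denominator), and this delivers the stated rate with an $\epsilon$-uniform prefactor since $\mathrm{Tr}(\Sigma_\epsilon-\Sigma_0)\to\mathrm{Tr}(\Sigma-\Sigma_0)$---your route through $(x-y)/(\sqrt{x}+\sqrt{y})$ also works and is marginally sharper, but the simpler inequality suffices.
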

\begin{proof}
    By triangle inequality, $W_2(\pi_n, f) \le W_2(\pi_n, f_\epsilon) + W_2(f_\epsilon, f).$ The first term represents the distance between the distribution of $X^{(n)}$ and the limit distribution of ULA, and the second term represents the distance between the limit distribution of ULA and the target. We will bound both terms in turn. We denote by $c$ a constant that depends on $\mu_0, \Sigma_0, \mu,$ and $\Sigma$ and which may change from line to line.

    To bound $W_2(\pi_n, f_\epsilon),$ we need to control the deviation between the means and covariances of $\pi_n = \Nc(\mu_n,\Sigma_n)$ and $f_\epsilon = \Nc(\mu, \Sigma_\epsilon).$ For the  means,
    \begin{align*}
        \| \mu_n - \mu \|^2 & = \| (I - \epsilon H)^n (\mu_0 - \mu) \|^2 \\ 
                            & \le \max\{ |1- \epsilon \lambda_{\min}(H)|, |1- \epsilon \lambda_{\max}(H)|\}^{2n} \| \mu_0 - \mu \|^2 \\
                            & =c  \bigl(1- \epsilon \lambda_{\min}(H)\bigr)^{2n},
    \end{align*}
where we used the Cauchy-Schwarz inequality and that, for all $\epsilon$ sufficiently small,
$\max\{ |1- \epsilon \lambda_{\min}(H)|, |1- \epsilon \lambda_{\max}(H)|\}  = |1- \epsilon \lambda_{\min}(H)|.$

For the covariances, let $Q \text{diag}(h_1, \ldots, h_d) Q^T$ be the eigendecomposition of $H,$ where $\{h_i\}_{i=1}^d$ are the eigenvalues of $H$ and $Q$ is orthonormal. Then, 
\begin{align*}
    \Sigma_\epsilon &= \Bigl(H - \frac{\epsilon}{2}H^2\Bigr)^{-1} = Q \text{diag} \bigl(\alpha_{1}, \ldots, \alpha_d \bigr) Q^T,  \\
    \Sigma_n & = Q \text{diag} \bigl(\alpha_1 - \beta_1, \ldots, \alpha_d - \beta_d\bigr) Q^T,
\end{align*}
where  $\alpha_i  = \Bigl(h_i - \frac{\epsilon}{2} h_i^2\Bigr)^{-1}$ and $\beta_i = (1 - \epsilon h_i)^{2n}(\alpha_i - \sigma_0^2). $
Notice that, for all sufficiently small $\epsilon,$ we have that $\frac{\epsilon}{2} h_i < 1,$ and hence that $\alpha_i>0.$ Furthermore, the assumption on $\sigma_0^2$ ensures that $\sigma_0^2 \le \alpha_i,$ and hence that $\beta_i \ge 0.$  Consequently, for all $\epsilon$ sufficiently small,
\begin{align*}
    \| \Sigma_n^{1/2} - \Sigma_\epsilon^{1/2} \|_F^2 &= \sum_{i = 1}^d \Bigl( (\alpha_i - \beta_i)^{1/2} - \alpha_i^{1/2} \Bigr)^2 
     = \sum_{i=1}^d \Bigl(\alpha_i - \beta_i + \alpha_i - 2\sqrt{\alpha_i(\alpha_i - \beta_i} )  \Bigr)\\
    & \le \sum_{i=1}^d \Bigl(\alpha_i - \beta_i + \alpha_i - 2\sqrt{(\alpha_i - \beta_i)(\alpha_i - \beta_i} ) \Bigr) = \sum_{i=1}^d \beta_i  \\
    &= \sum_{i=1}^d (1 - \epsilon h_i)^{2n}(\alpha_i - \sigma_0^2) 
     \le c \bigl(1 - \epsilon \lambda_{\min}(H)\bigr)^{2n}, 
\end{align*}
where for the last inequality we use that $\text{Tr}(\Sigma_\epsilon - \Sigma_0) \to \text{Tr}(\Sigma - \Sigma_0)$ as $\epsilon \to 0,$ which implies that $\text{Tr}(\Sigma_\epsilon - \Sigma_0)$ is uniformly bounded for all sufficiently small $\epsilon.$
Combining the bounds for the means and the covariances,
\begin{align}\label{eq:bound1ULA}
\begin{split}
    W_2(\pi_n, f_\epsilon)^2 &\le c \bigl(1 - \epsilon \lambda_{\min}(H)\bigr)^{2n}. 
\end{split}
\end{align}

Next, we bound $W_2(f_\epsilon,f).$ Now the means agree, and we deduce as before that
\begin{align*}
  W_2(f_\epsilon, f)^2 &=    \sum_{i=1}^d \frac{1}{h_i} \Biggl( 1 - \frac{1}{\sqrt{1 - \frac{\epsilon}{2}  h_i} } \Biggr)^2 \\
  & = \sum_{i=1}^d \frac{1}{h_i} \phi\Bigl(\frac{\epsilon}{2}  h_i\Bigr), 
\end{align*}
where $\phi(z) : = \Bigl( 1 - \frac{1}{\sqrt{1-z}}\Bigr)^2.$ By Taylor expansion around zero, one can show that, for small $z,$ $\phi(z) = \frac{z^2}{4} + \mathcal{O}(z^3)$. Hence, for small $\epsilon,$
\begin{align}\label{eq:bound2ULA}
        W_2(f_\epsilon, f)^2 \le \frac{\epsilon^2}{16} \sum_{i=1}^d h_i  +  \mathcal{O}(\epsilon^3) = \frac{\epsilon^2}{16} \text{Tr}(H) + \mathcal{O}(\epsilon^3).
    \end{align}
Taking square roots in \eqref{eq:bound1ULA} and \eqref{eq:bound2ULA} and using triangle inequality completes the proof. $\hfill \square$
\end{proof}

\paragraph{Choice of Step-Size}
The proof of Theorem \ref{thm:ULAtheorem} decomposes the Wasserstein distance between $\pi_n$ and $f$ into two terms. The first term, which corresponds to the distance between $\pi_n$ and $f_\epsilon,$ decreases exponentially fast with the number $n$ of iterations. The second term, which corresponds to the distance between $f_\epsilon$ and $f,$ is a systematic bias of order $\epsilon.$ Notice that a smaller step-size reduces the bias, but it makes slower the convergence of $\pi_n$ to the biased limit $f_\epsilon.$ Thus, it is recommended to take the largest $\epsilon$ such that the error in the approximation $f \approx f_\epsilon$ can be tolerated. While not considered here, time-varying step-sizes can also be used; in particular, a common approach is to take bold large steps for a few iterations to ensure fast exploration, and then gradually decrease the step-size to reduce the bias.

\section{Metropolis Adjusted Langevin Algorithm}\label{sec:MALA}
In the previous section, we have shown that the limit distribution of the Markov chain $\{X^{(n)}\}_{n=1}^N$ defined by ULA does not agree with the target, even in a simple Gaussian setting. In other words, the target $f \propto \exp(-V)$ is not an invariant distribution for the Markov kernel 
$$\qlgv(x,\cdot) = \Nc \bigl(x - \epsilon \nabla V(x),2 \epsilon I \bigr) $$
that defines the distribution of $X^{(n+1)}$ given that $X^{(n)} = x$. However, the proof of Theorem \ref{thm:ULAtheorem} suggests that, for small step-size $\epsilon$, the limit distribution $f_\epsilon$ of ULA is order $\epsilon$ apart from $f,$ so that the kernel $\qlgv$ \emph{approximately} preserves the target. This motivates the idea of using $\qlgv$ as a proposal Markov kernel within the Metropolis Hastings algorithm studied in Chapter \ref{chap:MCMC}. The key advantage of this choice of proposal is that since $\qlgv$ approximately preserves the target, the accept/reject mechanism in the Metropolis Hastings framework only needs to account for a small correction to ensure convergence to the target.
The resulting method, known as the Metropolis Adjusted Langeving Algorithm (MALA), is summarized below.

\begin{algorithm}[H]
  \caption{Metropolis Adjusted Langevin Algorithm (MALA)\label{algo:MALA}}
  \begin{algorithmic}[1]
  \STATEx{\textbf{Input:} Target $f(x) \propto \exp\bigl(-V(x)\bigr)$, initial distribution $\pi_0$, step-size $\epsilon>0,$ sample size $N.$}
  \STATE{ Define the Markov kernel
  $$\qlgv(x,\cdot) := \Nc \bigl(x - \epsilon \nabla V(x),2 \epsilon I \bigr).$$}
  \STATE{ Run Metropolis Hastings (Algorithm \ref{algo:MH}) with inputs $f$, $\pi_0,$  $\qlgv$, $N.$}
  \vspace{2mm}
  \STATEx{\textbf{Output:} Sample $\{X^{(n)}\}_{n=1}^N.$ }
 \end{algorithmic}
\end{algorithm}

\begin{mybox}[colback=white]{Pros and Cons}
MALA removes the bias of ULA at the price of introducing a Metropolis Hastings accept/reject step.
MALA shares many of the pros and cons of ULA. They are both well suited to sample log-concave targets but can struggle to explore multi-modal distributions. Like ULA, MALA requires evaluating $\nabla V,$ which can be expensive. Compared to RWMH proposals from Chapter \ref{chap:MCMC}, MALA can scale better to high dimensional problems, particularly so for log-concave targets. However, tuning the step-size parameter $\epsilon$ is more delicate than for RWMH; existing theory shows that the step-size should be smaller in the transient than in the stationary phase of the chain. 
\end{mybox}

\paragraph{Choice of Step-Size}
As we saw in Section \ref{sec:ULA}, the limit distribution of ULA iterates depends on the choice of step-size. The ability of ULA to produce samples that are approximately distributed according to the target hinges on convexity of $V$ and on choosing a small step-size. On the other hand, MALA corrects the bias of ULA via a Metropolis Hastings accept/reject mechanism. For MALA, using a larger step-size leads to more rejections but faster exploration of the state space. As for ULA, time-varying step-sizes can be used. 

Classical theoretical results for MALA developed in the same large-$d$ scaling setting that we considered for RWMH in Chapter \ref{chap:MCMC} revealed two key insights. 
\begin{enumerate}
	\item The step-size $\epsilon$ should be taken to decrease with $d$ at rate $d^{-1/3}$ in order to ensure that the acceptance rate remains bounded away from $0$ and $1.$ Consequently, the number of iterations needed to get close to equilibrium is $\mathcal{O}(d^{1/3})$, a significant improvement over $\mathcal{O}(d)$ for RWMH. Though this fluctuates depending on the problem, many problems will have a cost per iteration of $\mathcal{O}(d)$ in both cases, so the overall complexity will be $\mathcal{O}(d^{4/3})$ compared to $\mathcal{O}(d^2)$. 
	\item The optimal acceptance rate for MALA is $\approx 0.574$, compared to the considerably lower $\approx 0.234$ for RWMH. 
\end{enumerate}
More recent results show enhanced scalability of MALA under various convexity assumptions on the potential $V,$ see Section \ref{sec:literatureLangevin}.
A heuristic reason for the improvement of MALA over RWMH is that MALA, unlike RWMH, incorporates the target distribution in the proposal kernel through the gradient $\nabla V$. 

\section{Langevin Equation and its Numerical Solution}\label{sec:LangevinSDE}
In Section \ref{sec:ULA} we introduced ULA by analogy to the gradient descent optimization algorithm. Another insightful interpretation of ULA is to view it as a discretization of the Langevin stochastic differential equation. In fact, this perspective predates the optimization one and explains the name of the algorithm. Here, we summarize some key ideas without delving into the important technical issues that are needed for a rigorous treatment of this subject.  

Given a potential $V: \R^d \to \R,$ the (overdamped) Langevin equation is given by  
\begin{equation}\label{eq:diffusion}
dX(t) = - \nabla V\big(X(t)\big)\, dt + \sqrt{2}\,dW(t), \ \  0 \leq t < \infty,
\end{equation}
where $W$ denotes a standard Brownian motion in $\R^d.$ The solution to \eqref{eq:diffusion} is a continuous-time stochastic process $\{ X(t)\}_{t\ge 0}.$ Given a final time $T>0,$ a \emph{trajectory} of this process is a random function 
\begin{align*}
[0,T] &\to \R^d \\
t &\mapsto X(t).
\end{align*}
Figure \ref{fig:diffusion} shows a trajectory of Langevin dynamics. 
\begin{figure}[h!]
    \centering
    \includegraphics[width=0.75\textwidth]{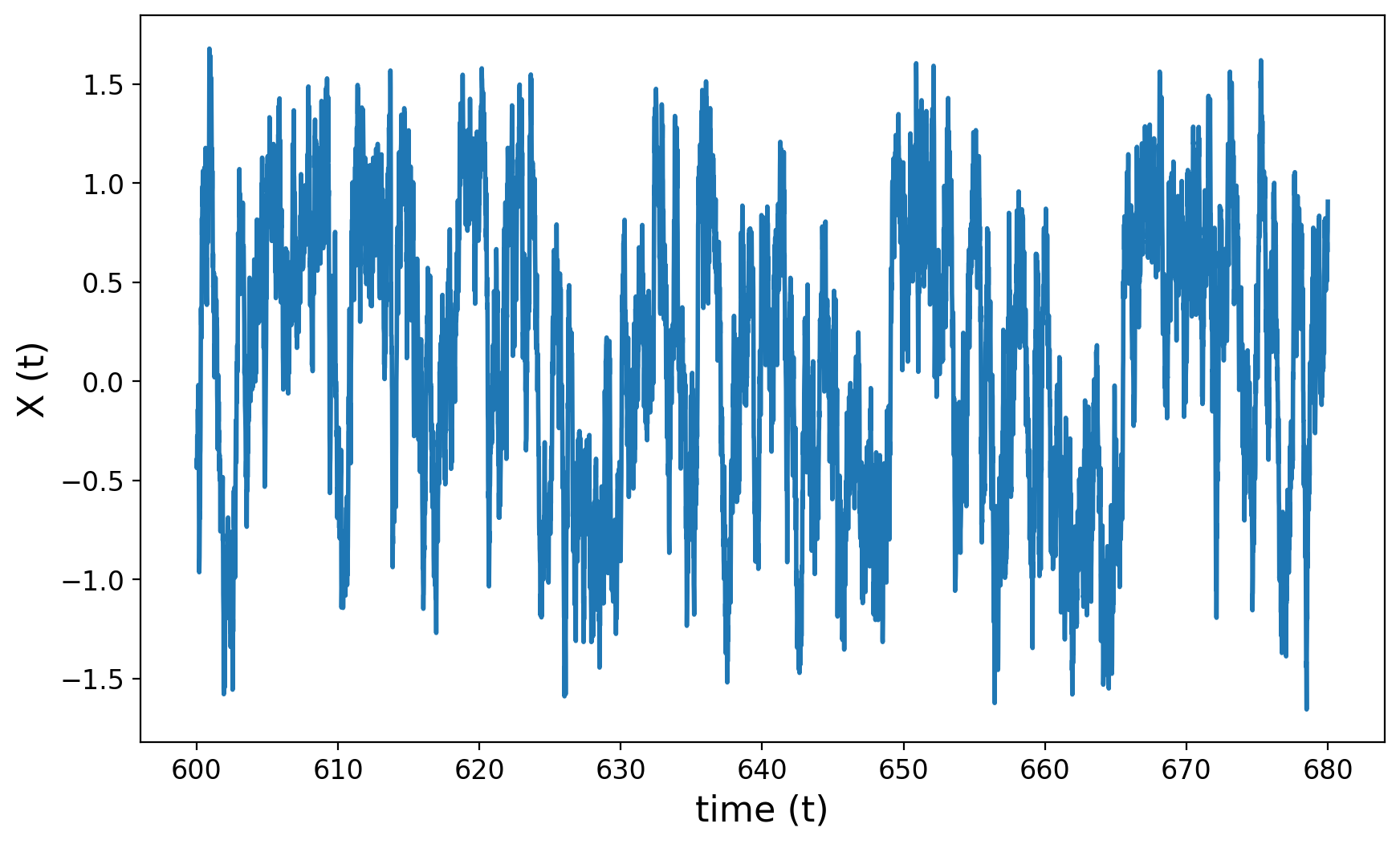}
    \caption{Random trajectory of Langevin equation with the potential $V(x)= x^4 - x^2-0.4x$ shown in Figure \ref{fig:diffusioninvariant}. The process spends most of its time around the modes of the invariant distribution $f \propto \exp( - V)$. }
    \label{fig:diffusion}
\end{figure}

The p.d.f. $\pi_t(x) = \pi(x,t)$ of $X(t)$ defined by \eqref{eq:diffusion} satisfies the Fokker-Planck equation 
\begin{align*}
   \frac{\partial \pi}{\partial t} &= \text{div} \bigl(\nabla V \pi + \nabla \pi \bigr), \\
   \pi(x,0) &= \pi_0(x).
\end{align*}

Notice that for $f \propto \exp(-V),$ we have that $\nabla V = - \nabla \log f = \frac{- \nabla f}{f},$ which implies that 
\begin{equation*}
    \text{div} \bigl(\nabla V f + \nabla f \bigr) = 0.
\end{equation*}
Thus, $f$ is an invariant distribution for the Fokker-Planck equation, which means that if $X(0) \sim f$, then  $X(t) \sim f$ for all $t \ge 0.$ Moreover, the key property of Langevin equation that makes it appealing for sampling is that, under mild asymptotic growth conditions on the potential $V,$ it defines an ergodic process. This means that, for large $t$, $X(t)$ is approximately distributed like $f$, regardless of the initial distribution $\pi_0$ of $X(0)$. An illustration is given in Figure \ref{fig:histograms}.

\begin{figure}
    \centering
    \includegraphics[width=0.75\textwidth]{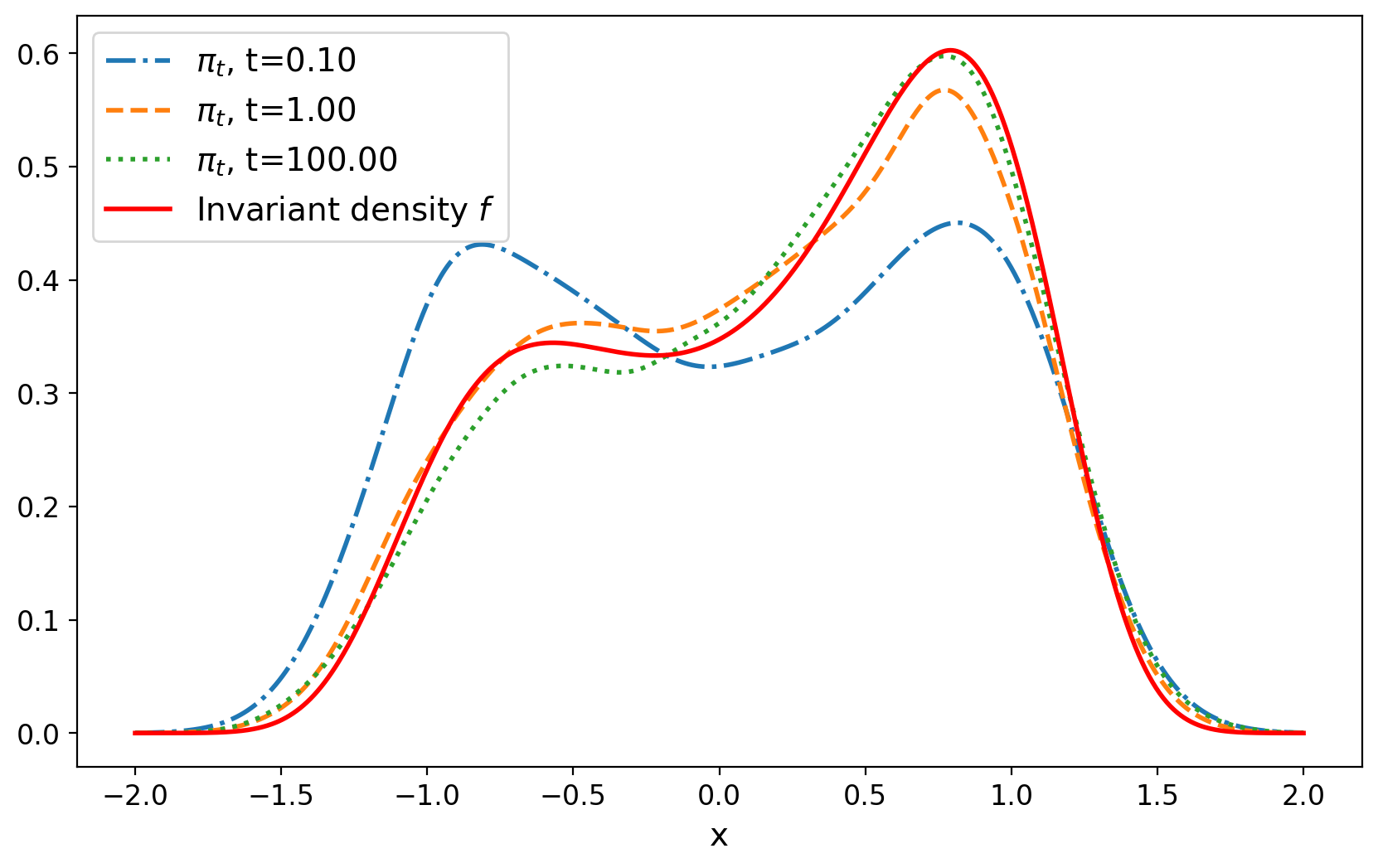}
   \vspace{-2ex} \caption{ Histograms of the distribution $\pi_t$ of $X(t)$ for Langevin equation with the potential $V(x)= x^4 - x^2-0.4x$ in Figure \ref{fig:diffusioninvariant}. For large $t,$ $\pi_t$ is close to the target density $f \propto \exp(-V).$}
    \label{fig:histograms}
\end{figure}

Although any given trajectory may spend large amounts of time hovering around some local minima of $V$,  if $T$ is large enough and $A \subset E = \R^d,$ the proportion of time that the trajectory spends in $A$ will approximately be $\int_A f(x) \, dx$. Such sample path ergodicity inspires a natural way to estimate expectations with respect to $f:$ 
\begin{equation}\label{eq:samplepathergodicityLangevin}
    I_{f}[h] := \int_E h(x) f(x) \, dx \approx \frac{1}{T} \int_0^T h \bigl(X(t)\bigr) \, dt,
\end{equation}
where $h$ is a given test function. The quality of the approximation of the spatial average via the temporal average along a trajectory
improves as the end-time $T$ increases. 

In practice, Langevin equation \eqref{eq:diffusion} can only be solved analytically in simple cases, e.g. for quadratic potentials $V$ that result in linear stochastic differential equations. Outside these cases, the Euler-Muruyama method is a standard approach to find numerical solutions.  Let  $0 < t_1 < \cdots < t_N = T$ with $t_n-t_{n-1} =\epsilon, \, n = 1, \ldots, N$. Applied to the Langevin equation \eqref{eq:diffusion}, Euler-Maruyama approximates $X(t_n)$ with $X^{(n)},$ where 
\begin{equation*}
X^{(n+1)} = X^{(n)} - \epsilon \nabla V \bigl(X^{(n)} \big) + \sqrt{2 \epsilon} \, \xi^{(n)}, \qquad \xi^{(n)} \stackrel{\text{i.i.d.}}{\sim} \Nc(0,I).
\end{equation*}
In other words, ULA agrees with an Euler-Maruyama discretization of Langevin equation. In analogy with \eqref{eq:samplepathergodicityLangevin}, one can then approximate expectations with respect to the target using the discretized process
\begin{equation*}
    I_{f}[h] := \int_E h(x) f(x) \, dx \approx \frac{1}{N} \sum_{n=1}^N h\bigl(X^{(n)}\bigr).
\end{equation*}
An important observation is that Euler-Maruyama does not preserve the ergodic property of Langevin equation, i.e. the invariant distribution for the chain $\{ X^{(n)} \}_{n \geq 1}$ will no longer be $f$. However, we can use the Markov kernel 
$$
\qlgv(x,\cdot) = \Nc\bigl(x+\epsilon \nabla \log f(x),2 \epsilon I\bigr) 
$$
implicitly defined by ULA as a proposal kernel within a Metropolis Hastings algorithm. The use of this proposal kernel along with its correction via the standard Metropolis Hastings acceptance probability 
$$a(x,z):=\min\biggl\{1,\frac{f(z)}{f(x)}\frac{\qlgv(z,x)}{\qlgv(x,z)} \biggr\}$$
defines MALA. In this light, the Metropolis Hastings accept/reject step in MALA can be interpreted as removing the bias introduced by Euler-Maruyama discretization of Langevin equation.

\section{Discussion and Bibliography}\label{sec:literatureLangevin}
The idea of using continuous-time Langevin  for sampling a given target distribution can be found in
\cite{besag1994comments} and \cite{grenander1994representations}, but it was only developed in \cite{roberts1996exponential} where the MALA algorithm was first proposed and analyzed. This chapter emphasized the optimization viewpoint on Langevin Monte Carlo considered for instance in 
\cite{trillos2023optimization,trillos2018bayesian,wibisono2018sampling,durmus2019analysis}. These works establish convergence guarantees for Langevin equation and discretizations thereof under convexity assumptions akin to those required for convergence of gradient descent optimization \cite{boyd2004convex}.
The choice of step-size for MALA as well as its optimal scaling is discussed in \cite{roberts2001optimal}. Recent works show improvement in the scaling with dimension under appropriate convexity assumptions \cite{li2021sqrt}. 

As for optimization algorithms, \emph{preconditioning} Langevin Monte Carlo algorithms can speed-up their convergence. Preconditioning improves the conditioning of the target and  may be implemented, for instance, by leveraging an ensemble of particles \cite{leimkuhler2018ensemble,garbuno2020interacting,chada2021iterative}. Relatedly, many algorithms leverage ideas from Riemannian geometry to adapt Langevin dynamics to the geometry induced by the target distribution \cite{girolami2011riemann,patterson2013stochastic,trillos2018bayesian}.

An important caveat of Langevin Monte Carlo algorithms is the need to evaluate gradients, which in many applications is computationally expensive. For this reason, there has been significant interest in studying Langevin Monte Carlo algorithms that use approximate gradients \cite{dalalyan2019user}.
In this direction, Stochastic Gradient Langevin Dynamics (SGLD) \cite{welling2011bayesian} provides an important extension to ULA and MALA. Consider the setting of Bayesian inference for a parameter $\theta$ under the assumption of i.i.d. data $\{y_i\}_{i=1}^K$ giving rise to a target posterior distribution of the form
$$
f(\theta | y_1,\ldots,y_K) \propto f(\theta) \prod_{i=1}^K f(y_i | \theta).
$$
Evaluating the log-likelihood involves computing a sum over $K$ terms, which is expensive for large $K$. This motivates subsampling $k \ll K$ data points uniformly at random at each iteration, which leads to the SGLD update rule
$$\theta^{(n+1)} = \theta^{(n)} + \epsilon_n \biggl\{\nabla \log\bigl(f(\theta^{(n)} ) \bigr) + \frac{K}{k} \sum_{i=1}^k \nabla \log \bigl(f(y_i^{(n)} | \theta^{(n)}) \bigr) \biggr\} + \sqrt{2\epsilon_n} \, \xi^{(n)}.
$$
Here $y_i^{(n)}$ denotes the $i$-th data point of the $k$ subsampled points during the $n$-th iteration, and $\xi^{(n)} \sim \Nc(0,I)$. Further, the step-size $\eps_n$ is chosen such that $\sum_{n=1}^{\infty} \epsilon_n = \infty$ and $\sum_{n=1}^{\infty} \epsilon_n^2  < \infty$, ensuring that we can sample from this chain \textit{without} the need for accepting and rejecting. Note that $\epsilon_n = 1/n$ satisfies the requirements on the step size. 
Another approach to design gradient-free Langevin Monte Carlo algorithms is to rely on an ensemble of interacting particles coupled by their empirical covariance to approximate gradients and precondition the dynamics \cite{leimkuhler2018ensemble,garbuno2020interacting,chada2021iterative}.

Section \ref{sec:LangevinSDE} provided a brief and informal introduction to Langevin equation. For rigorous but accessible introductions to stochastic differential equations, we refer to the textbooks \cite{oksendal2013stochastic,evans2012introduction,mao2007stochastic}, and for more details on Langevin equation and its use for sampling we refer to \cite{pavliotis2014stochastic}. Numerical solutions are covered in \cite{kloeden2013numerical,higham2021introduction}. From a theoretical perspective, several works \cite{lelievre2013optimal,rey2015irreversible} have shown that including a non-reversible linear drift term in Langevin equation speeds-up the convergence to the stationary distribution in continuous time; however,  the non-reversible diffusion can be more expensive to discretize \cite{trillos2018bayesian,trillos2023optimization}, making less obvious the algorithmic value of this idea. A general recipe for building preconditioned non-reversible Langevin-type diffusions can be found in \cite{ma2015complete}, while \cite{ma2017stochastic} discusses Langevin methods for sampling in hidden Markov models.

\chapter{Annealing Strategies}
\label{chap:annealing}

This chapter presents annealing strategies for optimization and sampling. 
In metallurgy, annealing is a technique involving heating and cooling of a material to alter its physical properties. In the same spirit, the annealing strategies in this chapter alter a given objective function or target distribution to facilitate optimization and sampling. For optimization, we will \emph{cool} the objective so that it is more peaked around its global optima; for sampling, we will \emph{heat} the target so that it is more flat and hence easier to sample from.

The first algorithm we will study is \emph{simulated annealing}, a probabilistic technique for global optimization.
Simulated annealing is particularly powerful when it is preferable to find an approximate global optimum rather than an exact local optimum. The main idea behind the algorithm is to sample a sequence of target distributions that gradually become more peaked around the global optimum. Sampling is typically performed via a Metropolis Hastings algorithm, and, consequently, simulated annealing can be applied  to both discrete and continuous optimization. Indeed, simulated annealing does not require computing derivatives of the objective and has been applied to solve challenging large scale discrete optimization problems.

While for optimization we alter our objective so that it becomes more peaked, for sampling we will flatten the target distribution to facilitate exploration. The main motivation to do so is that sampling multi-modal distributions where regions of high probability are separated by regions of low probability can be challenging for Markov chain Monte Carlo algorithms that rely on local moves. We will study two strategies to leverage auxiliary heated target densities to speed-up exploration: simulated tempering and parallel tempering. \emph{Simulated tempering} relies on an augmented target which includes an auxiliary variable that indexes temperatures of heated target distributions. On the other hand, \emph{parallel tempering} ---also known as replica exchange Markov chain Monte Carlo--- runs in parallel several chains at different temperatures and relies on a Metropolis Hastings mechanism to propose swaps between the states of the chains. Augmenting the target to speed-up mixing and running multiple chains that communicate with each other are two important ideas that underpin many sampling algorithms and convergence diagnostics.

This chapter is organized as follows. Section \ref{sec:annealingoptimization} introduces annealing strategies for optimization, leading to the simulated annealing algorithm. Section \ref{sec:annealingsampling}  describes strategies for sampling, with a focus on simulated tempering and parallel tempering. Section \ref{sec:annealingbibliography} closes with bibliographical remarks.

\section{Annealing Strategies for Optimization}\label{sec:annealingoptimization}
In this section, we first use sampling to find the mode of a given target distribution and then introduce the simulated annealing optimization algorithm to minimize a given objective function.
\subsection{Finding the Mode of a Distribution} \label{ssec:711}
Let $f$ be the distribution of interest. The mode of $f$ is the set $ \mathcal{M}=\{\xi: f(\xi) \ge f(x)\,\, \text{for all } x\in E \}$ of \emph{global} maxima of $f$.\footnote{We are mainly interested in cases where the mode comprises a single element $\mathcal{M} = \{\xi\}$, in which case we abuse our terminology and refer to the element $\xi$ rather than the set $\{\xi\}$ as the mode.} Naively, one can approximate the mode of $f$ by sampling $f$ and choosing the draw with higher density with highest density:
\FloatBarrier
\begin{algorithm}
  \begin{algorithmic}[1]
    \STATE{ Generate $X^{(n)}\sim f,$ \quad $1 \le n \le N.$ }
    \STATE{ Output $\text{argmax}_{1 \le n \le N} f(X^{(n)}).$
    }
  \end{algorithmic}
\end{algorithm}
\FloatBarrier
This approach is not very efficient. We would like to have a higher density of samples close to the mode, rather than samples distributed like $f.$
This motivates us to modify the distribution $f$ into a new distribution with the same mode as $f$ but higher density around it. One way to achieve this goal is to consider 
\begin{equation}
f_\beta(x) \propto  f(x)^\beta
\end{equation}
for large values of $\beta>0.$ The new distribution $f_\beta$ is more peaked than $f$ and has the same mode. We will tacitly assume without further notice that $\int f(x)^\beta \, dx <\infty,$ so that $f(x)^\beta $ can be normalized into a p.d.f. 
\begin{example}[Gaussian Target]
Consider a Gaussian density $f = \Nc(\mu, \sigma^2)$ with mode $\mu.$ In this case 
$$f_\beta(x) \propto \exp\biggl(- \frac{(x-\mu)^2}{2\sigma^2/\beta} \biggr),$$
which shows that $f_\beta(x)$ is a Gaussian density $\Nc(\mu, \sigma^2/\beta)$ with the same mode $\mu$ as $f.$ The larger $\beta$ is chosen, the smaller the variance, and the more concentrated $f_\beta$ is around its mode.   \hfill \qedhere
\end{example}

\begin{example}[Interpretation of $\beta$ as Inverse Temperature]
    Consider as in Chapter \ref{chap:diffusions} an unnormalized target density of the form $f(x) \propto \exp \bigl( -V(x)\bigr),$ where $V: \R^d \to \R$ is a given potential function. Then, $f_\beta(x) \propto \exp \bigl( - \beta V(x)\bigr).$ Following the ideas in Chapter \ref{chap:diffusions}, the Langevin equation
\begin{equation}\label{eq:diffusionannealing}
dX(t) = - \nabla V\big(X(t)\big)\, dt + \sqrt{2\beta^{-1}}\,dW(t), \ \  0 \leq t < \infty,
\end{equation}
has $f_\beta$ as invariant distribution under mild assumptions on $V$. The random effect on the trajectories stemming from the white noise term $dW$ is more pronounced when $\beta^{-1}$ is large. On the other hand,
for large $\beta$ the term  $\sqrt{2\beta^{-1}}\,dW(t)$ is small and  the Langevin dynamics resembles the deterministic dynamics of the gradient system 
$\dot{x} = - \nabla V(x).$ In physical applications, $\beta$ is interpreted as the \emph{inverse temperature} of the system due to the connection between temperature and kinetic energy of the system by which particles move faster when the temperature of the system is higher. 
    \hfill \qedhere
\end{example}

We can use any of the sampling algorithms considered in previous chapters to sample $f_\beta.$ However, sampling $f_\beta$ typically becomes harder for larger $\beta.$ Consider for instance the 
Random Walk Metropolis Hastings in Algorithm \ref{algo:RWMH} with a proposal distribution $g$ symmetric around $0$. The probability of accepting a move from $X^{(n)}$ to $Z^*$ would be 
\begin{equation}
\min \biggl\{ 1, \frac{f_\beta(Z^*)}{f_\beta(X^{(n)})}  \biggr\} = 
\min \biggl\{ 1,  \biggl( \frac{f(Z^*)}{f(X^{(n) })} \biggr)^\beta \biggr\} ,
\end{equation}
which does not depend on the (usually unknown) normalization constant of $f_\beta.$ It is however difficult to sample from $f_\beta$ for large values of $\beta.$ For $\beta \to \infty$ the probability of accepting a newly proposed $Z^*$ becomes $1$ if $f(Z^*)> f(X^{(n)}) $ and $0$ otherwise. For this reason, a typical scenario is that $X^{(n)}$ converges to \emph{local} extrema of $f$, not necessarily a mode (global extrema). 
\begin{example}[Discrete Target and Convergence to Local Maxima]\label{Ex:RWHMannealing}
Consider the p.m.f. $f(x)$  supported  on $\{1, 2, \ldots, 5\}$ defined by
\begin{align*}
f(x) =
\begin{cases}
0.4 \quad \quad &x=2, \\
0.3 \quad \quad &x=4, \\
0.1 \quad \quad &x=1,3,5.
\end{cases}
\end{align*}
and shown in Figure \ref{fig:annealingpmf}.
Clearly the mode is $x=2.$
Assume we sample $f_\beta(x) \propto f(x)^\beta$ with a random walk proposal defined as follows:
$$Z^* = X^{(n)} + \xi^*,$$
where 
\begin{align*}
\Prob( \xi^* = \pm 1) &= 0.5 \quad \text{if}\,\, X^{(n)} \in \{2,3,4\}, \\
\Prob( \xi^* =  - 1) &= 1 \quad ~~~\text{if} \,\,X^{(n)} = 5, \\
\Prob( \xi^* =   +  1) &= 1 \quad ~~~ \text{if} \,\, X^{(n)} = 1.
\end{align*}
Note that for $\beta \to \infty$ the probability of accepting a move from $4$ to $3$ converges to $0$ since $f(4) >f(3).$ As the Markov chain can only move from $4$ to $2$ via $3,$ it cannot escape the local mode at $x=4$ for $\beta \to \infty.$  \hfill \qedhere
\begin{figure}\label{fig:annealingpmf}
    \centering
    \includegraphics[width=0.75\linewidth]{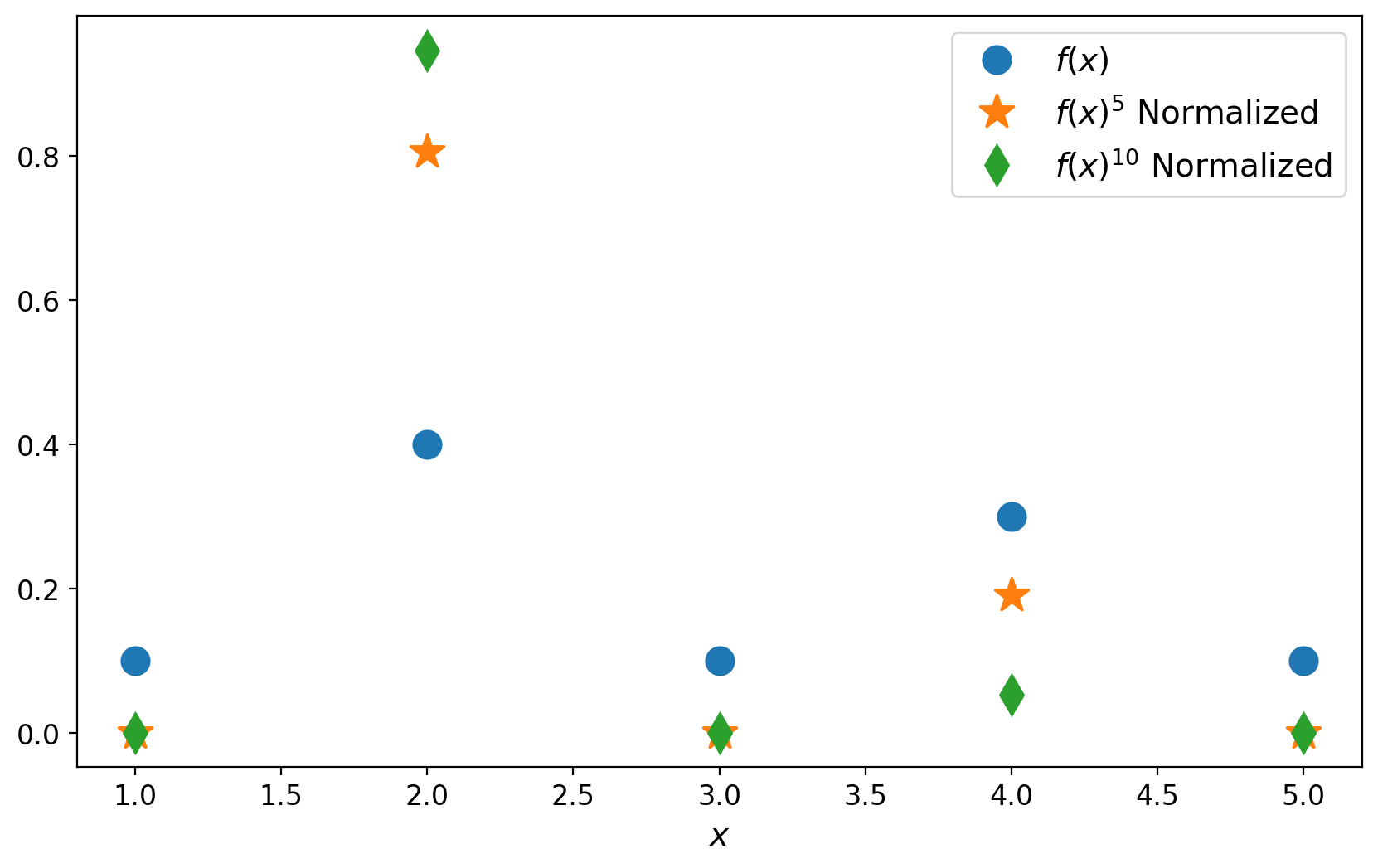}
    \caption{An illustration of the target p.m.f. in Example \ref{Ex:RWHMannealing} along with $f_\beta(x) \propto f(x)^\beta$ with $\beta=5,10.$}
\end{figure}
\end{example}

The above discussion suggests that for large $\beta$ the distribution $f_\beta$ is concentrated around the mode but is hard to sample from. This motivates considering a time-changing sequence of target distributions $f_{\beta_n}$ with an increasing sequence $\{\beta_n\}_{n=0}^\infty$ of inverse temperatures. This is the key idea behind the simulated annealing algorithm we study next.

\subsection{Minimizing an Objective Function}\label{ssec:712}
We now turn to the problem of finding the global minimum of an objective function $V: E\to \R.$ Equivalently,\footnote{This equivalence makes the implicit assumption that $\int \exp\bigl(-V(x) \bigr) \, dx <\infty,$ so that $\exp\bigl(-V(x) \bigr) $ can be normalized into a p.d.f.} we seek to find the mode of the distribution 
$$f(x) \propto \exp \bigl( -V(x) \bigr).$$
As in Subsection \ref{ssec:711}, we can raise $f$ to different powers to obtain 
$$f_{\beta_n} (x) \propto \exp \bigl( - \beta_n V(x) \bigr),$$
where $\{\beta_n\}_{n=1}^N$ is a given sequence of inverse temperatures. The simulated annealing algorithm proceeds as follows:

  \FloatBarrier
\begin{algorithm}
\caption{Simulated Annealing}
\begin{algorithmic}[1]
\STATEx {\bf Input}: Function to minimize $V$, proposal Markov kernel $q(x,z),$ initialization $X^{(0)},$ sample size $N,$ sequence of inverse temperatures $\{\beta_n \}_{n=1}^{N}.$
\STATE For $n= 0, 1, \ldots, N-1$ do:
\STATE Draw $Z^* \sim q( X^{(n)} , \cdot).$
\State  Update
$$X^{(n+1)}=\begin{cases}
    Z^* \quad \quad \text{ with probability }   a :=  \min \biggl\{ 1, \frac{\exp \left( - \beta_{n+1} V(Z^*) \right)}{\exp \left(-\beta_{n+1} V(X^{(n)})\right)}  \frac{q(Z^*,X^{(n)})  }{q(X^{(n)}, Z^*) }  \biggr\} , \\
    X^{(n)} \quad \,   \text{with probability}  \, 1-a.
                    \end{cases} $$
\STATE {\bf Output}: Sample $\{X^{(n)} \}_{n=1}^N$ that can be used to minimize $V$ by  $\text{argmin}_{1 \le n \le N} V(X^{(n)}).$ 
\end{algorithmic}
\end{algorithm}
  \FloatBarrier
  First and foremost, simulated annealing is a highly versatile optimization algorithm that can be applied for discrete and continuous optimization without requiring first or second-order derivatives of the objective.
Notice that the accept/reject mechanism agrees with that in the Metropolis Hastings framework in Algorithm \ref{algo:MH}, taking $f_{\beta_n}$ to be the target distribution for the $n$-th draw.
As in the Metropolis Hastings framework, there is freedom in the choice of proposal Markov kernel. Since a key feature of simulated annealing is its potential for derivative-free implementation, random walk proposal kernels that do not require gradient information on $V$ have been traditionally favored over the Langevin proposal kernels considered in Chapter \ref{chap:diffusions}. 
If a RWMH with symmetric proposal is used, i.e. $Z^* = X^{(n)} + \xi^*$ with the distribution of  $\xi^*$ being symmetric around the origin, then the acceptance probability simplifies to 
$$a(X^{(n)}, Z^*) = \min \biggl\{ 1, \exp\Bigl(-\beta_n\bigl(V(Z^*) - V(X^{(n)}) \bigr) \Bigr)\biggr\},$$
giving the standard implementation of simulated annealing. 

\paragraph{Choice of Inverse Temperatures}
The convergence analysis of simulated annealing is challenging. The algorithm usually converges to a local minimum, but whether it converges to a global one depends on the choice of inverse temperatures. Two types of tempering have been widely studied:
\begin{itemize}
\item \textbf{Logarithmic tempering:} For given $\beta_0>0$ set $\beta_n := \frac{\log(1+n)}{\beta_0}, \, n \ge 1.$
The inverse temperatures increase slowly enough that global convergence can be established for certain special cases. Simulated annealing with logarithmic tempering behaves similarly to an exhaustive search, and so while it satisfies nice theoretical guarantees, it is typically slow to converge and has little practical relevance.
\item \textbf{Geometric tempering:} $\beta_n = \alpha^n \beta_0$ for some $\beta_0>0$ and $\alpha>1.$
\end{itemize}

\begin{mybox}[colback=white]{Pros and Cons}
Simulated annealing can be an effective discrete or continuous optimization algorithm when convexity or smoothness assumptions invoked by first and second-order deterministic optimization algorithms are not satisfied.
Importantly, simulated annealing does not require derivatives and can be applied in high dimension. The algorithm converges under mild assumptions when logarithmic tempering is used, but in practice it is usually implemented with geometric tempering, for which few theoretical guarantees are available. Additionally, the choice of the temperatures can have a significant impact on the behavior of the algorithm. 
\end{mybox}

\section{Annealing Strategies for Sampling}\label{sec:annealingsampling}
Sampling multi-modal target distributions that have regions of high probability separated by regions of low probability can be challenging. Annealing strategies seek to improve sampling by introducing auxiliary \emph{tempered} distributions that bridge regions of low probability. 

\subsection{Simulated Tempering}
Let $\beta_1 = 1> \beta_2 > \cdots > \beta_K$ be $K$ \emph{decreasing} inverse temperatures and let $f_{\beta_k}(x) \propto  f(x)^{\beta_k}.$ Here $f = f_{\beta_1}$ is the target distribution of interest and, for $1 \le k \le K,$ the density $f_{\beta_k}$ is a \emph{flattened} version of $f$, which is easier to sample. Simulated tempering considers an augmented target distribution
$$f^{\text{\tiny ST}}(x,k) \propto c_k f_{\beta_k}(x)$$
over the variable of interest $x$ and the discrete auxiliary variable $k$ which indexes the inverse temperatures. 
Samples from the distribution of interest $f$ can then be obtained by keeping the $x$-variable of the samples $(X^{(n)}, k^{(n)})$ with $k^{(n)} = 1.$

A simple way to design a Markov chain Monte Carlo algorithm for the extended distribution $f^{\text{\tiny ST}}$ is by alternating between updates of the $x$ and $k$ components. Moves for the latter can be proposed using the simple kernel 
\begin{align*}
r(k, k+1) &= r(k,k-1) = \frac12, \quad \quad k \in \{2,\ldots, K-1\},\\
r(K,K-1) &= r(1,2) = 1.
\end{align*}
The simulated tempering algorithm then proceeds as follows:

  \FloatBarrier
\begin{algorithm}
\caption{Simulated Tempering}
\begin{algorithmic}[1]
\STATEx {\bf Input}: Inverse temperatures $\beta_1 = 1> \cdots > \beta_K,$ proposal Markov kernel $q(x,z),$ proposal Markov kernel $r(k,\ell),$ initialization $(X^{(0)}, q^{(0)}),$ sample size $N.$
\STATE For $n= 0, 1, \ldots, N-1$ do:
\STATE Draw $\ell^* \sim r( k^{(n)} , \cdot).$
\State  Update
$$k^{(n+1)}=\begin{cases}
    \ell^* \quad \quad \text{ with probability } \,  a_{k,\ell}:=
      \min \biggl\{ 1, \frac{f^{\text{\tiny ST}}(X^{(n)},\ell^*) }{f^{\text{\tiny ST}}(X^{(n)},k^{(n)})}  \frac{r(\ell^*, k^{(n)})}{r(k^{(n)}, \ell^*)}    \biggr\} , \\
    k^{(n)} \quad \, \text{ with probability } \,  1 - a_{k,\ell}.
                    \end{cases} $$
\STATE Draw $Z^* \sim q( X^{(n)} , \cdot).$
\State  Update
$$X^{(n+1)}=\begin{cases}
    Z^*  \quad \quad \text{ with probability } \,  a_{X,Z}:=
      \min \biggl\{ 1, \frac{f^{\text{\tiny ST}}(Z^*,k^{(n+1)})}{f^{\text{\tiny ST}}(X^{(n)},k^{(n+1)})}  \frac{q(Z^*, X^{(n)})}{q(X^{(n)}, Z^*)}    \biggr\} , \\
    X^{(n)} \quad \,  \text{ with probability } \,  1 - a_{X,Z}.\end{cases} $$
\STATE {\bf Output}: Sample $\{ (X^{(n)}, k^{(n)})  \}_{n=1}^N.$ 
\end{algorithmic}
\end{algorithm}
  \FloatBarrier

The idea is that for low inverse temperature the chain will easily move across the state space,  and by the time the augmented chain returns to the temperature of interest $\beta_1 = 1$ the $x$ variable will have moved to a different region of the state space, accelerating the mixing. 

\begin{mybox}[colback=white]{Pros and Cons}
Simulated tempering faces two implementation issues. First, how should we 
choose the constants $c_k?$ It is generally recommended that the temperatures are chosen so that the extended chain spends roughly the same amount of time at each temperature.
If all the $f_{\beta_k}$ can be normalized, then the constants $c_k$ should then be chosen equal. However, in practice these normalizing constants are typically unknown and need to be estimated using an alternative adaptive algorithm (e.g. Wang-Landau). The second implementation issue is how to choose the inverse temperatures. There is a trade-off between having sufficient acceptance probability for moving from one temperature to another, and not having too many temperature levels. There is a vast literature on this issue, notably in physics. 
\end{mybox}

\subsection{Parallel Tempering}
Parallel tempering also uses a decreasing sequence of inverse temperatures $\beta_k$. However, it runs $K$ chains in parallel such that the $k$-th chain has stationary distribution $f_{\beta_k}.$ 
The chains  are allowed to communicate by a swapping mechanism, so that chains with larger inverse temperatures benefit from the enhanced mixing of the chains with lower inverse temperature. Precisely, after an updating each of the $K$ chains, we pick a pair of chains and propose to swap their states. This proposed swapping is followed by an accept/reject step to retain the correct invariant distribution. As in simulated tempering, we set $\beta_1 = 1$ so that $f_{\beta_1} = f.$ We denote by $X^{(n)}_k$ the $n$-th sample from the $k$-th chain.

  \FloatBarrier
\begin{algorithm}
\caption{Parallel Tempering}
\begin{algorithmic}[1]
\STATEx {\bf Input}: Inverse temperatures $\beta_1 = 1> \cdots > \beta_K,$ proposal Markov kernels $\{q_k(x,z)\}_{k=1}^K,$  initializations $\{X_k^{(0)}\}_{k=1}^K,$ sample size $N.$
\STATE For $n= 0, 1, \ldots, N-1$ do:
\STATE For $k = 1, \ldots, K$: generate $\tilde{X}_k^{(n+1)}$ by doing a Metropolis Hastings step (including accept/reject) with current state $X_k^{(n)},$  proposal kernel $q_k,$ and target $f_{\beta_k}.$ 
\STATE Choose $\ell, m \in \{1, \ldots, K\}$ with $\ell \neq m$ uniformly at random. 
\STATE For $k \notin \{ \ell, m\}$ set $X_k^{(n+1)} = \tilde{X}_k^{(n+1)}.$
\State  Attempt a swap of states between the $\ell$-th and the $m$-th chains:
$$ \bigl(X_\ell^{(n+1)} , X_m^{(n+1)}\bigr)=
\begin{cases}
    \bigl(\tilde{X}_m^{(n+1)} , \tilde{X}_\ell^{(n+1)}\bigr) \quad \quad   \text{ with probability }\,  a_{\ell,m},   \\
    \bigl(\tilde{X}_\ell^{(n+1)} , \tilde{X}_m^{(n+1)}\bigr) \quad \quad  \text{ with probability }\,  1 - a_{\ell,m},
                    \end{cases} $$
                    where
                    $$a_{\ell,m}
                    := \min \Biggl\{ 1, \frac{f_{\beta_\ell}\bigl(\tilde{X}_m^{(n+1)}\bigr) f_{\beta_m}\bigl(\tilde{X}_\ell^{(n+1)}\bigr) }{f_{\beta_m}\bigl(\tilde{X}_m^{(n+1)}\bigr) f_{\beta_\ell}\bigl(\tilde{X}_\ell^{(n+1)}\bigr)}   \Biggr\}.$$
\STATE {\bf Output}: Sample $=\{ X_1^{(n)}\}_{n=1}^N.$ 
\end{algorithmic}
\end{algorithm}
  \FloatBarrier

Parallel tempering requires proposal Markov kernels $\{q_k\}_{k=1}^K$ to update each chain. A popular choice is to use the Langevin proposals described in Chapter \ref{chap:diffusions}. Precisely, for the $k$-th chain one can define a proposal Markov kernel $q_k$ by proposing moves $X_k^{(n)} \mapsto Z_k^*$ according to
\begin{equation*}
Z_k^* = X_k^{(n)} + \epsilon_k  \nabla \log f\big(X_k^{(n)}\big) + \sqrt{2\epsilon_k \beta_k^{-1}}\xi_k^*, 
\end{equation*}
where $\xi_k^* \sim \Nc(0,I)$ and $\epsilon_k>0$ is a given step size. Notice that large $\beta_k^{-1}$ results in large random moves that can help explore the state space. Such moves may still be accepted with high probability, since they are targeting the flattened distribution $f_{\beta_k}.$

\begin{mybox}[colback=white]{Pros and Cons}
In contrast to simulated tempering, parallel tempering does not require estimating the normalizing constants $c_k$. While parallel tempering involves running $K$ chains in parallel, doing so is not a significant obstacle with modern parallel computing. The question of how to choose the number $K$ of chains and the temperatures $\beta_k$ has been widely studied, and there is some evidence that geometric scaling of the temperatures may be optimal under mild assumptions. 
\end{mybox}

\section{Discussion and Bibliography}\label{sec:annealingbibliography}
The simulated annealing algorithm was introduced in \cite{kirkpatrick1983optimization}. Geman and Geman \cite{geman1987stochastic} conjectured and Gidas \cite{gidas1985nonstationary} rigorously showed convergence with logarithmic tempering in finite state space. Later, Hajek \cite{hajek1988cooling} showed that choosing the inverse temperatures via logarithmic tempering is a sufficient and necessary condition for global convergence. Further early results were reviewed and developed in  \cite{van1987simulated,szu1987fast,aarts1988simulated,aarts1985statistical}. For more recent expositions, we refer to \cite{henderson2003theory,lou2016massively}.

Simulated and parallel tempering were introduced in \cite{swendsen1986replica,marinari1992simulated}, and further developed in \cite{geyer1991computing,hukushima1996exchange}. For a review on parallel tempering, we refer to \cite{earl2005parallel}. The question of how to choose the temperatures has been widely studied, and recent theory suggests that a geometric sequence of temperature ratios is optimal under mild assumptions \cite{dupuis2022analysis}. Another important question is how often one should propose swaps between chains. 
In a continuous time setting, recent theory indicates the potential advantage of frequent swaps, formalized via the infinite swapping limit for parallel tempering \cite{dupuis2012importance}.

Augmenting the target to speed-up mixing and running multiple chains that communicate with each other are two important ideas that underpin many sampling algorithms and convergence diagnostics (see e.g. Hamiltonian Monte Carlo in Chapter \ref{chap:HMC}, Box-Muller sampling method in Chapter \ref{chap:MCintegration}, slice sampler in Chapter \ref{chap:gibbs}, Gelman and Rubin's diagnostic in Chapter \ref{chap:MCMC}). Likewise, introducing a sequence of auxiliary target densities to gradually reach a desired target is a powerful idea that underpins many sequential Monte Carlo algorithms (see particle filters in Chapter \ref{chap:particlefilters}). Algorithms that rely on tempered transitions for sampling multi-modal distributions are discussed in \cite{neal1996sampling}.

\chapter{Hamiltonian Monte Carlo}
\label{chap:HMC}

The key idea of Hamiltonian Monte Carlo (HMC) is to extend the state space and exploit some properties of Hamiltonian dynamics to avoid the local behavior of RWMH and Langevin Monte Carlo. We write our target as
$$f(q) \propto \exp \bigl(-V(q)\bigr), \qquad q \in \R^d,$$
and introduce a distribution $f^H$ in extended space $\R^{2d}$ 
\begin{align}\label{eq:deffH}
\begin{split}
f^H(q,p) &\coloneqq \frac{1}{Z}\exp\bigl(-V(q)\bigr)\exp\bigr(-K(p)\bigr)\\
&= \frac{1}{Z}\exp\bigl(-H(q,p)\bigr),
\end{split}
\end{align}
where $K(p)$ satisfies $\int_{\R^d}\exp\bigl(-K(p)\bigr)\,dp < \infty$ and $$H(q,p) := V(q)+K(p)$$ is called the Hamiltonian. We call the measure $\mu^H$ with Lebesgue density $f^H$ the canonical measure of the Hamiltonian $H.$
All the ingredients just introduced  have both a physical and a statistical interpretation.
\begin{itemize}
	\item $q \in \R^d$ is interpreted as position $\equiv$ variables of interest.
	\vspace{-0.2cm}
	\item $p \in \R^d$ is interpreted as momentum $\equiv$ artificial variables that are not part of our statistical problem.
	\vspace{-0.2cm}
	\item $V$ is interpreted as potential energy, and is determined by the unextended target $f.$
	\vspace{-0.2cm}
	\item $K$ is interpreted as kinetic energy, and in HMC is part of the algorithmic design, a standard choice being $K(p) = \frac{1}{2}p^T M^{-1}p$ for some symmetric positive definite $M \in \R^{d\times d}.$
	\item $\mu^H$ is interpreted as a Gibbs measure that governs the distribution of $(q,p)$ over an ensemble of many copies of the system in contact with a heat bath at a constant temperature.
\end{itemize}
 Clearly $f$ is the marginal of $f^H$ over the position variables, and therefore we can obtain samples from $f$  by sampling $f^H$ and throwing away the momentum variables. At first sight, it is not clear how sampling $f^H$ rather than directly sampling $f$ can help. The main idea is that extending the state space allows us to exploit several properties of Hamiltonian dynamics, proposing bolder moves in state space while not hindering the acceptance rate and the mixing of the chain.

This chapter is organized as follows. Hamiltonian dynamics are reviewed in Section \ref{sec:hamiltondynamics}, emphasizing three properties that are important to the design of sampling algorithms. The HMC algorithm and an idealized version of it are introduced in Section \ref{sec:samplingalgorithm}. We focus on showing that these algorithms are exact, meaning that they leave the target distribution invariant. Section \ref{sec:biblioHMC} contains bibliographical remarks, including references that address the geometric ergodicity of HMC. 

\section{Hamiltonian Dynamics}\label{sec:hamiltondynamics}
Given a Hamiltonian $H(q,p),$  the corresponding Hamilton equations are
\begin{equation}\label{eq:hamilton}
\begin{alignedat}{3}
&\frac{dq_i}{dt} = \frac{\partial H}{\partial p_i},  &&\qquad \quad \quad    1 \leq i \leq d,\\
&\frac{dp_i}{dt} = -\frac{\partial H}{\partial q_i}, &&\qquad \quad \quad   1 \leq i \leq d.
\end{alignedat}
\end{equation}
  We vectorize equations \eqref{eq:hamilton} as follows
    \begin{equation*}
    \frac{dx}{dt} 
   \  =J^{-1}\nabla H(x), \quad \quad J:=
    \begin{bmatrix}
      0 & -I_{d\times d}\\
      I_{d\times d} & 0
    \end{bmatrix}  \in \R^{2d \times 2d},
    \quad x:=  \begin{bmatrix} q \\p 
   \end{bmatrix} \in \R^{2d},
  \end{equation*}
  where 
  \begin{equation*}
  \nabla H = \left[\frac{\partial H}{\partial q_1}, \ldots, \frac{\partial H}{\partial q_d},\frac{\partial H}{\partial p_1},\ldots,\frac{\partial H}{\partial p_d}\right]^T, \quad \quad  \frac{d}{dt} \begin{bmatrix} q \\p 
   \end{bmatrix} =\left[\frac{dq_1}{dt},\ldots,\frac{dq_d}{dt},\frac{dp_1}{dt},\ldots,\frac{dp_d}{dt}\right]^T.
  \end{equation*}
   The matrix $J$ plays a central role in the Hamiltonian formalism. Note that $J$ is skew-symmetric, meaning that $J^T=-J$. Later we will use the following properties of invertible skew-symmetric matrices:
  \begin{enumerate}
    \item $J^{-1}$ is also skew-symmetric, since $J^{-1}=-(J^T)^{-1}=-(J^{-1})^T$.
    \item The corresponding quadratic form is zero, since $x^TJx=x^TJ^Tx=-x^TJx$.
  \end{enumerate}
Hamilton equations \eqref{eq:hamilton} define a flow in phase space: For $t \ge 0,$ we denote by $\phi_t: \R^{2d} \to \R^{2d}$ the $t$-flow map, so that $\phi_t(x) \in \R^{2d}$ is the solution to Hamilton equations at time $t$ with initial condition $x \in \R^{2d}.$ It follows directly from the definition that $\phi_0(x) = x,$ i.e. $\phi_0$ is the identity map. Furthermore, for $t,s \ge 0$ we have that $\phi_{s+t}(x) = \phi_s \circ \phi_t(x),$  since the solution at time $t+s$ with initial condition $x$ can be found by first solving up to time $t$ to obtain $x(t):= \phi_t(x),$ and then solving Hamilton equations for an additional time $s$ with initial condition $x(t)$ to obtain $x(t+s):= \phi_s \bigl( x(t) \bigr) = \phi_s \circ \phi_t(x).$  The concept of flow map of a dynamical system is helpful in exploring the qualitative behavior of numerical methods to solve them, since one can think of numerical methods as approximations of the flow map. In particular, we will see that in order to ensure that Hamiltonian Monte Carlo algorithms leave the extended target invariant, it is essential to use numerical solvers that preserve key properties of the flow map.

\begin{example}[Harmonic Oscillator]
	Consider the case $d = 1$, $V(q) = \frac{q^2}{2}$, and $K(p) = \frac{p^2}{2}$. Then,  the  Hamilton equations are
	\begin{align*}
	&\frac{dq}{dt} = p,\\
	&\frac{dp}{dt} = -q.
	\end{align*}
	The general solutions have the form
	\begin{align*}
	&q(t) = r\cos(a+t),\\
	&p(t) = -r\sin(a+t),
	\end{align*}
	for some constants $a$ and $r$, which shows that $\phi_t$ is a clockwise rotation in the $q-p$ plane. \hfill \qedhere
	\end{example}

Hamiltonian dynamics dovetail nicely with MCMC because they satisfy three important properties: conservation of the Hamiltonian, symplecticity, and reversibility. We next explain these properties and how they are relevant to the construction of MCMC algorithms.
\subsection{Conservation of Hamiltonian}
  By the chain rule and the fact that $J$ is skew-symmetric, we have that
    \begin{align*}
    \frac{dH}{dt} = \nabla H^T\frac{dx}{dt} 
   = \nabla H^TJ^{-1}\nabla H=0, \quad \quad x = \begin{bmatrix} q \\p 
   \end{bmatrix}.
    \end{align*}
	Recall the definition of $f^H$ in \eqref{eq:deffH}. The value of $f^H$ depends on $x = (q,p)$ only through $H$. Therefore, conservation of the Hamiltonian implies that if we start at any $x_0=(q_0,p_0)$ and move with Hamiltonian dynamics, we move along level sets of $f^H.$ In other words, we have $H(x_0) = H\bigl(\phi_s(x_0)\bigr)$ for all $s>0.$
	\begin{figure}[h]
		\centering
		\includegraphics[width=0.5\textwidth]{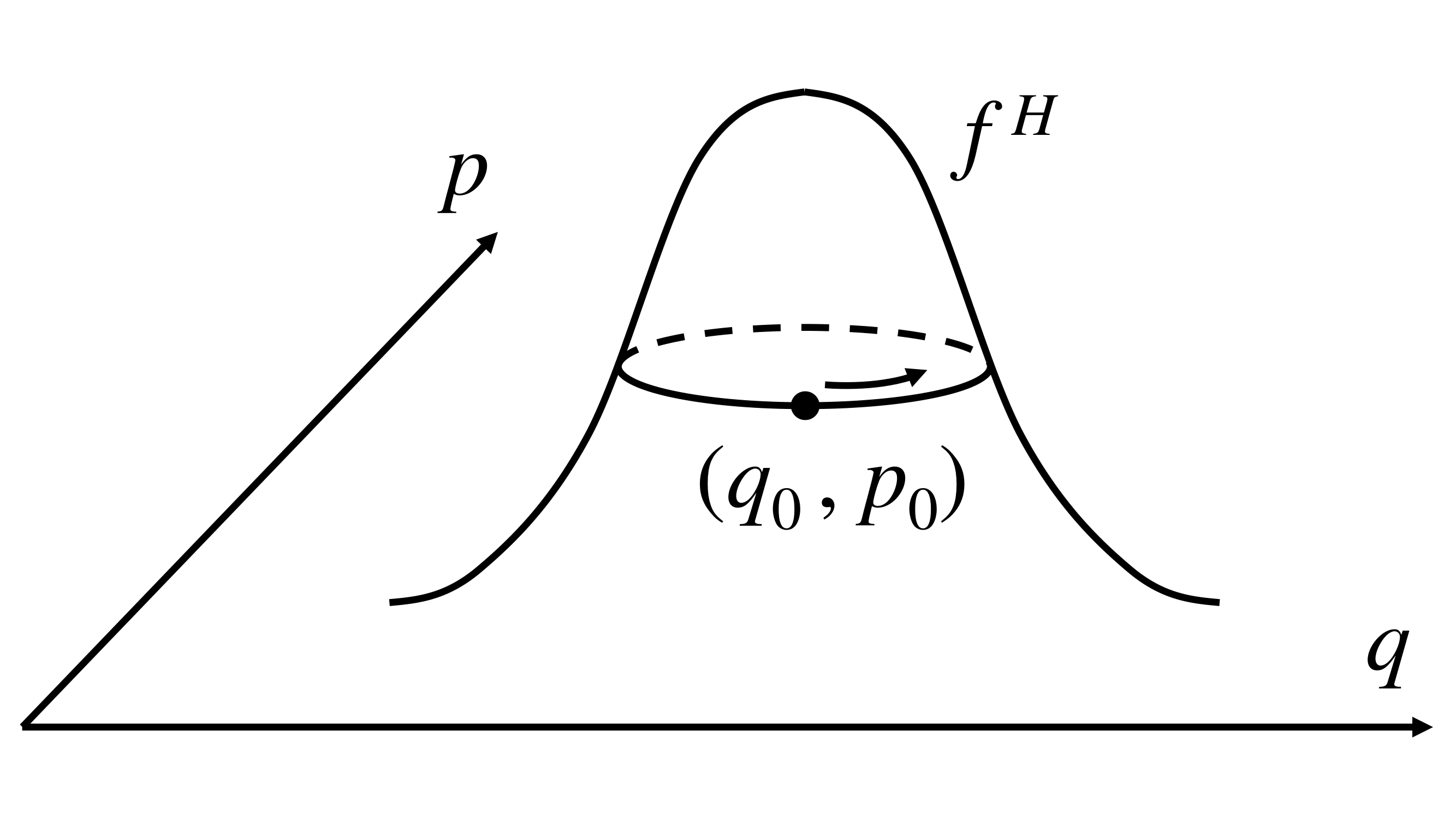}
		\caption{Movement along a level set of $f^H$ starting from $(q_0,p_0)$.}
		\label{fig:hmc_fig}
	\end{figure}	
	This property, together with the preservation of the canonical measure that we describe next, will allow us to propose moves that are far but likely to be accepted.
	\subsection{Symplecticity and Preservation of the Canonical Measure}
  The flow map $x(s) \xmapsto {\phi_t} x(t+s)$ preserves volume. This is a direct consequence of the symplecticity of Hamiltonian dynamics. We will provide the mathematical definition of symplecticity and prove the symplecticity of Hamiltonian dynamics using a theorem by Poincaré.
  \begin{definition}[Symplectic Maps]
  A linear map $L:\mathbb{R}^{2d}\to\mathbb{R}^{2d}$ is symplectic if $L^TJL=J$.  A continuously differentiable map $\psi:U \subset \R^{2d} \to\mathbb{R}^{2d}$ is symplectic if its Jacobian is symplectic everywhere. That is, if 
  \begin{equation*}
    \psi'(x)^TJ\psi'(x)=J, \quad \quad \forall x \in \R^{2d}. \tag*{\qedhere}
  \end{equation*}
  \end{definition}
  \begin{remark}
  In $\R^2$ symplecticity of linear maps is equivalent to area preservation of parallelograms; in $\R^{2d}$ it is equivalent to conservation of the oriented areas of the projections of parallelograms into the coordinate planes $(q_i,p_i)$. 
  \end{remark}
  \begin{theorem}[Hamiltonian Flows are Symplectic]\label{thm:poincare}
  Let $H$ be a $\mathcal{C}^2$ Hamiltonian. Then the corresponding flow map $\phi_t$ is symplectic for all $t$.
  \end{theorem}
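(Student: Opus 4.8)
The plan is to fix an arbitrary initial point $x \in \R^{2d}$, consider the Jacobian of the flow map $M(t) := \phi_t'(x) \in \R^{2d \times 2d}$, and show that the matrix $M(t)^T J M(t)$ is \emph{constant} in $t$ and equal to $J$. Since by definition $\phi_0$ is the identity, we have $M(0) = I$ and hence $M(0)^T J M(0) = J$; so it suffices to prove that $t \mapsto M(t)^T J M(t)$ has vanishing derivative. Because $x$ is arbitrary, establishing this gives symplecticity of $\phi_t$ at every point and for every $t$, which is exactly the definition we must verify.

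First I would derive the variational equation governing $M(t)$. Starting from the flow equation $\dot x(t) = J^{-1}\nabla H\bigl(x(t)\bigr)$ with $x(t) = \phi_t(x)$, I differentiate with respect to the initial condition $x$. Interchanging the $t$- and $x$-derivatives and applying the chain rule yields
\begin{equation*}
\dot M(t) = J^{-1}\, \nabla^2 H\bigl(\phi_t(x)\bigr)\, M(t),
\end{equation*}
where $\nabla^2 H$ denotes the Hessian of $H$, which is \emph{symmetric} since $H$ is $\mathcal{C}^2$. Writing $S(t) := \nabla^2 H\bigl(\phi_t(x)\bigr)$ for brevity, this is a linear matrix ODE with $S(t)^T = S(t)$.

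Next I would compute the derivative of $A(t) := M(t)^T J M(t)$ directly. Using the product rule and substituting $\dot M = J^{-1} S M$,
\begin{equation*}
\dot A = \dot M^T J M + M^T J \dot M = M^T S^T (J^{-1})^T J M + M^T J J^{-1} S M.
\end{equation*}
Here is where the two properties of $J$ recorded just above the theorem enter: since $J^{-1}$ is skew-symmetric, $(J^{-1})^T = -J^{-1}$, so $(J^{-1})^T J = -I$; and $J J^{-1} = I$. Combined with $S^T = S$, the two terms become $-M^T S M$ and $+M^T S M$, which cancel, giving $\dot A \equiv 0$. Therefore $A(t) = A(0) = J$ for all $t$, i.e. $\phi_t'(x)^T J \phi_t'(x) = J$, proving that $\phi_t$ is symplectic.

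The main obstacle is rigorously justifying the variational equation: one must know that the flow depends differentiably on its initial condition and that the mixed partial derivatives may be exchanged. This is precisely where the $\mathcal{C}^2$ hypothesis on $H$ is essential, as it guarantees that the vector field $J^{-1}\nabla H$ is $\mathcal{C}^1$ and hence, by the smooth-dependence-on-initial-data theorem (the ``Poincaré'' ingredient alluded to before the statement), that $\phi_t$ is $\mathcal{C}^1$ in $x$ with Jacobian solving the linearized equation. Once this regularity is in hand the remaining algebra is the short cancellation above, so I would state the smooth-dependence fact explicitly and then present the derivative computation as the heart of the argument.
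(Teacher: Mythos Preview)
Your proof is correct and follows essentially the same approach as the paper: both fix an initial point, derive the variational equation for the Jacobian $M(t)=\phi_t'(x)$ from the $\mathcal{C}^2$ assumption, and then show $\frac{d}{dt}\bigl(M^T J M\bigr)=0$ via the cancellation coming from skew-symmetry of $J^{-1}$ and symmetry of the Hessian. The only cosmetic difference is notation; the paper writes $\frac{\partial}{\partial x}\phi(x,t)$ where you write $M(t)$.
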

  \begin{proof}
  Let $\phi(x,t) := \phi_t(x)$ denote the flow map of time $t$ starting from $x$.
    We need to show that, for any $t$ and arbitrary $x\in \R^{2d},$
    \begin{equation}\label{eq:idenity}
    \frac{\partial}{\partial x}\phi(x,t)^TJ\frac{\partial}{\partial x}\phi(x,t)  = J.
    \end{equation}
  First, the equality holds for $t = 0$, since $\phi(x,0)$ is the identity map on $\mathbb{R}^{2d}$.  Therefore, the proof will be complete if we show that the left-hand side in equation \eqref{eq:idenity} is constant in time. To that end, recall that 
  $\frac{\partial}{\partial t}\phi(x,t)=J^{-1}\nabla H\bigl(\phi(x,t)\bigr)$ and so, using that $H$ is twice continuously differentiable, we have
  \begin{align*}
  \frac{\partial^2}{\partial t \partial x} \phi(x,t)
  &=\frac{\partial^2}{\partial x \partial t} \phi(x,t)\\
  &=\frac{\partial}{\partial x} J^{-1}\nabla H \bigl(\phi(x,t)\bigr)\\
  & =J^{-1}\nabla^2H \bigl(\phi(x,t)\bigr)\frac{\partial}{\partial x}\phi(x,t),
  \end{align*}
  where $\nabla^2$ denotes the Hessian. This partial result and the skew-symmetry of $J$ give that
  \begin{align*}
  &\frac{\partial}{\partial t} \biggl(\frac{\partial}{\partial x}\phi(x,t)^TJ\frac{\partial}{\partial x}\phi(x,t)  \biggr)\\ 
  &=\frac{\partial^2}{\partial t\partial x}\phi(x,t)^TJ\frac{\partial}{\partial x}\phi(x,t)+
  \frac{\partial}{\partial x}\phi(x,t)^TJ\frac{\partial}{\partial t\partial x}\phi(x,t)\\
  &=\frac{\partial}{\partial x}\phi(x,t)^T\nabla^2H \bigl(\phi(x,t)\bigr)J^{-T}J\frac{\partial}{\partial x}\phi(x,t)+
  \frac{\partial}{\partial x}\phi(x,t)^TJJ^{-1}\nabla^2H\bigl(\phi(x,t)\bigr)\frac{\partial}{\partial x}\phi(x,t)\\
  &=-\frac{\partial}{\partial x}\phi(x,t)^T\nabla^2H\bigl(\phi(x,t)\bigr)\frac{\partial}{\partial x}\phi(x,t)+\frac{\partial}{\partial x}\phi(x,t)^T\nabla^2H \bigl(\phi(x,t)\bigr)\frac{\partial}{\partial x}\phi(x,t)\\
  &= 0, 
  \end{align*}
and the proof is complete. \hfill $\square$
  \end{proof}

	Taking the determinant at both sides of the equality
		$$\bigl(\phi_t'\bigr)^T J \phi_t =J$$
		 shows that the Jacobian determinant of $\phi_t$ has absolute value $1.$ Using this property and the conservation of the Hamiltonian, we obtain the  following important result:

    \begin{theorem}[Preservation of Canonical Measure]\label{preserve canonical measure}
    The flow $\phi_t$ preserves the canonical measure of $H$.
    \end{theorem}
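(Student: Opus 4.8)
The plan is to show that $\phi_t$ is measure-preserving by combining the two properties already established: conservation of the Hamiltonian and symplecticity. Concretely, I would prove that $\mu^H\bigl(\phi_t^{-1}(A)\bigr) = \mu^H(A)$ for every measurable set $A \subset \R^{2d}$, which is exactly the statement that $\mu^H$ is invariant under the flow.

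First I would assemble the two inputs I am allowed to use. From conservation of the Hamiltonian, $H\bigl(\phi_s(x)\bigr) = H(x)$ for all $s$ and all $x$; since $f^H$ depends on $x$ only through $H$, this gives the pointwise identity $f^H\bigl(\phi_s(x)\bigr) = f^H(x)$. From Theorem \ref{thm:poincare}, the flow is symplectic, i.e. $\phi_t'(x)^T J \phi_t'(x) = J$ for all $x,t$; taking determinants and using $\det J \neq 0$ yields $\bigl|\det \phi_t'(x)\bigr| = 1$, so $\phi_t$ preserves Lebesgue volume.

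The main step is then a change of variables. Writing $\mu^H\bigl(\phi_t^{-1}(A)\bigr) = \int_{\phi_t^{-1}(A)} f^H(x)\, dx$ and substituting $y = \phi_t(x)$, whose inverse is $\phi_{-t}$ with Jacobian determinant of absolute value $1$, I would obtain
\begin{equation*}
\mu^H\bigl(\phi_t^{-1}(A)\bigr) = \int_A f^H\bigl(\phi_{-t}(y)\bigr)\, \bigl|\det \phi_{-t}'(y)\bigr|\, dy = \int_A f^H(y)\, dy = \mu^H(A),
\end{equation*}
where the middle equality uses volume preservation to drop the Jacobian factor and Hamiltonian conservation to replace $f^H\bigl(\phi_{-t}(y)\bigr)$ by $f^H(y)$.

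I do not expect a serious obstacle, since both hard inputs are in hand; the point worth emphasizing is that neither property suffices alone. Symplecticity alone preserves Lebesgue measure but not the weighted canonical measure, while conservation of $H$ alone only says the density is constant along trajectories. It is precisely the conjunction---unit Jacobian cancelling the change-of-variables factor, and level-set invariance fixing the density---that delivers invariance of $\mu^H$. The only technical care needed is to invoke $\phi_t^{-1} = \phi_{-t}$, i.e. the group property $\phi_{s+t} = \phi_s \circ \phi_t$ extended to negative times, so that the substitution is legitimate.
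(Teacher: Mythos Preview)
Your proposal is correct and matches the paper's proof essentially line for line: the paper also computes $\mu^H\bigl(\phi_t^{-1}(D)\bigr)$ via the change of variables $x \mapsto \phi_{-t}(x)$, then invokes conservation of $H$ to fix the density and symplecticity to set the Jacobian factor to $1$. Your added remark that neither property alone suffices is a nice clarification not present in the paper's terse version.
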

    \begin{proof}
    Let $D$ be a Lebesgue measurable set. We have\\
      \begin{align*}
      \mu^H\bigl(\phi_t^{-1}(D)\bigr)
      &=\int_{\phi_{-t}(D)}\frac{1}{Z}e^{-H(x)} \, dx\\
      &=\int_D\frac{1}{Z}e^{-H(\phi_{-t}(x))}\left|\phi_{-t}'(x)\right| \, dx\\
      &=\int_D\frac{1}{Z}e^{-H(x)} \, dx\\
      &=\mu^H(D). \tag*{\qedhere}
      \end{align*}
    \end{proof}

		\subsection{Time Reversibility}
  The term time reversible comes from classical mechanics. Let $q(t)$ and $p(t)$ be the height and momentum of a pendulum of mass $m$ at time $t$. Suppose we release the pendulum at $q(0)=q_0$ with momentum $p(0)=v_0m$ in a frictionless space. By conservation of energy, we know that the pendulum will come back to the same position $q_0$ at some time $t$, and its velocity will be $-v_0.$ The $q-p$ time plot is symmetric i.e. the system is time reversible. Here we give a definition of time reversibility in the context of dynamical systems.
  
  \begin{definition}[Involution and Reversibility]
  A linear map $S$ is an involution if $S^2 = S\circ S$ is the identity map.
  A bijection $\phi$ is called reversible with respect to an involution $S$ if $S\circ\phi=\phi^{-1}\circ S$.
  A differential equation is called reversible with respect to an involution $S$  if, for every fixed $t$, its flow $\phi_t$ is reversible with respect to $S$.
  \end{definition}
  The inverse of the flow $\phi_t$ of a differential equation $\frac{dx}{dt}=u(x)$ can be intuitively viewed as the flow map $\phi_t$ of the system $\frac{dx}{dt}=-u(x)$. We can give an equivalent characterization of time reversibility:
  \begin{theorem}[Characterization of Reversibility]
  A system of differential equations $\frac{dx}{dt}=u(x)$ is time reversible with respect to an involution $S$ if and only if $S\circ u=-u\circ S$.
  \end{theorem}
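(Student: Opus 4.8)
The plan is to exploit the standard fact for the autonomous system $\dot{x}=u(x)$ that the inverse flow is the backward flow, $\phi_t^{-1}=\phi_{-t}$ (this is precisely the content of the remark preceding the theorem, that $\phi_{-t}$ is the time-$t$ flow of the reversed field $-u$, combined with the group law $\phi_{s+t}=\phi_s\circ\phi_t$). With this identification, reversibility of the differential equation with respect to the involution $S$ is exactly the statement that the identity $S\circ\phi_t=\phi_{-t}\circ S$ holds for every $t$. I would then prove the two implications by passing between this global condition on flows and the infinitesimal condition $S\circ u=-u\circ S$, using differentiation in one direction and an ODE uniqueness argument in the other.

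For the forward implication, I would assume $S\circ\phi_t=\phi_{-t}\circ S$ for all $t$, fix an arbitrary $x$, and differentiate both sides in $t$ at $t=0$. Because $S$ is a linear map it commutes with the time derivative, so the left-hand side produces $S\bigl(\tfrac{d}{dt}\phi_t(x)\big|_{t=0}\bigr)=S\bigl(u(x)\bigr)$, using that $\phi_0=\mathrm{id}$ and $\tfrac{d}{dt}\phi_t(x)\big|_{t=0}=u(x)$. The right-hand side is the velocity at $t=0$ of the curve $t\mapsto\phi_{-t}\bigl(S(x)\bigr)$, which equals $-u\bigl(S(x)\bigr)$. Equating the two gives $S\circ u=-u\circ S$, as claimed.

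For the converse, I would assume $S\circ u=-u\circ S$, fix $x$, and set $y(t):=S\bigl(\phi_t(x)\bigr)$. Differentiating and using linearity of $S$ followed by the hypothesis yields $\dot{y}(t)=S\bigl(u(\phi_t(x))\bigr)=-u\bigl(S(\phi_t(x))\bigr)=-u\bigl(y(t)\bigr)$, with initial value $y(0)=S(x)$. On the other hand, $z(t):=\phi_{-t}\bigl(S(x)\bigr)$ solves the same initial value problem $\dot{w}=-u(w)$, $w(0)=S(x)$, again because $\phi_{-t}$ is the flow of $-u$. By uniqueness of solutions the two curves coincide, $y\equiv z$, so $S\circ\phi_t=\phi_{-t}\circ S=\phi_t^{-1}\circ S$ for all $t$, which is reversibility.

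The main obstacle is not conceptual but a matter of the regularity that makes these manipulations legitimate: differentiating $\phi_t(x)$ in $t$ and invoking uniqueness of the initial value problem both require $u$ to be sufficiently regular (for instance locally Lipschitz, so that the flow exists, is unique, and depends smoothly on its arguments), and one must keep the orientation conventions straight so that $\phi_t^{-1}=\phi_{-t}$ is used consistently. Since the chapter treats this material informally, I would state these hypotheses once and then carry out the two short computations above without further technical fuss.
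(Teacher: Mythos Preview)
Your proposal is correct and follows essentially the same route as the paper: differentiate the flow identity $S\circ\phi_t=\phi_t^{-1}\circ S$ at $t=0$ to obtain the infinitesimal condition, and for the converse show that $t\mapsto S(\phi_t(x))$ solves the reversed equation and invoke uniqueness. The paper is terser---it only writes out the forward implication and dispatches the converse with ``follows directly from the remark above''---whereas you spell out the ODE uniqueness argument explicitly, but the underlying ideas are identical.
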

  \begin{proof}
  Suppose $\frac{dx}{dt}=u(x)$ is time reversible with respect to $S$. Using linearity of $S$, we have
    \begin{align*}
    S\circ u(x)
    =&\lim_{\epsilon\to 0}\frac{1}{\epsilon} \Bigl(S\circ\phi_{t+\epsilon}(x)   -S\circ\phi_t(x)\Bigr)\\
    =&\lim_{\epsilon\to 0}\frac{1}{\epsilon} \Bigl(\phi_{t+\epsilon}^{-1}\circ S(x)-\phi_{t}^{-1}\circ S(x)\Bigr)\\
    =&\frac{d}{dt}\phi_t^{-1} \bigl(S(x)\bigr)\\
    =&(-u)\circ S(x).
    \end{align*}
  The other direction follows directly from the remark above. \hfill $\square$
  \end{proof}
  
  \begin{corollary}[Reversibility of Hamiltonian Systems]\label{corollaryeven}
  Hamiltonian systems with Hamiltonian $H(q,p) = V(q) + K(p)$ such that $K(p)$ is an even function are reversible with respect to the momentum flip involution $S:(q,p)\mapsto(q,-p)$.
    \end{corollary}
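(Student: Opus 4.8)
The plan is to verify the hypotheses of the Characterization of Reversibility theorem proved just above, namely that a system $\frac{dx}{dt} = u(x)$ is reversible with respect to an involution $S$ if and only if $S \circ u = -u \circ S$. First I would check that the momentum flip $S:(q,p) \mapsto (q,-p)$ is indeed an involution: applying it twice gives $S\bigl(S(q,p)\bigr) = S(q,-p) = (q,p)$, so $S^2$ is the identity, and as a linear map $S$ is represented by $\begin{bmatrix} I & 0 \\ 0 & -I \end{bmatrix}$.

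Next I would write down the Hamiltonian vector field explicitly. Using $J^{-1} = \begin{bmatrix} 0 & I \\ -I & 0 \end{bmatrix}$ together with $\nabla H(q,p) = \bigl(\nabla V(q), \nabla K(p)\bigr)$ for the separable Hamiltonian $H(q,p) = V(q) + K(p),$ the field $u = J^{-1}\nabla H$ becomes
\begin{equation*}
u(q,p) = \begin{bmatrix} \nabla K(p) \\ -\nabla V(q) \end{bmatrix},
\end{equation*}
consistent with Hamilton equations \eqref{eq:hamilton}. I would then compute both sides of the characterization identity: on one hand $S\circ u(q,p) = \bigl(\nabla K(p), \nabla V(q)\bigr),$ and on the other hand $u\circ S(q,p) = u(q,-p) = \bigl(\nabla K(-p), -\nabla V(q)\bigr).$

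The one genuinely nontrivial step, and the crux of where the evenness hypothesis enters, is the observation that if $K$ is an even function then $\nabla K$ is an odd function. This follows by differentiating the identity $K(-p) = K(p)$ with respect to $p$ via the chain rule, which yields $-\nabla K(-p) = \nabla K(p),$ i.e. $\nabla K(-p) = -\nabla K(p).$ Substituting this into the expression for $u\circ S$ gives $u\circ S(q,p) = \bigl(-\nabla K(p), -\nabla V(q)\bigr),$ whence $-u\circ S(q,p) = \bigl(\nabla K(p), \nabla V(q)\bigr) = S\circ u(q,p).$ Having verified $S\circ u = -u\circ S,$ the Characterization of Reversibility theorem immediately yields that the Hamiltonian system is reversible with respect to $S,$ completing the proof. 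I expect no serious obstacle here; the only thing to handle with care is the sign bookkeeping in the vector field and the passage from $K$ even to $\nabla K$ odd.
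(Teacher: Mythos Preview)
Your proposal is correct and follows exactly the route the paper intends: the corollary is stated immediately after the Characterization of Reversibility theorem, with no explicit proof given, precisely because it reduces to verifying $S\circ u = -u\circ S$ for the separable Hamiltonian vector field. Your sign bookkeeping and the key step deducing that $\nabla K$ is odd from $K$ even are both handled correctly.
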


  \begin{example}[Gaussian Momentum Variables and Reversibility]
  The kinetic energy $K(p)=\frac{1}{2}  p^TM^{-1}p$ is an even function, since
  $$K(-p)= \frac{1}{2} (-p)^TM^{-1}(-p)= \frac{1}{2}  p^TM^{-1}p=K(p).$$ 
  Therefore, Corollary~\ref{corollaryeven} implies that Hamiltonian systems with Hamiltonian $H(q,p) = V(q) + \frac{1}{2}  p^TM^{-1}p$
  are always time reversible with respect to momentum flip. Notice that with the choice of kinetic energy $K(p)=\frac{1}{2}  p^TM^{-1}p,$ the marginal marginal of $f^H(q,p) \propto \exp\bigl(-H(q,p) \bigr)$ over the momentum variables is Gaussian $\Nc(0,M).$ \hfill \qedhere
  \end{example}

\section{From Hamiltonian Dynamics to a Sampling Algorithm}\label{sec:samplingalgorithm}

\subsection{Idealized Hamiltonian Monte Carlo Algorithm}
  For the following algorithm, we consider a Hamiltonian of the form $H(q,p)=\frac{1}{2} p^TM^{-1}p+V(q)$. We would like to construct a Markov kernel that leaves the canonical measure of $H$ invariant. Recall that our true target is the $q$-marginal with density proportional to  $e^{-V}$; the Gaussian momentum $p$ is just an auxiliary variable. As usual, we have access to the potential $V(q)$ (and its gradient) but not to the normalizing constant $Z$.
  \begin{algorithm}[H]
  \caption{Idealized, Not-Implementable Hamiltonian Monte Carlo\label{algo:Exact HMC}}
  \begin{algorithmic}[1]
    \STATEx{\textbf{Input:} Initialization $(q^{(0)}, p^{(0)}),$ duration parameter $\lambda$, sample size $N.$}
    \STATEx For $n =0, \ldots, N-1$ do:
    \STATE{Momentum refreshment:
      Sample $p^*\sim\mathcal{N}(0,M).$
    }
    \STATE{State update:
    $(q^{(n+1)},p^{(n+1)})=\phi_\lambda(q^{(n)},p^*).$}
    \STATEx{\textbf{Output:} Sample $\{q^{(n)}\}_{n=1}^N$.}
  \end{algorithmic}
\end{algorithm}
Note that the idealized, not implementable HMC algorithm shown above presupposes that the flow map $\phi_\lambda$ can be evaluated, which is not true for most practical applications. However, it is important to note that if the flow map could be evaluated,  there would be no need for an acceptance/rejection  step  to leave the target invariant; this is proved in Theorem \ref{th:exactHMC}. In practice, $\phi_\lambda$ is approximated with numerical integrators (which might not preserve total energy) and the acceptance/rejection step is used to counter that error. This will be made precise in Algorithm \ref{algo:Numerical HMC} and Theorem \ref{th:approximateHMC}.
\begin{theorem}\label{th:exactHMC}
Let $\mu^H$ be the Gibbs distribution with energy $H(q,p)=\frac{1}{2}p^TM^{-1}p+V(q)$. The Markov kernel implicitly defined by Algorithm \ref{algo:Exact HMC} leaves $\mu^H$ invariant. 
\end{theorem}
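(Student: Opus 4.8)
The plan is to decompose the Markov kernel implicitly defined by Algorithm \ref{algo:Exact HMC} into its two constituent sub-steps --- momentum refreshment and the deterministic flow update --- and to show that each of them individually leaves $\mu^H$ invariant. Since the composition of two kernels that each preserve $\mu^H$ also preserves $\mu^H$, this will give the result. Concretely, if I can show that $(q^{(n)}, p^{(n)}) \sim \mu^H$ implies $(q^{(n)}, p^*) \sim \mu^H$ after refreshment, and that the flow map carries a $\mu^H$-distributed pair to another $\mu^H$-distributed pair, then one full iteration maps $\mu^H$ to $\mu^H$.

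First I would record the product structure of the canonical density. Because $H(q,p) = V(q) + \frac12 p^T M^{-1} p$, the density factorizes as $f^H(q,p) \propto \exp\bigl(-V(q)\bigr)\exp\bigl(-\frac12 p^T M^{-1}p\bigr)$, so that under $\mu^H$ the position $q$ has marginal density proportional to $e^{-V(q)}$ and, conditionally on $q$, the momentum is distributed as $\Nc(0,M)$ --- independently of $q$. This is precisely the distribution from which $p^*$ is drawn in the refreshment step. Consequently, if $(q^{(n)},p^{(n)}) \sim \mu^H$, then replacing $p^{(n)}$ by an independent draw $p^* \sim \Nc(0,M)$ amounts to resampling the momentum from its correct conditional while leaving $q^{(n)}$ untouched, so the joint law $\mu^H$ is unchanged and $(q^{(n)},p^*) \sim \mu^H$.

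Second, for the state update I would invoke Theorem \ref{preserve canonical measure} directly. That theorem gives $\mu^H\bigl(\phi_\lambda^{-1}(D)\bigr) = \mu^H(D)$ for every measurable $D$; equivalently, if $X \sim \mu^H$ then $\phi_\lambda(X) \sim \mu^H$, since $\Prob\bigl(\phi_\lambda(X)\in D\bigr) = \Prob\bigl(X \in \phi_\lambda^{-1}(D)\bigr) = \mu^H(D)$. Feeding the $\mu^H$-distributed pair $(q^{(n)},p^*)$ through the flow map therefore produces $(q^{(n+1)},p^{(n+1)}) = \phi_\lambda(q^{(n)},p^*) \sim \mu^H$, completing one iteration.

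The only genuine subtlety --- and the single point that actually uses the algorithmic design --- is the match in the refreshment step between the Gaussian proposal $\Nc(0,M)$ and the conditional momentum distribution under $\mu^H$; this is exactly why the kinetic energy is chosen to be $K(p) = \frac12 p^T M^{-1}p$ with the same matrix $M$ used to sample $p^*$. Everything else rests on the already-established preservation of the canonical measure by the Hamiltonian flow (Theorem \ref{preserve canonical measure}), which in turn relied on symplecticity (Theorem \ref{thm:poincare}) together with conservation of $H$. I anticipate no hard estimate is needed: the work is purely in assembling these two observations and noting that invariance is stable under composition of the refreshment and flow kernels.
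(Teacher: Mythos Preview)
Your proposal is correct and follows essentially the same approach as the paper's proof: show that the momentum refreshment step preserves $\mu^H$ because of the product structure $f^H(q,p)\propto e^{-V(q)}e^{-K(p)}$ with $K(p)=\tfrac12 p^TM^{-1}p$, and that the flow step preserves $\mu^H$ by directly invoking Theorem~\ref{preserve canonical measure}. The paper's argument is just a terser version of what you wrote.
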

\begin{proof}
By the definition of the Hamiltonian, the momentum variables  are $\Nc(0,M)$ and therefore $\mu^H$ is clearly invariant in the momentum refreshment step. On the other hand, Theorem \ref{preserve canonical measure}  shows that the flow map $\phi_\lambda$ is measure-preserving with respect to the canonical measure of $H$. This implies that step $2$ also preserves the distribution $\mu^H$. \hfill $\square$
\end{proof}

\subsection{Hamiltonian Monte Carlo Algorithm}
Time reversibility, conservation of Hamiltonian, and symplecticity are properties that hold for Hamiltonian systems. In practice, for most problems of interest, Hamilton equations cannot be solved in closed form and need to be discretized. Discretizations of Hamiltonian systems may preserve \emph{exactly} some of these properties. For instance, the famous leapfrog method is reversible and symplectic. We remark that unfortunately a numerical solver for Hamilton equations cannot be simultaneously symplectic \emph{and} conserve the Hamiltonian exactly.  

\begin{example}[The Leapfrog Integrator]
	The leapfrog method applied to Hamilton equations with $K(p) \coloneqq \sum_{i=1}^d \frac{p_i^2}{2m_i}$ and step-size $\epsilon>0$ is
	\begin{align*}
	p_i\left(t+\frac{\varepsilon}{2}\right) &= p_i(t)-\frac{\varepsilon}{2}\frac{\partial V}{\partial q_i}(q(t)) &\text{(half momentum step)}\\
	q_i(t+\varepsilon) &= q_i(t)+\varepsilon\frac{p_i\left(t+\frac{\varepsilon}{2}\right)}{m_i} &\text{(full position step)}\\
	p_i(t+\varepsilon) &= p_i\left(t+\frac{\varepsilon}{2}\right)-\frac{\varepsilon}{2}\frac{\partial V}{\partial q_i}(q(t+\varepsilon)). &\text{(half momentum step)}
	\end{align*}
	It is time reversible and conserves volume exactly. To discretize Hamilton equations for a time interval $[t, t+\lambda)$ of length $\lambda,$ we may take $L$ leapfrog steps with $\lambda = \epsilon L. $  \hfill \qedhere
 \end{example}
 
  \begin{figure}[htp]
    \centering
    \includegraphics[width=1\columnwidth]{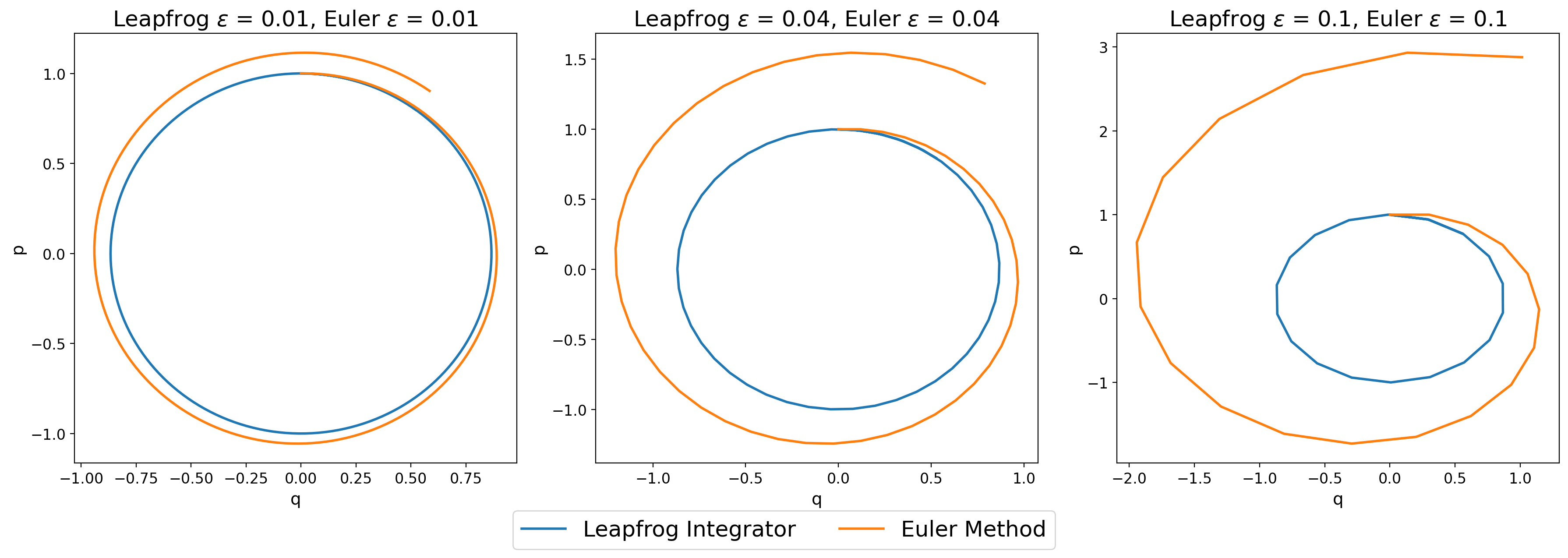}
    \caption{Simulate the Hamiltonian $\frac{3p^2}{2}+\frac{4q^2}{2}$ using leapfrog and Euler method updates: $p(t+\varepsilon) = p(t)+\varepsilon \frac{dp}{dt}, q(t+\varepsilon) = q(t)+\varepsilon \frac{dq}{dt}$.}
    \label{fig:HMC_toy}
  \end{figure}



Let $\Psi_\lambda$ be a symplectic numerical integrator, time reversible with respect to the momentum flip involution, that approximates the flow map $\phi_\lambda.$ The Hamiltonian Monte Carlo algorithm proposes moves using $\Psi_\lambda$ and corrects for the fact that $\Psi_\lambda$ does not preserve the Hamiltonian by using an accept/reject mechanism.
\begin{algorithm}[H]
  \caption{Hamiltonian Monte Carlo (HMC) \label{algo:Numerical HMC}}
  \begin{algorithmic}[1]
    \STATEx{\textbf{Input:} Initialization $(q^{(0)}, p^{(0)})$, duration parameter $\lambda$, sample size $N.$}
    \STATEx For $n =0, \ldots, N-1$ do:
    \STATE{Momentum refreshment:
      Sample $p^{\text{\tiny NEW}}\sim\mathcal{N}(0,M)$.
    }
    \STATE{Propose new state:
    $(q^*,p^*)=\Psi_\lambda\bigl(q^{(n)},p^{\text{\tiny NEW}}\bigr)$.}
    \STATE{Set 
    $$\bigl(q^{(n+1)},p^{(n+1)}\bigr) = \begin{cases}
    (q^*,p^*)  \qquad \,\,  \text{with probability} \,\, a:= \min\left\{1,e^{H(q^{(n)},p^{\text{\tiny NEW}}\bigr)-H(q^*,p^*)}\right\}, \\
     \bigl(q^{(n)}, -p^{\text{\tiny NEW}}\bigr) \, \text{with probability} \,\, 1- a.
    \end{cases}$$
     }
    \STATEx{\textbf{Output:} Sample $\{q^{(n)}\}_{n=1}^N$.}
  \end{algorithmic}
\end{algorithm}

\begin{mybox}[colback=white]{Pros and Cons}
Using Hamiltonian dynamics in an extended state space, HMC breaks away from  the local behavior of RWMH and MALA proposals. As a result, HMC scales better to high-dimensional problems (see below for further discussion). However, HMC is often harder to implement and to tune than RWMH and MALA. Similar to MALA, HMC requires evaluation of the gradient of $V,$ which can be expensive.
\end{mybox}

\paragraph*{Scaling of HMC}
Classical theoretical results for HMC developed in the same large-d scaling setting that we considered for RWMH in Chapter \ref{chap:MCMC} and for MALA in Chapter \ref{chap:diffusions} revealed
the scaling of the leapfrog step-size should be $\mathcal{O}(d^{-1/4})$. Thus, HMC requires $\mathcal{O}(d^{1/4})$ steps to traverse the state space, which should be compared to $\mathcal{O}(d^{1/3})$ for MALA and $\mathcal{O}(d^{1})$ for RWMH. 
These analyses also show that the optimal acceptance rate for HMC is $0.651$, compared to $0.574$ for MALA and $0.234$ for RWMH. 

\bigskip

Our goal in the rest of this section is to understand why the acceptance probability in the HMC algorithm is the correct one to ensure that $f^H$ invariant is an invariant distribution. To that end, we will make use of the time reversibility of the numerical integrator and its preservation of volume (otherwise a Jacobian would appear in the acceptance probability). The main theoretical result of this chapter is the following. 
\begin{theorem}[HMC Leaves Target Invariant]\label{th:approximateHMC}
Suppose that $\Psi_\lambda$ is symplectic and reversible with respect to the momentum flip involution. Then the  Markov kernel implicitly defined by the Hamiltonian Monte Carlo algorithm leaves invariant the canonical measure of $H$.
\end{theorem}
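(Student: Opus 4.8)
The plan is to factor the transition defined by Algorithm \ref{algo:Numerical HMC} into its two constituent sub-steps and show that each leaves $\mu^H$ invariant; since invariance is preserved under composition of Markov kernels, this suffices. Write $P = P_1 P_2$, where $P_1$ is the momentum refreshment of step 1 and $P_2$ is the propose-and-accept/reject transition of steps 2--3. The refreshment step is immediate: by \eqref{eq:deffH} the density $f^H(q,p)\propto e^{-V(q)}e^{-K(p)}$ factorizes over $q$ and $p$, and drawing $p^{\text{\tiny NEW}}\sim\Nc(0,M)\propto e^{-K(p)}$ independently of $q$ reproduces exactly the conditional law of the momentum under $\mu^H$; hence if $(q,p)\sim\mu^H$ then $(q,p^{\text{\tiny NEW}})\sim\mu^H$, i.e. $\mu^H P_1 = \mu^H$.

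The substance of the proof is the invariance of $P_2$. Because the proposal $(q^*,p^*)=\Psi_\lambda(x)$ is a deterministic map, $P_2$ has no transition density with respect to Lebesgue measure and the usual density-based detailed-balance computation does not apply; instead I would verify invariance directly by a change of variables, $\mu^H P_2 = \mu^H$, tested against an arbitrary bounded measurable $h$. Three facts drive the computation. First, $\Psi_\lambda$ is volume preserving: taking determinants in $\Psi_\lambda'(x)^T J \Psi_\lambda'(x)=J$ gives $|\det \Psi_\lambda'(x)|=1$, exactly as in the symplectic flow argument preceding Theorem \ref{preserve canonical measure}. Second, reversibility with respect to the momentum flip $S:(q,p)\mapsto(q,-p)$ means $\Psi_\lambda\circ S = S\circ\Psi_\lambda^{-1}$. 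Third, since $K(p)=\tfrac12 p^T M^{-1}p$ is even we have $H\circ S = H$, so $S$ preserves $\mu^H$ and the acceptance probability may be written as $a(x)=\min\{1,\,e^{H(x)-H(\Psi_\lambda(x))}\}$ with $H(x)=H(q^{(n)},p^{\text{\tiny NEW}})$ and $H(\Psi_\lambda(x))=H(q^*,p^*)$.

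Concretely, I would expand
\begin{equation*}
(\mu^H P_2)(h)=\int\Bigl[a(x)\,h\bigl(\Psi_\lambda(x)\bigr)+\bigl(1-a(x)\bigr)\,h\bigl(S(x)\bigr)\Bigr]f^H(x)\,dx
\end{equation*}
and treat the two terms separately. In the acceptance term substitute $y=\Psi_\lambda(x)$ (so $dx=dy$ by volume preservation), which turns the weight of $h(y)$ into $\tfrac1Z\min\{e^{-H(\Psi_\lambda^{-1}(y))},e^{-H(y)}\}$. In the rejection term substitute $u=S(x)$ (an involution, $dx=du$, $f^H\circ S=f^H$) and use $\Psi_\lambda\circ S=S\circ\Psi_\lambda^{-1}$ together with $H\circ S=H$ to evaluate $a(S(u))$; the weight of $h(u)$ becomes $\tfrac1Z\bigl(e^{-H(u)}-\min\{e^{-H(\Psi_\lambda^{-1}(u))},e^{-H(u)}\}\bigr)$. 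Adding the two weights at a common point, the $\min$ contributions cancel and leave $\tfrac1Z e^{-H}=f^H$, so $(\mu^H P_2)(h)=\mu^H(h)$. The main obstacle is precisely this deterministic-proposal bookkeeping: one must apply the change of variables, volume preservation, and reversibility in just the right combination so that the acceptance and rejection integrals recombine into $f^H$, with the momentum flip on rejection being exactly what makes the two pieces match.
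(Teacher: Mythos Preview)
Your proof is correct and follows essentially the same approach as the paper: both split off the trivial momentum-refreshment step and then verify invariance of the propose/accept-reject step by change of variables, using volume preservation of $\Psi_\lambda$ and $S$, the identity $H\circ S=H$, and the reversibility relation $\Psi_\lambda\circ S=S\circ\Psi_\lambda^{-1}$ (equivalently $\Psi_\lambda\circ S\circ\Psi_\lambda=S$). The only cosmetic difference is bookkeeping: you perform two separate substitutions ($y=\Psi_\lambda(x)$ in the acceptance term, $u=S(x)$ in the rejection term) and add the resulting integrands, whereas the paper makes a single composite substitution $x\mapsto S\circ\Psi_\lambda(x)$ in one term to show it cancels the other.
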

\begin{proof}
Throughout the proof we denote by $S$ the momentum flip involution.  Clearly the momentum refreshment step preserves the canonical measure $\mu^H$. It suffices to show that $\mu^H$ is invariant under steps $2$ and $3$. 
The corresponding Markov kernel is given by 
$$\pi_x(dy)=a(x)\delta_{\Psi_\lambda(x)}(y)dy+ \bigl(1-a(x)\bigr)\delta_{S(x)}(y)dy,$$
where
$$a(x) :=\min \Bigl\{1,e^{H(x)-H(\Psi_\lambda(x))} \Bigr\}.$$
Let $A\subset \R^{2d}$ be a Lebesgue measurable set. We need to show that $$\int_{\mathbb{R}^{2d}}\pi_x(A)\mu^H(dx)=\int_A\mu^H(dx).$$ Expanding the left-hand side gives 
\begin{align*}
& \int_{\R^{2d}} \pi_x(A)\mu^H(dx)
=\int_{\R^{2d}}a(x){\bf{1}}_A \bigl(\Psi_\lambda(x)\bigr)f^H(x) \, dx+\int_{\R^{2d}} \bigl(1-a(x)\bigr){\bf{1}}_A \bigl(S(x)\bigr)f^H(x) \, dx\\
&=\int_{\R^{2d}} {\bf{1}}_A \bigl(S(x)\bigr)f^H(x) \, dx +\int_{\R^{2d}}a(x){\bf{1}}_A \bigl(\Psi_\lambda(x)\bigr)f^H(x) \, dx-\int_{\R^{2d}}a(x){\bf{1}}_A \bigl(S(x)\bigr)f^H(x)\, dx.
\end{align*}
First, $\int_{\R^{2d}} {\bf{1}}_A \bigl(S(x)\bigr)f^H(x) \, dx =\int_A\mu^H(dx)$ since $S$ preserves the canonical measure. 
Next, we use a change of variables in the third term to show that it is equal to the second one. Precisely, using  that $S\circ S = id,$ and that
the determinant Jacobians of $\Psi_\lambda$ and $S$ are identically one due to the symplecticity of $\Psi_\lambda$ and reversibility of $S,$ we have
\begin{align*}
\int_{\R^{2d}}a(x){\bf{1}}_A \bigl(S(x)\bigr)f^H(x) \, dx
=&\int_{\R^{2d}}a \bigl(S\circ\Psi_\lambda(x)\bigr){\bf{1}}_A \bigl(\Psi_\lambda(x)\bigr)f^H \bigl(\Psi_\lambda(x)\bigr) \, dx.
\end{align*}
Now, using that $\Psi_\lambda\circ S\circ\Psi_\lambda= S$ since $\Psi_\lambda$ is time reversible with respect to $S$ and also that $H\circ S = H,$ we deduce that
\begin{align*}
a\bigl(S\circ\Psi_\lambda(x)\bigr)
=&\min \Bigl\{1,e^{H \left(S\circ\Psi_\lambda(x)\right)-H\left(\Psi_\lambda\circ S\circ\Psi_\lambda(x) \right)} \Bigr\}\\
=&\min \Bigl\{1,e^{H \left(S\circ\Psi_\lambda(x) \right)-H\left(S(x)\right)}  \Bigr\}\\
=&\min \Bigl\{1,e^{H\left(\Psi_\lambda(x)\right)-H(x)}  \Bigr\}\\
=&\frac{f^H(x)}{f^H \bigl(\Psi_\lambda(x)\bigr)}a(x).
\end{align*}
Substituting this back to the previous integral gives
$$\int_{\R^{2d}}a(x){\bf{1}}_A \bigl(S(x)\bigr)f^H(x)\, dx=\int_{\R^{2d}}a(x){\bf{1}}_A \bigl(\Psi_\lambda(x)\bigr)f^H(x) \, dx,$$
as desired. \hfill $\square$
\end{proof}


\section{Discussion and Bibliography}\label{sec:biblioHMC}
HMC was proposed in the physics literature under the name Hybrid Monte Carlo  \cite{duane1987hybrid}. The method was introduced to the statistics community through the work  \cite{neal2012bayesian}.
An accessible introduction to HMC is
\cite{neal2011mcmc}; see also \cite{betancourt2017conceptual}. Theorem \ref{thm:poincare}, which shows that Hamiltonian flows are symplectic, was proved in \cite{poincare1899methodes}. 
Further background on Hamiltonian dynamics and their numerical solution can be found in \cite{sanz2018numerical,hairer2006geometric}.
Numerical integrators tailored to HMC are reviewed and compared in \cite{bou2018geometric,calvo2019hmc}. 

In this chapter, we have seen that the Metropolis Hastings framework can be generalized by means of an involution map, which we take as a momentum flip for HMC methods. The use of involutions within the Metropolis Hastings algorithm was proposed in \cite{andrieu2020general,glatt2021mixing}, and has been leveraged to establish mixing rates for Hamiltonian Monte Carlo in \cite{glatt2021mixing}. The theory in \cite{glatt2021mixing}, which relies on the weak Harris approach developed in \cite{hairer2014spectral}, also covers implementations of HMC in infinite-dimensional Hilbert spaces \cite{beskos2011hybrid}.  The paper \cite{bou2018coupling} relies instead on a novel coupling technique to analyze the convergence of HMC algorithms. 
Further convergence results can be found in \cite{livingstone2019geometric,durmus2017convergence} and optimal scaling of HMC in high dimension was studied in \cite{beskos2013optimal}. 
Tuning HMC algorithms can be challenging. The No-U-Turn (NUTS) algorithm \cite{hoffman2014no} introduces a computational framework to automatically set the duration parameter $\lambda$ and the step-size $\epsilon.$

\chapter{Sequential Monte Carlo}
\label{chap:particlefilters}
This chapter is devoted to Monte Carlo algorithms that leverage weighted samples to approximate a sequence of target distributions. Sequential Monte Carlo algorithms combine three main ingredients: (i) a Markov kernel to propagate a collection of particles; (ii) a weighting mechanism akin to importance sampling; and (iii) a resampling scheme to alleviate weight degeneracy. For a given problem, there is often significant flexibility in how to specify and implement these three components, which contributes to making sequential Monte Carlo an extremely rich family of algorithms with deep theoretical underpinnings.

We will focus on Bayesian filtering, where the targets represent the conditional law of a time-evolving hidden state given noisy observations. In this context, we will introduce two algorithms: the bootstrap particle filter and the optimal particle filter. Both algorithms rely on different Markov kernels to propagate particles, and, consequently, on different weighting mechanisms.
We emphasize, however, that while the study of sequential Monte Carlo in Bayesian filtering provides a natural framework to introduce some key ideas, the methodology is applicable much more broadly. In some applications, the sequence of targets is not specified by the problem at hand but arises instead by artificially introducing a sequence of tempered targets, similar to the annealing strategies in Chapter \ref{chap:annealing}. For instance, in applications to rare event sampling, the sequence of targets may be chosen so that they gradually concentrate around the event of interest.
 
As discussed in Chapter \ref{chap:MCintegration}, the main caveat of sampling algorithms that rely on weighting mechanisms is that the variance of the weights becomes larger as target and proposal distributions become further apart, which can cause weight degeneracy in high-dimensional problems. Indeed, sequential Monte Carlo methods are extremely powerful in low and  moderate dimension, but they often behave poorly in high dimension. The use of resampling schemes within sequential Monte Carlo algorithms is partly motivated by their weight degeneracy in high dimension; however, resampling cannot fully resolve this fundamental limitation of algorithms relying on importance weights. 

This chapter is structured as follows. Section \ref{sec:HMM}  overviews the Bayesian formulation of the filtering problem and introduces the sequence of target distributions we seek to approximate. Section \ref{sec:bootstrapparticlefilter} introduces the bootstrap particle filter and shows its convergence in the large particle limit using a numerical analysis argument reminiscent of Lax equivalence theorem.
Section \ref{sec:optimalparticlefilter} introduces optimal particle filters, which partly alleviate the weight degeneracy of  bootstrap particle filters. Section \ref{sec:PFbibliography} closes with bibliographical remarks. 

\section{Bayesian Filtering}\label{sec:HMM}
Hidden Markov models arise in applications where a latent or ``hidden'' Markov process $\{X_j\}_{j=0}^\infty$ needs to be estimated based on an \emph{observed} process $\{Y_j\}_{j=1}^\infty.$ For concreteness, we assume that the hidden and observed processes are specified as follows:
\begin{alignat*}{4}
&\text{(Initial condition)} \qquad \qquad X_0 &&\sim f_0, \\
&\text{(Dynamics model)} \quad \qquad X_{j+1} &&= a(X_j)  + \xi_{j+1}, \qquad \qquad  &&&\xi_{j+1} \sim \Nc(0,\Sigma), \quad j = 0, 1, \ldots\\
&\text{(Observation model)} \qquad \, \, Y_{j+1} &&= b(X_{j+1}) +  \eta_{j+1}, \,\, &&&\eta_{j+1} \sim \Nc(0,\Gamma), \quad j = 0, 1, \ldots
\end{alignat*}
where $X_0,$ $\{\xi_j\},$ and $\{\eta_j\}$ are all independent; see Figure \ref{fig:HMMdependence} for a representation of the resulting dependence structure of the hidden and observed processes. The initial distribution $f_0,$ the dynamics map $a: \R^d \to \R^d,$ the observation map $b: \R^d \to \R^k,$ and the positive definite covariances $\Sigma$ and $\Gamma$ are all assumed to be known.
We assume Gaussianity of $\xi_j$ and $\eta_j$ for pedagogical reasons, as it leads to concrete and familiar expressions for the Markov kernel and likelihood function implicitly defined, respectively, by the dynamics and observation models.

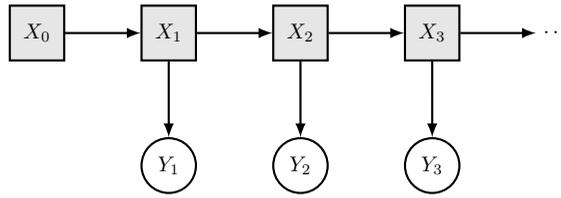
\begin{figure}\label{fig:HMMdependence}
\centering
\begin{tikzpicture}[scale = 0.8, every node/.style={scale=0.8}]
\tikzstyle{circ}=[circle, minimum size = 9mm, thick, draw =black!80, node distance = 10mm, black]
\tikzstyle{rect}=[rectangle, minimum size = 9mm, thick, draw =black!80, node distance = 10mm, black]
\tikzstyle{connect}=[-latex, thick, black]
\tikzstyle{box}=[rectangle, draw=black!100]
  \node[rect,fill=black!10] (X0) {$X_0$};
  \node[rect,fill=black!10] (X1) [right=of X0] {$X_1$};
  \node[rect,fill=black!10] (X2) [right=of X1] {$X_2$};  
  \node[rect,fill=black!10] (X3) [right=of X2] {$X_3$};
  \node[] (X4) [right=of X3] {$\nc \cdots$};
  \node[circ] (Y1) [below=of X1] {$Y_1$};
  \node[circ] (Y2) [right=of Y1,below=of X2] {$Y_2$};
  \node[circ] (Y3) [right=of Y2,below=of X3] {$Y_3$};
  \path (X0) edge [connect] (X1);
  \path (X1) edge [connect] (X2);
  \path (X2) edge [connect] (X3);
   \path (X3) edge [connect] (X4);
  \path (X1) edge [connect] (Y1);
  \path (X2) edge [connect] (Y2);
  \path (X3) edge [connect] (Y3);
\end{tikzpicture}
\caption{Graphical representation of the assumed dependence structure.}
\end{figure}

We are interested in estimating $X_j$ given the observations $Y_{1:j}:= \{Y_i\}_{i=1}^j$ available up to time $j.$ In the Bayesian framework, inference is carried out via the posterior or \emph{filtering distribution} of $X_j | Y_{1:j},$ denoted by $f_j.$
 We will study two sequential Monte Carlo algorithms to approximate $f_j$: the bootstrap particle filter and the optimal particle filter. An important feature of these algorithms is that they are \emph{sequential}: at time $j+1$ they approximate the distribution $f_{j+1}$ using an approximation of $f_{j}$ and the new observation $Y_{j+1}.$
To obtain a particle approximation of the filtering distribution, each algorithm relies on a different conceptual decomposition of the filtering step:
\begin{enumerate}
    \item Bootstrap decomposition: in the bootstrap particle filter, particles are propagated through the dynamics model, and then Bayes' formula is applied to assimilate the new observation, leading to the following decomposition of the filtering step:
    $X_j|Y_{1:j} \longrightarrow X_{j+1}|Y_{1:j} \longrightarrow X_{j+1}|Y_{1:j+1}.$
    \item Optimal decomposition: in the optimal particle filter, Bayes' formula is applied to particles before they are propagated through the dynamics, leading to the following decomposition of the filtering step:
    $X_j|Y_{1:j} \longrightarrow X_j|Y_{1:j+1}\longrightarrow X_{j+1}|Y_{1:j+1}.$
\end{enumerate}
 Throughout the chapter, we will use the notation
$$
\pi(x) = \frac{1}{N} \sum^{N}_{n=1} \delta \bigl(x-X^{(n)}\bigr)
$$
to represent the empirical measure defined by a sample $\{X^{(n)} \}_{n=1}^N,$ so that, for sufficiently smooth test function $h,$ 
$$ \mathcal{I}_{\pi}[h] = \Expect_\pi [h] = \frac{1}{N} \sum_{n=1}^N h \bigl(X^{(n)} \bigr). $$
We informally treat a distribution of this form as a p.d.f. 
Likewise, given normalized weights $\{w^{(n)} \}_{n=1}^N$ and a sample $\{X^{(n)} \}_{n=1}^N,$ we will use the notation
$$
\pi_w(x) =  \sum^{N}_{n=1} w^{(n)} \delta \bigl(x-X^{(n)}\bigr)
$$
to represent the weighted empirical measure defined by the requirement that, for sufficiently smooth test function $h,$
$$ \mathcal{I}_{\pi_w}[h] = \Expect_{\pi_w} [h] = \sum_{n=1}^N w^{(n)} h \bigl(X^{(n)} \bigr). $$
Finally, for positive definite matrix $A$ and vector $v,$ we will denote by  $| v |_A^2 : = v^T A^{-1} v$ the Mahalanobis norm.

\section{Bootstrap Particle Filter}\label{sec:bootstrapparticlefilter}
In the bootstrap particle filter, we propagate particles through the dynamics model, and then apply Bayes' formula to weight each particle according to the likelihood function implied by the observation model. At each time-step, we resample new particles to avoid weight degeneracy. 

  \FloatBarrier
\begin{algorithm}[H]
  \caption{Bootstrap Particle Filter\label{algo:bootparticlefilteralgo}}
  \begin{algorithmic}[1]
  \STATEx{\textbf{Input:} Initial distribution $f_0^N = f_0$, observations $Y_{1:J}$, sample size $N.$}
  \STATEx For $j =0, \ldots, J-1$ do for $1 \le n \le N$:
  \STATE{ Draw $X_j^{(n)} \stackrel{\text{i.i.d.}}{\sim} f_j^N.$}
  \STATE{ Propagate each sample through the dynamics model: 
  $$\hat{X}_{j+1}^{(n)} = a \bigl(X_j^{(n)} \bigr) + \xi_{j+1}^{(n)}, \quad \quad \xi_{j+1}^{(n)} \stackrel{\text{i.i.d.}}{\sim} \Nc(0, \Sigma).$$ }
  \STATE{ Weight each sample with the likelihood defined by the observation model: 
  $$ \tilde{w}_{j+1}^{(n)} = \exp{ \Big( -\frac{1}{2} \bigl| {Y_{j+1} - b \bigl(\hat{X}_{j+1}^{(n)} \bigr) \bigr|_{\Gamma}^2} \Big) }.$$ }
  \STATE{ Normalize the weights: $$ {w_{j+1}^{(n)}} = \frac{\tilde{w}_{j+1}^{(n)}}{\sum_{n=1}^N\tilde{w}_{j+1}^{(n)}} .$$ }
  \STATE{ Set $$ f_{j+1}^N (x) = \sum_{n=1}^N w_{j+1}^{(n)} \delta \bigl(x - \hat{X}_{j+1}^{(n)}\bigr).$$ }
  \STATEx{\textbf{Output:} Approximation $f_J^N$ to the filtering distribution $f_J$.}
 \end{algorithmic}
\end{algorithm}
  \FloatBarrier
  
  \begin{mybox}[colback=white]{Pros and Cons}
The bootstrap particle filter is a sequential Monte Carlo algorithm to approximate a time-evolving sequence of target distributions. It can be applied in general hidden Markov models and it is provably convergent in the large $N$ limit. However, in high-dimensional settings the weights typically have a large variance, leading to a small effective sample size and poor performance. 
\end{mybox}

Our goal now is to show that, for large $N,$  $f^N_J$ is close to $f_J$. To perform the analysis we first introduce three operators acting on probability distributions.

\subsection{Prediction, Analysis, and Sampling Operators}\label{ssec:algorithm}
The first operator we introduce, which we call the prediction operator, describes how p.d.f.s propagate through the Markov kernel specified by the dynamics model. 
\begin{definition}[Prediction Operator]
    The \emph{prediction} operator $\mathcal{P}$ acting on a p.d.f. $\pi$ is given by
    $$ (\mathcal{P}\pi)(x) = \int_{\mathbb{R}^d} p(\tilde{x}, x) \pi(\tilde{x}) \, d\tilde{x}, $$
    where $p$ is the Markov kernel associated with the dynamics model, namely
    \begin{equation}\label{eq:kernel}
 p(x, \tilde{x}) = \frac{1}{\sqrt{(2\pi)^d \det{\Sigma}}} \exp{ \Big( -\frac{1}{2} \bigl| \tilde{x} - a(x) \bigr|_{\Sigma}^2 \Big) }.    
    \end{equation}
\end{definition}
The second operator we introduce, which we call the analysis operator, takes as input a p.d.f. $\pi$ and produces as output the posterior p.d.f. obtained by taking $\pi$ as the prior and using the likelihood function specified by the observation model.
\begin{definition}[Analysis Operator]
    The \emph{analysis} operator $\mathcal{A}_j$ acting on a p.d.f. $\pi$ is given by
    $$ (\mathcal{A}_j\pi) (x) \propto \exp{ \Big( -\frac{1}{2} | Y_{j+1} - b(x) |^2_{\Gamma} \Big)} \pi(x). $$
\end{definition}
Denoting by $\hat{f}_{j+1} := \mathcal{P}f_j$ the distribution of $X_{j+1} | Y_{1:j},$ we thus have that $f_{j+1} = \mathcal{A}_j \hat{f}_{j+1}$. Consequently, we can write the filtering step as 
\begin{equation}\label{eq:truefiltersuccinct}
    f_{j+1} = \mathcal{A}_j \mathcal{P} f_j.
\end{equation}
    The operator $\mathcal{P}$ does not depend on $j$ because our dynamics model is time-homogeneous. However, the operator $\mathcal{A}_j$ depends on the observed data $Y_{j+1},$ and thus on the time index $j.$
\begin{remark}
	The operator $\mathcal{A}_j$ acting on a distribution of the form
	$$ \pi(x) = \frac{1}{N} \sum{\delta \bigl(x-X^{(n)}\bigr)} $$
	gives
	$$ (\mathcal{A}_j\pi)(x) = \sum_{n=1}^N {w^{(n)}} \delta \bigl(x - X^{(n)}\bigr), $$
	where
	$$ \tilde{w}^{(n)} = \exp \Bigl( -\frac{1}{2} \bigl| Y_{j+1} - b(x) \bigr|^2_{\Gamma} \Bigr) $$
	and $ {w^{(n)}}$ are the normalized weights.
\end{remark}

The third operator we introduce, which we call the sampling operator, takes as input a p.d.f. $\pi$ and outputs the empirical p.d.f. obtained by drawing $N$ samples from $\pi.$
\begin{definition}[Sampling Operator]
    The \emph{sampling} operator $\mathcal{S}^N$ acts on a distribution $\pi$ by drawing $N$ samples from $\pi$ and producing a Dirac approximation
    $$ (\mathcal{S}^N \pi)(x) = \frac{1}{N} \sum_{n=1}^{N}{\delta \bigl(x-X^{(n)}\bigr) }, \hspace{3mm} X^{(n)}
    \stackrel{\emph{i.i.d.}}{\sim} \pi.$$
\end{definition}
    Notice that $\mathcal{S}^N \pi$ is random probability distribution due to sampling from $\pi$. With the three operators we just defined, we can succinctly write the bootstrap particle filter as 
    \begin{equation}\label{eq:bootstrapsuccinct}
        f_{j+1}^{N} = \mathcal{A}_j\mathcal{S}^N\mathcal{P} f_j^N
    \end{equation}
In order to reconcile this way of writing the bootstrap particle filter with the algorithmic implementation in Algorithm \ref{algo:bootparticlefilteralgo},   
it is important to note that doing prediction and then sampling is equivalent to sampling first and then doing prediction.

\subsection{Bounds for Prediction, Analysis, and Sampling}\label{ssec:bounds}
We now seek to obtain bounds for the prediction, analysis, and sampling operators. 
We will use the following distance between random probability distributions:
\begin{equation}\label{eq:distancerandompdf}
    d(\pi, \pi') = \sup_{|h|_\infty \le 1}{\sqrt{\mathbb{E}\bigl[(\mathbb{E}_{\pi}[h] - \mathbb{E}_{\pi'}[h])^2\bigr]}}.
\end{equation}
It can be verified that \eqref{eq:distancerandompdf} indeed defines a valid distance in the space of random probability distributions, so that in particular it satisfies the triangle inequality.
In what follows, the randomness in the distributions we consider will arise from sampling, and the outer expectation is with respect to such randomness.

We first show that the prediction operator is a contraction in this metric.
\begin{lemma}\label{lemma:prediction} 
It holds that $$ d(\mathcal{P}\pi, \mathcal{P}\pi') \leq d(\pi, \pi').$$ 
\end{lemma}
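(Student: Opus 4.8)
The plan is to exploit the fact that, although $\mathcal{P}$ acts on measures, its \emph{dual} action on bounded test functions maps the unit ball $\{|h|_\infty \le 1\}$ into itself, precisely because $p$ is a Markov (probability) kernel. This non-expansiveness in the supremum norm is exactly what drives the contraction in the distance $d$.

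First I would fix a test function $h$ with $|h|_\infty \le 1$ and unfold the definition of $\mathcal{P}$ together with Fubini's theorem, computing, for any realized probability density $\pi$,
$$\mathbb{E}_{\mathcal{P}\pi}[h] = \int h(x) \int p(\tilde x, x)\pi(\tilde x)\, d\tilde x\, dx = \int \left(\int h(x) p(\tilde x, x)\, dx\right)\pi(\tilde x)\, d\tilde x = \mathbb{E}_\pi[Ph],$$
where I set $(Ph)(\tilde x) := \int h(x) p(\tilde x, x)\, dx$. Crucially, this identity holds \emph{pathwise}, i.e. for each realization of the random measure $\pi$, so the outer (sampling) expectation appearing in $d$ will simply carry through unchanged.

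Next I would verify that $Ph$ is again admissible: since $p(\tilde x, \cdot)$ is a probability density (its integral over the second argument equals $1$, which is also what guarantees $\mathcal{P}\pi$ is normalized), we have
$$|(Ph)(\tilde x)| \le \int |h(x)|\, p(\tilde x, x)\, dx \le |h|_\infty \int p(\tilde x, x)\, dx = |h|_\infty \le 1,$$
so that $|Ph|_\infty \le 1$. Applying the pathwise identity to both $\pi$ and $\pi'$ yields the equality of random variables $\mathbb{E}_{\mathcal{P}\pi}[h] - \mathbb{E}_{\mathcal{P}\pi'}[h] = \mathbb{E}_\pi[Ph] - \mathbb{E}_{\pi'}[Ph]$, whence $\mathbb{E}\bigl[(\mathbb{E}_{\mathcal{P}\pi}[h] - \mathbb{E}_{\mathcal{P}\pi'}[h])^2\bigr] = \mathbb{E}\bigl[(\mathbb{E}_\pi[Ph] - \mathbb{E}_{\pi'}[Ph])^2\bigr]$. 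Because $Ph$ lies in the unit ball over which the supremum defining $d$ is taken, the right-hand side is bounded by $d(\pi,\pi')^2$; taking square roots and then the supremum over all $h$ with $|h|_\infty \le 1$ on the left gives $d(\mathcal{P}\pi, \mathcal{P}\pi') \le d(\pi, \pi')$.

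Since the argument is essentially a one-line duality computation, there is no serious obstacle; the only points demanding care are the bookkeeping of the two arguments of the kernel $p$ (it is $\int p(\tilde x, x)\, dx = 1$, the integral over the destination variable, that is used) and the observation that the key identity is deterministic in each realization, so that the outer expectation and the supremum over test functions can both be manipulated cleanly. I would make sure to state explicitly that $Ph$ ranges within the same function class over which the supremum is computed, which is what converts the intermediate equality into the claimed inequality.
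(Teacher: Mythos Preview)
Your proof is correct and follows essentially the same approach as the paper: define the dual test function $Ph$ (the paper calls it $h^{\dagger}$), use Fubini to show $\mathbb{E}_{\mathcal{P}\pi}[h]=\mathbb{E}_\pi[Ph]$, bound $|Ph|_\infty\le 1$ using that $p$ is a Markov kernel, and take the supremum. Your additional remarks about the identity holding pathwise and the bookkeeping of the kernel arguments are helpful but not present in the paper's terser version.
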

\begin{proof}
Let $|h|_\infty \le 1$ and define $$h^{\dagger}(x) = \int_{\mathbb{R}^d} p(x,\tilde{x})h(\tilde{x}) \, d\tilde{x},$$ where recall that $p$ denotes the transition kernel \eqref{eq:kernel} of the dynamics model. Note that
 $$|h^{\dagger}(x)|\leq \int_{\mathbb{R}^d} p(x,\tilde{x}) \, d\tilde{x}=1$$ 
 and 
\begin{align*}
    \mathbb{E}_{\pi}[h^{\dagger}] =& \int_{\mathbb{R}^d} h^{\dagger}(x)\pi(x) \, dx\\
    =& \int_{\mathbb{R}^d} \Big( \int_{\mathbb{R}^d}p(x,\tilde{x})h(\tilde{x}) \, d\tilde{x} \Big) \pi(x) \, dx\\
    =& \int_{\mathbb{R}^d} \Big(\int_{\mathbb{R}^d} p(x,\tilde{x})\pi(x) \, dx \Big) h(\tilde{x}) \, d\tilde{x}\\
    =& \int_{\mathbb{R}^d} (\mathcal{P}\pi)(\tilde{x})h(\tilde{x}) \, d\tilde{x}, 
\end{align*}
and so
$$ \mathbb{E}_{\pi}[h^{\dagger}]=\mathbb{E}_{\mathcal{P}\pi}[h].$$
Finally 
\begin{align*}
    d(\mathcal{P}\pi, \mathcal{P}\pi') &= \sup_{|h|_\infty \le 1}{\sqrt{\mathbb{E}\bigl[(\mathbb{E}_{\mathcal{P}\pi}[h] - \mathbb{E}_{\mathcal{P}\pi'}[h])^2\bigr]}}\\
    &\leq \sup_{|h^{\dagger}| \leq 1}{\sqrt{\mathbb{E}\bigl[(\mathbb{E}_{\pi}[h^{\dagger}] - \mathbb{E}_{\pi'}[h^{\dagger}])^2\bigr]}}\\
    &= d(\pi,\pi'),
\end{align*}
as desired. \hfill $\square$
\end{proof}

We next establish a bound for the analysis operator under the following assumption:
\begin{assumption} \label{assumption}
There exists $\kappa \in (0,1)$ such that, for all $x \in \mathbb{R}^d$ and $j \in \{ 0,1,\ldots,J-1\},$ 
 $$ \kappa \leq w_{j+1}(x) := \exp\Bigl(-\frac{1}{2} |Y_{j+1}- b{(x)}|_\Gamma^2\Bigr) \leq \kappa^{-1}.$$
 \end{assumption}
While this strong assumption can be significantly relaxed, it will allow us to streamline our theory. In particular, notice that the proof of the following result is nearly identical to that of Theorem \ref{thm:autonormalized} for autonormalized importance sampling. 
\begin{lemma}\label{lemma:analysis}
Under Assumption \ref{assumption}, 
$$ d( \mathcal{A}_j \pi, \mathcal{A}_j\pi') \leq \frac{2}{\kappa^2}d(\pi,\pi').$$
\end{lemma}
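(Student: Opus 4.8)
The plan is to mirror the proof of Theorem \ref{thm:autonormalized}, exploiting that the analysis operator produces a self-normalized (ratio) expectation. Writing $w := w_{j+1}$ for the likelihood weight, for any test function $h$ with $|h|_\infty \le 1$ we have
$$\mathbb{E}_{\mathcal{A}_j \pi}[h] = \frac{\mathbb{E}_\pi[hw]}{\mathbb{E}_\pi[w]}, \qquad \mathbb{E}_{\mathcal{A}_j \pi'}[h] = \frac{\mathbb{E}_{\pi'}[hw]}{\mathbb{E}_{\pi'}[w]},$$
so the object to control is a difference of ratios. Abbreviating $a = \mathbb{E}_\pi[hw]$, $b = \mathbb{E}_\pi[w]$, $a' = \mathbb{E}_{\pi'}[hw]$, $b' = \mathbb{E}_{\pi'}[w]$, I would first record the elementary identity
$$\frac{a}{b} - \frac{a'}{b'} = \frac{1}{b}\Bigl[(a - a') - \mathbb{E}_{\mathcal{A}_j\pi'}[h]\,(b - b')\Bigr],$$
which isolates the two ``differences of expectations'' $a-a'$ and $b-b'$ that the metric $d$ is designed to measure.

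The crux is then three bounds. The first uses Assumption \ref{assumption}: since $w \ge \kappa$ pointwise, the denominator satisfies $b = \mathbb{E}_\pi[w] \ge \kappa$ almost surely, hence $1/b \le 1/\kappa$; moreover $|\mathbb{E}_{\mathcal{A}_j\pi'}[h]| \le 1$ because $\mathcal{A}_j\pi'$ is a probability distribution and $|h|_\infty \le 1$. The second and third bounds control $a-a'$ and $b-b'$ in the $L^2$ sense defining $d$. Here I would use that $w \le \kappa^{-1}$, so $\kappa h w$ and $\kappa w$ are both test functions bounded by $1$; applying the definition \eqref{eq:distancerandompdf} to these rescaled functions yields
$$\sqrt{\mathbb{E}\bigl[(a-a')^2\bigr]} \le \kappa^{-1} d(\pi,\pi'), \qquad \sqrt{\mathbb{E}\bigl[(b-b')^2\bigr]} \le \kappa^{-1} d(\pi,\pi').$$

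To conclude, I would note that because $1/b \le 1/\kappa$ and $|\mathbb{E}_{\mathcal{A}_j\pi'}[h]| \le 1$ hold almost surely, the identity gives the pointwise estimate
$$\bigl|\mathbb{E}_{\mathcal{A}_j \pi}[h] - \mathbb{E}_{\mathcal{A}_j \pi'}[h]\bigr| \le \frac{1}{\kappa}\bigl(|a-a'| + |b-b'|\bigr),$$
so that after taking the $L^2$ norm and invoking Minkowski's inequality,
$$\sqrt{\mathbb{E}\bigl[(\mathbb{E}_{\mathcal{A}_j \pi}[h] - \mathbb{E}_{\mathcal{A}_j \pi'}[h])^2\bigr]} \le \frac{1}{\kappa}\Bigl(\kappa^{-1} d(\pi,\pi') + \kappa^{-1} d(\pi,\pi')\Bigr) = \frac{2}{\kappa^2}\,d(\pi,\pi').$$
Taking the supremum over $|h|_\infty \le 1$ gives the claimed bound. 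The main subtlety to watch is the randomness of the measures: the quantities $b$ and $\mathbb{E}_{\mathcal{A}_j\pi'}[h]$ appearing in the decomposition are random, so I must pull out their \emph{almost sure} deterministic bounds $b \ge \kappa$ and $|\mathbb{E}_{\mathcal{A}_j\pi'}[h]|\le 1$ \emph{before} passing to $L^2$ norms; this is exactly what lets the argument reduce to the same structure as the autonormalized importance sampling estimate, with the extra factor $\kappa^{-2}$ arising from one power of $\kappa^{-1}$ for the denominator and one for each reweighted test function.
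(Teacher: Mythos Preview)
Your proposal is correct and follows essentially the same route as the paper: the same ratio decomposition isolating $a-a'$ and $b-b'$, the same almost-sure bounds $\mathbb{E}_\pi[w]\ge\kappa$ and $|\mathbb{E}_{\mathcal{A}_j\pi'}[h]|\le 1$, and the same rescaling $w\mapsto\kappa w$ to obtain test functions bounded by $1$. The only cosmetic difference is that the paper uses $(|A|+|B|)^2\le 2(A^2+B^2)$ after squaring, whereas you invoke Minkowski's inequality directly; both give the identical constant $2/\kappa^2$.
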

\begin{proof}
Let $h$ with $|h|_\infty \le 1$, then 
\begin{align*}
    \mathbb{E}_{\mathcal{A}\pi}[h]-\mathbb{E}_{\mathcal{A}\pi'}[h] =& \frac{\mathbb{E}_{\pi}[hw]}{\mathbb{E}_{\pi}[w]}-\frac{\mathbb{E}_{\pi'}[hw]}{\mathbb{E}_{\pi'}[w]}\\
    =& \frac{\mathbb{E}_{\pi}[hw]}{\mathbb{E}_{\pi}[w]}-\frac{\mathbb{E}_{\pi'}[hw]}{\mathbb{E}_{\pi}[w]}+\frac{\mathbb{E}_{\pi'}[hw]}{\mathbb{E}_{\pi}[w]}-\frac{\mathbb{E}_{\pi'}[hw]}{\mathbb{E}_{\pi'}[w]}\\
    =& \frac{1}{\kappa} \biggl( \frac{\mathbb{E}_{\pi}[h\kappa w]-\mathbb{E}_{\pi'}[h\kappa w]}{\mathbb{E}_{\pi}[w]} +\frac{\mathbb{E}_{\pi'}[hw]}{\mathbb{E}_{\pi’}[w]}\frac{\mathbb{E}_{\pi'}[\kappa w]-\mathbb{E}_{\pi}[\kappa w]}{\mathbb{E}_{\pi}[w]} \biggr).
\end{align*}
Since $|h|_\infty \le 1,$ we have
$$\left|\mathbb{E}_{\mathcal{A}\pi'} [h] \right|=\left|\frac{\mathbb{E}_{\pi'}[hw]}{\mathbb{E}_{\pi'}[w]} \right| \leq 1.$$
In addition, by Assumption \ref{assumption}, we have $\mathbb{E}_\pi[w] \ge \kappa.$
Therefore, 
\begin{align*}
    \left| \mathbb{E}_{\mathcal{A}\pi}[h]-\mathbb{E}_{\mathcal{A}\pi'}[h] \right| 
    \leq \frac{1}{\kappa^2} 
    \biggl(
    \left| \mathbb{E}_{\pi}[h\kappa w]-\mathbb{E}_{\pi'}[h\kappa w] \right| + 
    \left| \mathbb{E}_{\pi'}[\kappa w]-\mathbb{E}_{\pi}[\kappa w] \right| 
   \biggr),
\end{align*}
and so
\begin{align*}
\mathbb{E}\Big[ \bigl(\mathbb{E}_{\mathcal{A}\pi}[h]-\mathbb{E}_{\mathcal{A}\pi'}[h] \bigr)^2 \Big] & \leq    \frac{2}{\kappa^4} \Biggl( \mathbb{E}\Bigl[\bigl(\mathbb{E}_{\pi}[h\kappa w]-\mathbb{E}_{\pi'}[h\kappa w] \bigr)^2 \Bigr] 
+ \mathbb{E} \Bigl[ \bigl( \mathbb{E}_{\pi'}[\kappa w]-\mathbb{E}_{\pi}[\kappa w]  \bigr)^2 \Bigr] \Biggr) \\
& \leq \frac{4}{\kappa^4} d(\pi,\pi')^2,
\end{align*}
where in the last line we have used that $\left| \kappa w \right | \leq 1.$ Taking square roots and the supremum over $|h|_\infty \le 1$ completes the proof. \hfill $\square$
\end{proof}

Lastly, the sampling operator can be bounded using Theorem \ref{thm2.1} for classical Monte Carlo integration. 
\begin{lemma}\label{lemma:sampling}
For any distribution $\pi$, $$ d(\pi,\mathcal{S}^N\pi) \leq \frac{1}{\sqrt{N}}.$$
\end{lemma}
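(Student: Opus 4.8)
The plan is to recognize that the quantity inside the distance $d(\pi, \mathcal{S}^N\pi)$ is precisely the mean squared error of the classical Monte Carlo estimator, and then invoke Theorem \ref{thm2.1}. The key observation is that for a fixed test function $h$ with $|h|_\infty \le 1,$ the true expectation $\mathbb{E}_\pi[h]$ is a deterministic scalar, whereas $$\mathbb{E}_{\mathcal{S}^N \pi}[h] = \frac{1}{N}\sum_{n=1}^N h\bigl(X^{(n)}\bigr), \qquad X^{(n)} \stackrel{\text{i.i.d.}}{\sim} \pi,$$ is exactly the classical Monte Carlo estimator $\mathcal{I}_\pi^{\text{\tiny MC}}[h]$ of $\mathcal{I}_\pi[h] = \mathbb{E}_\pi[h].$ Thus the inner expectation appearing in the definition \eqref{eq:distancerandompdf} of the distance is nothing but $\mathbb{E}\bigl[(\mathcal{I}_\pi^{\text{\tiny MC}}[h] - \mathcal{I}_\pi[h])^2\bigr],$ the MSE of classical Monte Carlo.

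First I would fix an arbitrary $h$ with $|h|_\infty \le 1$ and write out $\mathbb{E}\bigl[(\mathbb{E}_\pi[h] - \mathbb{E}_{\mathcal{S}^N\pi}[h])^2\bigr].$ Since classical Monte Carlo is unbiased (first claim of Theorem \ref{thm2.1}), this MSE agrees with the variance, and the second claim of Theorem \ref{thm2.1} gives $$\mathbb{E}\Bigl[\bigl(\mathbb{E}_\pi[h] - \mathbb{E}_{\mathcal{S}^N\pi}[h]\bigr)^2\Bigr] = \mathbb{V}\bigl[\mathcal{I}_\pi^{\text{\tiny MC}}[h]\bigr] = \frac{1}{N}\mathbb{V}_{X\sim\pi}\bigl[h(X)\bigr].$$

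Next I would bound the variance using the constraint $|h|_\infty \le 1.$ Since the variance is dominated by the second moment, $$\mathbb{V}_{X\sim\pi}[h(X)] \le \mathbb{E}_{X\sim\pi}\bigl[h(X)^2\bigr] \le \|h\|_\infty^2 \le 1.$$ Combining the two displays yields $\mathbb{E}\bigl[(\mathbb{E}_\pi[h] - \mathbb{E}_{\mathcal{S}^N\pi}[h])^2\bigr] \le 1/N,$ uniformly in $h.$ Taking square roots and then the supremum over all $h$ with $|h|_\infty \le 1$ gives $d(\pi, \mathcal{S}^N\pi) \le 1/\sqrt{N},$ as desired.

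There is no serious obstacle here: the result is an immediate corollary of the classical Monte Carlo error analysis already established in Theorem \ref{thm2.1}, combined with the elementary variance bound. The only point requiring a moment of care is confirming that the outer expectation in \eqref{eq:distancerandompdf}---which is over the sampling randomness in $\mathcal{S}^N\pi$---is precisely the expectation over the i.i.d. draws $X^{(1)},\ldots,X^{(N)}$ that defines the Monte Carlo MSE, so that Theorem \ref{thm2.1} applies verbatim with target distribution $\pi.$
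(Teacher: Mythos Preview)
Your proposal is correct and follows essentially the same argument as the paper: identify $\mathbb{E}_{\mathcal{S}^N\pi}[h]$ with the classical Monte Carlo estimator, invoke Theorem~\ref{thm2.1} to get $\mathbb{V}_\pi[h]/N$, bound the variance by $1$ using $|h|_\infty\le 1$, then take square roots and the supremum over $h$.
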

\begin{proof}
For any $h$ with $|h|_\infty \le 1$, $$ \mathbb{E}_{\mathcal{S}^{{\tiny N}} \pi}[h]=\frac{1}{N}\sum_{n=1}^N h \bigl(X^{(n)}\bigr),$$ where $ X^{(n)} \stackrel{\text{i.i.d.}}{\sim}  \pi$. Therefore, using Theorem \ref{thm2.1} for classical Monte Carlo integration,
\begin{align*}
    \mathbb{E}\Big[ \bigl(\mathbb{E}_{\mathcal{S}^N \pi}[h]-\mathbb{E}_{\pi}[h]\bigr)^2 \Big] &= \mathbb{E}\Big[ \bigl(\mathcal{I}_{\pi}^{\text{\tiny MC}}[h]-\mathcal{I}_{\pi}[h]\bigr)^2 \Big]
    = \frac{\mathbb{V}_{\pi}[h]}{N} \leq  \frac{1}{N}.
\end{align*}
Taking square roots and the supremum over $|h|_\infty \le 1$ gives the result. \hfill $\square$
\end{proof}

\subsection{Error of Bootstrap Particle Filter}
Recall from equations \eqref{eq:truefiltersuccinct} and \eqref{eq:bootstrapsuccinct} that the filtering step and the bootstrap particle filter can be succinctly written as
\begin{align}\label{eq:repetition}
    f_{j+1} &= \mathcal{A}_j \mathcal{P} f_j, \\
    f_{j+1}^{N} &= \mathcal{A}_j\mathcal{S}^N\mathcal{P} f_j^N.
\end{align}
With the bounds we derived for prediction, analysis, and sampling operators in Lemmas \ref{lemma:prediction}, \ref{lemma:analysis}, and \ref{lemma:sampling}, the following theorem controls the error of the bootstrap particle filter. The proof resembles the classical Lax equivalence framework in numerical analysis, where convergence is established by ensuring consistency and stability. 
\begin{theorem}[Error of Bootstrap Particle Filter]
Under Assumption \ref{assumption}, there exists a constant $c=c(J,\kappa)$ such that, for $j=1,\ldots,J$, 
$$d(f_{j},f_{j}^{N}) \leq \frac{c}{\sqrt{N}}.$$
\end{theorem}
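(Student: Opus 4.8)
The plan is to prove the bound by induction on $j$, following the classical Lax equivalence paradigm: I will combine the \emph{stability} of the exact filtering step (encoded in the contraction and Lipschitz bounds for $\mathcal{P}$ and $\mathcal{A}_j$) with the \emph{consistency} error incurred at each step by replacing the exact distribution with its empirical approximation (encoded in the sampling bound of Lemma \ref{lemma:sampling}). The base case is immediate: since $f_0^N = f_0$ by the initialization of Algorithm \ref{algo:bootparticlefilteralgo}, we have $d(f_0, f_0^N) = 0$.

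For the inductive step, recall from \eqref{eq:repetition} that $f_{j+1} = \mathcal{A}_j \mathcal{P} f_j$ and $f_{j+1}^N = \mathcal{A}_j \mathcal{S}^N \mathcal{P} f_j^N$. I would insert the intermediate distribution $\mathcal{A}_j \mathcal{P} f_j^N$ and invoke the triangle inequality for $d$ to split the error into a propagated term and a freshly introduced sampling term:
\begin{align*}
d(f_{j+1}, f_{j+1}^N) &\le d\bigl(\mathcal{A}_j \mathcal{P} f_j,\, \mathcal{A}_j \mathcal{P} f_j^N \bigr) + d\bigl(\mathcal{A}_j \mathcal{P} f_j^N,\, \mathcal{A}_j \mathcal{S}^N \mathcal{P} f_j^N \bigr).
\end{align*}
For the first term, apply Lemma \ref{lemma:analysis} and then Lemma \ref{lemma:prediction} to get $d(\mathcal{A}_j \mathcal{P} f_j, \mathcal{A}_j \mathcal{P} f_j^N) \le \tfrac{2}{\kappa^2} d(\mathcal{P} f_j, \mathcal{P} f_j^N) \le \tfrac{2}{\kappa^2} d(f_j, f_j^N)$. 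For the second term, apply Lemma \ref{lemma:analysis} and then Lemma \ref{lemma:sampling} with $\pi = \mathcal{P} f_j^N$ to get $d(\mathcal{A}_j \mathcal{P} f_j^N, \mathcal{A}_j \mathcal{S}^N \mathcal{P} f_j^N) \le \tfrac{2}{\kappa^2} d(\mathcal{P} f_j^N, \mathcal{S}^N \mathcal{P} f_j^N) \le \tfrac{2}{\kappa^2}\tfrac{1}{\sqrt{N}}$.

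Writing $a_j := d(f_j, f_j^N)$ and $\lambda := 2/\kappa^2$, the two bounds combine into the linear recursion $a_{j+1} \le \lambda a_j + \lambda/\sqrt{N}$ with $a_0 = 0$. Since $\kappa \in (0,1)$ forces $\lambda > 1$, unrolling the recursion yields $a_j \le \tfrac{\lambda}{\sqrt N}\sum_{i=0}^{j-1} \lambda^i = \tfrac{\lambda}{\sqrt N}\,\tfrac{\lambda^{j} - 1}{\lambda - 1}$, so the claim holds with $c(J, \kappa) = \lambda(\lambda^{J} - 1)/(\lambda - 1)$, a quantity depending only on $J$ and $\kappa$.

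The main subtlety to handle carefully is the application of Lemma \ref{lemma:sampling} to $\pi = \mathcal{P} f_j^N$, which is itself a \emph{random} probability distribution, as it depends on the particles drawn at earlier stages. The distance $d$ in \eqref{eq:distancerandompdf} already carries an outer expectation over all sampling randomness, so I would check that the sampling bound survives this extra layer: conditioning on $\mathcal{P} f_j^N$, the fresh i.i.d.\ draws $X^{(n)} \sim \mathcal{P} f_j^N$ give $\mathbb{E}\bigl[(\mathbb{E}_{\mathcal{S}^N \pi}[h] - \mathbb{E}_\pi[h])^2 \mid \pi\bigr] = \mathbb{V}_\pi[h]/N \le 1/N$ for every $h$ with $|h|_\infty \le 1$, and taking the outer expectation via the tower property preserves the $1/\sqrt N$ bound. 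This conditional-then-marginal argument is precisely what decouples the additive sampling error introduced at each stage from the accumulated error propagated from previous stages, and it is what makes the clean recursion above legitimate.
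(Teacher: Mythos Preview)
Your proof is correct and follows essentially the same route as the paper: the same triangle-inequality split through the intermediate $\mathcal{A}_j\mathcal{P}f_j^N$, the same chaining of Lemmas \ref{lemma:analysis}, \ref{lemma:prediction}, and \ref{lemma:sampling}, and the same linear recursion $a_{j+1}\le \lambda a_j + \lambda/\sqrt{N}$ with $\lambda = 2/\kappa^2$, leading to the identical constant $c = \lambda(\lambda^J-1)/(\lambda-1)$. Your final paragraph on the tower-property argument for applying the sampling lemma to the random distribution $\mathcal{P}f_j^N$ is a point the paper leaves implicit, so if anything you are slightly more careful there.
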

\begin{proof}
Let $e_{j}=d(f_{j},f_{j}^N)$, then using \eqref{eq:repetition} and the triangle inequality for the distance $d$ defined in \eqref{eq:distancerandompdf}, we have
\begin{align*}
    e_{j+1} =  d(f_{j+1},f_{j+1}^N) &=  d(\mathcal{A}_j \mathcal{P}f_j,\mathcal{A}_j\mathcal{S}^N\mathcal{P}f_j^N)\\
    &\leq  d(\mathcal{A}_j \mathcal{P}f_j,\mathcal{A}_j \mathcal{P}f_j^N)+d(\mathcal{A}_j \mathcal{P}f_j^N,\mathcal{A}_j\mathcal{S}^N\mathcal{P}f_j^N).
\end{align*}
The first term represents the growth of error over one time-iteration by propagating through the operator $\mathcal{A}_j \mathcal{P}$ which governs the true evolution of the filtering distribution, whereas the second term is a truncation error, i.e. the error incurred by the algorithm in one iteration. 
We then have
\begin{align*}
    e_{j+1} \leq & \frac{2}{\kappa^2} \Bigl( d(\mathcal{P}f_j^N,\mathcal{P}f_j) +d(\mathcal{P}f_j^N,\mathcal{S}^N\mathcal{P}f_j^N) \Bigr)\\
    \leq &\frac{2}{\kappa^2} \Big(e_j+\frac{1}{\sqrt{N}}\Big),
\end{align*}
where in the first inequality we used Lemma \ref{lemma:analysis} for the analysis operator and in the second inequality we used Lemmas \ref{lemma:prediction} and \ref{lemma:sampling} for the prediction and sampling operators.
Letting $\lambda =\frac{2}{\kappa^2}$, it then follows from induction that 
$$e_j \leq \lambda^j e_0 +\frac{\lambda}{\sqrt{N}}\frac{1-\lambda^j}{1-\lambda},$$ 
where $e_0=0$ since $f_0^N=f_0$.  Now, noting that the above expression is monotonically increasing in $j$ completes the proof by setting $c=\frac{\lambda(1-\lambda^J)}{1-\lambda}.$ \hfill $\square$
\end{proof}

\section{Optimal Particle Filter}\label{sec:optimalparticlefilter}
The bootstrap particle filter propagates particles through the dynamics and then assigns them weights according to the likelihood defined by the observation model. In contrast, 
the optimal particle filter stems from a different conceptual decomposition of the filtering step, where we first apply Bayes’ formula and then propagate particles through a Markov kernel which depends on the new observation.
\subsection{Derivation of the Optimal Filter Decomposition} 
Let $\Prob(\cdot |\cdot)$ denote conditional density. The following calculation illustrates the decomposition of the filtering step that the optimal particle filter approximates via particles:
\begin{align*}
    \Prob(X_{j+1}|Y_{1:j+1}) & =\int_{\mathbb{R}^d}\Prob(X_j,X_{j+1}|Y_{1:j+1})\, dX_{j}\\
    &=\int_{\mathbb{R}^d}\Prob(X_{j+1}|X_j,Y_{1:j+1})\Prob(X_{j}|Y_{1:j+1}) \, dX_{j}\\
    &=\int_{\mathbb{R}^d}\Prob(X_{j+1}|X_j,Y_{j+1})\Prob(X_{j}|Y_{1:j+1}) \, dX_{j}\\
    &=\int_{\mathbb{R}^d}\Prob(X_{j+1}|X_j,Y_{j+1})\frac{\Prob(Y_{j+1}|X_{j},Y_{1:j})}{\Prob(Y_{j+1}|Y_{1:j})}\Prob(X_j|Y_{1:j}) \, dX_{j}\\
    & = \int_{\mathbb{R}^d}\Prob(X_{j+1}|X_j,Y_{j+1})\frac{\Prob(Y_{j+1}|X_{j})}{\Prob(Y_{j+1}|Y_{1:j})}\Prob(X_j|Y_{1:j}) \, dX_{j}\\
    & = \mathcal{P}_j^{\text{\tiny OPF}}\mathcal{A}_j^{\text{\tiny OPF}}\Prob(X_j|Y_{1:j}),
\end{align*}
where
\begin{align*}
\mathcal{P}_j^{\text{\tiny OPF}} \pi(x_{j+1}) &=\int_{\mathbb{R}^d}\Prob(x_{j+1}|X_j,Y_{j+1})\pi(X_j) \, dX_j,\\
 \mathcal{A}_j^{\text{\tiny OPF}}\pi(x_j) &\propto \Prob(Y_{j+1}|x_j) \pi(x_j).
\end{align*} 
We have therefore derived the following decomposition of the filtering step:
$$ f_{j+1}=\mathcal{P}_j^{\text{\tiny OPF}}\mathcal{A}_j^{\text{\tiny OPF}}f_j.$$

\subsection{Implementation}
In general, it is not possible to implement the optimal particle filter in the fully nonlinear setting due to two computational bottlenecks:
\begin{itemize}
\item There may not be a closed formula for evaluating the likelihood $\Prob(Y_{j+1}|X_{j}),$ making unfeasible the computation of the particle weights. 
\item It may not be possible to sample from the Markov kernel $\Prob(X_{j+1}|X_j,Y_{j+1}),$ making unfeasible the propagation of particles. 
\end{itemize}
However, when the observation function $b(\cdot)$ is linear, i.e. $b(\cdot) = H\cdot$ for some matrix $H,$ both bottlenecks may be overcome.
We thus consider the following setting, which arises in many applications:
\begin{alignat*}{3}
X_{j+1} &= a(X_j) + \xi_{j+1}, \quad \quad \quad  &&\xi_{j+1} \sim  \Nc(0, \Sigma),  \\
Y_{j+1} &= HX_{j+1} + \eta_{j+1}, \quad \quad \quad &&\eta_{j+1} \sim \mathcal{N}(0, \Gamma),
\end{alignat*}
where $X_0,$ $\{\xi_j\},$ and $\{\eta_j\}$ are all assumed to be independent. 
First, note that 
\begin{align*}
Y_{j+1} =  H a(X_j)+H\xi_{j+1}+\eta_{j+1},
\end{align*} 
   which implies that
    $\Prob(Y_{j+1}|X_j) =  \mathcal{N} \bigl(H a(X_j),H\Sigma H^{T}+\Gamma \bigr).$
Second, note that
 $$ \begin{aligned} \Prob(X_{j+1}|X_j,Y_{j+1}) & \propto  \exp\biggl(-\frac{1}{2} \bigl|Y_{j+1}-HX_{j+1} \bigr|^2_{\Gamma}-\frac{1}{2} \bigl|X_{j+1}- a(X_j) \bigr|^2_{\Sigma}\biggr),
    \end{aligned}$$
  which implies via completion of the square that $\Prob(X_{j+1}|X_j,Y_{j+1}) = \Nc(m_{j+1},C),$ with 
\begin{align*}
	m_{j+1} & = (I-KH) a(X_j) + KY_{j+1},\\
	C & = (I-KH)\Sigma,\\
	K & = \Sigma H^T S^{-1},\\
	S & = H \Sigma H^T + \Gamma.
\end{align*}
These formulas are similar to those arising in the Kalman filter algorithm. We are now ready to present a practical implementation of the optimal particle filter with linear observations.

\begin{algorithm}[H]
  \caption{Optimal Particle Filter\label{algo:optimalparticlefilteralgo}}
  \begin{algorithmic}[1]
  \STATEx{\textbf{Input:} Initial distribution $f_0$, observations $Y_{1:J},$ sample size $N.$}
  \STATEx{\textbf{Initial sampling:} Sample $ X_{0}^{(1)}, \ldots, X_{0}^{(N)}  \stackrel{\text{i.i.d.}}{\sim}  f_0$ so that $f_0^N= \mathcal{S}^N f_0.$}
  \STATEx{\textbf{Subsequent sampling:} 
  \STATEx For $j =0, \ldots, J-1$, do for $1 \le n \le N$:}
  \STATE{Set $$ \hat{X}_{j+1}^{(n)}= (I-KH) a \bigl(X_j^{(n)}\bigr)+KY_{j+1}+\zeta_{j+1}^{(n)}, \quad \quad \zeta_{j+1}^{(n)} \stackrel{\text{i.i.d.}}{\sim}  \Nc(0,C).$$ }
  \STATE{Set 
  $$\tilde{w}_{j+1}^{(n)} = \exp\Bigl(-\frac{1}{2} \bigl|Y_{j+1}-H a \bigl(X_{j}^{(n)}\bigr) \bigr|^2_{S}\Bigr).$$}
  \STATE{ Normalize the weights: $$ {w_{j+1}^{(n)}} = \frac{\tilde{w}_{j+1}^{(n)}}{\sum_{n=1}^N\tilde{w}_{j+1}^{(n)}} .$$ }
  \STATE{ Draw $X_{j+1}^{(n)} \stackrel{\text{i.i.d.}}{\sim} \sum_{n=1}^N w_{j+1}^{(n)} \delta \bigl(x - \hat{X}_{j+1}^{(n)} \bigr).$}
  \STATE{Set $$ f_{j+1}^N (x)=\frac{1}{N}\sum_{n=1}^N \delta \bigl(x - X_{j+1}^{(n)}\bigr).$$ }
  \vspace{2mm}
  \STATEx{\textbf{Output:} Approximation $f_J^N$ to the filtering distribution $f_J$.}
 \end{algorithmic}
\end{algorithm}

  \begin{mybox}[colback=white]{Pros and Cons}
Similar to the bootstrap particle filter, the optimal particle filter is convergent in the large $N$ limit, but also suffers from small effective sample size in high dimensional settings. 
The variance of the weights is smaller, however, for the optimal particle filter than for the bootstrap particle filter. It is important to notice that, in contrast to the bootstrap filter, the optimal filter relies on Gaussian assumptions on the noise in the dynamics and observation models, and on linearity of the observations. Thus, it can only be deployed on a smaller class of problems. 
\end{mybox}

\section{Discussion and Bibliography}\label{sec:PFbibliography}
Sequential Monte Carlo methods are overviewed from an algorithmic viewpoint
in \cite{doucet2001introduction,doucet2000sequential} 
and from a mathematical perspective in \cite{del2004feynman,chopin2020introduction}. The term ``bootstrap'' particle filter was coined in the seminal work \cite{gordon1993novel} in reference to the bootstrap \cite{efron1979bootstrap}, a standard statistical procedure that also involves sampling with replacement. The optimal particle filter is discussed, and further references given,
in \cite{doucet2000sequential}; see section IID. While the term ``optimal'' particle filter is common in the literature,  the sense in which this filter is optimal is indeed rather weak. We refer to \cite[Chapter 10]{chopin2020introduction}, which utilizes instead the term ``guided'' particle filter, for further discussion on this topic; in particular, \cite[Theorem 10.1]{chopin2020introduction} formalizes the criterion of \emph{local optimality} and the ensuing discussion provides compelling criticisms of this notion of optimality. 

The proof of convergence of the bootstrap particle filter presented here
originates in \cite{rebeschini2015can} and can also be found in \cite{sanz2023inverse}. We refer to \cite[Chapter 12]{sanz2023inverse} for a similar proof of convergence for a Gaussianized optimal particle filter, and to \cite{johansen2008note} for a convergence analysis of the optimal particle filter. 
Throughout much of this chapter, we considered for simplicity the case of Gaussian additive noise and linear observation operator.
More general convergence and stability results for sequential Monte Carlo methods can be found, for instance, in \cite{del2004feynman} and \cite[Chapter 11]{chopin2020introduction}.

For problems in which the dynamics model is nonlinear and the state space of the hidden process is low dimensional, particle filters are enormously successful. Generalizing them so that they work for the high-dimensional problems that arise,
for example, in geophysical applications, provides a major challenge, since in those settings the particle weights typically concentrate on one, or a small number, of particles ---see for instance 
\cite{bickel2008sharp, snyder2008obstacles, Snyder2011,agapiou2017importance}. 
Weight collapse is intimately related to the fact that, as discussed in Chapter \ref{chap:MCintegration}, importance sampling weights have large variance when the target and the proposal distributions are far apart. 
As noted in  \cite{snyderberngtsson,agapiou2017importance}, an advantage of the optimal particle filter over the bootstrap particle filter is that the optimal filter utilizes a proposal that is closer to the target, which helps alleviate weight collapse over each iteration of the filter. Resampling techniques to turn weighted samples into unweighted ones are essential to the success of particle filters over multiple iterations. We refer to \cite[Chapter 9]{chopin2020introduction} for an in-depth exploration and comparison of resampling schemes in sequential Monte Carlo.

An important algorithmic idea in sequential Monte Carlo, as in many other algorithms studied in these notes, is to introduce auxiliary variables to accelerate computation \cite{pitt1999filtering,johansen2008note}.
In addition, exploiting decay of correlations through \emph{localization} is often needed when implementing particle filters for online estimation of discretizations of spatial fields. 
A review of local particle filters can be found in \cite{farchi2018comparison}. The paper  \cite{rebeschini2015can} investigates, from a theoretical viewpoint, whether localization can help to beat the curse of dimension. Attempts to bridge particle filters with ensemble Kalman filters to alleviate the curse of dimension include \cite{FK13,stordal2011bridging}, and the relation between the collapse of ensemble and particle methods is investigated in the paper \cite{morzfeld2017collapse}, which also emphasizes the importance of localization. Further theoretical investigations of localization for ensemble Kalman methods include \cite{al2024non,al2023covariance}. We close by pointing out that sequential Monte Carlo methods can be combined in several ways with algorithms studied in previous chapters, and we refer to \cite{andrieu2010particle} for a representative, important example.

\chapter{Variational Inference  and  Expectation Maximization}
\label{chap:optimization}

This chapter is concerned with two important computational techniques: Variational Inference (VI) and the Expectation Maximization (EM) algorithm. In contrast to the methods studied in previous chapters, VI and EM are deterministic algorithms which do not rely on sampling. The goal of VI is to find a tractable \emph{variational distribution} that is close to a given intractable target distribution, so that quantities of interest that involve the target can be approximated using the variational distribution. The EM algorithm is a deterministic optimization method that is particularly useful when the objective function can be simplified by introducing auxiliary variables. This chapter showcases the unifying ideas underlying VI and the EM algorithm, and how these techniques can be combined with, or used as alternatives of, Monte Carlo methods. 

VI is gaining popularity as an alternative to MCMC, and it is worth highlighting the conceptual differences between both approaches. MCMC algorithms approximate the target distribution by generating samples from a Markov chain that has the desired target as a limit distribution. However, in practice it is challenging to determine for how long one needs to run the chain to reach the stationary regime, and this can depend heavily on the choice of proposal kernel. MCMC methods also tend to be computationally intensive. On the other hand, VI is often much faster, but does not enjoy the same convergence guarantees, as it simply seeks a density ``close enough'' to the target from a given family. If the target cannot be well approximated within this family (which is often the case),  accurate approximations will not be possible. Thus, VI is suited for instance for Bayesian inference with large datasets or for problems where we want to quickly explore many models; MCMC is suited to smaller data sets and scenarios where we are willing to pay a heavier computational cost to achieve a higher accuracy.

The main idea behind the EM algorithm is to introduce auxiliary variables to simplify the optimization of an intractable objective function. Each EM iteration consists of two steps, referred to as E-step and M-step. In the E-step, we compute an expectation to find a lower bound on the objective function. In the M-step, we maximize this lower bound to produce a new optimization iterate. 
If the expectation in the E-step is intractable, Monte Carlo methods can be used to approximate it.  
EM is widely used, for instance, to compute maximum likelihood estimators in applications where the likelihood function can be simplified by introducing latent variables. The role of EM in maximum likelihood estimation is similar to the role of Gibbs samplers in posterior sampling; indeed, EM can be used to initialize Gibbs samplers for Bayesian inference.  

This chapter is organized as follows. Section \ref{sec:VI} is concerned with VI, emphasizing a computationally efficient coordinate-wise implementation. 
Section \ref{sec:EMalgorithm} contains an introduction to the EM algorithm, deriving the method using the same ideas that underpin VI. We close in Section \ref{sec:discussion} with bibliographical remarks.

\section{Variational Inference}\label{sec:VI}
The goal of Variational Inference (VI) is to find a tractable  distribution $g^*$ that is close to a given target distribution $f$, so that quantities of interest with respect to the target can be cheaply computed using the variational distribution $g^*.$
For instance, once $g^*$ is found,  expected values with respect to the target can be approximated as follows: 
$$ \mathcal{I}_f[h] = \int h(x) f(x)\, dx \approx \int h(x) g^*(x)\, dx = \mathcal{I}_{g^*}[h],$$
which is useful if computing $\mathcal{I}_{g^*}[h]$ is cheaper than computing  $\mathcal{I}_f[h].$

The variational distribution $g^*$ is defined as the (numerical) solution to an optimization problem. Precisely, one specifies a family $\mathscr{D}$ of tractable distributions and sets
\begin{equation}\label{eq:optimization}
g^* = \arg \, \min_{g \in \mathscr{D}} \, \dkl(g \| f) \, ,
\end{equation}
where $\dkl$ is the Kullback-Leibler (KL) divergence between p.d.f.s. Thus, $g^*$  is the p.d.f. in $\mathscr{D}$
which is closest to the target in KL-divergence, as represented in Figure \ref{fig:variational_distribution}. The family $\mathscr{D}$ of admissible distributions needs to be carefully chosen, since it determines the accuracy and computational cost of the methodology. We will introduce the KL-divergence in Subsection \ref{sec:KLandELBO}, and we will discuss in Subsection \ref{sec:meanfield} the choice of the family $\mathscr{D}$ along with a Coordinate Ascent Variational Inference (CAVI) algorithm to solve the optimization problem \eqref{eq:optimization}. 

\begin{figure}[h]
    \centering
    \includegraphics[width=0.425\textwidth]{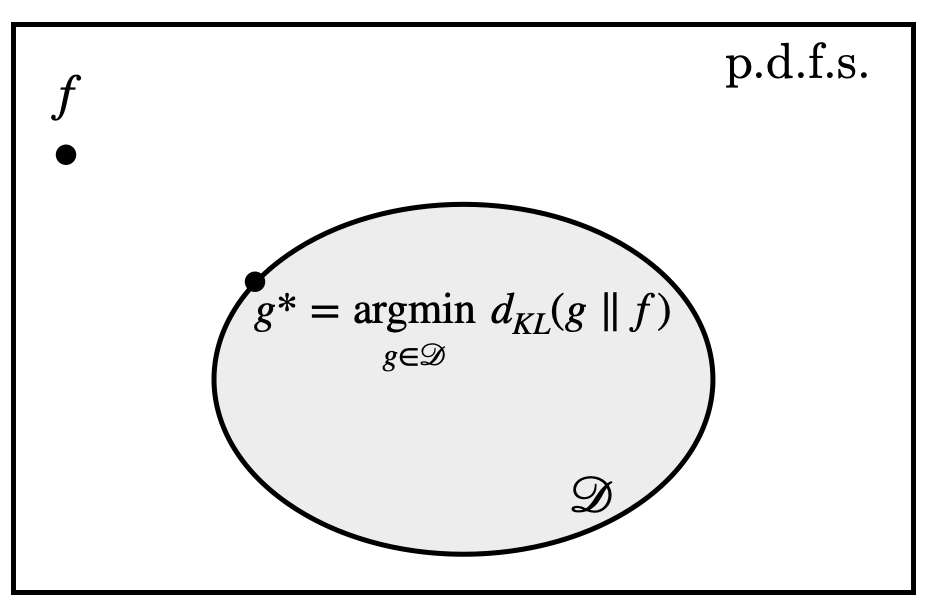}
    \caption{Illustration of the variational distribution $g^*$.}
    \label{fig:variational_distribution}
\end{figure}

\subsection{Kullback-Leibler Divergence and Evidence Lower-Bound}\label{sec:KLandELBO}
In this section, we motivate the choice of the KL-divergence to quantify closeness between p.d.f.s  in \eqref{eq:optimization}. Recall that the KL-divergence between p.d.f.s $g$ and $f$ is given by 
\begin{equation*}
 \dkl(g \| f) = \mathbb{E}_{g}\left[\text{log} \left( \frac{g}{f} \right) \right],
\end{equation*}
provided that $g$ is absolutely continuous with respect to $f,$ and otherwise $\dkl(g \| f)  = \infty.$ Note that the KL-divergence satisfies $\dkl(g \|f ) \ge 0 $ with equality iff $g = f.$ However, the KL-divergence is not symmetric, does not satisfy the triangle inequality, and may take value infinity. Therefore, it is not a distance in the space of p.d.f.s. 

 An important motivation to work with the KL-divergence is that the optimization problem of finding the p.d.f. that minimizes the KL-divergence with the target can be reformulated into an equivalent optimization problem which does not require the target to be normalized. Indeed, we will show  that minimizing KL-divergence is equivalent to maximizing the evidence lower-bound (ELBO), which we now introduce. 

\begin{definition}[Evidence Lower-Bound]
The ELBO of a p.d.f. $g$ with respect to an unnormalized p.d.f. $\tilde{f}$ is given by
\begin{align*}
      \ELBO(g) :=
  \mathbb{E}_g[\log \tilde{f}] -
  \mathbb{E}_g[\log g].  \tag*{\qedhere}
\end{align*}
\end{definition}
The next result  shows that the KL-divergence is equal, up to an additive constant,  to the negative ELBO. As a consequence, minimizing the KL-divergence is equivalent to maximizing the ELBO. 

\begin{theorem}[Relationship Between KL-Divergence and ELBO]\label{th:KLELBO}
Let $f(x) = \tilde{f}(x)/c$ be a p.d.f. and let $\ELBO(g)$ denote the ELBO of a p.d.f. $g$ with respect to the unnormalized p.d.f. $\tilde{f}.$ Then,  
\begin{align*}
  \dkl (g \| f )  = -\ELBO(g) + \log c.
\end{align*}
\end{theorem}

\begin{proof}
By direct calculation:
\begin{align*}
\dkl (g \| f)  &= \mathbb{E}_g \left[\text{log } \left( \frac{g}{f} \right)\right] \\
&= \mathbb{E}_{g}\left[\text{log } g \right] - \mathbb{E}_{g}\left[\text{log } f\right] \\[.5em]
&= \mathbb{E}_{g}\left[\text{log } g \right] - \mathbb{E}_{g}\left[\text{log } \left( \frac{\tilde{f}}{c} \right) \right] \\
&= \mathbb{E}_{g}\left[\text{log } g \right] - \mathbb{E}_{g}\left[\text{log } \tilde{f} \right] + \mathbb{E}_{g}\left[\text{log } c\right] \\
&=   -\ELBO(g) + \log c\,.  \tag*{\qedhere}
\end{align*} 
\end{proof}

Theorem \ref{th:KLELBO} implies that the optimization problem \eqref{eq:optimization}  can be reformulated in terms of maximizing the ELBO. The following corollary shows that the ELBO provides a lower-bound on the (log)-evidence, which motivates its name. We will discuss the interpretation of the normalizing constant $c$ as model evidence in Example \ref{ex:bayesianinterpretation} below. 

\begin{corollary}[Lower Bound Property of ELBO]\label{corollary:ELBO}
In the setting of Theorem \ref{th:KLELBO}, it holds that
$$ \ELBO(g) \leq \log c.$$
\end{corollary}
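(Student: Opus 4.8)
The plan is to derive the bound as an immediate consequence of Theorem \ref{th:KLELBO} together with the non-negativity of the Kullback-Leibler divergence recorded earlier in Subsection \ref{sec:KLandELBO}. The key identity does all the work: Theorem \ref{th:KLELBO} states that
\begin{equation*}
\dkl(g \| f) = -\ELBO(g) + \log c,
\end{equation*}
so I would begin by simply rearranging this to isolate the ELBO, writing $\ELBO(g) = \log c - \dkl(g \| f).$

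Next I would invoke the fact that $\dkl(g \| f) \ge 0$ for any pair of p.d.f.s $g$ and $f,$ which was noted when the KL-divergence was introduced (with equality precisely when $g = f$). Subtracting a non-negative quantity from $\log c$ can only decrease it, so $\ELBO(g) = \log c - \dkl(g \| f) \le \log c,$ which is exactly the claimed inequality.

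There is essentially no obstacle here: the corollary is a direct algebraic rearrangement of the theorem combined with the sign of the KL-divergence, and no further computation or hypothesis is needed. The only point worth flagging for the reader is interpretive rather than technical, namely that equality $\ELBO(g) = \log c$ holds if and only if $g = f,$ inherited from the equality case of $\dkl(g \| f) = 0.$ This observation both confirms the name ``evidence lower-bound'' and clarifies that the variational optimization in \eqref{eq:optimization} pushes the ELBO up toward the log-evidence $\log c$ exactly to the extent that the variational family $\mathscr{D}$ contains a good approximation to the target.
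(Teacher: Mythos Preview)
Your proof is correct and follows essentially the same approach as the paper: rearrange the identity from Theorem \ref{th:KLELBO} and invoke non-negativity of the KL-divergence.
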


\begin{proof}
Since the KL-divergence is non-negative and $\dkl(g \| f ) + \ELBO(g) = \log c,$  the result follows.  \hfill $\square$
\end{proof}

\begin{example}[VI in Bayesian Statistics]\label{ex:bayesianinterpretation}
In Bayesian statistics, inference over a parameter $\theta$ given data $y$ is based on the posterior distribution $$f( \theta | y) = \frac{1}{f(y)} f(y | \theta) f(\theta),$$  which combines the likelihood  $f(y | \theta)$ with a prior distribution $f(\theta)$.  Computing posterior expectations can be expensive when the dimension of $\theta$ is large. In such a case, computing the normalizing constant $f(y),$ called the evidence, can also be challenging, and it is natural to view $\tilde{f}(\theta): = f(y|\theta) f(\theta)$ as an unnormalized target distribution with unknown normalizing constant $c:= f(y).$ 
Corollary \ref{corollary:ELBO} then shows that $\ELBO(g) \le \log f(y).$ This inequality explains the name \emph{evidence lower-bound}: $\ELBO(g)$ gives a lower bound on the log-evidence $\log f(y)$. In addition, this inequality motivates a crude but cheap model selection criteria: since a large evidence suggests good model fit, one may choose the model that gives the highest ELBO.

In this Bayesian context, it is insightful to rewrite the ELBO as follows
\begin{align*}
\ELBO(g) &= \mathbb{E}_{g}\left[\text{log } f(y | \theta)\right] + \mathbb{E}_{g}\left[\text{log } f(\theta)\right] -  \mathbb{E}_{g}\left[\text{log } g(\theta) \right] \\
&= \mathbb{E}_{g}\left[\text{log } f(y | \theta)\right] + \mathbb{E}_{g}\left[\text{log } \left( \frac{f(\theta)}{g(\theta)} \right)\right] \\
&= \mathbb{E}_{g}\left[\text{log } f(y | \theta)\right] - \dkl\bigl(g(\theta) \| f(\theta)\bigr) .
\end{align*}
The first term in the last displayed equation is the expected log-likelihood, whereas the second is the negative KL-divergence between the prior and the variational density $q$. From the above formulation, we obtain an intuition that the optimal $q$ should find a compromise between maximizing the expected log-likelihood and being close to the prior distribution. The first term promotes matching the data, and the second term promotes matching prior beliefs.  \hfill \qedhere
\end{example}

\subsection{Mean-Field Approximation and Coordinate Ascent Variational Inference}\label{sec:meanfield}
This section discusses the choice of the family $\mathscr{D}$ of admissible distributions in \eqref{eq:optimization}. Ideally, the family $\mathscr{D}$ should be chosen so that:
\begin{enumerate}[(i)]
\item there is $q \in \mathscr{D}$ that accurately approximates the target;
\item expectations with the variational distribution $q^*$ can be computed efficiently. Thus, $\mathscr{D}$ should contain tractable distributions; 
\item the optimization problem \eqref{eq:optimization} can be solved efficiently. 
\end{enumerate}

 In practice, a compromise needs to be made: enlarging the family $\mathscr{D}$ potentially helps in the approximation accuracy, but it typically leads to a harder optimization problem. A popular choice of $\mathscr{D}$ is the mean-field family, which sacrifices the ability to accurately represent target distributions with highly correlated coordinates, but leads to efficient optimization and inference algorithms. 
 
\begin{definition}[Mean-Field Family]
A p.d.f. $g(x)$ with $x = (x_1,\dots,x_d) \in \mathbb{R}^d$ is said to belong to the mean-field family if its marginals are independent, that is, if it can be written in the form
$$
  g(x) = \prod_{i=1}^{d} g_i(x_i) \: ,
$$
where $g_i$ is a one dimensional p.d.f. over the $i$-th coordinate $x_i$. 
\end{definition}

In other words, the mean-field assumption requires that the variational density can be factorized as a product of distributions. We work with the case where $g$ factorizes as a product of one-dimensional distributions, but it is also possible to consider more general cases where it factorizes into a product of larger groups of coordinates. 

\FloatBarrier
\begin{mybox}[colback=white]{Pros and Cons}
As we shall see, solving the optimization problem \eqref{eq:optimization} is relatively straightforward when using the mean-field family.
While restrictive, the  mean-field family still affords some flexibility as the marginals $g_i$  are left unconstrained. If required, we can impose  a specific parametric family on these marginals as well.
The main caveat of the mean-field family is that it cannot capture the dependence structure of the different variables,  and estimates computed with a mean-field distribution will be far off when the target has highly correlated variables. In that case, marginal variances can be severely over or under-estimated.
\end{mybox}
\FloatBarrier

\begin{remark}
The mean-field assumption has its roots in statistical physics, where `mean-field' refers to simplifying difficult high-dimensional stochastic models by considering simpler ones which ignore second-order effects. While we focus on the mean-field variational family
due to its popularity and relative simplicity, it is possible to go beyond this restrictive assumption by incorporating dependencies between the marginals or by adding new latent variables. Both
of these approaches reduce the bias of VI, but can substantially increase
the difficulty of the associated optimization problem \eqref{eq:optimization}.
\end{remark}

Under the mean-field approximation, each $g_i$ can be optimized individually, and this approach is known as CAVI. We next prove a theorem that will lead to this technique. Let us set some notation. We denote by $\bti$ the $i$-th component of $\bt \in \mathbb{R}^d$. We denote by $\btmi \in \R^{d-1}$ the vector obtained by removing the $i$-th component of $\bt.$ To highlight the dependence of $f(x)$ on $x_i$ and $x_{-i},$ we write $f(x_i,x_{-i})$ with abuse of notation. Finally, $g_{-i}$ will denote a p.d.f. over the variables $x_{-i}.$

%
%
%

\begin{theorem}[Coordinate-Wise ELBO Maximization]\label{theoremCAVI}
Let $g(x)=\prod_{i=1}^{d}g_{i}(x_{i})$ and let $g_{-i}(x_{-i}) = \prod_{j \neq i} g_j(x_j)$ for fixed $g_j,$ $j \neq i$.
Then, the $g_{i}$ that maximizes
$\ELBO(g)$ is given by
\begin{equation}\label{eq:Lqistar}
 g_i^*(\bti) \propto
  \exp \Bigl(
    \EEqmi{ 
      \log f(\bti, \btmi)
      }
  \Bigr).
\end{equation}
\end{theorem}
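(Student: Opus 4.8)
The plan is to fix the index $i$ and all factors $g_j$ for $j \neq i$, and then isolate the dependence of the ELBO on the single factor $g_i$. Treating everything except $g_i$ as constant, the ELBO becomes a functional of $g_i$ alone, which I will maximize subject to the constraint that $g_i$ integrates to one.

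First I would substitute the mean-field factorization $g(x) = \prod_{j=1}^d g_j(x_j)$ into the definition $\ELBO(g) = \EEq{\log \tilde{f}} - \EEq{\log g}$. The entropy term splits as a sum: $\EEq{\log g} = \sum_{j=1}^d \EEqi{\log g_j}$, because of independence across coordinates. The first term can be written as an iterated expectation, integrating first over $x_{-i}$ with the fixed factors $g_{-i}$ and then over $x_i$ with $g_i$:
\begin{equation*}
\EEq{\log \tg} = \int g_i(\bti)\, \EEqmi{\log \tg(\bti,\btmi)}\, d\bti.
\end{equation*}
Collecting only the terms that depend on $g_i$, and discarding the constant terms $\sum_{j \neq i} \EEqi{\log g_j}$, the ELBO reduces (up to an additive constant independent of $g_i$) to
\begin{equation*}
\int g_i(\bti)\, \EEqmi{\log \tg(\bti,\btmi)}\, d\bti - \int g_i(\bti) \log g_i(\bti)\, d\bti.
\end{equation*}

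Next I would recognize this expression as the negative KL-divergence between $g_i$ and the density proportional to $\exp\bigl(\EEqmi{\log \tg(\bti,\btmi)}\bigr)$, up to an additive constant (the logarithm of the normalizing constant of that density). Explicitly, defining $g_i^*(\bti) \propto \exp\bigl(\EEqmi{\log f(\bti,\btmi)}\bigr)$ --- noting that $\log \tg$ and $\log f$ differ only by the constant $\log c$, which is absorbed into normalization --- the functional equals $-\dkl(g_i \| g_i^*)$ plus a constant. By the non-negativity of the KL-divergence (stated in Subsection \ref{sec:KLandELBO}), with equality if and only if $g_i = g_i^*$, the maximizer is precisely $g_i = g_i^*$, which establishes \eqref{eq:Lqistar}.

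The main obstacle, and the step requiring the most care, is the bookkeeping in the iterated-expectation manipulation: I must verify that $\EEqmi{\log \tg(\bti,\btmi)}$ is indeed the correct inner integrand and that all terms not involving $g_i$ genuinely factor out as constants, so that the remaining functional is exactly of the form $\EEqi{\log g_i^*} - \EEqi{\log g_i}$ up to a constant. A minor subtlety is justifying that $\exp\bigl(\EEqmi{\log f(\bti,\btmi)}\bigr)$ is normalizable so that $g_i^*$ is a bona fide p.d.f.; I would note this as the tacit integrability assumption under which the result holds, paralleling the assumptions made elsewhere in the chapter.
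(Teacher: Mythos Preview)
Your proposal is correct and follows essentially the same approach as the paper: isolate the $g_i$-dependent part of the ELBO by splitting the entropy term coordinate-wise and writing the first term as an iterated expectation, then recognize the result as $-\dkl(g_i\|g_i^*)$ plus a constant and conclude via non-negativity of KL. The paper's version makes slightly more explicit the step that $\EEqmi{\log f(\bti,\btmi)\mid \bti}=\EEqmi{\log f(\bti,\btmi)}$ by independence, but your iterated-integral formulation amounts to the same thing.
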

\begin{proof}
Throughout this proof,  we denote by $C$ a constant which does not depend on $g_i$ and may change from line to line.
We will show that 
$$\ELBO(g) = - \dkl( g_i \| g_i^*) + C,$$
where $g_i^*$ is given by \eqref{eq:Lqistar}. This implies the desired result, since the  $g_i$ that minimizes the KL-divergence in the right-hand side is clearly $g_i = g_i^*.$

First, we show that 
$$
 \ELBO(g)  = \EEqi{ \EEqmi{ \log~f(\bti, \btmi)  } - \log g_i (\bti) } + C. 
$$
Indeed, 
\begin{align*} \ELBO(g) & = \EEq{ \log~f(\bt)}  - \EEq{ \log g(\bt) } \\
& = \EEq{ \log~f(\bti, \btmi) } - \sum_{i} \EEqi{\log g_i(\bti)} \\ 
& = \EEqi{ \EEqmi{ \log~f( \bti, \btmi) | \bti }} - \EEqi{\log g_i(\bti) } + C,\end{align*}
and the first term in the right-hand side can be simplified using that
\begin{align*}
\EEqmi{ \log f(\bti, \btmi) | \bti }
&= \int_{\R^{d-1}} \log f( \bti, \btmi) \cdot g(\btmi|\bti) \, d\btmi \\
& = \int_{\R^{d-1}} \log f( \bti, \btmi) \cdot g(\btmi) \,  d\btmi \\
& = \EEqmi{ \log f(\bti, \btmi)  } .
\end{align*}
Therefore,
\begin{align*}
\ELBO(g) &= 
\EEqi{ \EEqmi{ \log~f(\bti, \btmi)  } - \log g_i (\bti) }  + C \\
&= \EEqi{ \log \Bigl(\exp\bigl( \EEqmi{ \log~f(\bti, \btmi) } \bigr) \Bigr) - \log g_i (\bti) }  + C \\ 
& =  -  \EEqi{  \log \left( \frac{ g_i (\bti)  }{ \exp \bigl( \EEqmi{ \log~f( \bti, \btmi) } \bigr) } \right)  }  + C \\
& = - \dkl( g_i \| g_i^*) + C,
\end{align*} 
and the proof is complete. \hfill $\square$
\end{proof}

Theorem \ref{theoremCAVI} gives us an expression for the optimizer of the ELBO with respect to $g_i$ holding the other coordinates fixed. This procedure can be repeated iteratively, optimizing each coordinate and holding the others fixed. The resulting algorithm is known as Coordinate Ascent Variational Inference (CAVI):

\FloatBarrier
\begin{algorithm}
\caption{Coordinate Ascent Variational Inference (CAVI)}
\begin{algorithmic}[1]
\STATE {\bf Input}: Unnormalized target density $\tilde{f}$.
\STATE {\bf Initialization}: Choose initial variational densities $g_i$ for each $i= 1,\dots,d$. 
\STATE While the ELBO/KL-divergence has not converged, do: 
\STATE Update, for each $i=1,\dots,d,$
$$
\begin{aligned}
  g_i(\bti) \propto
  \exp \Bigl(
    \EEqmi{ 
      \log \tilde{f} ( \bti, \btmi)
      }
  \Bigr).
\end{aligned}
$$
\STATE {\bf Output}: $g(\bt) = \prod_{i = 1}^d  g_i(\bti) \approx f(x).$ 
\end{algorithmic}
\end{algorithm}
\FloatBarrier

In each update, CAVI lowers the KL-divergence between $q^*$ and the target $f$, and hence one can expect convergence to a local minimizer under suitable assumptions.  The ELBO is not necessarily concave and therefore there is in general no guarantee to reach a maximizer of the ELBO, or, equivalently, a minimizer of the KL-divergence.

\FloatBarrier
\begin{mybox}[colback=white]{Pros and Cons}
 Each CAVI update lowers the KL-divergence, which facilitates introducing stopping criteria for the algorithm. If the factors $g_i$ are assumed to be in the exponential family, computing each factor update is simplified significantly. 
  However, since CAVI relies on the mean-field assumption, it inherits all its disadvantages. In addition, for some target distributions 
 it may be computationally expensive to compute the updates of the variational factors. In such a case, other optimization methods (e.g. stochastic optimization and natural gradient methods) rather than CAVI can be used to maximize the ELBO. 
\end{mybox}
\FloatBarrier

\begin{remark}
CAVI shares a similar structure with the Gibbs sampler. To see why, notice that
$$ g_i^*(\bti) \propto
  \exp \Bigr(
    \EEqmi{ 
      \log f(\bti, \btmi)
      }
  \Bigl)
  \propto 
  \exp \Bigl(
    \EEqmi{ 
      \log f(\bti| \btmi)
      }
  \Bigr) .$$ 
In Gibbs sampling, we update each coordinate by sampling the corresponding full conditional. CAVI uses full conditionals as well, with each density $g_i$ set to be proportional to the exponential of the expectation of the log of the full conditional. 
\end{remark}

\section{Expectation Maximization Algorithm}\label{sec:EMalgorithm}
In its most basic form, the Expectation Maximization (EM) algorithm is a deterministic optimization method. 
EM is widely used to compute maximum likelihood estimators in problems where introducing a latent variable makes the likelihood function easier to optimize.

Suppose for concreteness that we want to find the maximum likelihood estimator 
$$ \widehat{\theta} = \arg \max_\theta f(y|\theta)$$ 
of a parameter $\theta$ given observed data $y.$ Suppose that $f(y|\theta)$ is the marginal of a complete likelihood $f(y,z|\theta),$ which includes the data $y$ and an auxiliary variable $z.$ Then, for any p.d.f. $g$ with compatible support, it holds by Jensen's inequality that
\begin{align*}
    \log f(y| \theta) & = \log \int f(y,z | \theta) \, dz \\
    & = \log \int \frac{f(y,z| \theta)}{g(z)} g(z) \, dz \\ 
    & \ge \int \log \left( \frac{f(y,z|\theta)}{g(z)} \right) g(z) \, dz \\
    & = \mathbb{E}_g[ \log f(y,z|\theta)] - \mathbb{E}_g[ \log(g)] = \ELBO(g, \theta).
\end{align*}
Here we view the ELBO as a function of the auxiliary distribution $g$ and the parameter $\theta$ in the density $f(y,z |\theta).$
Now, notice that we have 
\begin{align*}
    \ELBO(g, \theta) &= \int \log \left( \frac{f(y,z|\theta)}{g(z)} \right) g(z) \, dz \\
    & = \int \log \left( \frac{f(z|y,\theta)}{g(z)} \right) g(z) \, dz + \int \log \bigl( f(y|\theta) \bigr) g(z) \, dz \\
    & = - \dkl \bigl( g(z) \| f(z | y, \theta) \bigr) + \log  f(y| \theta).
\end{align*}
In other words, 
\begin{equation}\label{eq:mainequation}
    \log f(y| \theta) = \ELBO(g,\theta) + \dkl \bigl( g(z) \| f(z | y, \theta) \bigr). 
\end{equation}
The above derivations are direct consequences of Theorem \ref{th:KLELBO} and Corollary \ref{corollary:ELBO} with target $f(z) \equiv f(z |y,\theta),$  unnormalized density $\tilde{f}(z) \equiv f(y,z |\theta),$ and normalizing constant $c \equiv f(y|\theta).$

Equation \eqref{eq:mainequation} motivates an iterative approach to maximize the log-likelihood $\log f (y | \theta),$ which is equivalent to maximizing the likelihood $f(y | \theta)$ since the logarithm is an increasing function.
Given  the current iterate  $\theta_\ell,$ we obtain the next iterate  $\theta_{\ell+1}$ in two steps, maximizing in turn the two components of the ELBO; the approach is summarized in Figure \ref{fig:ELBO}.

\begin{figure}[h]
    \centering
    \includegraphics[width=0.5\textwidth]{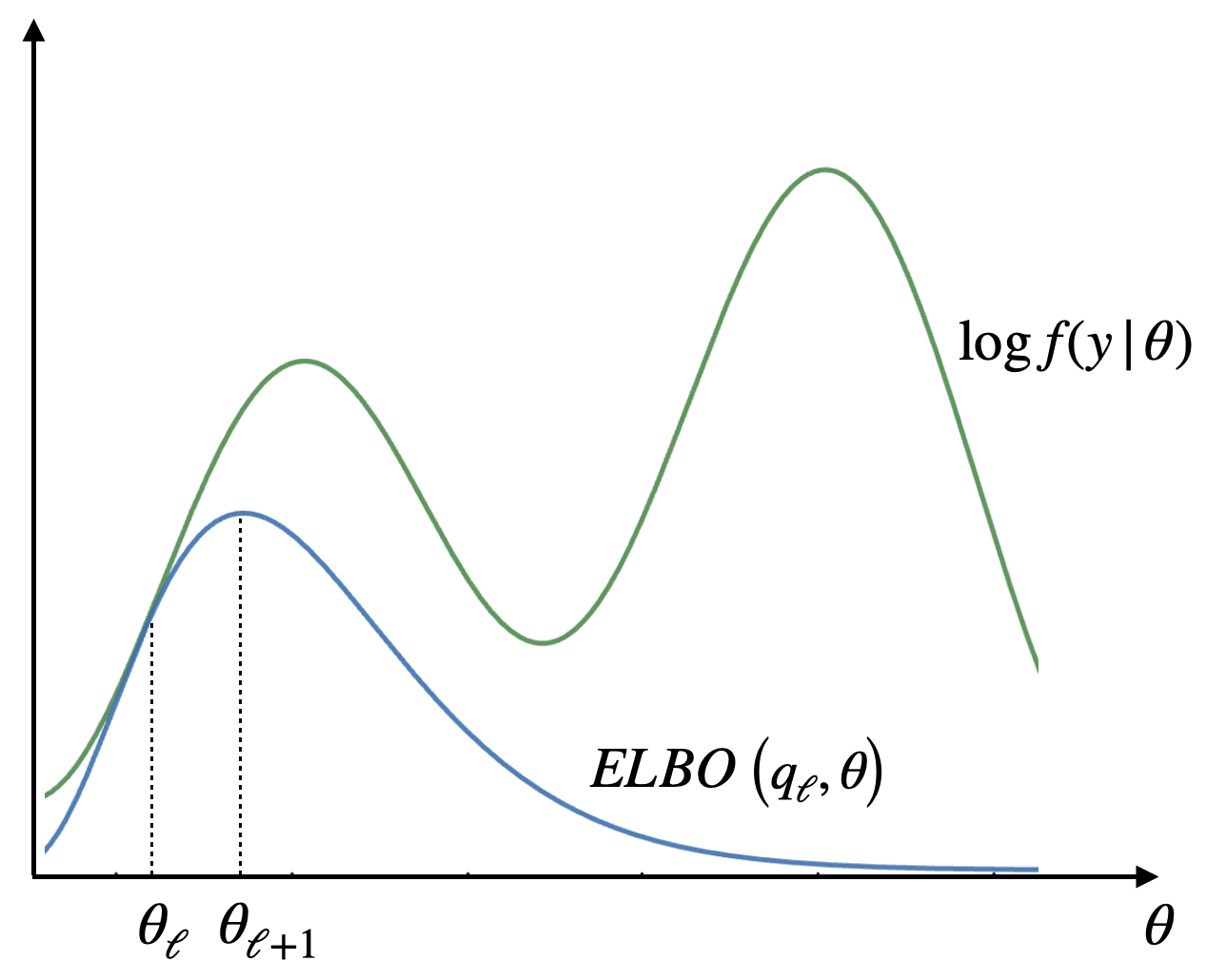}
    \caption{One iteration of the EM algorithm.}
    \label{fig:ELBO}
\end{figure}

\begin{enumerate}
\item First,  we find $g_\ell(z)$ by maximizing $\ELBO(g, \theta_\ell)$  over p.d.f. $g.$  From \eqref{eq:mainequation}, 
$$\log f (y | \theta_\ell) = \ELBO(g, \theta_\ell) + \dkl \bigl( g \| f(z|y,\theta_\ell)\bigr),$$
and it follows that maximizing $\ELBO(g, \theta_\ell)$ or minimizing $\dkl \bigl( g \| f(z|y,\theta_\ell)\bigr)$ over $g$ are equivalent. Since the KL-divergence is minimized when both arguments agree, we obtain that $g_\ell(z) = f(z|y,\theta_\ell).$
\item Second, we  find $\theta_{\ell + 1}$ by maximizing  $\ELBO(g_\ell, \theta)$ over $\theta.$ Note that 
\begin{equation}
\ELBO(g_\ell, \theta) = \int \log f(y, z | \theta) f(z|y, \theta_\ell) \, dz + \text{const},
\end{equation}
where $\text{const}$ is independent of $\theta.$ Hence, the quantity to maximize is the expected value of the joint log-density $\log  f (y, z | \theta)$ with respect to $g_\ell(z) = f(z|y, \theta_\ell).$
\end{enumerate}

Combining these two steps gives the EM algorithm, summarized below.

  
 \begin{algorithm}
\caption{\label{alg:EM} Expectation Maximization (EM)}
\begin{algorithmic}[1]
\STATE {\bf Input}:  Initialization $\theta_0,$ number of iterations $L.$
\STATE For $\ell = 0, 1, \ldots, L-1$ do the following expectation and maximization steps:
\STATE {\bf E-Step}: Compute 
\begin{align*}
\Expect_{Z \sim  f(z|y, \theta_\ell)} \Bigl[ \log f(y, Z | \theta) \Bigr]= \int \log f(y, z | \theta) f(z|y, \theta_\ell) \, dz. 
\end{align*}
\vspace{-0.5cm}
\STATE{{{\bf M-Step}}}: Compute 
\begin{equation*}
\theta_{\ell+1} = \arg \max_\theta \Expect_{Z \sim  f(z|y, \theta_\ell)} \Bigl[ \log f(y, Z | \theta) \Bigr].
\end{equation*}
\STATE{\bf Output}: Parameter $\theta^L.$
\end{algorithmic}
\end{algorithm}
\FloatBarrier 

The following result shows that the EM algorithm has the desirable property that the likelihood function increases monotonically along iterates $\theta_\ell.$ 

\begin{theorem}[Monotonic Increase of Likelihood along EM Iterates]\label{th:EMmonotone}
Let $\{\theta_\ell \}_{\ell = 0}^{L-1}$ be the iterates of Algorithm \ref{alg:EM}. Then, for $0 \leq \ell \leq L-1,$ it holds that 
\begin{equation}
\log f(y | \theta_\ell) \le \log f (y | \theta_{\ell + 1}).
\end{equation}
\end{theorem}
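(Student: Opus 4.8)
The plan is to exploit the decomposition in Equation \eqref{eq:mainequation} together with the two optimization steps that define the algorithm. The key identity to keep in front of us is that, for any parameter $\theta$ and any admissible p.d.f. $g,$
\begin{equation*}
\log f(y|\theta) = \ELBO(g,\theta) + \dkl\bigl(g(z) \,\|\, f(z|y,\theta)\bigr),
\end{equation*}
which holds by Theorem \ref{th:KLELBO} and the surrounding derivation. The strategy is to track the value of $\log f(y|\theta)$ across one iteration by writing it at both $\theta_\ell$ and $\theta_{\ell+1}$ using a cleverly chosen $g,$ namely the output $g_\ell(z) = f(z|y,\theta_\ell)$ of the E-step.

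First I would evaluate the identity at $\theta = \theta_\ell$ with $g = g_\ell = f(z|y,\theta_\ell).$ Since the KL-divergence of a distribution with itself vanishes, $\dkl\bigl(g_\ell \,\|\, f(z|y,\theta_\ell)\bigr) = 0,$ and hence
\begin{equation*}
\log f(y|\theta_\ell) = \ELBO(g_\ell, \theta_\ell).
\end{equation*}
Next I would evaluate the same identity at $\theta = \theta_{\ell+1}$ while keeping $g = g_\ell,$ obtaining
\begin{equation*}
\log f(y|\theta_{\ell+1}) = \ELBO(g_\ell, \theta_{\ell+1}) + \dkl\bigl(g_\ell \,\|\, f(z|y,\theta_{\ell+1})\bigr).
\end{equation*}
The plan is then to chain two inequalities. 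The M-step defines $\theta_{\ell+1}$ as the maximizer of $\ELBO(g_\ell, \theta)$ over $\theta$ (up to the additive constant independent of $\theta$, the E-step expectation is exactly the $\theta$-dependent part of the ELBO), so $\ELBO(g_\ell, \theta_{\ell+1}) \ge \ELBO(g_\ell, \theta_\ell).$ Moreover, the KL term is non-negative. Combining,
\begin{equation*}
\log f(y|\theta_{\ell+1}) \ge \ELBO(g_\ell, \theta_{\ell+1}) \ge \ELBO(g_\ell, \theta_\ell) = \log f(y|\theta_\ell),
\end{equation*}
which is the desired monotonicity.

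The steps here are each individually light, so I do not anticipate a serious technical obstacle; the main point requiring care is the bookkeeping that the quantity maximized in the M-step, namely $\Expect_{Z \sim f(z|y,\theta_\ell)}[\log f(y,Z|\theta)],$ differs from $\ELBO(g_\ell,\theta)$ only by the term $-\Expect_{g_\ell}[\log g_\ell]$ which does not depend on $\theta.$ Thus maximizing the E-step expectation over $\theta$ is genuinely equivalent to maximizing $\ELBO(g_\ell,\theta),$ justifying the first inequality. I would state this equivalence explicitly before invoking it, so the reader sees why the M-step indeed increases the ELBO.
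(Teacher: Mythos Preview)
Your proof is correct and follows essentially the same route as the paper: both set $g_\ell = f(z\mid y,\theta_\ell)$, invoke the identity $\log f(y\mid\theta) = \ELBO(g,\theta) + \dkl(g\|f(z\mid y,\theta))$, and combine the M-step inequality $\ELBO(g_\ell,\theta_{\ell+1}) \ge \ELBO(g_\ell,\theta_\ell)$ with non-negativity of the KL term and the fact that $\dkl(g_\ell\|f(z\mid y,\theta_\ell)) = 0$. Your explicit remark that the M-step objective differs from $\ELBO(g_\ell,\theta)$ only by a $\theta$-independent constant is a helpful clarification.
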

\begin{proof}
Let $g_\ell(z) = f(z|y, \theta_\ell).$  Using the log-likelihood characterization in \eqref{eq:mainequation}, it holds that 
\begin{align*}
\log f (y | \theta_{\ell + 1}) &=\ELBO(g_\ell, \theta_{\ell + 1}) + \dkl \bigl( g_\ell \| f(z|y,\theta_{\ell + 1})\bigr) \\ 
&\ge \ELBO(g_\ell, \theta_{\ell}) +  \dkl \bigl( g_\ell \| f(z|y,\theta_{\ell + 1})\bigr) \\
&\ge \ELBO(g_\ell, \theta_{\ell})  + \dkl \bigl( g_\ell \| f(z|y,\theta_{\ell })\bigr) \\
& = \log f (y | \theta_{\ell}).
\end{align*}
The first inequality follows because $\theta_{\ell + 1} = \arg \max_\theta \ELBO(g_\ell, \theta);$ the second inequality follows because $\dkl \bigl(g_\ell \| f(z|y, \theta_\ell) \bigr) = 0$ since $g_\ell = f(z|y, \theta_\ell),$ while $\dkl \bigl( g_\ell \| f(z|y,\theta_{\ell+1 })\bigr) \ge 0.$  \hfill $\square$
\end{proof}

\begin{remark}
\label{rem:em}
Under mild additional assumptions, it follows from Theorem \ref{th:EMmonotone} that the iterates $\theta_\ell$ of the EM algorithm converge, as $\ell \to \infty,$ to a local maximizer of the likelihood function. It is important to note, however, that the expectation in the E-step and the optimization in the M-step are often intractable. Monte Carlo algorithms may be employed to approximate the E-step, and optimization algorithms to approximate the M-step. Such approximations can cause loss of monotonicity and convergence guarantees. 
\end{remark}

\FloatBarrier
\begin{mybox}[colback=white]{Pros and Cons}
The EM algorithm is a workhorse in computational mathematics and statistics. It is widely used to compute maximum likelihood estimators in mixture models, hierarchical models, hidden Markov models, etc. For some problems, the M step can be solved in closed form, in which case the EM algorithm can be an efficient derivative-free optimization method. A caveat of the EM algorithm is that the choice of initialization is important, as the algorithm converges to local maxima. In addition, for some problems computing the E and M steps can be computationally expensive, and the convergence can be slow. 
\end{mybox}
\FloatBarrier

\begin{example}[Dempster et al. 1977] \label{ex:7.3}
Observations $y = (y_1,y_2,y_3,y_4)$ are gathered from the multinomial distribution 
\begin{align*}
\text{Multinomial} \left(K; \frac{1}{2}+\frac{\theta}{4} , \frac{1}{4}(1-\theta), \frac{1}{4} (1-\theta), \frac{\theta}{4} \right). 
\end{align*}
Estimation is simplified if $y_1$ is split into two cells $z = (z_1,z_2)$ so we create the augmented model 
\begin{align*}
(z_1,z_2,y_2,y_3,y_4) \sim  \text{Multinomial} \left(K; \frac{1}{2},\frac{\theta}{4} , \frac{1}{4}(1-\theta), \frac{1}{4} (1-\theta), \frac{\theta}{4} \right), 
\end{align*}
with $y_1=z_1+z_2$.  
Here 
\begin{align*}
f(y,z|\theta)& \propto\theta^{z_2+y_4}(1-\theta) ^{y_2+y_3} , \\
f(y|\theta)& \propto (2+\theta)^{y_1} \theta^{y_4} (1-\theta)^{y_2+y_3}. 
\end{align*}
We have (up to constants)
\begin{align*} 
\Expect_{Z \sim  f(z|y, \theta_\ell)} \left[ \log f(y,Z|\theta) \right]
&=\Expect\left[ (Z_2 +y_4) \log \theta  +(y_2+y_3) \log (1-\theta) \right]\\
&=\left( \frac{\theta_\ell}{2+\theta_\ell} y_1 + y_4\right) \log \theta +(y_2+y_3) \log(1-\theta), 
\end{align*}
where we used that  
\begin{align*}
Z_2|y,\theta_\ell \sim \text{Binomial} \left( y_1, \frac{\frac{\theta_\ell}{4}}{ \frac{1}{2}+\frac{\theta_\ell}{4}} \right) =\text{Binomial} \left( y_1, \frac{\theta_\ell}{2+\theta_\ell} \right). 
\end{align*}
The expectation 
$\Expect_{Z \sim  f(z |y, \theta_\ell)} \left[ \log f(y,Z|\theta) \right]$ can then be maximized over $\theta$ to give 
\begin{align*}
\theta_{\ell+1}=\frac{ \frac{\theta_\ell y_1}{2+\theta_\ell}+y_4}{ \frac{\theta_\ell y_1}{2+\theta_\ell}+y_2+y_3+y_4}.   \tag*{\qedhere}
\end{align*}
\end{example}

\paragraph{Monte Carlo EM} In practice, computing the E-step may be computationally difficult, and Monte Carlo can be used to approximate this expectation: 
\begin{align*}
\Expect_{Z \sim  f(z|y, \theta_\ell)} \left[ \log f(y,Z|\theta) \right] \approx \frac{1}{N} \sum_{n=1}^N \log f(y,Z^{(n)}|\theta), 
\end{align*}
where 
\begin{align*}
 Z^{(1)}, \dots, Z^{(N)} \stackrel{\text{i.i.d.}}{\sim} f(z|y,\theta_\ell). 
\end{align*}

\begin{example}[Monte Carlo EM for Dempster et al. 1977]
In the setting of Example \ref{ex:7.3} there is no need to use Monte Carlo EM, since the expectation can be computed analytically using that $Z_2|y,\theta_\ell \sim \text{Binomial} \left(y_1, \frac{\theta_\ell}{2+\theta_\ell} \right)$. However, merely for illustration of the technique, we note that the Monte Carlo version would define 
\begin{align*}
\theta_{\ell+1} =\frac{ \bar{z}_N+y_4}{\bar{z}_N+y_2+y_3+y_4},
\end{align*}
where 
\begin{align*}
\bar{z}_N= \frac{1}{N} \sum_{n=1}^N Z^{(n)}, 
\quad 
 Z^{(1)}, \dots, Z^{(N)} \stackrel{\text{i.i.d.}}{\sim} \text{Binomial}\left( y_1,\frac{\theta_\ell}{2+\theta_\ell} \right). \tag*{\qedhere}
\end{align*} 
\end{example}

\section{Discussion and Bibliography}\label{sec:discussion}
 We refer to \cite{bishop,Blei2017} for accessible introductions to VI that have inspired the presentation in this chapter.
 For a more in-depth exploration of VI, see
 \cite{jordan1999introduction,wainwright2008graphical}. The paper \cite{hoffman2013stochastic} contains a modern approach to VI with applications to massive text data-sets.  Applications to graphical models, and a generalization of the mean-field family to allow for interactions, are discussed in \cite{jordan1999introduction}.
 While CAVI provides a simple framework for ELBO maximization under the mean-field family, computing the CAVI updates can be challenging for highly complex targets, in which case alternative optimization methods based on stochastic VI or autodifferentiation may be required \cite{hoffman2013stochastic,kucukelbir2017automatic}.
Beyond the mean-field family, other simple but practical choices of variational family include Gaussians \cite{sanz2023inverse} and Gaussian mixtures \cite{lambert2022variational}.

Our presentation has solely focused on VI techniques that seek to minimize the KL-divergence \cite{kullback1951}.
The works \cite{li2016renyi,hernandez2016black} consider VI using the broad family of Renyi-alpha divergences, which includes for instance the KL-divergence, the $\chi^2$-divergence, and the Hellinger distance \cite{gibbs2002choosing}. However, as emphasized in this chapter, the KL-divergence holds a special place in VI, since among a wide class of divergences it is the unique choice for which minimization of the objective does not require knowledge of the normalizing constant of the target \cite{chen2023gradient}. For sampling problems, the paper \cite{trillos2018bayesian} explores gradient flows arising from minimizing various objectives, and shows that the choice of KL-divergence also leads to fundamental practical advantages. 

An overarching idea behind VI, the EM algorithm, and other popular computational methods such as variational auto-encoders \cite{kingma2013auto}, is to minimize KL-divergence by maximizing the ELBO. The ELBO functional is typically non-convex, even for simple potentials; see for instance the discussion in \cite{agrawal2022variational}. Due to this lack of convexity, it is often challenging to develop
theory for VI outside restricted settings on the target distribution. For this reason, the study of statistical and optimization convergence guarantees for VI is still an active area of research, see e.g. \cite{zhang2020convergence} for statistical convergence rates and  \cite{lambert2022variational} for a theoretical framework that relies on gradient flows. 

The EM algorithm was introduced in \cite{dempster1977maximum}, although particular instances had been previously considered on numerous occasions, see e.g. \cite{ceppellini1955estimation,hartley1958maximum}. Convergence of the EM algorithm was established in \cite{wu1983convergence}.  
We refer to  \cite{mclachlan2007algorithm} for a textbook on the EM algorithm and to \cite{bilmes1998gentle,hastie2009elements,mackay2003information,bishop} for gentle introductions. EM is an example of a minimization-maximization method  \cite{hunter2004tutorial}, a broad family of algorithms with rich theoretical underpinnings. Several modifications to the original EM algorithm have been proposed to speed up its convergence. We refer to \cite{meng1993maximum,liu1994ecme,jamshidian1997acceleration,neal1998view} for some important methodological developments
and to \cite{meng1997algorithm} for a survey written on the 20th anniversary of the original paper \cite{dempster1977maximum}.



\appendix
\newpage 
\chapter{Markov Chain Theory}
\label{chap:markovchains}
This appendix contains an informal review of Markov chains. 
The aim is to describe briefly and intuitively some of the main concepts and ideas that are important to understand Markov chain Monte Carlo algorithms. 

\section{Basic Definitions}\label{ssec:markovdefinitions}
\begin{definition}[Markov Chain]
A collection  $\XX =\{X_n,\,n=0,1,2,\ldots\}$ of random variables taking values in a state space $E$ is called a \emph{Markov chain} if
$$\mathbb{P}(X_{n+1} \in A_{n+1} \,|\, X_n \in A_n,\ldots,\,X_0 \in A_0)=\mathbb{P}(X_{n+1}\in A_{n+1} \,|\, X_n \in A_n)$$
for all $n \geq 0,$ and all measurable $A_{n+1}, A_n,\ldots, A_0 \subset  E.$
\end{definition}
The index $n$ is often interpreted as time, and the Markov property then states that the future (time $n+1$) is independent of the past (times $\leq n-1$) given the present (time $n$).

\begin{definition}[Time Homogeneous Markov Chain; Markov Kernel]
A Markov chain $\XX$ is called \emph{time homogeneous} if its transition probabilities 
$$P(x,A) = \mathbb{P}(X_{n+1} \in A \,|\, X_n=x)$$
do not depend on $n$. $P(x,A)$ is called the transition kernel or \emph{Markov kernel} of  $\XX.$
\end{definition}
Here and below, the expression $\mathbb{P}(X_{n+1} \in A \,|\, X_n=x)$ is only formal if $\Prob(X_n = x) =0,$ but it can be made rigorous using regular conditional probability. From now on, all Markov chains that we consider will be time homogeneous. We will work mainly with transition kernels that are absolutely continuous with respect to the Lebesgue measure or with respect to the counting measure (an exception will be the Metropolis Hastings kernel studied in Chapter \ref{chap:MCMC}). This means that, for every $x,$ there is an associated p.d.f. (or p.m.f. for the discrete case) $p(x,z)$ such that 
$$P(x,A)=\int_{A}p(x,z) \,dz.$$
With slight abuse of terminology, we will often call $p(x,z)$ a Markov kernel. 

\begin{example}[Weather Markov Chain]
Suppose $X_n \equiv$ weather at day $n= \begin{cases}
0& {\text {if sunny,}}\\
1&{\text{if cloudy,}}\\
2&{\text{if rainy.}}\\
\end{cases}$
\newline The state space is $E= \{0,1,2\}.$ We may specify (time homogeneous) transition probabilities graphically as follows:
\begin{figure}[H]
\centering
\includegraphics[scale=0.4]{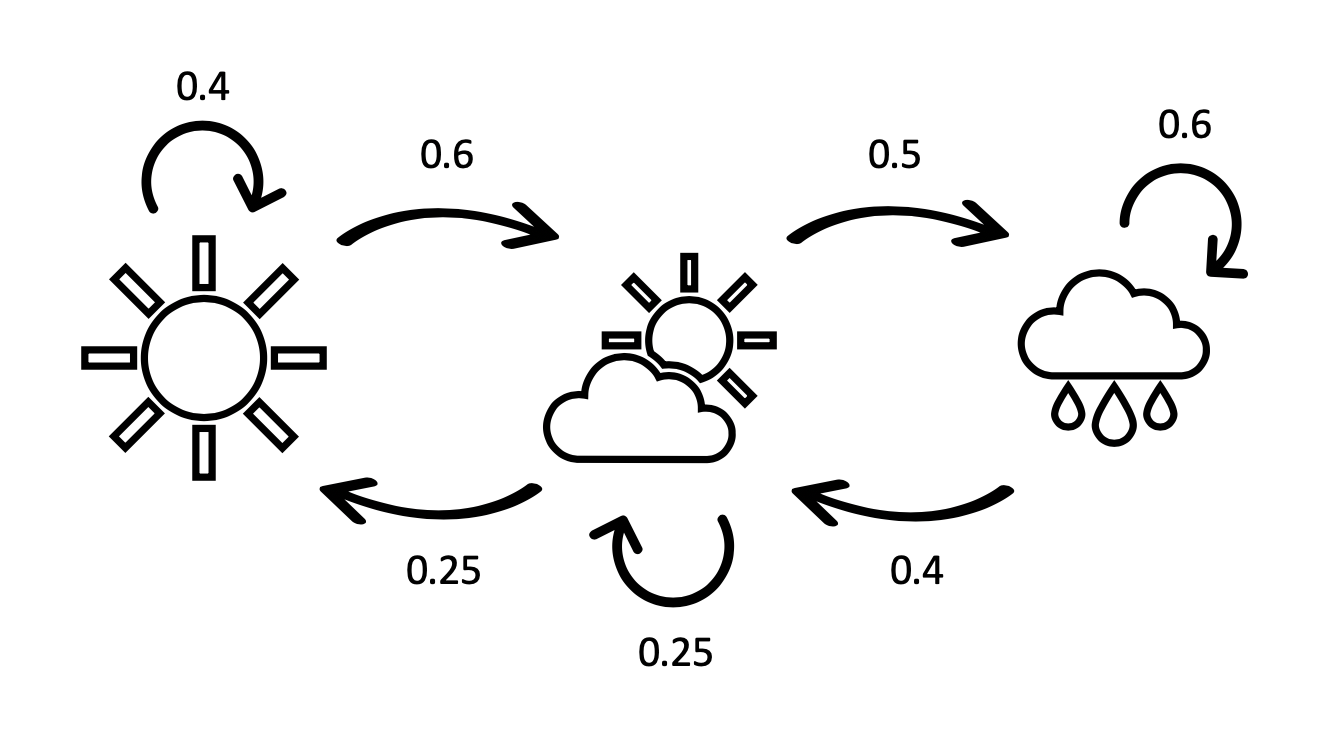}.
\end{figure}
\hfill \qedhere
\end{example}

\begin{definition}[$n$-th Step Transition Densities]
The \emph{$n$-th step transition probability densities} $p^n(x,z)$ are defined by
\begin{equation*}
    \Prob(X_n \in A \,|\, X_0=x)=P^n(x,A)= \int_{A}p^n(x,z) \,dz. \tag*{\qedhere}
\end{equation*}
\end{definition}

If the state space is finite, say $E = \{ 0, 1, \ldots, s\}$ for some integer $s,$ then we can collect the transition probabilities in the  \emph{transition matrix}
$$P=[p_{ij}]_{i,j \in E},$$
where $p_{ij}=\mathbb{P}(X_{n+1}=j \,|\, X_n=i)$ for $(i,j) \in E \times E$.
It holds that $$p^n(i,j) = ( P^n )_{ij}.$$

\begin{lemma}
If $X_0 \sim \pi_0$ and $\XX$ has $n$-th step transition probability densities $p^n(x,z),$ then the p.d.f. of $X_n$ is given by
$$\pi_n(x)=\int_{E}\pi_0(z)p^n(z,x) \,dz.$$
If the state space $E$ is finite,  then $\pi_n$ is the probability vector given by $\pi_n=\pi_0P^n.$
\end{lemma}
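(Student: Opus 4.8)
The plan is to compute the distribution of $X_n$ by conditioning on the initial state and invoking the definition of the $n$-th step transition density. First I would fix an arbitrary measurable set $A \subset E$ and apply the law of total probability over the value of $X_0 \sim \pi_0$, writing
$$\Prob(X_n \in A) = \int_E \Prob(X_n \in A \,|\, X_0 = z)\, \pi_0(z)\, dz = \int_E P^n(z, A)\, \pi_0(z)\, dz,$$
where the second equality is merely the definition of the $n$-th step transition probability $P^n$. The conditional probability here is understood in the regular conditional probability sense mentioned earlier in this appendix, which makes the identity rigorous even when $\Prob(X_0 = z) = 0$.

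Next I would unfold the definition $P^n(z,A) = \int_A p^n(z,x)\,dx$ and interchange the order of integration. The key step is
$$\Prob(X_n \in A) = \int_E \left( \int_A p^n(z,x)\, dx \right) \pi_0(z)\, dz = \int_A \left( \int_E \pi_0(z)\, p^n(z,x)\, dz \right) dx,$$
the second equality being an application of Tonelli's theorem, justified because the integrand $\pi_0(z)\,p^n(z,x)$ is non-negative and jointly measurable. Since this identity holds for every measurable $A \subset E$, the inner integral must be the p.d.f. of $X_n$; that is, $\pi_n(x) = \int_E \pi_0(z)\, p^n(z,x)\, dz$, as claimed.

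Finally, for the finite state space case I would replace integrals by sums over $E = \{0,1,\ldots,s\}$ and use the identification $p^n(z,x) = (P^n)_{zx}$ recorded earlier. The formula then reads $\pi_n(x) = \sum_{z \in E} \pi_0(z)\, (P^n)_{zx}$, which is precisely the $x$-th entry of the row-vector--matrix product $\pi_0 P^n$; hence $\pi_n = \pi_0 P^n$. The derivation is routine, and the only point requiring any care is the interchange of integration; because the kernel is non-negative, Tonelli's theorem applies without any integrability hypotheses, so this obstacle is mild.
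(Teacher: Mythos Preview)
Your proof is correct and complete. The paper actually states this lemma without proof (it is immediately followed by an example rather than a proof environment), so there is nothing to compare against; your argument via the law of total probability, the definition of $P^n$, and Tonelli's theorem is the standard route and would be exactly what one expects here.
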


\begin{example}[Weather Markov Chain: 2-Step Probabilities]
The transition matrix of the weather chain is given by
$$P=\begin{bmatrix}
0.4&0.6&0\\
0.25&0.25&0.5\\
0&0.4&0.6\\
\end{bmatrix}.$$
Suppose day 0 is sunny. Then, 
$$\pi_0=\begin{bmatrix}
1&0&0
\end{bmatrix}.$$
The distribution of the weather on day 2 is
$$\pi_2=\pi_0P^2= \begin{bmatrix}
1&0&0
\end{bmatrix}
P^2=\begin{bmatrix}
0.31&0.39&0.3
\end{bmatrix}.$$
So if day 0 is sunny, then day 2 has a 0.31 probability of being sunny.  \hfill \qedhere
\end{example}

\section{Invariant Distributions: General Balance and Detailed Balance}\label{sec:generalanddetailbalance}
In this section, we discuss the notion of statistical equilibrium, which is central to the theory of Markov chains.
\begin{definition}[General Balance Equation; Invariant Distribution]
A Markov kernel $p(x,z)$ satisfies the \emph{general balance equation} with respect to $\pi$ if 
\begin{equation*}\label{eq:GB}
	\pi(x) = \int_E \pi(z) p(z,x) \, dz.
\end{equation*}
We then say that $\pi$ is an \emph{invariant distribution} of the Markov kernel $p(x,z).$
\end{definition}
Note that if $E$ is discrete, the general balance equation is simply $\pi = \pi P.$

\begin{definition}[Ergodic Markov Chain]
A Markov chain $\XX$ is called \emph{ergodic} if there exists a distribution $\pi$ such that, for all $A\subset E$ and initial distributions $\pi_0,$ it holds that
	\begin{equation*}
		\lim_{n\rightarrow\infty}\Prob(X_n\in A)=\pi(A).
	\end{equation*}
	We say that $\pi$ is the limit distribution of $\XX.$
	\end{definition}

\begin{lemma}
Let $\XX$ be an ergodic Markov chain with Markov kernel $p(x,z)$ and limit distribution $\pi.$ Then, $\pi$ is an invariant distribution for $p(x,z).$ In particular, if $\XX$ is initialized at statistical equilibrium ($X_0 \sim \pi$), then $X_n \sim \pi$ for all $n\ge 0.$ 
\end{lemma}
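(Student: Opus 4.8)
The statement to prove is: if $\XX$ is an ergodic Markov chain with kernel $p(x,z)$ and limit distribution $\pi$, then $\pi$ is invariant for $p(x,z)$; moreover, if $X_0 \sim \pi$, then $X_n \sim \pi$ for all $n \ge 0$.

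The plan is to exploit the defining property of ergodicity ---convergence of the marginal laws $\pi_n$ to $\pi$ regardless of the initialization--- together with the recursion that propagates marginals through the kernel. First I would recall from the earlier lemma in this appendix that if $X_0 \sim \pi_0$, then the law of $X_n$ satisfies $\pi_n(x) = \int_E \pi_0(z)\, p^n(z,x)\, dz$, and in particular the one-step recursion
\begin{equation*}
    \pi_{n+1}(x) = \int_E \pi_n(z)\, p(z,x)\, dz.
\end{equation*}
The key idea is that the left-hand side $\pi_{n+1}$ and the marginal $\pi_n$ appearing inside the integral both converge to $\pi$ as $n \to \infty$ by the ergodicity assumption. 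Passing to the limit on both sides of this recursion should yield exactly the general balance equation
\begin{equation*}
    \pi(x) = \int_E \pi(z)\, p(z,x)\, dz,
\end{equation*}
which is the definition of $\pi$ being invariant. The second claim then follows immediately: by induction, if $X_0 \sim \pi$ and $\pi$ is invariant, then $\pi_n = \pi$ for every $n$, since each application of the recursion reproduces $\pi$.

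The main obstacle is justifying the interchange of limit and integral when passing to the limit in the recursion. Ergodicity as stated gives convergence of probabilities $\Prob(X_n \in A) \to \pi(A)$ for every measurable set $A$, i.e. convergence of the measures, but passing this through the integral $\int_E \pi_n(z)\, p(z,x)\, dz \to \int_E \pi(z)\, p(z,x)\, dz$ requires a mode of convergence strong enough to commute with integration against the kernel. In a fully rigorous treatment one would invoke dominated convergence or a weak-convergence argument with the kernel $p(\cdot,x)$ playing the role of a suitable test function; since this appendix is explicitly informal, I would state the interchange as justified by the convergence $\pi_n \to \pi$ and not belabor the measure-theoretic technicalities.

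Concretely, the steps in order are: (1) write the one-step marginal recursion $\pi_{n+1}(x) = \int_E \pi_n(z)\, p(z,x)\, dz$, citing the earlier lemma; (2) invoke ergodicity to assert $\pi_n \to \pi$, which forces both $\pi_{n+1}(x) \to \pi(x)$ and the integrand's marginal to converge; (3) take limits on both sides to obtain the general balance equation, thereby establishing invariance; and (4) conclude the in-equilibrium statement by a one-line induction, observing that invariance means $\pi_0 = \pi$ implies $\pi_n = \pi$ for all $n$.
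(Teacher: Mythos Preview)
Your proposal is correct and matches the paper's approach exactly: the paper writes the one-step recursion $\pi_{n+1}(x) = \int_E \pi_n(z)\,p(z,x)\,dz$, invokes the dominated convergence theorem to pass the limit inside the integral (precisely the justification you anticipated), obtains the general balance equation, and then notes that the final claim follows by induction. The paper labels this a ``Sketch Proof,'' which aligns with your remark that the appendix is informal about the measure-theoretic details of the interchange.
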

\begin{proof}[Sketch Proof]
By the dominated convergence theorem
\begin{align*}
	\lim_{n\rightarrow\infty}\pi_{n+1}(x) =&\lim_{n\rightarrow\infty} \int_E \pi_n(z)p(z,x)\, dz\\
	=&\int_E\lim_{n\rightarrow\infty}  \pi_n(z)p(z,x) \,dz,
\end{align*}
which shows that $\pi(x) = \int_E\pi(z)p(z,x) \, dz$. The final claim of the lemma is proved by induction.
\end{proof}

The general balance equation implies that the transition probabilities of $\XX$ preserve equilibrium. For this reason, we call a distribution $\pi$ that satisfies the general balance equation (\ref{eq:GB}) an invariant distribution of the chain $\XX$. If $\XX$ is ergodic, we can use the general balance equation to show that $\pi$ is the equilibrium distribution of the chain. However, given a distribution $\pi$, it is often difficult to find a Markov kernel for which $\pi$ satisfies the general balance equation. It is actually easier to find a Markov kernel that satisfies a \emph{stronger} condition, known as  detailed balance.

\begin{definition}[Detailed Balance]
	We say that a transition density $p(x,z)$ satisfies \emph{detailed balance} with respect to $\pi$ if, for all $x,z\in E,$ it holds that
		$\pi(x)p(x,z) = \pi(z)p(z,x).$
  \end{definition}

The Metropolis Hastings algorithm in Chapter \ref{chap:MCMC} provides a general recipe to define a family of kernels that satisfy detailed balance with respect to a given distribution. The following result shows that detailed balance implies general balance. 

\begin{theorem}[Detailed Balance Implies General Balance]
Let $p(x,z)$ be a Markov kernel that satisfies detailed balance with respect to a distribution $\pi.$ Then, $\pi$ is an invariant distribution for $p(x,z).$
\end{theorem}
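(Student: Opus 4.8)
The plan is to verify the general balance equation directly by starting from its right-hand side and using the detailed balance hypothesis to transform the integrand. This is the canonical one-line argument, and the only genuine task is to apply the detailed balance identity $\pi(z)p(z,x) = \pi(x)p(x,z)$ under the integral sign and then recognize that what remains integrates to one.

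First I would write out the right-hand side of the general balance equation,
\begin{equation*}
\int_E \pi(z) p(z,x) \, dz,
\end{equation*}
and then substitute the detailed balance relation $\pi(z)p(z,x) = \pi(x)p(x,z)$, which holds for all $x, z \in E$. This turns the integral into
\begin{equation*}
\int_E \pi(x) p(x,z) \, dz = \pi(x) \int_E p(x,z) \, dz.
\end{equation*}
The final step is to observe that, for each fixed $x$, the kernel $p(x,\cdot)$ is a probability density (this is part of the definition of a Markov kernel in Section~\ref{ssec:markovdefinitions}), so $\int_E p(x,z)\, dz = 1$. Hence the right-hand side equals $\pi(x)$, which is exactly the general balance equation, establishing that $\pi$ is invariant for $p(x,z)$.

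There is no serious obstacle here: the result is essentially definitional once detailed balance is invoked. The only point deserving a word of care is the normalization of $p(x,\cdot)$, which is where the asymmetry between the two balance conditions is used—detailed balance is a pointwise identity that we "sum out" over the second argument, whereas general balance is the integrated statement. In the discrete case the same computation holds verbatim with the integral replaced by a sum, so the argument needs no separate treatment for discrete versus continuous state spaces beyond this notational change.
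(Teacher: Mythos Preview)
Your proof is correct and follows exactly the same approach as the paper: substitute the detailed balance identity under the integral and then use that $p(x,\cdot)$ integrates to one. The paper presents this as a single chain of equalities, but the content is identical.
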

\begin{proof}
Using the assumption of detailed balance and that $p(x,z)$ is a Markov kernel (and so it integrates to one over its second argument) gives
	\begin{equation*}
	\int_E \pi(z)p(z,x) \, dz = \int_E \pi(x)p(x,z) \, dz  =  \pi(x) \int_E p(x,z) \, dz =  \pi(x). \tag*{\qedhere}
	\end{equation*} 
\end{proof}

\begin{remark}
	Detailed balance is \emph{not} necessary for general balance. Detailed balance implies that the chain is time reversible: in equilibrium the chain behaves the same whether it is run forward or backward in time. However, there are many ergodic Markov chains that are not time reversible.
\end{remark}


\section{Reducibility, Periodicity, and Transience in a Countable State Space}\label{ssec:reducibility}
A Markov chain taking values on a countable state space $E$ is ergodic if it is irreducible, aperiodic, and positive recurrent. We now review these concepts in the context of countable state space and then generalize them to general state space. 

\begin{definition}[Irreducible Markov Chain]
A Markov chain is \emph{irreducible} if all states intercommunicate. That is, if for all $i,j \in E$ there is $n \geq 0$ such that 
\begin{equation*}
    \mathbb{P}(X_n=i \,|\, X_0=j)>0. \tag*{\qedhere}
\end{equation*}
\end{definition}

\begin{definition}[Recurrent Markov Chain]
A Markov chain is \emph{recurrent} if, for all $i \in E,$
\begin{equation*}
    \mathbb{P}(\XX\, \text{visits } i \text{ eventually} \,|\, X_0=i)=1.  \tag*{\qedhere}
\end{equation*}
\end{definition}

\begin{definition}[Transient Markov Chain]
A Markov chain is \emph{transient} if, for all $i \in E,$
\begin{equation*}
    \mathbb{P}(\XX\, \text{visits } i \text{ eventually} \,|\, X_0=i)=0.  \tag*{\qedhere}
\end{equation*}
\end{definition}




\begin{definition}[Positive Recurrent Markov Chain]
A Markov chain is positive recurrent if $\mathbb{E}[T_{ii}]<\infty$ for all $i \in E,$ where $T_{ii}$ is the time of the first return to state $i$. 
\end{definition}

\begin{definition}[Aperiodic Markov Chain]
A Markov chain is \emph{aperiodic} if all states have period 1. The period of state $i \in E$ is defined as
$$d(i):=\text{gcd}\{n>0:p^n(i,i)>0\},$$
where gcd stands for greatest common divisor. 
\end{definition}

\begin{example}[Periodicity]
For this chain
\begin{figure}[H]
\centering
\includegraphics[scale=0.5]{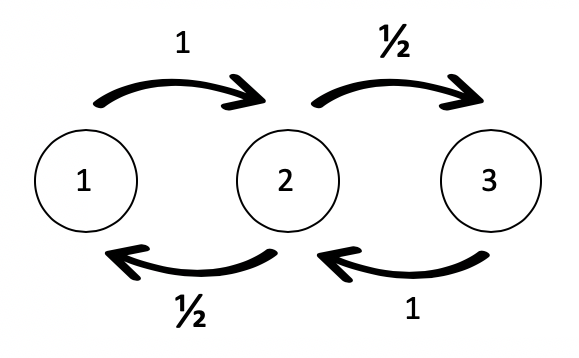}
\end{figure}
we can return to state 1 after 2, 4, 6, $\ldots$ steps. Hence, state 1 has period 2.  \hfill \qedhere
\end{example}

\begin{example}[Aperiodic State]
For this chain
\begin{figure}[H]
\centering
\includegraphics[scale=0.5]{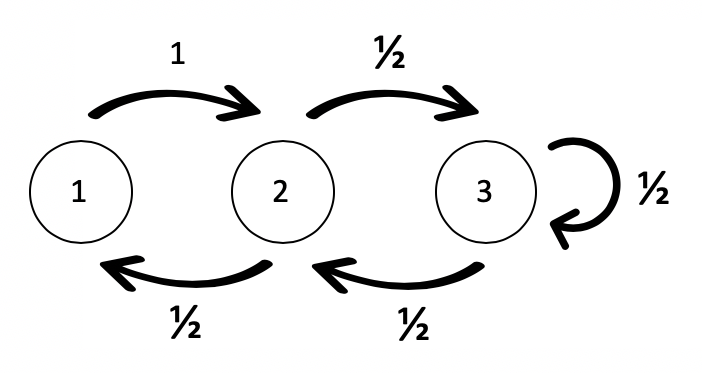}
\end{figure}
we can return to state 1 after 2, 4, 6, 7, $\ldots$ steps. Hence, state 1 has period 1. \hfill \qedhere
\end{example}

\begin{example}[Periodic Chain]
This is a periodic chain: 
\begin{figure}[H]
\centering
\includegraphics[scale=0.5]{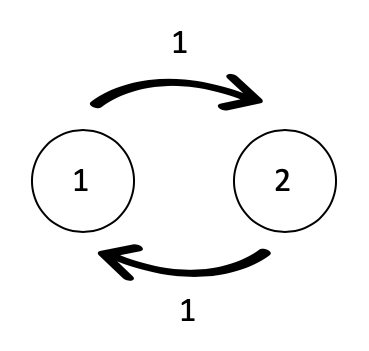}
\end{figure}
Note that 
$$P=\begin{bmatrix}
0&1\\
1&0
\end{bmatrix},\,P^2=\begin{bmatrix}
1&0\\
0&1
\end{bmatrix},\,P^3=\begin{bmatrix}
0&1\\
1&0
\end{bmatrix}, \ldots$$
and so the transition probabilities do not converge as $n\to \infty.$ Also $P^n$ has some zero entries for all $n\ge 1.$ \hfill \qedhere
\end{example}

\begin{remark}
	Recurrence and aperiodicity are chain properties. If a Markov chain is irreducible, the whole space is a communicating chain. Therefore, if one state is recurrent, then the whole space is recurrent (for irreducible chains). Irreducibility ensures that the state space doesn't split into subsets such that the chain cannot move from one subset to the others.
(Positive) recurrence ensures that the chain eventually visits every subset of the state space of positive mass (sufficiently often).
Periodicity causes the state space to split into subsets (cyclically moving chains) which are visited by the chain in sequential order.	
\end{remark}

\section{Ergodic Theorem in Finite State Space}\label{sec:ergfinite}
In this subsection, we study ergodicity of Markov chains under the assumption that the state space $E$ is finite. In this setting, irreducibility and aperiodicity are enough to guarantee ergodicity.  We will use the following lemma.

\begin{lemma}\label{lem:regularity}
Let $P$ be the transition probability matrix of an irreducible, aperiodic, and finite state Markov chain. Then, there is an integer $m$ such that, for all $n \geq m$, the matrix $P^n$ has strictly positive entries. 
\end{lemma}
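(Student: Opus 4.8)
The plan is to combine a number-theoretic fact about additively closed sets with the definitions of irreducibility and aperiodicity, using finiteness of $E$ only at the very end to pass from pointwise positivity to uniform positivity of all entries.

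First I would fix a state $i \in E$ and study its return set $R_i := \{n > 0 : p^n(i,i) > 0\}$. This set is closed under addition: the Chapman--Kolmogorov identity together with nonnegativity gives $p^{a+b}(i,i) = \sum_{k} p^a(i,k)\, p^b(k,i) \ge p^a(i,i)\, p^b(i,i)$, so $a, b \in R_i$ implies $a + b \in R_i$. Aperiodicity means precisely that $\gcd R_i = 1$. The key step is then the elementary number-theoretic lemma that an additively closed subset $S \subseteq \{1, 2, \ldots\}$ with $\gcd S = 1$ contains every sufficiently large integer; applied to $R_i$, this produces an integer $N_i$ such that $p^n(i,i) > 0$ for all $n \ge N_i$.

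Next I would use irreducibility to reach any target state from $i$. For each ordered pair $(i,j)$, irreducibility provides some $m_{ij} \ge 0$ with $p^{m_{ij}}(i,j) > 0$ (taking $m_{ii} = 0$). For any $n \ge N_i + m_{ij}$ I first loop at $i$ for $n - m_{ij} \ge N_i$ steps and then move to $j$, obtaining
$$p^n(i,j) \ge p^{\,n - m_{ij}}(i,i)\, p^{m_{ij}}(i,j) > 0.$$
Since $E$ is finite there are only finitely many pairs, so setting $m := \max_{i,j}\bigl(N_i + m_{ij}\bigr)$ guarantees $p^n(i,j) > 0$ for every $i, j \in E$ and every $n \ge m$, i.e. $P^n$ has strictly positive entries, as claimed.

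The main obstacle is the number-theoretic lemma, which is the heart of the argument. To prove it I would select finitely many elements of $S$ whose gcd is $1$, apply B\'ezout's identity to their combination, and split it into its positive and negative parts to exhibit two elements of $S$ differing by exactly $1$; the representability of all large integers as nonnegative combinations of two coprime numbers (a Frobenius-type bound) then finishes it. Everything else is bookkeeping, the only subtlety being to keep the ``loop then jump'' decomposition inside the regime $n - m_{ij} \ge N_i$, which the choice of $m$ ensures.
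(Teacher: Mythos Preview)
Your proposal is correct and is in fact the standard proof of this lemma. The paper, however, states Lemma~\ref{lem:regularity} without proof --- it is quoted as a preliminary fact and then immediately used (via the assumption $p_{ij} > \varepsilon$) in the coupling proof of Theorem~\ref{thm:ergodicity}. So there is no ``paper's own proof'' to compare against; your argument supplies exactly what the paper omits. The structure you outline --- additive closure of the return set $R_i$, the number-theoretic lemma that an additively closed subset of the positive integers with $\gcd 1$ contains a tail, the ``loop then jump'' factorization $p^n(i,j) \ge p^{\,n-m_{ij}}(i,i)\,p^{m_{ij}}(i,j)$, and the final $\max$ over the finite index set --- is the textbook route and is fully rigorous as you have written it.
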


In order to establish ergodicity we will use a coupling argument which generalizes beyond the finite state space setting considered here. The coupling argument will be used repeatedly in these notes, and we will see that central to the argument is bounding the Markov kernel from below, which in the case of finite state space is achieved using the previous lemma. The coupling argument uses the total variation distance between probability measures.

\begin{definition}[Total Variation Distance]\label{def:totalvariation}
The \emph{total variation distance} between two probability measures $\nu_1, \nu_2$ on $E$ is defined by
\begin{equation}\label{eq:tv}
\dtv(\nu_1,\nu_2)=\underset{A \subseteq E}{\sup} |\nu_1(A)-\nu_2(A)|. 
\end{equation}
\end{definition}

While \eqref{eq:tv}  is the standard definition of the total variation distance, we will use the following characterization.
\begin{lemma}
It holds that
$$\dtv(\nu_1,\nu_2)=\frac{1}{2} \underset{|h|_\infty \leq 1}{\sup} \Bigl|\mathbb{E}_{X \sim \nu_1}[h(X)]-\mathbb{E}_{X \sim \nu_2}[h(X)]\Bigr|.$$ 
\end{lemma}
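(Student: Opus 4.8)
The plan is to compare both quantities to the total mass of the positive part of the signed measure $\mu := \nu_1 - \nu_2$. First I would observe that, since $\nu_1$ and $\nu_2$ are probability measures, $\mu(E) = 0$. Invoking the Hahn--Jordan decomposition, there is a set $P \subseteq E$ on which $\mu$ is non-negative and whose complement $P^c$ carries the non-positive part, giving $\mu = \mu^+ - \mu^-$ with $\mu^+(\cdot) = \mu(\cdot \cap P)$ and $\mu^-(\cdot) = -\mu(\cdot \cap P^c)$. In the discrete setting of this appendix one may take $P = \{x \in E : \nu_1(x) \ge \nu_2(x)\}$ explicitly, so no abstract machinery is needed. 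The constraint $\mu(E)=0$ forces $\mu^+(E) = \mu^-(E)$; I denote this common value by $\delta$.

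Next I would show that the set-based side equals $\delta$. For any $A \subseteq E$ we have $\mu(A) \le \mu^+(A) \le \mu^+(E) = \delta$ and $\mu(A) \ge -\mu^-(A) \ge -\delta$, so $|\nu_1(A) - \nu_2(A)| \le \delta$; taking $A = P$ attains the bound since $\mu(P) = \mu^+(E) = \delta$. Hence $\dtv(\nu_1,\nu_2) = \delta$.

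Finally, for the functional side, I would bound, for any $h$ with $|h|_\infty \le 1$,
$$\Bigl| \mathbb{E}_{\nu_1}[h] - \mathbb{E}_{\nu_2}[h] \Bigr| = \Bigl| \int h \, d\mu^+ - \int h \, d\mu^- \Bigr| \le \int |h| \, d\mu^+ + \int |h| \, d\mu^- \le \mu^+(E) + \mu^-(E) = 2\delta,$$
and then exhibit the extremal test function $h^\star = \mathbf{1}_P - \mathbf{1}_{P^c}$, which satisfies $|h^\star|_\infty = 1$ and $\int h^\star \, d\mu = \mu(P) - \mu(P^c) = \delta + \delta = 2\delta$. Together these give $\sup_{|h|_\infty \le 1} |\mathbb{E}_{\nu_1}[h] - \mathbb{E}_{\nu_2}[h]| = 2\delta$, and dividing by $2$ while combining with $\dtv(\nu_1,\nu_2) = \delta$ yields the claim. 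The only delicate point is the existence of the Hahn--Jordan set $P$: this is elementary in the discrete or dominated case treated throughout this appendix, but in full generality it rests on the measure-theoretic decomposition theorem, which I would simply cite rather than reprove.
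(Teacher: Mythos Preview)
The paper states this lemma without proof, so there is no direct comparison to make. Your argument is the standard one via the Hahn--Jordan decomposition and is correct: you identify the common value $\delta = \mu^+(E) = \mu^-(E)$, show the set-based supremum equals $\delta$ (witnessed by $A = P$), and show the functional supremum equals $2\delta$ (witnessed by $h^\star = \mathbf{1}_P - \mathbf{1}_{P^c}$). Your closing caveat about the level of generality needed for the Hahn--Jordan set is appropriate given that the appendix largely works in discrete or dominated settings.
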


\begin{theorem}[Ergodicity of Markov Chains in Finite State Space]\label{thm:ergodicity}
Let $\XX=\{X_n\}_{n=0}^\infty$ be a Markov chain that is irreducible and aperiodic with finite state space $E$. Then, there are $\varepsilon>0$ and a unique invariant distribution $\pi$ such that
\begin{equation}\label{eq:inequality}
\dtv(\pi_n,\pi) \leq (1-\varepsilon)^n,
\end{equation}
where $\pi_n$ denotes the law of $X_n$.
\end{theorem}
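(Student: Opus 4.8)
The plan is to establish ergodicity of a finite-state irreducible, aperiodic Markov chain via a coupling argument, building directly on the two lemmas that precede the theorem. The key structural input is Lemma \ref{lem:regularity}, which guarantees an integer $m$ such that $P^m$ has strictly positive entries. Since $E$ is finite, there exists $\varepsilon > 0$ and a probability distribution $\nu$ on $E$ such that
\begin{equation*}
p^m(x,z) \geq \varepsilon \, \nu(z), \qquad \forall x,z \in E.
\end{equation*}
Concretely, one can take $\varepsilon = \sum_{z} \min_{x} p^m(x,z) > 0$ (strictly positive by Lemma \ref{lem:regularity}) and let $\nu(z) \propto \min_x p^m(x,z)$. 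This minorization condition is exactly the lower bound on the kernel that the excerpt repeatedly emphasizes as the heart of the coupling argument (see the remarks following Theorem \ref{thm}).

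First I would reduce to the case $m = 1$ by working with the sampled chain. The minorization lets us write the $m$-step kernel as a mixture: with probability $\varepsilon$ we draw the next state from the common distribution $\nu$ (independently of the current state), and with probability $1-\varepsilon$ we draw from the residual kernel $s(x,z) = \bigl(p^m(x,z) - \varepsilon\nu(z)\bigr)/(1-\varepsilon)$, which is a valid Markov kernel precisely because of the lower bound. This is the same splitting device used in the proof sketch of Theorem \ref{thm}.

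Next I would run the coupling. Take two copies of the chain, one started from $\pi_0$ and one from the invariant distribution $\pi$ (whose existence and uniqueness follow from standard finite-state theory, or can be extracted as the fixed point of the contraction below). At each block of $m$ steps, use a shared Bernoulli$(\varepsilon)$ coin: if it comes up heads, both chains jump to the \emph{same} state drawn from $\nu$, coalescing from that point onward; if tails, they evolve independently under $s$. The probability that coalescence has not occurred after $k$ such blocks is at most $(1-\varepsilon)^k$. Using the coupling inequality together with the characterization of total variation distance in the Lemma preceding the theorem, this yields
\begin{equation*}
\dtv(\pi_{km}, \pi) \leq \Prob(\text{chains not yet coupled}) \leq (1-\varepsilon)^k.
\end{equation*}
Finally I would interpolate to all times $n$: writing $n = km + r$ with $0 \le r < m$, monotonicity of total variation under application of a Markov kernel gives $\dtv(\pi_n,\pi) \le \dtv(\pi_{km},\pi)$, and after adjusting constants one recovers a bound of the advertised form $\dtv(\pi_n,\pi)\le(1-\varepsilon)^n$ (possibly with a relabeled $\varepsilon$).

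The main obstacle I anticipate is not conceptual but bookkeeping: passing cleanly between the $m$-step contraction rate $(1-\varepsilon)^k$ and the stated per-step rate $(1-\varepsilon)^n$ requires care, since coalescence is only guaranteed to be attemptable every $m$ steps. One honest route is to prove the bound $\dtv(\pi_n,\pi)\le (1-\varepsilon)^{\lfloor n/m\rfloor}$ and then absorb the factor $m$ into the constant, as is done in the displayed inequality $\dtv(P^n(x,\cdot),f)\le(1-\delta)^{[n/m]}$ in the remark after Theorem \ref{thm}; alternatively, one redefines $\varepsilon$ so that the exponent is literally $n$. I would flag this explicitly rather than obscure it. The construction of the residual kernel $s$ and the verification that $\pi$ is genuinely invariant (rather than merely a coupling limit) are the remaining details, but both follow routinely once the minorization is in hand.
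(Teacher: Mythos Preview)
Your proposal is correct and follows essentially the same coupling argument as the paper: minorization from Lemma~\ref{lem:regularity}, mixture decomposition of the kernel into a common component and a residual kernel $s$, and a Bernoulli-driven coupling bounding total variation by the probability of no coalescence. The paper differs only in presentation---it invokes Brouwer's fixed point theorem on $\nu\mapsto\nu P$ to establish existence of $\pi$ rather than extracting it from the contraction, and it sidesteps your interpolation bookkeeping by assuming without loss of generality (via Lemma~\ref{lem:regularity}) that $m=1$, so the per-step bound $(1-\varepsilon)^n$ is literal.
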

\begin{proof}
We will assume (without loss of generality by Lemma \ref{lem:regularity}) that $p_{ij}>\varepsilon$ for all $i,j \in E$. Consider the following map from  $\mathcal{P}(E)$ (the set of probability row vectors on $E$) into itself:
\begin{align*}
\mathcal{P}(E) &\rightarrow \mathcal{P}(E)  \\
\nu &\mapsto \nu P.
\end{align*}
Since this map is continuous and $\mathcal{P}(E)$ is compact and convex, Brouwer's fixed point theorem guarantees the existence of a fixed point. That is, there is  $\pi \in \mathcal{P}(E)$ such that $\pi P= \pi$, showing the existence of an invariant distribution. In the rest of the proof we will prove that if $\pi$ is an invariant distribution, then inequality \eqref{eq:inequality} holds. This in particular will automatically imply the uniqueness of the invariant distribution. 

We will use a \emph{coupling} argument. Let $\{B_n\}_{n=0}^\infty$ be a sequence of independent Bernoulli$(\varepsilon$) random variables, independent of all other randomness, and define for given random variable $W_0$ and $n\ge 0$
\begin{equation}\label{eq:transition}
W_{n+1} \sim \begin{cases}
s(W_n,\cdot)&{\text{if} } \,\, B_n=0,\\
r(W_n,\cdot)&{\text{if} } \,\, B_n=1,\\
\end{cases}
\end{equation}
where $$s(i,j)=\frac{p_{ij}-\varepsilon r(i,j)}{1-\varepsilon}$$ and $r$ is the uniform transition kernel, so that $r(i,\cdot)$ is the discrete uniform distribution in $E$. Note that
$$s(i,E)=\frac{p(i,E)-\varepsilon r(i,E)}{1-\varepsilon}=\frac{1-\varepsilon}{1-\varepsilon} =1$$ and
$$s(i,j)\geq\frac{\varepsilon-\varepsilon r(i,j)}{1-\varepsilon}\geq 0,$$
so $s$ is a Markov kernel. Moreover,
\begin{align*}
	\mathbb{P}(W_{n+1}=j \,|\, W_n=i)
	&=\varepsilon\mathbb{P}(W_{n+1}=j \,|\, B_n=1,W_n=i)+(1-\varepsilon)\,\mathbb{P}(W_{n+1}=j \,|\, B_n=0,W_n=i)\\
	&=\varepsilon r(i,j)+p(i,j)-\varepsilon r(i,j)\\
	&=p(i,j).
\end{align*}
Thus, $\{W_n\}$ has transition kernel $P$. 

Let $h:E\to\R$ with $|h|_\infty\leq1$, and let $\tau=\min(n\in \N:B_n=1)$. Regardless of the distribution of $W_0$,
\begin{align*}
	\mathbb{E}[h(W_n)]
	&=\mathbb{E}[h(W_n) \,|\, \tau\geq n]\,\mathbb{P}(\tau\geq n)+\sum_{l=0}^{n-1}\mathbb{E}[h(W_n) \,|\, \tau=l]\mathbb{P}(\tau =l)\\
	&=\underbrace{\mathbb{E}[h(W_n) \,|\, \tau\geq n]}_{\le 1}  \underbrace{\mathbb{P}(\tau\geq n)|}_{\le (1-\epsilon)^n}  +\underbrace{\sum_{l=0}^{n-1}\mathbb{E}_{W_0\sim \text{Unif}(E)}[h(W_{n-l})]\mathbb{P}(\tau =l) ,}_{\text{independent of the initial distribution}}\\
\end{align*}
 where Unif$(E)$ denotes the uniform distribution on $E$.

Now define two sequences of random variables as in equation \eqref{eq:transition}: let $\{Y_n\}_{n=0}^\infty$ be initialized with $\pi_0$ (the initial distribution of $\XX$), and let $\{Z_n\}_{n=0}^\infty$ be initialized with an invariant distribution $\pi$ (which we have shown that exists) so that $Z_n\sim\pi$ for all $n$. Then,
\begin{align*}
	\dtv(\pi_n,\pi) =\frac{1}{2}\underset{|h|_\infty\leq 1}{\sup} \Bigl|\mathbb{E}_{Y_n\sim \pi_n}[h(Y_n)]-\mathbb{E}_{Z_n\sim\pi}[h(Z_n)] \Bigr| \leq(1-\varepsilon)^n. \tag*{\qedhere}
\end{align*}
\end{proof}

\section{Ergodic Theory in General State Space}\label{sec:erginfinite}
This section shows how to generalize the definitions of irreducibility, aperiodicity and recurrence to general state spaces. We will also summarize some ergodicity results in general state space. 
\begin{definition}[$\phi$-Irreducible Markov Chain]
	A Markov chain is called \emph{$\phi$-irreducible} if there exists a non-zero measure $\phi$ on $E$ such that, for all $A \subseteq E$ with $\phi(A)>0$ and for all $x \in E$, there exist a positive integer $n = n(x)$ such that $P^n(x,A)>0$.
\end{definition}

\begin{definition}[Aperiodic Markov Chain]
	A chain is \emph{aperiodic} if there do not exist $d\geq2$ and disjoint subsets $E_1, \dots , E_d \subseteq E$, with 
	\begin{alignat*}{3}
		P(x,E_{i+1}) &=1,  \qquad  &&\forall   x \in E_i, \quad \quad i\in\{1,\dots,d-1\},\\
		P(x,E_{1}) &=1, \qquad  &&\forall   x \in E_d. 
	\end{alignat*}	
\end{definition}
	We remark that if $E$ is discrete, this definition agrees with the one previously given.
The following result ensures that a Markov chain that is $\phi$-irreducible and aperiodic has a limit distribution.
	
\begin{theorem}[Limit Distribution for Almost-All Initial States]
	The distribution of an aperiodic, $\phi$-irreducible Markov chain converges to a limit distribution $\pi$ for almost-all initial states. More precisely, there is $\pi$ such that, for $\pi$-almost all $x \in E,$
	\begin{equation*}
		\lim_{n\rightarrow\infty}\dtv (P^n(x,\cdot),\pi) = 0.
	\end{equation*}
\end{theorem}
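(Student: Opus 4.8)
The plan is to adapt the coupling argument used in the proof of Theorem \ref{thm:ergodicity} from the finite to the general state space, together with the standing assumption of positive recurrence (which is what makes the asserted limit $\pi$ a genuine probability measure). In the finite-state proof the crucial ingredient was the uniform lower bound $p_{ij} > \varepsilon$, which let us write each transition as a mixture of a \emph{coupling move} (shared by two copies of the chain) and a residual move governed by the kernel $s$. No such global minorization holds in general, so the first step is to replace it by a \emph{local} one on a small set: under $\phi$-irreducibility and aperiodicity one shows there exist a set $C$ with $\phi(C)>0$, an integer $m \geq 1$, a constant $\delta \in (0,1)$, and a probability measure $\nu$ on $E$ such that
\[
P^m(x, A) \geq \delta\, \nu(A), \qquad \forall\, x \in C, \ \forall\, A \subseteq E.
\]

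Given this minorization, I would run two copies $\{Y_n\}$ and $\{Z_n\}$ of the chain, with $\{Y_n\}$ started from an arbitrary $x$ and $\{Z_n\}$ started from the invariant distribution $\pi$, so that $Z_n \sim \pi$ for all $n$. The copies evolve until both lie in $C$ simultaneously; at such an instant, with probability $\delta$ they perform the \emph{same} $m$-step move drawn from $\nu$, coupling the two chains forever after, and with probability $1-\delta$ they take residual moves, exactly in the spirit of the kernel $s$ from the finite-state proof. The coupling inequality then yields
\[
\dtv\bigl(P^n(x,\cdot), \pi \bigr) \leq \Prob(\text{the chains have not coupled by time } n),
\]
so that the entire problem reduces to showing that the coupling time is almost surely finite.

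This last point is the main obstacle, and it is where positive recurrence and $\phi$-irreducibility are indispensable. One must argue that the pair chain visits $C \times C$ infinitely often, so that infinitely many independent $\textrm{Bernoulli}(\delta)$ coupling trials are attempted, forcing eventual success. Establishing recurrence of the pair chain, and using aperiodicity to guarantee that the two copies can occupy $C$ \emph{at the same time} rather than on disjoint cyclic classes, is the technically delicate part; it is typically carried out via the Nummelin splitting construction, which promotes the minorization to a genuine regeneration atom and reduces the analysis to that of a discrete renewal process. Finally, the qualifier ``$\pi$-almost all $x$'' arises precisely because, absent the stronger Harris recurrence assumption, there may be a $\pi$-null set of initial states from which the chain never reaches $C$; on the complement of this null set the coupling succeeds and the total variation bound tends to zero, completing the argument.
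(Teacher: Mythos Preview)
The paper does not prove this theorem; it is stated without proof in the appendix as background material, with a pointer to standard references such as Meyn--Tweedie and Nummelin. So there is no paper proof to compare against.

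Your sketch is the standard route to this result and is essentially correct in outline. The key ingredients are all present: replacing the global minorization of the finite-state argument by a small-set minorization $P^m(x,A) \ge \delta\,\nu(A)$ for $x \in C$; coupling two copies of the chain via the Nummelin splitting construction so that simultaneous visits to $C$ yield i.i.d.\ $\mathrm{Bernoulli}(\delta)$ regeneration attempts; using aperiodicity to ensure the two copies can be in $C$ at the same time rather than on disjoint cyclic phases; and identifying the $\pi$-null exceptional set as those starting points from which $C$ is never reached. You are also right to flag that the theorem as stated quietly presupposes positive recurrence (equivalently, the existence of an invariant \emph{probability} measure): without it there is no probability $\pi$ to converge to, and the conclusion fails for null-recurrent or transient chains. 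The one place where your outline is genuinely only a sketch is the recurrence of the product chain on $C \times C$; filling this in rigorously is nontrivial and is exactly what the splitting/regeneration machinery is designed to handle, so your deferral to Nummelin there is appropriate rather than evasive.
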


The above limit holds for almost all starting values $x \in E$. To make it hold for \emph{all} starting values $x\in E$ and thus for all initial distributions $\pi_0$ we need an additional property: \emph{Harris recurrence}.

\begin{definition}[Harris Recurrence]
	A chain $\XX$ is \emph{Harris recurrent} if there is a probability $\nu$ such that for all $A \subseteq E$ with $\nu(A)>0$ and all $x \in E$ we have 
	\begin{equation*}
		\Prob (X_n\in A\text{ for some }n>0|X_0=x)=1. \tag*{\qedhere}
	\end{equation*}
\end{definition}

\begin{theorem}[Limit Distribution for All Initial States]
	The distribution of an aperiodic, Harris recurrent Markov chain converges to a limit distribution $\pi,$ regardless of the initial state. More precisely, there is a limit distribution $\pi$ such that, for all $x \in E,$
	\begin{equation*}
		\lim_{n\rightarrow\infty}\dtv (P^n(x,\cdot),\pi) = 0.
	\end{equation*}
\end{theorem}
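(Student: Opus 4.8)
The plan is to upgrade the almost-everywhere convergence already guaranteed by the preceding theorem to convergence from \emph{every} initial state, using Harris recurrence to eliminate the exceptional $\pi$-null set. First I would observe that Harris recurrence implies $\phi$-irreducibility (take $\phi = \nu$), so the hypotheses of the previous theorem are met: it furnishes a limit distribution $\pi$ together with a ``good set'' $G \subseteq E$ with $\pi(G) = 1$ on which $\dtv(P^n(x,\cdot),\pi) \to 0$ for every $x \in G$. The task is then to show that this convergence persists even for starting points in the exceptional set $E \setminus G$.

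The core of the argument is a coupling, of exactly the same flavor as the one used in Theorem \ref{thm:ergodicity} for finite state spaces and in Theorem \ref{thm} for the independence sampler. I would run two versions of the chain simultaneously: one, $\{X_n\}$, started from the Dirac mass at an arbitrary $x \in E$, and a second, $\{X_n'\}$, started from $\pi$, so that $X_n' \sim \pi$ for all $n$ by invariance. The coupling inequality gives $\dtv\bigl(P^n(x,\cdot),\pi\bigr) \le \Prob(X_n \neq X_n')$, so it suffices to construct the joint law so that the two chains coalesce at a finite coupling time $\tau$ and remain equal thereafter; then $\Prob(X_n \neq X_n') \le \Prob(\tau > n)$, which tends to $0$ provided $\Prob(\tau < \infty) = 1$.

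To build such a coupling in a general, possibly continuous, state space I would invoke a minorization condition: aperiodicity together with irreducibility guarantees a ``small'' set $C$, an integer $m$, a constant $\delta > 0$, and a probability measure $\nu$ with $P^m(x,\cdot) \ge \delta\,\nu(\cdot)$ for all $x \in C$ ---the same lower bound on the kernel that drives every coupling argument in these notes. Whenever both chains occupy $C$ at a common time, with probability $\delta$ they can be forced to make an identical draw from $\nu$ and thereby couple; otherwise they evolve according to the residual kernel $s(x,\cdot) = \bigl(P^m(x,\cdot) - \delta\nu\bigr)/(1-\delta)$. Harris recurrence ensures both chains return to $C$ infinitely often, and aperiodicity ensures they can be made to occupy $C$ simultaneously, so with probability one they eventually attempt and succeed at a coupling, giving $\Prob(\tau < \infty) = 1$.

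The main obstacle is precisely this last point: establishing that the two chains can be arranged to be in the small set $C$ at the same time and that coalescence therefore occurs almost surely. In discrete state space (as in Theorem \ref{thm:ergodicity}) this is immediate, since singletons are small sets and the chains can meet exactly; in the general setting it requires the Nummelin splitting construction to create an accessible atom, and aperiodicity is exactly what rules out the chains becoming trapped in out-of-phase cyclic classes that would prevent simultaneous occupancy of $C$. Since this appendix is an informal review, I would state the minorization and simultaneous-return facts as consequences of Harris recurrence and aperiodicity without proving them, and present the coupling bound $\dtv\bigl(P^n(x,\cdot),\pi\bigr) \le \Prob(\tau > n) \to 0$ as the substance of the argument.
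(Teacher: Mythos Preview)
The paper does not prove this theorem: it is stated in the appendix as part of an ``informal review'' of Markov chain theory, with no proof supplied. So there is no paper proof to compare against.

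That said, your outline is the standard route (minorization on a small set, Nummelin splitting to manufacture an atom, then a coupling argument of the same shape as the one the paper does carry out for finite state spaces in Theorem~\ref{thm:ergodicity}). Your identification of the genuine technical obstacle---arranging simultaneous occupancy of the small set and invoking aperiodicity to rule out out-of-phase cycling---is accurate, and your decision to state rather than prove those facts matches the informal level of the appendix. One minor point: Harris recurrence alone does not guarantee existence of an invariant \emph{probability} distribution (only a $\sigma$-finite invariant measure); the paper's statement implicitly assumes positive Harris recurrence, and you may want to flag that assumption explicitly rather than rely on the previous theorem to supply $\pi$.
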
	

\begin{corollary}[Limit Distribution and Ergodicity]
	Under the conditions of the previous theorem, for all $A \subseteq E$ and all initial conditions $\pi_0,$ it holds that
	\begin{equation*}
		\lim_{n\rightarrow\infty}\Prob(X_n\in A)=\pi(A).
	\end{equation*}
 Therefore, the chain is ergodic. 
\end{corollary}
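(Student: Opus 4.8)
The plan is to lift the pointwise-in-starting-state convergence supplied by the previous theorem to an arbitrary initial distribution $\pi_0$ by averaging against $\pi_0$ and invoking dominated convergence.

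First I would recall that, when $X_0 \sim \pi_0$, conditioning on the initial state expresses the law of $X_n$ as
\begin{equation*}
\Prob(X_n \in A) = \int_E P^n(z, A)\, \pi_0(dz), \qquad A \subseteq E.
\end{equation*}
Since $\pi_0$ is a probability measure, we may also write $\pi(A) = \int_E \pi(A)\, \pi_0(dz)$, so that subtracting and moving the absolute value inside the integral gives
\begin{equation*}
\bigl| \Prob(X_n \in A) - \pi(A) \bigr| \le \int_E \bigl| P^n(z,A) - \pi(A) \bigr|\, \pi_0(dz).
\end{equation*}

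Next I would analyze the integrand for fixed $A$. The function $z \mapsto |P^n(z,A) - \pi(A)|$ is measurable, bounded above by $1$ (both $P^n(z,\cdot)$ and $\pi$ are probability measures, so each term lies in $[0,1]$), and, crucially, the previous theorem guarantees that $\dtv(P^n(z,\cdot),\pi) \to 0$ for every $z \in E$. Since $|P^n(z,A) - \pi(A)| \le \dtv(P^n(z,\cdot),\pi)$, the integrand tends to $0$ pointwise as $n \to \infty$. The constant function $1$ is integrable with respect to the probability measure $\pi_0$, so the dominated convergence theorem applies and yields $\int_E |P^n(z,A) - \pi(A)|\, \pi_0(dz) \to 0$. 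Combined with the displayed bound, this gives $\Prob(X_n \in A) \to \pi(A)$ for every $A$ and every $\pi_0$, which is precisely the definition of ergodicity.

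The argument is short, and the only point requiring mild care is the domination step: one must observe that the total variation distance between two probability measures never exceeds $1$, which furnishes the integrable dominating function needed for dominated convergence. Working with a fixed $A$ (rather than the supremum over $A$) sidesteps any measurability concern about $z \mapsto \dtv(P^n(z,\cdot),\pi)$; if one instead wanted the stronger conclusion $\dtv(\pi_n,\pi)\to 0$, the same computation with the supremum over $A$ pulled outside the integral would deliver it, using the variational characterization of $\dtv$ to handle measurability.
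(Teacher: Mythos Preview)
Your proof is correct and follows exactly the same approach as the paper: represent $\Prob(X_n\in A)=\int_E P^n(z,A)\,\pi_0(dz)$ and apply dominated convergence using the pointwise total-variation convergence from the previous theorem. The paper's proof is a one-line sketch of precisely this argument, so you have simply filled in the details it leaves implicit.
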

\begin{proof}
    The result follows by the dominated convergence theorem, noting that $\pi_n(A)=\Prob(X_n \in A)=\int_E \pi_0(x)P^n(x,A) \, dx.$ 
\end{proof}

\section{Sample Path Ergodicity}\label{sec:ergsamplepath}
Why do we care about ergodicity of Markov chains in a Monte Carlo simulation course? Markov chain Monte Carlo (MCMC) algorithms draw samples from a Markov chain which has as invariant a prescribed target distribution $\pi = f$. Ultimately, we are interested in understanding the \emph{sample path ergodicity} of MCMC chains, that is, we want to determine if ergodic averages
 $$\frac{1}{N}\sum_{n=1}^{N}h(X^{(n)}) $$
 computed from MCMC samples accurately approximate expectations $\mathcal{I}_f[h]$ with respect to the target. 
 These samples are \emph{not} independent, and consequently the efficiency of the ergodic average estimator will depend on the correlations between them. 

First,  we have the following law of large numbers.
 \begin{theorem}[Law of Large Numbers]\label{thm:LLN}
	Let $h:E\rightarrow \R$ and let $\XX = \{X_n\}_{n=0}^\infty$ be an ergodic Markov chain with stationary distribution $\pi$. Then, $\pi$-almost surely, 
	\begin{equation*}
		 \frac{1}{N}\sum_{n=1}^{N}h(X_n) \stackrel{N \rightarrow \infty}{\longrightarrow} \int_Eh(x)\pi(x) \, dx\equiv\mathcal{I}_\pi[h].
	\end{equation*}
\end{theorem}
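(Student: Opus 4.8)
The plan is to reduce the claim to Birkhoff's ergodic theorem by realizing the stationary Markov chain as a measure-preserving dynamical system on path space. First I would work on the canonical path space $\Omega = E^{\mathbb{N}}$ equipped with the shift map $\theta(x_0, x_1, \ldots) = (x_1, x_2, \ldots)$ and the probability measure $\Prob_\pi$ under which $X_0 \sim \pi$ and the chain evolves according to its Markov kernel. Because $\pi$ satisfies general balance (it is invariant for the kernel), the shift $\theta$ is measure-preserving for $\Prob_\pi$; this stationarity is the key structural input. Birkhoff's ergodic theorem then yields, $\Prob_\pi$-almost surely and under the (implicit) integrability assumption $\int_E |h|\, d\pi < \infty$,
$$\frac{1}{N}\sum_{n=0}^{N-1} h(X_n) \longrightarrow \Expect_{\Prob_\pi}\bigl[\, h(X_0) \mid \mathscr{I}\,\bigr],$$
where $\mathscr{I}$ denotes the $\sigma$-algebra of shift-invariant events.

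The main obstacle — and the only place where ergodicity of the chain, as opposed to mere stationarity of $\pi$, enters — is to show that $\mathscr{I}$ is $\Prob_\pi$-trivial, so that the conditional expectation above collapses to the constant $\int_E h \, d\pi = \mathcal{I}_\pi[h]$. I would argue this by contradiction: if $\mathscr{I}$ contained an event $B$ with $0 < \Prob_\pi(B) < 1$, then conditioning the chain on $B$ (respectively on its complement) would produce two distinct shift-invariant, and hence Markov-invariant, initial laws, contradicting the uniqueness of the limit distribution guaranteed by the definition of ergodicity. Concretely, I would invoke that ergodicity forces $\Prob(X_n \in A) \to \pi(A)$ for \emph{every} initial distribution, which pins down $\pi$ as the unique invariant law and thereby forces the triviality of $\mathscr{I}$.

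Finally, since the statement is phrased for a general (not necessarily stationary) chain, I would upgrade from $\Prob_\pi$-almost-sure convergence to convergence for $\pi$-almost every starting state. This is where Harris recurrence — or, equivalently, the coupling estimate underlying Theorem \ref{thm:ergodicity} — does the work: the Cesàro average is asymptotically insensitive to the initial segment of the path, so coupling a chain started at $x$ to a stationary copy shows that the two ergodic averages share the same almost-sure limit. An alternative, more self-contained route, closer in spirit to the coupling arguments developed earlier, is the regeneration method: decompose the trajectory into i.i.d. excursions between successive returns to a fixed recurrent state (or small set), apply the classical strong law of large numbers separately to the excursion sums $\sum h(X_n)$ and to the excursion lengths, and conclude by taking the ratio. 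I expect the triviality of the invariant $\sigma$-algebra to be the conceptual crux, whereas the passage from stationary to arbitrary initialization and the bookkeeping in the regeneration argument are routine.
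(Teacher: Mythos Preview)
The paper does not actually prove this theorem: it is stated in the appendix as a standard result and immediately followed by the central limit theorem, with no proof or sketch given. So there is nothing to compare your approach against.

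That said, your plan via Birkhoff's ergodic theorem on path space is the standard route and is sound in outline. One point deserves tightening: your argument for triviality of the invariant $\sigma$-algebra $\mathscr{I}$ claims that conditioning $\Prob_\pi$ on a shift-invariant event $B$ yields a Markov law with a different invariant initial distribution. This is not quite right, since $\Prob_\pi(\cdot \mid B)$ is shift-invariant on path space but need not be the law of a Markov chain at all. The cleaner argument is to set $\phi(x) := \Prob_x(B)$, observe that shift-invariance of $B$ together with the Markov property makes $\phi$ bounded and harmonic ($P\phi = \phi$), and then use the paper's notion of ergodicity (total-variation convergence of $P^n(x,\cdot)$ to $\pi$) to conclude $\phi(x) = \lim_n \Expect_x[\phi(X_n)] = \int \phi\, d\pi$ is constant; a martingale argument then forces $\Prob_\pi(B) \in \{0,1\}$. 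Your alternative regeneration route is also fine and arguably more in keeping with the coupling style of Section~\ref{sec:ergfinite}.
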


We also have a central limit theorem, which requires geometric ergodicity of the chain.
\begin{definition}[Geometric and Uniform Ergodicity]\label{def:geometricuniformergodicity}
	An ergodic Markov chain with invariant distribution $\pi$ is \emph{geometrically ergodic} if there exist a non-negative function $M$ with $\mathbb{E}_\pi[M(X)]<\infty$ and $0<r<1$ such that 
	\begin{equation*}
		\dtv(P^n(x,\cdot),\pi)\leq M(x)r^n
	\end{equation*}
	for all $x$ and all $n$. If the function $M$ is bounded above, then the chain is called \emph{uniformly ergodic}. \hfill \qedhere
\end{definition}

	Note, as an example, that we have proved that any irreducible and aperiodic Markov chain on a finite state space is uniformly ergodic. The following result shows that sample path ergodicity can be deduced from geometric ergodicity. In Chapter \ref{chap:MCMC}, we discuss the uniform and geometric ergodicity of several MCMC algorithms, thus guaranteeing in particular their sample path ergodicity.

\begin{theorem}[Central Limit Theorem]\label{thm:CLT}
	With the same notation as in the previous theorem, suppose further that $\XX$ is geometrically ergodic and that, for some $\epsilon>0,$
	\begin{equation*}
		\Expect_{Y\sim\pi}[h(Y)^{2+\epsilon}]<\infty.
	\end{equation*}
	Then, \begin{equation*}
		\sqrt{N} \Bigl( \frac{1}{N}\sum_{n=1}^{N}h(X_n)-\mathcal{I}_\pi[h]\Bigr)\stackrel{\mathcal{D}}{\longrightarrow} \Nc(0,\tau^2),
	\end{equation*}
	where $\tau^2$ is the integrated autocorrelation time, defined below in Lemma \ref{lemmaMarkovchainerror}.
\end{theorem}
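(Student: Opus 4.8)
The plan is to prove the central limit theorem via the \emph{martingale approximation} method of Gordin and Lifšic, which reduces the problem to the classical martingale central limit theorem. Throughout I work with the stationary chain ($X_0 \sim \pi$), handling a general initialization at the very end using geometric ergodicity. Replacing $h$ by $\bar h := h - \mathcal{I}_\pi[h]$, I may assume without loss of generality that $\mathcal{I}_\pi[h] = 0$, so that $\Expect_\pi[\bar h] = 0$ and, by the moment hypothesis, $\bar h \in L^{2+\epsilon}(\pi) \subset L^2(\pi)$.

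First I would solve the \emph{Poisson equation}. Writing $(Pg)(x) = \int_E g(z) p(x,z)\,dz$ for the transition operator, the goal is to find $g \in L^2(\pi)$ satisfying $g - Pg = \bar h$. The formal solution is $g = \sum_{k=0}^\infty P^k \bar h$, and here geometric ergodicity enters decisively: the bound $\dtv(P^n(x,\cdot),\pi) \le M(x) r^n$ forces $P^k \bar h$ to decay geometrically, so that the series converges in $L^2(\pi)$ and defines a genuine solution $g \in L^2(\pi)$. Converting the total-variation decay into $L^2$ decay of $\|P^k \bar h\|_{L^2(\pi)}$ is precisely where the extra integrability $\bar h \in L^{2+\epsilon}(\pi)$, together with $\Expect_\pi[M] < \infty$, is used (via Hölder's inequality).

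Second, I would carry out the telescoping decomposition. Define the filtration $\mathcal{F}_n = \sigma(X_0,\ldots,X_n)$ and the increments $M_n := g(X_n) - (Pg)(X_{n-1})$. Since $\Expect[g(X_n)\mid \mathcal{F}_{n-1}] = (Pg)(X_{n-1})$, the sequence $\{M_n\}$ is a stationary, ergodic martingale-difference sequence. Using $\bar h = g - Pg$ and rearranging yields
\begin{equation*}
\sum_{n=1}^N \bar h(X_n) = \sum_{n=1}^N M_n + \bigl[(Pg)(X_0) - (Pg)(X_N)\bigr].
\end{equation*}
The boundary term is $O(1)$ in $L^2$ (as $Pg \in L^2(\pi)$ and the chain is stationary), so after dividing by $\sqrt N$ it is $o_P(1)$ and may be discarded. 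I would then invoke the central limit theorem for stationary ergodic martingale differences: since $\{M_n\}$ has finite variance $\sigma_M^2 := \Expect_\pi[M_1^2]$ and the chain is ergodic, $\frac{1}{\sqrt N}\sum_{n=1}^N M_n \Rightarrow \Nc(0,\sigma_M^2)$, and Slutsky's theorem transfers this to $\frac{1}{\sqrt N}\sum_{n=1}^N \bar h(X_n)$. Finally, expanding $\sigma_M^2$ using $g = \sum_{k\ge 0}P^k\bar h$ produces the telescoping sum of autocovariances $\gamma_k = \operatorname{Cov}_\pi\bigl(h(X_0),h(X_k)\bigr)$, giving $\sigma_M^2 = \gamma_0 + 2\sum_{k=1}^\infty \gamma_k = \tau^2$, the integrated autocorrelation time of Lemma \ref{lemmaMarkovchainerror}.

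The main obstacle is the first step: establishing that the Poisson equation admits an $L^2(\pi)$ solution. This is exactly where the hypotheses are used non-trivially, since geometric ergodicity alone controls convergence only in an $M$-weighted total-variation sense, and upgrading this to summability of $\|P^k\bar h\|_{L^2(\pi)}$ is what forces the stronger $(2+\epsilon)$-moment assumption rather than mere finite variance. A secondary technical point, handled last, is removing the stationarity assumption: for an arbitrary initial distribution $\pi_0$, geometric ergodicity guarantees that the law of $X_n$ approaches $\pi$ fast enough that the limiting distribution of the normalized ergodic average is unaffected, so the conclusion holds regardless of initialization.
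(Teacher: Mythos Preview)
The paper states this theorem without proof; the appendix is a review of background Markov-chain theory in which standard results such as this CLT are quoted rather than derived, so there is no ``paper's proof'' to compare against.

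Your Gordin--Lif\v{s}ic martingale-approximation outline is one of the standard routes to Markov-chain CLTs and is structurally sound: solve the Poisson equation, write the ergodic sum as a martingale plus an $O_{L^2}(1)$ boundary term, apply the stationary-ergodic martingale CLT, and identify the limiting variance with $\tau^2$ (your computation $\sigma_M^2 = \gamma_0 + 2\sum_{k\ge1}\gamma_k$ is correct). One caveat worth recording concerns the step you already flag as the obstacle. Under the hypotheses exactly as stated---geometric ergodicity in total variation with $\Expect_\pi[M]<\infty$, and $h\in L^{2+\epsilon}(\pi)$---the usual proof in the literature proceeds instead via strong mixing: geometric ergodicity implies $\alpha$-mixing with geometrically decaying coefficients, and then Ibragimov's CLT for $\alpha$-mixing sequences (where the $(2+\epsilon)$-moment enters through Davydov's covariance inequality) delivers the conclusion directly. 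Your Poisson-equation route is cleanest when $P$ is known to have an $L^2(\pi)$ spectral gap, which is equivalent to geometric ergodicity in the \emph{reversible} case but not in general; upgrading the weighted-TV decay to $\sum_k \|P^k\bar h\|_{L^2(\pi)}<\infty$ via H\"older alone is delicate for non-reversible chains when only $\Expect_\pi[M]<\infty$ is assumed.
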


The rest of this appendix is devoted to defining and understanding the integrated autocorrelated time, which may be used to assess  the relative efficiency of different MCMC algorithms. Unsurprisingly, the asymptotic variance $\tau^2$ in the above central limit theorem arises as the limiting variance of ergodic averages. We define
\begin{equation*}
	\frac{\tau_N^2}{N}:= \V\biggl[\frac{1}{N}\sum_{n=1}^N h(X_n)\biggr]
\end{equation*}
and note that $\tau_N^2$ (and similarly $\sigma^2$ and $\tau$ below) depends on $h$ but we omit said dependence from our notation. The value $\tau_N^2$ quantifies the amount of correlation between the variables $X_n$'s.

\begin{definition}[Autocovariance and Autocorrelation]\label{ref:autocovariance}
	At stationarity, the \emph{autocovariance} of lag $k$ of the time series $\{h(X_n)\}_{n=0}^\infty$ is defined as
	\begin{align*}
		\gamma_k =& \operatorname{Cov}\bigl(h(X_0),h(X_k)\bigr), \quad \quad k\geq 1.
	\end{align*}
	We also define $\sigma^2 = \operatorname{Cov}\bigl(h(X_0),h(X_0)\bigr)=\V_{X_0\sim\pi}[h(X_0)]$. The \emph{autocorrelation} of lag $k$ is 
	\begin{equation*}
		\rho_k=\frac{\gamma_k}{\sigma^2},\quad \quad k\geq0. \tag*{\qedhere}
	\end{equation*}
\end{definition}

\begin{lemma}\label{lemmaMarkovchainerror}
	At stationarity,
	\begin{equation*} \tau_N^2=\sigma^2\left[1+2\sum_{k=1}^{N-1}\frac{N-k}{N}\rho_k\right]. 
	\end{equation*}
	As N $\rightarrow \infty$, 
	\begin{equation*}
		\tau_N^2\rightarrow \tau^2:=\sigma^2\left[1+2\sum_{k=1}^{\infty}\rho_k\right],
	\end{equation*}
	where $\tau^2$ is called the integrated autocorrelation time.
\end{lemma}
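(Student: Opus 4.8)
The plan is to compute the variance of the ergodic average by brute force and then pass to the limit. First I would unwind the definition $\tau_N^2/N = \V\bigl[\frac{1}{N}\sum_{n=1}^N h(X_n)\bigr]$, so that $\tau_N^2 = \frac{1}{N}\V\bigl[\sum_{n=1}^N h(X_n)\bigr]$, and expand the variance of the sum as a double sum of covariances:
$$\V\Bigl[\sum_{n=1}^N h(X_n)\Bigr] = \sum_{n=1}^N \sum_{m=1}^N \operatorname{Cov}\bigl(h(X_n), h(X_m)\bigr).$$
The key structural input is stationarity: since $X_0 \sim \pi$ and $\pi$ is invariant, the joint law of $(X_n, X_m)$ depends only on the lag $|n-m|$, whence $\operatorname{Cov}\bigl(h(X_n), h(X_m)\bigr) = \gamma_{|n-m|}$, with the convention $\gamma_0 = \sigma^2$.

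Next I would reorganize the double sum by grouping terms according to the lag $k = |n-m|$. The diagonal $n = m$ contributes $N$ copies of $\gamma_0$, while each lag $k \in \{1, \ldots, N-1\}$ is attained by exactly $2(N-k)$ ordered pairs: the $N-k$ pairs $(n, n+k)$ with $1 \le n \le N-k$, together with their transposes $(n+k, n)$. This gives
$$\V\Bigl[\sum_{n=1}^N h(X_n)\Bigr] = N\gamma_0 + 2\sum_{k=1}^{N-1}(N-k)\gamma_k.$$
Dividing by $N$, substituting $\gamma_k = \sigma^2 \rho_k$, and factoring out $\sigma^2$ then delivers the finite-$N$ formula $\tau_N^2 = \sigma^2\bigl[1 + 2\sum_{k=1}^{N-1}\frac{N-k}{N}\rho_k\bigr]$, establishing the first assertion.

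For the limiting statement, I would observe that for each fixed $k$ the weight $\frac{N-k}{N}$ increases to $1$ as $N \to \infty$, while uniformly in $N$ one has $\bigl|\frac{N-k}{N}\rho_k\bigr| \le |\rho_k|$. The main obstacle is justifying the interchange of the limit with the infinite summation. This requires absolute summability $\sum_{k=1}^\infty |\rho_k| < \infty$, which is exactly what geometric ergodicity supplies, since it forces $\gamma_k$ (and hence $\rho_k$) to decay geometrically in $k$; this is the same hypothesis under which the central limit theorem (Theorem \ref{thm:CLT}) holds, and indeed the $\tau^2$ appearing there is the quantity we are identifying. Granting absolute summability, dominated convergence applied to the counting measure on $\{1,2,\ldots\}$ (with dominating function $k \mapsto |\rho_k|$) lets me pass to the limit term by term, yielding $\tau_N^2 \to \sigma^2\bigl[1 + 2\sum_{k=1}^\infty \rho_k\bigr] = \tau^2$ and completing the proof.
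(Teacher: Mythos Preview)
Your proof is correct and follows essentially the same approach as the paper: expand the variance of the sum into covariances, use stationarity to reduce to the lag-$k$ autocovariances $\gamma_k$, and reorganize to obtain the stated formula. The paper carries out the reorganization by an explicit swap of summation indices, whereas you count directly that each lag $k$ occurs in $2(N-k)$ ordered pairs; these are the same computation. You also supply a justification for the $N\to\infty$ limit via dominated convergence under absolute summability of $\rho_k$, which the paper's proof simply omits.
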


\begin{proof}
A calculation shows that
	\begin{align*}
		\frac{\tau_{N}^2}{N} =& \V\left[\frac{1}{N}\sum_{n=1}^{N}h(X_n)\right]\\
		=& \frac{1}{N^2} \left[\sum_{n=1}^{N}\V[h(X_n)]+2\sum_{n=1}^{N-1}\sum_{k>n}\operatorname{Cov}\left(h(X_n),h(X_k)\right)\right]\\
		=& \frac{1}{N^2} \left[N\sigma^2+2\sum_{n=1}^{N-1}\sum_{k=1}^{N-n}\operatorname{Cov}\left(h(X_n),h(X_{n+k})\right)\right]\\
		=& \frac{\sigma^2}{N} \left[1+\frac{2}{N\sigma^2}\sum_{n=1}^{N-1}\sum_{k=1}^{N-n}\gamma_k\right]\\
		=& \frac{\sigma^2}{N} \left[1+\frac{2}{N}\sum_{n=1}^{N-1}\sum_{k=1}^{N-n}\rho_k\right]\\
		=& \frac{\sigma^2}{N} \left[1+\frac{2}{N}\sum_{k=1}^{N-1}\sum_{n=1}^{N-k}\rho_k\right]\\
		=& \frac{\sigma^2}{N} \left[1+2\sum_{k=1}^{N-1}\frac{N-k}{N}\rho_k\right]. \tag*{\qedhere}
	\end{align*}
\end{proof}

Note that if $X_n\stackrel{\text{i.i.d.}}{\sim}\pi$, then 
\begin{equation*}
	\V\left[\frac{1}{N}\sum_{n=1}^N h(X_n)\right] = \frac{\sigma^2}{N}.
\end{equation*}
Therefore, computing an ergodic average with positively autocorrelated samples leads to higher variance than computing it with an i.i.d. sample. An intuitive explanation is that positively correlated random variables have redundant information so are less informative than i.i.d. random variables. On the other hand, if  the correlations are negative, ergodic averages may be more accurate than a direct Monte Carlo estimator with i.i.d. samples. 

\section{Discussion and Bibliography}
The material in this appendix can be found in any standard reference on Markov chains. For gentle introductions, we refer to
\cite{norris1998markov,bremaud2013markov} and \cite[Chapter 1]{lawler2018introduction}. For a discussion of more advanced topics, we refer to
\cite{meyn2012markov,douc2018markov,nummelin2004general}.

\bibliographystyle{siam}
\bibliography{references}

\backmatter

\cleardoublepage


\end{document}